\title{Playing (Almost-)Optimally in Concurrent Büchi and co-Büchi Games}
\titlerunning{Playing (Almost-)Optimally in Concurrent Büchi and co-Büchi Games} 
\author{Benjamin Bordais, Patricia Bouyer and Stéphane Le Roux}{Université Paris-Saclay, CNRS, ENS Paris-Saclay, LMF, 91190 Gif-sur-Yvette, France}{}{}{}
\authorrunning{B. Bordais, P. Bouyer and S. Le Roux}
\keywords{Concurrent Games, Optimal Strategies, B\"uchi Objective,
  co-B\"uchi Objective}
\begin{document}

\maketitle

\begin{abstract}
  We study two-player concurrent stochastic games on finite graphs,
  with B\"uchi and co-B\"uchi objectives. The goal of the first player
  is to maximize the probability of satisfying
  the given objective. Following Martin's determinacy theorem for
  Blackwell games, we know that such games have a value.  Natural
  questions are then: does there exist an optimal strategy, that is, a
  strategy achieving the value of the game? what is the memory
  required for playing (almost-)optimally?

  The situation is rather simple to describe for turn-based games,
  where positional pure strategies suffice to play optimally in games
  with parity objectives. Concurrency makes the situation intricate
  and heterogeneous. For most $\omega$-regular objectives, there do
  indeed not exist optimal strategies in general. For some objectives
  (that we will mention), infinite memory might also be required for
  playing optimally or almost-optimally.

  We also provide characterizations of local interactions of the
  players to ensure positionality of (almost-)optimal strategies for
  B\"uchi and co-B\"uchi objectives. This characterization relies on
  properties of game forms underpinning the formalism for defining
  local interactions of the two players. These well-behaved game forms
  are like elementary bricks which, when they behave well in
  isolation, can be assembled in graph games and ensure the good
  property for the whole game.
\end{abstract}



\section{Introduction}

\subparagraph*{Stochastic concurrent games.}

Games on graphs are an intensively studied mathematical tool, with
wide applicability in verification and in particular for the
controller synthesis problem, see for
instance~\cite{thomas02,BCJ18}. We consider two-player stochastic
concurrent games played on finite graphs. For simplicity (but this is
with no restriction), such a game is played over a finite bipartite
graph called an arena: some states belong to Nature while others
belong to the players. Nature is stochastic, and therefore assigns a
probabilistic distribution over the players' states. In each players'
state, a local interaction between the two players (called Player $\A$
and Player $\B$) happens, specified by a two-dimensional table. Such
an interaction is resolved as follows: Player $\A$ selects a
probability distribution over the rows 
while Player $\B$ selects a probability distribution over the columns
of the table; this results into a distribution over the cells of the
table, each one pointing to a Nature state of the graph. An example of
game arena (with no Nature states -- we could add dummy deterministic
Nature states) is given in
Figure~\ref{fig:local_optimal_not_uniformly} (this example comes from
\cite{AH00}). At state $q_0$, the
interaction between the two players is given by the table, and each
player has two actions: if Player $\A$ plays the second row and Player
$\B$ the first column, then the game proceeds to state $\top$: in that case, the game always goes back to $q_0$.

Globally, the game proceeds as follows: starting at an initial state
$q_0$, the two players play in the local interaction of the current
state, and the joint choice determines (stochastically) the next
Nature state of the game, itself moving randomly to players' states;
the game then proceeds subsequently from the new players' state. The
way players make choices is given by strategies, which, given the
sequence of states visited so far (the so-called history), assign
local strategies for the local interaction of the state the game is
in.  For application in controller synthesis, strategies will
correspond to controllers, hence it is desirable to have strategies
simple to implement. We will be in particular interested in strategies
which are \emph{positional}, i.e. strategies which only depend on
the current state of the game, not on the whole history.  When each
player has fixed a strategy (say $\s_{\A}$ for Player $\A$ and
$\s_{\B}$ for Player $\B$), this defines a probability distribution
$\prob{q_0}{\s_\A,\s_\B}$ over infinite sequences of states of the
game.  The objectives of the two players are opposite (we assume a
zero-sum setting): together with the game, a measurable set $W$ of
infinite sequences of states is fixed; the objective of Player $A$ is
to maximize the probability of $W$ while the objective of Player
$B$ is to minimize it. 

Back to the example of
Figure~\ref{fig:local_optimal_not_uniformly}. If Player $\A$
(resp. $\B$) plays the first row (resp. column) with probability
$p_\A$ (resp. $p_\B$), then the probability to 
move to $\bot$ in one step is $(1-p_\A) \cdot (1-p_\B)$.  If Player
$\A$ repeatedly plays the same strategy at $q_0$ with $p_\A<1$, by
playing $p_\B=0$, Player $\B$ will enforce $\bot$ almost-surely;
however, if Player $\A$ plays $p_\A=1$, then by playing $p_\B=1$,
Player $\B$ enforces staying in $q_0$, hence visiting $\top$ with
probability $0$. 
On the contrary 
Player $\A$ can ensure visiting $\top$ infinitely often with probability $1-\varepsilon$ for every $\varepsilon>0$, by playing iteratively at $q_0$ the first row of the table with probability $1-\varepsilon_k$ ($k$ the number of
visits to $q_0$) and the second row with probability $\varepsilon_k$,
where the sequence $(\varepsilon_k)_k$ decreases fast to zero (see Appendix~\ref{appen:buchi_varepsilon_infinite}).

\subparagraph*{Values and (almost-)optimal strategies.}

As mentioned above, Player $\A$ wants to maximize the probability of
$W$, while Player $\B$ wants to minimize this probability. Formally,
given a strategy $\s_{\A}$ for Player $\A$, its value is measured by
$\inf_{\s_{\B}} \prob{q_0}{\s_\A,\s_\B}(W)$, and Player $\A$ wants to
maximize that value. Dually, given a strategy $\s_{\B}$ for Player
$\B$, its value is measured by
$\sup_{\s_{\A}} \prob{q_0}{\s_\A,\s_\B}(W)$, and Player $\B$ wants to
  minimize that value. Following Martin's determinacy theorem for
  Blackwell games~\cite{martin98}, it actually holds that 
  the game has a \emph{value} given by
\[
  \val{}{q_0} = \sup_{\s_{\A}} \ \inf_{\s_{\B}}
  \prob{q_0}{\s_\A,\s_\B}(W) = \inf_{\s_{\B}} \ \sup_{\s_{\A}}
  \prob{q_0}{\s_\A,\s_\B}(W)
\]
While this ensures the existence of almost-optimal strategies (that
is, $\varepsilon$-optimal strategies for every $\varepsilon>0$) for
both players, it says nothing about the existence of optimal
strategies, which are strategies achieving $\val{}{q_0}$.  In general,
except for safety objectives, optimal strategies may not exist, as
witnessed by the example of
Figure~\ref{fig:local_optimal_not_uniformly}, which uses a B\"uchi
condition.  Also, it says nothing about the complexity of optimal
strategies (when they exist) and $\varepsilon$-optimal
strategies. Complexity of a strategy is measured in terms of memory
that is used by the strategy: while general strategies may depend on
the whole history of the game, a \emph{positional} strategy only
depends on the current state of the game; a \emph{finite-memory}
strategy records a finite amount of information using a finite
automaton; the most complex ones, the \emph{infinite-memory}
strategies require more than a finite automaton to record information
necessary to take decisions.

  Back to the game of Figure~\ref{fig:local_optimal_not_uniformly},
  assuming the B\"uchi condition ``visit $\top$ infinitely often'',
  the game is such that $\val{}{q_0}=1$. However Player $\A$ has no
  optimal strategy, and can only achieve $1-\varepsilon$ for every
  $\varepsilon>0$ with an infinite-memory strategy, and any positional strategy has value 0.

\begin{figure}
	\begin{minipage}[b]{0.35\linewidth}
		\hspace*{-1cm}
		\centering
		\includegraphics[scale=.9]{LocalOptimNotuniformlyOptim.pdf}
		\caption{A concurrent game
                  with objective $\Bu(\{ \top \})$. }
		\label{fig:local_optimal_not_uniformly}
	\end{minipage}
        \hfill
	\begin{minipage}[b]{0.28\linewidth}
		\centering
		\includegraphics[scale=1.1]{formNFBu.pdf}
		\caption{The local interaction at state $q_0$.}
		\label{fig:local_iter}              
	\end{minipage}
        \hfill
	\begin{minipage}[b]{0.28\linewidth}
		\centering
		\includegraphics[scale=1.1]{ValuedLocal01.pdf}
		\caption{The local interaction at state $q_0$ valued by the vector $\mu_v$ giving the value of states.}
		\label{fig:valued_local_iter_not_uniform}      
	\end{minipage}
\end{figure}

\subparagraph*{The contributions of this work.}  We are interested in
memory requirements for optimal and $\varepsilon$-optimal strategies
in concurrent games with parity 
objectives. The situation is
rather
heterogeneous: 
while safety objectives enjoy very robust properties (existence of
positional optimal strategies in all cases, see for
instance~\cite[Thm. 1]{AM04b}), the situation appears as much more
complex for parity objectives (already with three colors): there may
not exist optimal strategies
, and when they exist, optimal strategies 
as well as $\varepsilon$-optimal strategies require in general infinite memory
(the case of 
optimal strategies was proven in \cite{AH00} while
the case of $\varepsilon$-optimal strategies is a consequence of the
B\"uchi case, studied in~\cite[Thm. 2]{AM04b}).  The case of
reachability objectives was studied with details in~\cite{BBSCSL22}: 
optimal strategies may not exist, but there exists a positional
strategy that is 1) optimal from each state from where there exists an
optimal strategy, and 2) $\varepsilon$-optimal from the other states.

In this paper, we focus on B\"uchi and co-B\"uchi objectives. Few
things were known for those games: specifically, it was shown
in~\cite[Thm. 2]{AM04b} that B\"uchi and co-B\"uchi concurrent games
may have no optimal
strategies, and that $\varepsilon$-optimal strategies may require
infinite memory for B\"uchi objectives. We show in addition that, when
optimal strategies exist everywhere, optimal strategies can be chosen
positional for B\"uchi objectives but may require infinite memory for
co-B\"uchi objectives. We
more importantly characterize ``well-behaved'' local interactions
(i.e. interactions of the two players at each state, which are given
by tables) for ensuring positionality of ($\varepsilon$-)optimal
strategies in the various settings where they do not exist in the
general case. We follow the approach used in~\cite{BBSCSL22} for
reachability objectives and abstract those local interactions into
game forms, where cells of the table are now seen as variables (some
of them being equal). For instance, the game form associated with
state $q_0$ in the running example has three outcomes: $x$, $y$ and
$z$, and it is given in Figure~\ref{fig:local_iter}.  Game forms can
be seen as elementary bricks that can be used to build games on
graphs. Given a property we want to hold on concurrent
  games (e.g. the existence of positional optimal strategies in Büchi
  games), we characterize those bricks that are safe
  w.r.t. that property, that is the game forms that behave well when
used individually, the ones ensuring that, when they are
  the only non-trivial local interaction in a concurrent game, the
  property holds. Then, we realize that they also behave well when
used collectively: if all local interactions are safe in a
  concurrent game, then the whole game ensures the property of
  interest. We obtain a clear-cut separation: if all local
  interactions are safe in a concurrent game, then the property is
  necessarily ensured; on the other hand, if a game form is not safe,
  one can build a game where it is the only non-trivial local
  interaction that does ensure the property.
Our contributions can be summarized as follows:
\begin{enumerate}
\item In the general setting of prefix-independent objectives, we
  characterize positional uniformly optimal strategies using locally
  optimal strategies (i.e. strategies which are optimal at local
  interactions) and constraints on values of end-components generated
  by the strategies (Lemma~\ref{lem:uniformly_optimal}).
\item We study B\"uchi concurrent games. We first show that there is a
  positional uniformly optimal strategy in a B\"uchi game as soon as
  it is known that optimal strategies exist from every state (Proposition~\ref{prop:positional_suffice_buchi}). To
  benefit from this result, we give a (sufficient and necessary)
  condition on (game forms describing) local interactions to ensure
  the existence of optimal strategies. In particular, if all local
  interactions are well-behaved (w.r.t. the condition), then it will
  be the case that positional uniformly optimal strategies exist (Theorem~\ref{lem:rm_in_buchi}). We
  also give a weaker necessary and sufficient condition on local
  interactions to ensure the existence of positional
  $\varepsilon$-optimal strategies in B\"uchi games, since in general,
  infinite memory might be required (Theorem~\ref{lem:varepsilon_buchi_sufficient}).
\item We study co-B\"uchi concurrent games. We first show that optimal
  strategies might require infinite memory in co-B\"uchi games,in
  contrast with B\"uchi games (Subsection~\ref{subsec:optimal_strat_in_co_buchi}). We also characterize local
  interactions that ensure the existence of positional optimal
  strategies in co-B\"uchi games (Theorem~\ref{lem:co_buchi_sufficient}). It is not useful to do the same for
  positional $\varepsilon$-optimal strategies since it is always the
  case that such strategies exist \cite{CAH06}.
\end{enumerate}
%


All these results show the contrast between concurrent games and
  turn-based games: indeed, in the latter (which have attracted more
  attention these last years), pure (that is, deterministic)
  positional optimal strategies always exist for parity objectives
  (hence in particular, for B\"uchi and co-B\"uchi
  objectives)~\cite{MM02,CJH04,zielonka04}.
  The results presented in this paper hence show the complexity
  inherent to concurrent interactions in games. Those have
  nevertheless retained some attention over the last twenty
  years~\cite{AH00,CAH06,AHK07,CAH11,KNPS21}, and are relevant in
  applications~\cite{KNPS+22}.



\section{Game Forms}

A \emph{discrete probabilisty distribution} over a non-empty finite
set $Q$ is a function $\mu: Q \rightarrow [0,1]$ such that
$\sum_{x \in Q} \mu(x) = 1$. The \emph{support} $\Supp(\mu)$ of a
probabilistic distribution $\mu: Q \rightarrow [0,1]$ corresponds to
the set of non-zeros of the distribution:
$\Supp(\mu) = \{ q \in Q \mid \mu(q) \in {]0,1]} \}$. 
The set of all distributions over the set $Q$ is denoted $\Dist(Q)$.

Informally, game forms are two-dimensional tables with variables while
games in normal forms are game forms whose outcomes are real values in $[0,1]$. Formally:
\begin{definition}[Game form and game in normal form]
  \label{def:arena_game_nf}
  A \emph{game form} (GF for short) is a tuple
  $\formNF = \langle \St_\A,\St_\B,\outComeNF,\outCNF \rangle$ where
  $\St_\A$ (resp. $\St_\B$) is a non-empty finite set of actions available to Player $\A$ (resp. $\B$), $\outComeNF$ is a
  non-empty set of outcomes, and
  $\outCNF: \St_\A \times \St_\B \rightarrow \outComeNF$ is a function
  that associates an outcome to each pair of actions. When the set
  of outcomes $\outComeNF$ is equal to $[0,1]$, we say that $\formNF$
  is a \emph{game in normal form}. For a valuation
  $v \in [0,1]^\outComeNF$ of the outcomes, the notation
  $\gameNF{\formNF}{v}$ refers to the game in normal form
  $\langle \St_\A,\St_\B,[0,1],v \circ \outCNF
  \rangle$. 
\end{definition}
  An example of game form (resp. game in normal form) is given in
 Figure~\ref{fig:local_iter} (resp. \ref{fig:valued_local_iter_not_uniform}), where $\St_\A$ (resp. $\St_\B$)
  are rows (resp. columns) of the table and $x,y,z$ (resp. $0,1$) are the possible outcomes of the game form (resp. game in normal form). We use 
  game forms to represent interactions between two
  players. The strategies available to Player $\A$ (resp. $\B$) are
  convex combinations of actions given as the rows
  (resp. columns) of the table. In a game in normal form, Player $\A$
  tries to maximize the outcome, whereas Player $\B$ tries to minimize
  it.
\begin{definition}[Outcome of a game in normal form]
  \label{def:outcome_game_form}
  Let $\formNF = \langle \St_\A,\St_\B,[0,1],\outCNF \rangle$ be a
  game in normal form. The set $\Dist(\St_\A)$ (resp. $\Dist(\St_\B)$)
  is the set of (mixed) strategies available to Player $\A$
  (resp. $\B$). For a pair of strategies
  $(\sigma_\A,\sigma_\B) \in \Dist(\St_\A) \times \Dist(\St_\B)$, the
  outcome $\outM_\formNF(\sigma_\A,\sigma_\B)$ in $\formNF$ of the
  strategies $(\sigma_\A,\sigma_\B)$ is 
  $\outM_\formNF(\sigma_\A,\sigma_\B) := \sum_{a \in \St_\A} \sum_{b
    \in \St_\B} \sigma_\A(a) \cdot \sigma_\B(b) \cdot \outCNF(a,b) \in
  [0,1]$.
\end{definition}	

\begin{definition}[Value of a game in normal form and optimal strategies]
  \label{def:alternative_value_game_normal_form}
  Let $\formNF = \langle \St_\A,\St_\B,[0,1],\outCNF \rangle$ be a
  game in normal form, and $\sigma_\A \in \Dist(\St_\A)$ be a strategy
  for Player $\A$.  The \emph{value} of strategy $\sigma_\A$ is
  $\va_\formNF(\sigma_\A) := \inf_{\sigma_\B \in \Dist(\St_\B)}
  \outM_{\formNF}(\sigma_\A,\sigma_\B)$, and analogously for Player
  $\B$, with a $\sup$ instead of an $\inf$. When
  $\sup_{\sigma_\A \in \Dist(\St_\A)} \va_\formNF(\sigma_\A) =
  \inf_{\sigma_\B \in \Dist(\St_\B)} \va_\formNF(\sigma_\B)$, it
  defines the \emph{value} of the game $\formNF$, denoted
  $\va_\formNF$.
	
  A strategy $\sigma_\A \in \Dist(\St_\A)$ ensuring
  $\va_\formNF = \va_\formNF(\sigma_\A)$ is said to be \emph{optimal}. The
  set of all optimal strategies for Player $\A$ is denoted
  $\Opt_\A(\formNF) \subseteq \Dist(\St_\A)$, and analogously for
  Player $\B$. Von Neumann's minimax theorem~\cite{vonNeuman} ensures
  the existence of optimal strategies (for both players).
\end{definition}
In the following, strategies in games in normal forms will be called
$\GF$-strategies, in order not to confuse them with strategies in
concurrent (graph) games.

\section{Concurrent Stochastic Games}
We introduce the definition of a concurrent arena played on a finite
graph, and of a concurrent game by adding a winning condition.
\begin{definition}[Finite stochastic concurrent arena and game]
  A \emph{concurrent arena} $\Aconc$ is a tuple $\AConc$ where $Q$ is
  a non-empty set of states, $\setA$ (resp. $\setB$) is the non-empty
  finite set of actions available to Player $\A$ (resp. $\B$), $\distribSet$ is the set of Nature states,
  $\delta: 
  Q \times A \times B \rightarrow \distribSet$ is the transition
  function and $\distribFunc: \distribSet \rightarrow \Dist(Q)$ is the
  distribution function.

  A \emph{concurrent game} is a pair $\Games{\Aconc}{W}$ where
  $\Aconc$ is a concurrent arena and $W \subseteq Q^\omega$ is
  Borel. The set $W$ is called the \emph{(winning) objective} and it
  corresponds to the set of paths winning for Player $\A$ and losing
  for Player $\B$.
\end{definition}



For the rest of the section, we fix a concurrent game
$\G = \Games{\Aconc}{W}$. An important class of objectives are the
\emph{prefix-independent} objectives, that is objectives $W$ such that
an infinite path
is in $W$ if and only if one of its suffixes is in $W$:
for all $\rho \in Q^\omega$ and $\pi \in Q^+$,
$\rho \in W \Leftrightarrow \pi \cdot \rho \in W$. In this paper, we
more specifically focus on Büchi and co-B\"uchi objectives,
which informally correspond to the infinite paths seeing infinitely
often, or finitely often, a given set of states. We also recall the
definitions of the reachability and safety objectives.
\begin{definition}[Büchi and co-Büchi, Reachability and Safety
  objectives]
  Consider a target set of states $T \subseteq Q$. We define the
  following objectives:
  \begin{itemize}
  \item
    $\Bu(T) := \{ \rho \in Q^\omega \mid \forall i \in \N, \exists j
    \geq i,\; \rho_j \in T \}$;
    $\Reach(T) := \{ \rho \in Q^\omega \mid \exists i \in \N, \rho_i
    \in T \}$;
  \item
    $\coBu(T) := \{ \rho \in Q^\omega \mid \exists i \in \N,\; \forall
    j \geq i,\; \rho_j \notin T \}$;
    $\Sft(T) := \{ \rho \in Q^\omega \mid \forall i \in \N, \rho_i
    \notin T \}$.
  \end{itemize}
\end{definition}

In concurrent games, game forms appear at each state and specify how
the interaction of the Players determines the next (Nature) state. In
fact, this corresponds to the local interactions of the game.
\begin{definition}[Local interactions]
  The \emph{local interaction} at state $q \in Q$ is the game form
  $\formNF_q = \langle \setA,\setB,\distribSet,\delta(q,\cdot,\cdot)
  \rangle$. That is, the $\GF$-strategies available for Player $\A$
  (resp. $\B$) are the actions in $\setA$ (resp. $\setB$) and the
  outcomes are the Nature states. 
\end{definition}

As an example, the local interaction at state $q_0$ in
Figure~\ref{fig:local_optimal_not_uniformly} is represented in
Figure~\ref{fig:local_iter}, up to a renaming 
of the outcomes.

A strategy of a given Player then associates to every history
(i.e. every finite sequence of states) a
$\GF$-strategy in the local interaction at the current state, in other
words it associates a distribution on the actions available at the
local interaction to the given Player.

\begin{definition}[Strategies]
  A \emph{strategy} for Player $\A$ is a function
  $\s_\A: Q^+ \rightarrow \Dist(A)$ such that, for all
  $\rho = q_0 \cdots q_n \in Q^+$, $\s_\A(\rho) \in \Dist(A)$ is a
  $\GF$-strategy for Player $\A$ in the game form $\formNF_{q_n}$.  A
  strategy $\s_\A: Q^+ \rightarrow \Dist(A)$ for Player $\A$ is
  \emph{positional} if, for all $\pi = \rho \cdot q \in Q^+$ and
  $\pi' = \rho' \cdot q' \in Q^+$, if $q = q'$, then
  $\s_\A(\pi) = \s_\A(\pi')$ (that is, the strategy only depends on
  the current state of the game). We denote by $\SetStrat{\Aconc}{\A}$
  and $\SetPosStrat{\Aconc}{\A}$ the set of all strategies and
  positional strategies in arena $\Aconc$ for Player $\A$. The
  definitions are analogous for Player $\B$.
\end{definition}

Before defining the outcome of the game given a strategy for a Player,
we define the probability to go from state $q$ to state $q'$, given
two $\GF$-strategies in the game form $\formNF_q$.
\begin{definition}[Probability Transition]
  \label{def:mu_state}
  Let $q \in Q$ be a state and
  $(\sigma_\A,\sigma_\B) \in \Dist(A) \times \Dist(B)$ be two
  $\GF$-strategies in the game form $\formNF_q$. For a state
  $q' \in Q$, the probability to go from $q$ to $q'$ if the players
  play $\sigma_\A$ and $\sigma_\B$ in $q$ is equal to
  $\prob{}{\sigma_\A,\sigma_\B}(q,q') := \outM_{\langle
    \formNF_q,\distribFunc(\cdot)(q') \rangle}(\sigma_\A,\sigma_\B)$.
\end{definition}
From this, given two strategies, we deduce the probability of any
cylinder supported by a finite path, and consequently of any Borel set
in $Q^\omega$. From a state $q_0 \in Q$, given two strategies $\s_\A$
and $\s_\B$, this probability distribution is denoted
$\prob{\Aconc,q_0}{\s_\A,\s_\B}: \Borel(Q) \rightarrow [0,1]$. More
details are given in Appendix~\ref{appen:def_prob_distrib}. Let us now define the value of a strategy and of the game.

\begin{definition}[Value of strategies and of the game]
  Let 
  $\s_\A \in \SetStrat{\Aconc}{\A}$ be a Player $\A$ strategy. The
  \emph{value} of $\s_\A$ is the function
  $\MarVal{\G}[\s_\A]: Q \rightarrow [0,1]$ such that for every
  $q \in Q$,
  $\MarVal{\G}[\s_\A](q) := \inf_{\s_\B \in \SetStrat{\Aconc}{\B}}
  \prob{\Aconc,q}{\s_\A,\s_\B}[W]$.

  The \emph{value} for Player $\A$ is the function
  $\MarVal{\G}[\A]: Q \rightarrow [0,1]$ such that for all $q \in Q$,
  we have
  $\MarVal{\G}[\A](q) := \sup_{\s_\A \in \SetStrat{\Aconc}{\A}}
  \MarVal{\G}[\s_\A](q)$.
  The value for Player $\B$ is defined similarly by reversing the
  supremum and infimum.

  
  By Martin's result on the determinacy of Blackwell games
  \cite{martin98}, for all concurrent games $\G = \Games{\Aconc}{W}$,
  the values for both Players are equal, which defines the
  \emph{value} of the game:
  $\MarVal{\G} := \MarVal{\G}[\A] = \MarVal{\G}[\B]$. 
  Furthermore, a strategy $\s_\A \in \SetStrat{\Aconc}{\A}$ which
  ensures $\MarVal{\G}[\s_\A](q) = \MarVal{\G}(q)$ from some state
  $q \in Q$,
  is said to be \emph{optimal} from $q$. If, in addition it ensures
  $\MarVal{\G}[\s_\A] = \MarVal{\G}$, it is said to be \emph{uniformly
    optimal}.
  \label{def:determinacy}
\end{definition}

We mention a very useful result of~\cite{CH07}. Informally, it shows
that if there is a state with a positive value, then there is a state
with value $1$.
\begin{theorem}[Theorem 1 in \cite{CH07}]
  \label{thm:chaterjee_value_0_1}
  Let $\G$ be a concurrent game with a prefix-independent
  objective. If there is a state $q \in Q$ such that
  $\MarVal{\G}(q) > 0$ (resp. $\MarVal{\G}(q) < 1$), then there is a
  state $q' \in Q$ such that $\MarVal{\G}(q') = 1$
  (resp. $\MarVal{\G}(q') = 0$).
\end{theorem}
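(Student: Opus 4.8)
The plan is to prove the statement by reduction to the theory of almost-sure winning in concurrent games, exploiting prefix-independence crucially. First I would set up the key objects: suppose $\MarVal{\G}(q) > 0$ for some $q \in Q$. The goal is to exhibit $q'$ with $\MarVal{\G}(q') = 1$. The idea is that from $q$, Player $\A$ has an $\varepsilon$-optimal strategy $\s_\A$ guaranteeing probability at least $\MarVal{\G}(q) - \varepsilon > 0$ of satisfying $W$ against every Player $\B$ strategy. Fix $\varepsilon$ small enough that this quantity is some $\alpha > 0$. Because $W$ is prefix-independent, membership in $W$ of a play depends only on its tail behavior, so the set of states that are visited infinitely often along a play is what determines the outcome; I would want to argue about the set of states reachable with positive probability under $\s_\A$ and extract from there a "recurrent" region where the value is forced up to $1$.

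The core technical step I would carry out is the following dichotomy. Consider the supremum value $m := \sup_{q'' \in Q} \MarVal{\G}(q'')$, attained at some state $q^*$ since $Q$ is finite; we have $m \geq \MarVal{\G}(q) > 0$, and we want $m = 1$. Suppose for contradiction $m < 1$. Then from $q^*$, against any $\varepsilon$-optimal Player $\A$ strategy, Player $\B$ can keep the probability of $W$ below $1$; more precisely, Player $\B$ has a strategy ensuring probability of $W$ at most $m + \varepsilon < 1$ from every state. Now I would run the following argument: starting from $q^*$ and playing any near-optimal Player $\A$ strategy against Player $\B$'s globally near-optimal counter-strategy, with probability bounded below the play reaches (in finitely many steps) a state $q''$ from which, after the finite prefix, the residual game again has Player $\B$ holding the value below $1$; by prefix-independence the residual objective is the same $W$. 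The contradiction is obtained by a "value-boosting" or compactness-style argument: one shows that if the value is never $1$, then from the best state $q^*$ Player $\B$ can strictly improve — pushing the probability of $W$ strictly below $m$ from $q^*$ — using the fact that with positive probability the play leaves $q^*$ and lands in states of value $\le m$, and from those Player $\B$ does at least as well; iterating/averaging yields value $< m$ at $q^*$, contradicting the choice of $q^*$. I expect this requires care: the naive averaging gives $\le m$, not $< m$, so the strict inequality must come from the positive-probability escape combined with a uniform gap.

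The cleanest route, and the one I would ultimately pursue, is via the determinacy equality and a fixed-point/limit argument on values rather than raw strategy manipulation. Define the operator capturing one step of the game composed with the value function, and observe that the value vector $\MarVal{\G}$ is a fixed point (this uses prefix-independence so the residual objective is unchanged). If $0 < m < 1$, examine the states achieving value $m$: a one-step analysis of the local interaction game forms at those states, using the value of a game in normal form (Definition~\ref{def:alternative_value_game_normal_form}) and von Neumann's minimax theorem, should show that either some successor has value strictly above $m$ (contradicting maximality) or Player $\B$ can confine the play, with the complementary probability mass, to states of value $< m$ forever — and from a set of states all of value $< m$, a standard argument (Player $\B$ plays near-optimally forever inside this closed set) bounds the value there strictly below $1$ but the interaction with the positive escape probability forces the value at the value-$m$ states strictly below $m$, the contradiction. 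The symmetric claim (value $< 1$ somewhere implies value $0$ somewhere) follows by the dual argument with the roles of the players and of $W$ versus $Q^\omega \setminus W$ swapped, again legitimate because the complement of a prefix-independent Borel objective is prefix-independent and Borel, and by Martin's determinacy the value is the same from both players' perspectives.

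The main obstacle I anticipate is making the strict inequality rigorous: proving not merely that Player $\B$ can hold the value at $m$ but that the positive-probability "leakage" into strictly-lower-value states forces a strict drop below $m$, which needs a quantitative handle on how far below $1$ the value is on the lower region, and a uniform lower bound on the escape probability over the relevant strategies — this is exactly where finiteness of $Q$ and the structure of the normal-form games (compactness of $\Dist(\St_\B)$, continuity of $\outM$) must be invoked carefully.
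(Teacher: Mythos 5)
This statement is not proved in the paper at all: it is imported as Theorem~1 of \cite{CH07}, and the known proof there hinges on L\'evy's zero--one law for the tail event $W$ (the conditional probability of $W$ given the first $n$ steps converges almost surely to the indicator of $W$) combined with a stopping-time switch: against an $\varepsilon$-optimal strategy of Player $\A$ from the maximal-value state, Player $\B$ waits until the conditional probability of $W$ crosses a threshold and only then switches to an almost-optimal counter-strategy for the residual game, which is again a game with objective $W$ by prefix-independence. Your proposal never reaches this ingredient, and it is needed exactly at the step you flag but do not close.

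Concretely, two things break. First, the dichotomy in your second route (``either some successor has value $>m$, or Player $\B$ can confine the play to states of value $<m$'') is not exhaustive: the generic third case is that the play can remain forever inside the value-$m$ region (this already happens in the game of Figure~\ref{fig:local_optimal_not_uniformly}, where $m=1$ and Player $\B$ can keep the play in $\{q_0,\top\}$), and that case is the entire content of the theorem --- one must show that confinement to the top-value region forces its value to be $1$, which no one-step analysis gives. Second, the strict drop below $m$ cannot come from the fixed-point property plus compactness: the value vector being a fixed point of the one-step operator is far too weak, since \emph{every constant} valuation $c\in[0,1]$ is a fixed point of that operator in any arena (the lift of a constant is constant), so ``$0<m<1$'' is perfectly compatible with the one-step equations. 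Likewise, if Player $\B$ only plays locally optimally with respect to $\MarVal{\G}$, the values along the play form a bounded supermartingale and you obtain the non-strict bound $\leq m$, exactly the ``naive averaging'' you yourself identify as insufficient; to turn leakage into a strict contradiction Player $\B$ must at some random time abandon local play and re-optimize for the residual objective, and knowing when to do so is what L\'evy's zero--one law (or, alternatively, the $\mu$-calculus characterization of the value as a \emph{specific} nested fixed point as in \cite{AM04b}) provides. The dual statement by exchanging the players and complementing $W$ is fine, but as it stands the main implication has a genuine gap.
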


Finally, we define the Markov decision process which is induced by a
positional strategy, and its end-components.
\begin{definition}[Induced Markov decision process]
  Let $\s_\A \in \SetPosStrat{\Aconc}{\A}$ be a positional strategy.
  The \emph{Markov decision process} $\Gamma$ (MDP for short) induced
  by the strategy $\s_\A$ is the triplet
  $\Gamma := \langle Q,B,\iota \rangle$ where $Q$ is the set of
  states, $B$ is the set of actions and
  $\iota: Q \times B \rightarrow \Dist(Q)$ is a map associating to a
  state and an action a distribution over the states. For all
  $q \in Q$, $b \in B$ and $q' \in Q$, we set
  $\iota(q,b)(q') := \prob{\s_\A(q),b}{q,q'}$.
\end{definition}
The induced MDP $\Gamma$ is a special case of a concurrent game where
Player $\A$ does not play 
($A$ could be a singleton) and the set of Player $\B$ strategies is
the same as in 
$\Aconc$. 
The useful objects in MDPs are the end components~\cite{dealfaro97},
i.e. sub-MDPs that are strongly connected.
\begin{definition}[End component and sub-game]
  \label{def:end_component}
  Let $\s_\A \in \SetPosStrat{\Aconc}{\A}$  be a positional
  strategy, and $\Gamma$ its induced MDP.
  An \emph{end component} (EC for short) $H$ in $\Gamma$ is a pair
  $(Q_H,\beta)$ such that $Q_H \subseteq Q$ is a subset of states and
  $\beta: Q_H \rightarrow \mathcal{P}(B) \setminus \emptyset$
  associates to each state a non-empty set of actions compatible with
  the EC $H$ such that:
  \begin{itemize}
  \item for all $q \in Q_H$ and $b \in \beta(q)$,
    $\Supp(\iota(q,b)) \subseteq Q_H$;
  \item the underlying graph $(Q_H,E)$ is strongly connected, where
    $(q,q') \in E$ if and only if there is $b \in \beta(q)$ such that
    $q' \in \Supp(\iota(q,b))$.
  \end{itemize}
  We denote by $\distribSet_H \subseteq \distribSet$ the set of Nature
  states compatible with the EC $H$:
  $\distribSet_H = \{ d \in \distribSet \mid \Supp(d) \subseteq Q_H
  \}$. Note that, for all $q \in Q_H$ and $b \in \beta(q)$, we have
  $\delta(q,\Supp(\s_\A(q)),b) \subseteq \distribSet_H$.
	
  The end component $H$ can be seen as a concurrent arena. In that
  case, it is denoted $\Aconc_H^{\s_\A}$.
\end{definition} 

\section{Uniform Optimality with Positional Strategies}
\label{sec:CNS_uniform_optimal}
In this section, we present a necessary and sufficient condition for a
positional strategy to be uniformly optimal. Note that this result
holds for any prefix-independent objective for which, in any finite
MDP with the complement objective $Q^\omega \setminus W$, there is a
positional optimal strategy. This is in particular the case for the
Büchi and co-Büchi objectives (in fact, it holds for all parity objectives).

We assume a concurrent game $\G = \Games{\Aconc}{W}$ is given for this
section, that $W$ is Borel and prefix-independent, and that
$v := \chi_\G \in [0,1]^Q$ is the value (function) of the game.  We
first define the crucial notion of \emph{local optimality}:
informally, a Player $\A$ positional strategy is locally optimal if,
at each local interaction, it is optimal in the game in normal form
induced 
by the values of the game. 
As the outcomes of a local interaction are the Nature states, we have
to be able to lift the valuation $v$ of the states into a valuation of
the Nature states. This is done via a convex combination in the
definition below.

\begin{definition}[Lifting a valuation of the states]
  Let $w: Q \rightarrow [0,1]$ be an arbitrary valuation of the
  states.  We define 
  $\mu_w: \distribSet \rightarrow [0,1]$, the \emph{lift} of the
  valuation $w$ to Nature states in the following way:
  $\mu_w(d) := \sum_{q \in Q} \distribFunc(d)(q) \cdot w(q)$ for all
  $d \in \distribSet$.
\end{definition}

We can now define the local optimality of a positional strategy.
\begin{definition}[Local optimality]
  A Player $\A$ positional strategy
  $\s_\A \in \SetPosStrat{\Aconc}{\A}$ is \emph{locally optimal}
  if for all $q \in Q$:
  $\va_{\gameNF{\formNF_q}{\mu_v}}(\s_\A(q)) = v(q)$ (i.e. the
  $\GF$-strategy $\s_\A(q)$ is optimal in 
  $\gameNF{\formNF_q}{\mu_v}$).
\end{definition}

Interestingly, in the MDP induced by a locally optimal strategy, all
the states in a given EC have the same value (w.r.t. the valuation
$v$).
This is stated in the proposition below.
\begin{proposition}[Proposition 18 in \cite{BBSCSL22}]
  For every locally optimal Player $\A$ positional strategy
  $\s_\A \in \SetPosStrat{\Aconc}{\A}$, for all EC $H = (Q_H,\beta)$
  in the MDP induced by the strategy $\s_\A$, there exists a value
  $v_H \in [0,1]$ such that, for all $q \in Q_H$, we have
  $v(q) = v_H$.
  \label{prop:ec_same_vale}
\end{proposition}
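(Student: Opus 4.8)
The plan is to show that the value function $v$, restricted to the states of an EC $H=(Q_H,\beta)$, must be constant by arguing that it is both subharmonic and superharmonic along the transitions available inside $H$, and then using strong connectivity to propagate the constant. Concretely, I would first look at local optimality at a state $q \in Q_H$: it says the $\GF$-strategy $\s_\A(q)$ is optimal in $\gameNF{\formNF_q}{\mu_v}$, so $\va_{\gameNF{\formNF_q}{\mu_v}}(\s_\A(q)) = v(q)$. In particular, for \emph{every} Player $\B$ response $b$ (and in particular for every $b \in \beta(q)$), the expected value $\outM_{\gameNF{\formNF_q}{\mu_v}}(\s_\A(q), b)$ is at least $v(q)$; and this expected value is exactly $\sum_{q' \in Q} \iota(q,b)(q') \cdot v(q') = \sum_{q'} \prob{\s_\A(q),b}{q,q'}\, v(q')$, unwinding the definitions of $\mu_v$, of $\outM$, and of the induced MDP. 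So for all $q \in Q_H$ and $b \in \beta(q)$ we get
\[
  \sum_{q' \in \Supp(\iota(q,b))} \iota(q,b)(q')\, v(q') \;\geq\; v(q),
\]
where the sum ranges only over $Q_H$ since $\Supp(\iota(q,b)) \subseteq Q_H$ by the EC condition.

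Next I would exploit the graph structure. Set $m := \min_{q \in Q_H} v(q)$ and let $Q_H^{\min} := \{q \in Q_H \mid v(q) = m\}$, which is non-empty. I claim $Q_H^{\min}$ is closed under the edge relation $E$ of the EC: if $q \in Q_H^{\min}$ and $(q,q') \in E$ via some $b \in \beta(q)$ with $q' \in \Supp(\iota(q,b))$, then the inequality above reads $\sum_{q''} \iota(q,b)(q'') v(q'') \geq m$, but every term has $v(q'') \geq m$ and the weights sum to $1$, so equality forces $v(q'') = m$ for all $q''$ in the support — in particular $v(q') = m$, i.e. $q' \in Q_H^{\min}$. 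Since $(Q_H,E)$ is strongly connected and $Q_H^{\min}$ is non-empty and $E$-closed, we conclude $Q_H^{\min} = Q_H$, so $v$ is constant equal to $v_H := m$ on $Q_H$.

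The step I expect to require the most care is the unwinding that identifies $\outM_{\gameNF{\formNF_q}{\mu_v}}(\s_\A(q),b)$ with $\sum_{q'} \iota(q,b)(q') v(q')$: one has to chase through Definition~\ref{def:mu_state} (where $\prob{\sigma_\A,\sigma_\B}{q,q'} = \outM_{\gameNF{\formNF_q}{\distribFunc(\cdot)(q')}}(\sigma_\A,\sigma_\B)$), the definition of the lift $\mu_v$, bilinearity of $\outM$ in the valuation of outcomes, and the definition of $\iota$, and check that summing $\distribFunc(d)(q') v(q')$ over $q'$ and then over the outcome distribution reproduces exactly $\outM$ evaluated at the composed valuation $\mu_v$. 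Once that identity is in place, the rest is the elementary convexity-plus-connectivity argument above. (Since the statement is quoted as Proposition~18 of \cite{BBSCSL22}, one may alternatively just cite it, but the argument sketched here is self-contained given the definitions in this excerpt.)
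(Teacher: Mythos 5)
Your overall strategy (a one-step "submartingale" inequality from local optimality, then propagation through the strongly connected EC) is the right one, and your reduction of $\outM_{\gameNF{\formNF_q}{\mu_v}}(\s_\A(q),b)=\sum_{q'}\iota(q,b)(q')\,v(q')$ is exactly the identity the paper records separately (Proposition~\ref{prop:outcome_valuation}); note the paper itself does not reprove the proposition but imports it from \cite{BBSCSL22}.

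There is, however, a genuine error in your closure step. Local optimality only gives the one-sided inequality $\sum_{q'\in Q_H}\iota(q,b)(q')\,v(q')\geq v(q)$ for every $b\in\beta(q)$. If $q$ attains the \emph{minimum} $m$ of $v$ on $Q_H$, this inequality is satisfied automatically, since every term already has $v(q')\geq m$; nothing forces the successors to have value exactly $m$, so your sentence ``equality forces $v(q'')=m$'' does not follow — there is no equality to exploit, only a lower bound that is vacuous at a minimizer. (The set $Q_H^{\min}$ is indeed all of $Q_H$ in the end, but that is the conclusion of the proposition, not something your local argument establishes.) The fix is to run the same argument at the \emph{maximum}: let $M:=\max_{q\in Q_H}v(q)$ and take $q$ with $v(q)=M$. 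For any $b\in\beta(q)$ you have $M=v(q)\leq\sum_{q'\in\Supp(\iota(q,b))}\iota(q,b)(q')\,v(q')\leq M$, because the support lies in $Q_H$ where $v\leq M$; now the two bounds do pinch, forcing $v(q')=M$ for every $q'$ in the support. Hence the set $\{q\in Q_H\mid v(q)=M\}$ is closed under the edge relation $E$ of the EC, and since $(Q_H,E)$ is strongly connected every state of $Q_H$ is reachable from a maximizer, giving $v\equiv M=:v_H$ on $Q_H$. With this min/max correction your proof is complete and is the standard argument.
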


We now discuss how
local optimality relates to uniform
optimality. We first observe that
local optimality is necessary for uniform optimality, and we
illustrate this on the example of
Figure~\ref{fig:local_optimal_necessary}: in this (partly depicted)
game, Nature states are omitted, and
values 
w.r.t. the valuation $v$ are written in red close to the states. For instance, if Player $\A$ plays the top row and Player $\B$ the left column, the state $q_0$ is reached. The local interaction $\formNF_{q_0}$ at state $q_0$, valued with the lift
$\mu_v$ 
is then depicted in Figure~\ref{fig:valued_local_iter}.  Assume
$\s_\A$ is a Player $\A$ positional strategy that is not locally
optimal at $q_0$: 
the convex
combination 
of the values in the second column (i.e. the values of the states
$q_1,q_2$) is less than $1/2$, i.e.
$p \cdot 1/4 + (1-p) \cdot 3/4 = 1/2 - \varepsilon < 1/2$, where
$p \in [0,1]$ is the probability chosen by $\s_\A(q_0)$ to play the
top row. A Player $\B$ strategy $\s_\B$ whose values at states
$q_1,q_2$ are $\big(\varepsilon/2\big)$-close
to 
$v(q_1),v(q_2)$ 
and that plays the second row with probability $1$ at $q_0$, ensures
that the value of the game w.r.t. $\s_\A,\s_\B$ is at most
$p \cdot (1/4 + \varepsilon/2) + (1-p) \cdot (3/4 + \varepsilon/2)
\leq 
1/2 - \varepsilon/2 < 1/2 = v(q_0)$. Thus, the strategy $\s_\A$ is not optimal
from $q_0$.

\begin{figure}
	\begin{minipage}[b]{0.5\linewidth}
		\hspace*{-1cm}
		\centering
		\includegraphics[scale=0.8]{LocalOptimNecessary.pdf}
		\caption{A concurrent game where the values of the states are depicted near them in red.}
		\label{fig:local_optimal_necessary}
	\end{minipage}
	\hspace{5pt}
	\begin{minipage}[b]{0.5\linewidth}
		\centering
		\includegraphics[scale=1.4]{ValuedLocal.pdf}
		\caption{The local interaction at state $q_0$ valued
                  by the function $v$ giving the value of
                  states.}
		\label{fig:valued_local_iter}              
	\end{minipage}
\end{figure}

However, local optimality is not 
a sufficient condition for uniform optimality, as it can be seen in
Figure~\ref{fig:local_optimal_not_uniformly}. A game with a Büchi
objective $\Bu(\{ \top \})$ is depicted (Player $\A$ wins if the state
$\top$ is seen infinitely often). In this game, the valued local
interaction at state $q_0$ is also depicted in
Figure~\ref{fig:valued_local_iter_not_uniform}. A Player $\A$ locally
optimal strategy plays the top row with probability 1 at state
$q_0$. If such a strategy is played, Player $\B$ can ensure the game
never to leave the state $q_0$ (by playing the left column), thus
ensuring value 0 from $q_0$ (with $v(q_0) =
1$). 
In fact, in this game, there is no Player $\A$ optimal strategy.

Overall, from a state $q \in Q$, the local optimality of a Player $\A$
positional strategy $\s_\A$ ensures that, for every Player $\B$
strategy, the convex combination of the values $v_H$ of the ECs $H$,
weighted by the probability to reach them from $q$, is at least the
value of the state $q$. However, this does not guarantee anything
concerning the $\MarVal{\G}[\s_\A]$-value of the game if it never
leaves a specific EC $H$, it may be 0 whereas $v_H > 0$ (that is what
happens in the game of
Figure~\ref{fig:local_optimal_not_uniformly}). 
For the strategy $\s_\A$ to be uniformly optimal, it must ensure that
the value under $\s_\A$ of all states $q$ in the EC $H$
is at least $v_H$ (i.e. for all Player $\B$ strategies $\s_\B$
  which ensure staying in $H$, the value under $\s_\A$ and $\s_\B$ is
  at least
  $v_H$). 
Furthermore, by Theorem~\ref{thm:chaterjee_value_0_1} applied to $H$,
as soon as at least one state has a value smaller than $1$, there is
one state with value $0$. It follows that, for the strategy
  $\s_\A$ to be uniformly optimal, it must be the case that in every
  EC $H$ such that $v_H > 0$, in the game restricted to $H$, 
  the value of the game is $1$. Note that this exactly corresponds to
the condition the authors of \cite{BBSCSL22} have stated in the case
of a reachability
objective. 
Uniform optimality can finally be characterized as follows.
\begin{lemma}[Proof~in Appendix~\ref{sec:proof_lem_uniformly_optimal}]
  Assume that $W$ is Borel and prefix-independent, and that in all
  finite MDPs with objective $Q^\omega \setminus W$, there is a
  positional optimal strategy.
  Let $\s_\A \in \SetPosStrat{\Aconc}{\A}$ be a positional Player $\A$
  strategy. It is uniformly optimal if and only if:
  \begin{itemize}
  \item it is locally optimal;
  \item for all ECs $H$ in the MDP induced by the strategy $\s_\A$, if
    $v_H > 0$, then for all $q \in Q_H$, we have
    $\MarVal{\Aconc_H^{\s_\A}}(q) = 1$ (i.e. in the sub-game formed by
    the EC $H$, the value of state $q$ is 1).
  \end{itemize}
  \label{lem:uniformly_optimal}
\end{lemma}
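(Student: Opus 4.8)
The plan is to prove both implications by analyzing, for a fixed positional Player~$\A$ strategy $\s_\A$, the behaviour of the induced MDP $\Gamma$ against arbitrary Player~$\B$ responses, and to exploit the fact that from any state the play eventually reaches (with probability $1$) some end component of $\Gamma$ and stays there. I would first record the standard MDP fact: for any Player~$\B$ strategy $\s_\B$ and any starting state $q$, the set of infinite plays that eventually remain in some EC $H=(Q_H,\beta)$ of $\Gamma$ (visiting exactly the states of $Q_H$ infinitely often and using only actions of $\beta$) has probability $1$ under $\prob{\Aconc,q}{\s_\A,\s_\B}$. Since $W$ is prefix-independent, the probability of $W$ from $q$ is therefore the sum, over ECs $H$, of the probability of reaching-and-staying in $H$ times a quantity that depends only on the behaviour inside $H$.

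For the ``if'' direction, assume $\s_\A$ is locally optimal and that every EC $H$ with $v_H>0$ has value $1$ in the sub-game $\Aconc_H^{\s_\A}$. Fix $q$ and $\s_\B$. Local optimality gives, at each state $q'$, $\va_{\gameNF{\formNF_{q'}}{\mu_v}}(\s_\A(q')) = v(q')$, hence $\sum_{q''}\prob{\s_\A(q'),\s_\B(\cdot)}{q',q''}\, v(q'') \ge v(q')$ for every response; so $v$ composed with the play is a submartingale along $\s_\A$. By the martingale convergence theorem (or the analysis in \cite{BBSCSL22}), the expected limit value is $\ge v(q)$; by Proposition~\ref{prop:ec_same_vale} the limit value equals $v_H$ on the event ``eventually in $H$''. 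Thus $\sum_H \Pr[\text{eventually in }H]\cdot v_H \ge v(q)$. Now I claim Player~$\B$ cannot do better than $v_H$ against $\s_\A$ inside any $H$: if $v_H>0$ then the sub-game $\Aconc_H^{\s_\A}$ has value $1$, so no matter how $\s_\B$ plays while staying in $H$, the conditional probability of $W$ is $1 \ge v_H$; if $v_H=0$ there is nothing to prove. Hence $\prob{\Aconc,q}{\s_\A,\s_\B}[W] \ge \sum_H \Pr[\text{eventually in }H]\cdot v_H \ge v(q)$, and since this holds for all $\s_\B$, $\s_\A$ is optimal from $q$, i.e. uniformly optimal.

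For the ``only if'' direction, suppose $\s_\A$ is uniformly optimal. Local optimality is necessary by the argument already given in the text around Figure~\ref{fig:local_optimal_necessary}, which I would make precise: if $\s_\A$ were not locally optimal at some $q$, then $\va_{\gameNF{\formNF_q}{\mu_v}}(\s_\A(q)) < v(q)$, and by continuity Player~$\B$ can pick a one-step response dropping the expected next-state value strictly below $v(q)$, then continue with strategies that are $\varepsilon$-optimal for the $W^c$-objective from the successor states, contradicting optimality of $\s_\A$ from $q$. For the second condition, suppose for contradiction that some EC $H$ has $v_H>0$ but $\MarVal{\Aconc_H^{\s_\A}}(q_1)<1$ for some $q_1\in Q_H$; by Theorem~\ref{thm:chaterjee_value_0_1} applied to the prefix-independent sub-game $\Aconc_H^{\s_\A}$ there is $q_2\in Q_H$ with $\MarVal{\Aconc_H^{\s_\A}}(q_2)=0$, so Player~$\B$ has, for every $\varepsilon$, a strategy inside $H$ guaranteeing $W$ with probability $\le \varepsilon$ from $q_2$. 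Using the hypothesis that $W^c$ admits a positional optimal strategy in finite MDPs (so $\MarVal{\Aconc_H^{\s_\A}}(q_2)=0$ means $\s_\B$ can play positionally inside $H$ and reach $W$ with probability $0$), Player~$\B$ can follow $\s_\B$ inside $H$ once the play enters $H$; since the play reaches and stays in some EC with probability~$1$ and, against $\s_\A$, $H$ itself is reachable with positive probability from any $q\in Q_H$ (it is an EC of $\Gamma$, hence strongly connected under $\s_\A$), this forces $\MarVal{\G}[\s_\A](q_2)<v(q_2)=v_H$, contradicting uniform optimality.

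The main obstacle I expect is making the ``only if'' second part fully rigorous: one must combine the value-$0$ sub-game behaviour inside $H$ with the fact that a play starting anywhere in $Q_H$ under $\s_\A$ genuinely commits to $H$ (or leaves it only through states that, by local optimality, have value $v_H$ too), so that Player~$\B$'s in-$H$ sabotage is actually deployed with sufficient probability to lower the global value strictly below $v(q_2)$. This is where the interplay between prefix-independence, Proposition~\ref{prop:ec_same_vale}, and Theorem~\ref{thm:chaterjee_value_0_1} must be orchestrated carefully; the martingale/convergence bookkeeping in the ``if'' direction is routine by comparison. I would also double-check the measurability and the reduction of $\prob{\Aconc,q}{\s_\A,\s_\B}[W]$ to a sum over ECs, citing the standard MDP decomposition of \cite{dealfaro97}.
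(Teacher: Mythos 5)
Your ``only if'' direction is essentially the paper's own argument and is fine: non-local-optimality is punished by a one-step deviation followed by an almost-optimal Player~$\B$ continuation, and for the second condition one applies Theorem~\ref{thm:chaterjee_value_0_1} inside $\Aconc_H^{\s_\A}$ to get a state $q_2\in Q_H$ of sub-game value $0$, then lets Player~$\B$ play a strategy of the sub-arena achieving probability of $W$ at most $v_H/2$ (or exactly $0$, using the positional hypothesis). Note that this is simpler than you fear: uniform optimality must hold from $q_2$ itself, so you start the play at $q_2$; since $\s_\A$ is the fixed positional strategy and $\B$ uses only actions in $\beta$, the play never leaves $Q_H$, hence $\MarVal{\G}[\s_\A](q_2)\le v_H/2<v(q_2)$ directly --- no ``deployment with sufficient probability'' or reachability of $H$ is needed.

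The genuine gap is in your ``if'' direction, which departs from the paper precisely where the lemma's hypothesis is meant to be used. The paper first fixes a \emph{positional optimal} Player~$\B$ response in the induced MDP (this is the only place the assumption on MDPs with objective $Q^\omega\setminus W$ enters), so that $\s_\A,\s_\B$ yield a finite Markov chain; then the BSCCs are pairwise disjoint and absorbing, the one-step submartingale computation gives $\sum_B \Pr[\diamondsuit B]\cdot v_B\ge v(q_0)$, and conditioned on a BSCC with $v_B>0$ the restriction of $\s_\B$ \emph{is} a strategy of $\Aconc_{H_B}^{\s_\A}$, so the value-$1$ hypothesis applies verbatim and gives conditional probability $1$ of $W$. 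You instead fix an arbitrary $\s_\B$ and invoke the end-component theorem plus martingale convergence. Two steps are then not routine as written: (a) the events ``the play eventually remains in $H$'' are not pairwise disjoint over ECs (ECs overlap and are nested), so your ``sum over ECs'' must be restated in terms of the random limit EC; and, more seriously, (b) the claim ``no matter how $\s_\B$ plays while staying in $H$, the conditional probability of $W$ is $1$'' does not follow from $\MarVal{\Aconc_H^{\s_\A}}\equiv 1$: conditioning a history-dependent global strategy on the event that the limit set lies in $H$ does not produce a Player~$\B$ strategy of the sub-arena, so the value-$1$ hypothesis cannot be applied directly; one needs a shifting/conditioning argument (this is exactly the extra work done in the generalization to arbitrary strategies cited after the lemma). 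Tellingly, your ``if'' proof never uses the lemma's hypothesis about positional optimal strategies in MDPs --- a sign that you have taken the harder route while treating its crucial step as immediate. Either supply that conditioning argument, or follow the paper's reduction to a positional $\B$ response and finite Markov chain.
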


Note that what is proved in the Appendix is slightly more general than Lemma~\ref{lem:uniformly_optimal} as it deals with an arbitrary valuation of the states, not only the one giving the value of the game. This generalization will be used in particular to prove that a positional strategy is $\varepsilon$-optimal from all states in Subsection~\ref{subsec:varepsilon_buchi}. Note that Lemma~\ref{lem:uniformly_optimal} is generalized and deals with arbitrary strategies and arbitrary prefix-independent objectives in \cite{bordais2022subgame}.

\section{Playing Optimally with Positional Strategies in Büchi Games}
\label{sec:optimal_in_buchi}
In this section, we focus on B\"uchi objectives. We will distinguish
several frameworks. First we consider the case when optimal strategies
exist from every state and show that in that case, there is a
positional strategy that is uniformly optimal. Furthermore, we show
that this always occurs as soon as all local interactions at states
not in the target are \emph{reach-maximizable}, the condition ensuring
the existence of optimal strategies in reachability
games~\cite{BBSCSL22}. Finally, we study the game forms necessary and
sufficient to ensure the existence of positional almost-optimal
strategies (i.e.  $\varepsilon$-optimal strategies, for all
$\varepsilon>0$).

\subsection{Playing optimally when optimal strategies exist from every state}
\label{subsec:optimal_everywhere_buchi}
First we observe that the value of a B\"uchi game is closely related
to the value of a well-chosen reachability game.  We let
$\G = \Games{\Aconc}{\Bu(T)}$ be a concurrent B\"uchi game.  We modify
$\G$ by replacing every state $q$ in the target $T$, whose value in
the Büchi game is $u := \MarVal{\G}(q)$ by a state that has
probability $u$ to go to $\top$ (the new target in the reachability
game) and probability $1-u$ to go to a sink non-target state
$\bot$. In that case, the value of the original Büchi game is the same
as the value of the obtained reachability game, which we denote
$\G_{\mathsf{reach}} := \Games{\Aconc_{\mathsf{reach}}}{\Reach(\{ \top
  \})}$. The transformation is formally described in
Appendix~\ref{app:buchi2reach}. It is straightforward to
show 
that the values of both games are the same from all states, and that
optimal strategies in the B\"uchi game will be also optimal in the
reachability game. Vice-versa, optimal strategies in the
  reachability game can be lifted to the B\"uchi game by augmenting it
  with a locally optimal strategy at target states.




\begin{proposition}[Proof in Appendix~\ref{appen:proof_prop_positional_suffice_buchi}]
  For all B\"uchi games 
  in which an optimal strategy exists from every state, 
  there exists a Player $\A$ positional strategy that is uniformly optimal.
	\label{prop:positional_suffice_buchi}
\end{proposition}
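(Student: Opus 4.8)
The plan is to reduce the B\"uchi game to the associated reachability game $\G_{\mathsf{reach}}$, exploit known results on reachability games from~\cite{BBSCSL22}, and then lift an optimal positional strategy back to the B\"uchi game, using Lemma~\ref{lem:uniformly_optimal} to verify uniform optimality. First I would invoke Proposition~\ref{prop:buchi_equal_reach} (the statement just above, formalised in Appendix~\ref{app:buchi2reach}): the values of $\G$ and $\G_{\mathsf{reach}}$ coincide at all states, and since optimal strategies exist everywhere in $\G$, they also exist everywhere in $\G_{\mathsf{reach}}$. Then I would apply the reachability result (Theorem~28 in~\cite{BBSCSL22}): when optimal strategies exist from all states of a finite reachability game, there is a positional Player~$\A$ strategy $\s_\A^{\mathsf{reach}}$ that is uniformly optimal; in particular it is locally optimal.

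Next I would build the candidate positional strategy $\s_\A$ in $\G$ by gluing: on $Q \setminus T$, set $\restriction{\s_\A}{Q\setminus T} := \restriction{\s_\A^{\mathsf{reach}}}{Q\setminus T}$ (the local interactions at these states are the same in both games, so this is well-defined); on each $q \in T$, pick $\s_\A(q)$ arbitrarily among optimal $\GF$-strategies in $\gameNF{\formNF_q}{\mu_v}$ where $v := \MarVal{\G}$, which is possible by von Neumann's minimax theorem. By construction $\s_\A$ is locally optimal at every state, discharging the first bullet of Lemma~\ref{lem:uniformly_optimal}.

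For the second bullet, I would take any EC $H = (Q_H,\beta)$ in the MDP induced by $\s_\A$ with $v_H > 0$ (recall by Proposition~\ref{prop:ec_same_vale} all states of $H$ share the value $v_H$), and show $\MarVal{\Aconc_H^{\s_\A}}(q) = 1$ for all $q \in Q_H$. Fix any Player~$\B$ strategy $\s_\B$ in the sub-game $\Aconc_H^{\s_\A}$. Since the values in $\G_{\mathsf{reach}}$ also equal $v_H$ throughout $Q_H$ and $\s_\A^{\mathsf{reach}}$ is optimal there, for every $q \in Q_H \setminus T$ the value $v_H$ is upper-bounded by the $\s_\A^{\mathsf{reach}},\s_\B$-probability of first hitting $T \cap Q_H$ while staying in $Q_H \setminus T$, times (a convex combination of) the target values $v_H$ — forcing that hitting probability to be $1$. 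Hence under $\s_\A,\s_\B$ the set $T \cap Q_H$ is reached with probability~$1$ from every state of $H$; since $H$ is an EC and strongly connected, by recurrence $T \cap Q_H$ is visited infinitely often almost surely, i.e.\ the B\"uchi condition holds with probability~$1$. As $\s_\B$ was arbitrary, $\MarVal{\Aconc_H^{\s_\A}}(q) = 1$. Applying Lemma~\ref{lem:uniformly_optimal} (with $W = \Bu(T)$, which is Borel, prefix-independent, and whose complement co-B\"uchi objective admits positional optimal strategies in finite MDPs) concludes that $\s_\A$ is uniformly optimal.

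The main obstacle I expect is the argument that the hitting probability of $T \cap Q_H$ equals $1$: one must carefully transfer the optimality inequality of $\s_\A^{\mathsf{reach}}$ from the reachability game to the end-component sub-game, check that the target states of $\G_{\mathsf{reach}}$ lying in $Q_H$ all carry value exactly $v_H$ (so that the convex combination bound collapses to $v_H \le \Pr[\text{hit } T\cap Q_H]\cdot v_H$), and then move from "reaches $T\cap Q_H$ almost surely" to "visits $T\cap Q_H$ infinitely often almost surely" using the EC structure — all of which relies on the precise correspondence between $\G$ and $\G_{\mathsf{reach}}$ established in Proposition~\ref{prop:buchi_equal_reach}.
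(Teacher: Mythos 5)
Your proposal is correct and follows essentially the same route as the paper's proof: reduce to $\G_{\mathsf{reach}}$ via Proposition~\ref{prop:buchi_equal_reach}, take a positional uniformly optimal strategy there (Theorem~28 of~\cite{BBSCSL22}), glue it with locally optimal $\GF$-strategies on $T$, and verify the two conditions of Lemma~\ref{lem:uniformly_optimal}, with the end-component condition obtained exactly as you describe (the optimality inequality for $\s_\A^{\mathsf{reach}}$ collapses to $v_H \le v_H \cdot \Pr[\diamondsuit(T\cap Q_H)]$, forcing almost-sure reachability and hence almost-sure infinitely many visits within the EC).
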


\subsection{Game forms ensuring the existence of optimal strategies}

Proposition~\ref{prop:positional_suffice_buchi} assumes the existence
of optimal strategies from every state. However, there exist B\"uchi
games with no optimal strategies, and even for which almost-optimal
strategies require infinite memory. An example of such a game is
depicted in Figure~\ref{fig:local_optimal_not_uniformly} (some details
will be given in Subsection~\ref{subsec:varepsilon_buchi}). This
justifies the interest of having Büchi games in which optimal
strategies exist from every state ``by design''.

In \cite{BBSCSL22}, the authors have proven a necessary and sufficient
condition on game forms to ensure the existence of optimal strategies
in all finite reachability games using these game forms as local interactions. This condition, called \emph{reach-maximizable} game forms (RM for short) is not detailed here but given in Appendix~\ref{subsubsec:partial_val_reach}. This formalizes as follows:
\begin{theorem}[Lem 33 and Thm 36 in \cite{BBSCSL22}]
  In all reachability games $\G = \Games{\Aconc}{\Reach(T)}$ where all
  interactions at states in $Q \setminus T$ are RM, there exist
  positional uniformly optimal strategies (for Player $\A$).
  Furthermore, if a game form $\formNF$ is not RM, one can build a
  reachability game where $\formNF$ is the only non-trivial
  interaction, in which there is no optimal strategy for Player $\A$.
	\label{thm:rm_in_reach}
\end{theorem}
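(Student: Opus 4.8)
\begin{proofs}
Both assertions are from \cite{BBSCSL22} (Lemma~33 and Theorem~36 there); we recall the skeleton, whose ingredients reappear in the B\"uchi analysis.

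\emph{Sufficiency.} Let $\G = \Games{\Aconc}{\Reach(T)}$ have all local interactions at states of $Q \setminus T$ RM, and let $v := \MarVal{\G}$ be its value function. Define a positional Player~$\A$ strategy $\s_\A$ by picking, at each $q \in Q \setminus T$, the $\GF$-strategy granted by the RM property of $\formNF_q$ applied to the lift $\mu_v$ of $v$: this single $\GF$-strategy is both optimal in $\gameNF{\formNF_q}{\mu_v}$ --- so $\s_\A$ is locally optimal --- and ``reach-maximizing'', i.e.\ against \emph{every} Player~$\B$ response it keeps a strictly positive probability of progressing towards target outcomes; on $T$ the choice is irrelevant. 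Now apply the reachability analogue of Lemma~\ref{lem:uniformly_optimal} (also from \cite{BBSCSL22}): local optimality holds by construction, so it remains to check that each end component $H = (Q_H,\beta)$ of the MDP induced by $\s_\A$ with $v_H > 0$ has sub-game value~$1$. By the analogue of Proposition~\ref{prop:ec_same_vale} every state of $H$ has value $v_H$, and since $\beta$-moves stay inside $Q_H$, local optimality makes each realize value exactly $v_H$. If $0 < v_H < 1$, the reach-maximizing guarantee at a state of $H$ would require a positive probability of hitting a value-$1$ outcome while remaining inside $Q_H$ (all of whose states have value $v_H < 1$), which is impossible; hence $\beta(q) = \emptyset$, contradicting the definition of an end component, so no such $H$ exists. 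If $v_H = 1$, the reach-maximizing property together with the strong connectedness of $H$ forces, against every Player~$\B$ strategy confined to $H$, the set $T \cap Q_H$ to be reached almost surely, i.e.\ $\MarVal{\Aconc_H^{\s_\A}} \equiv 1$ on $Q_H$. Lemma~\ref{lem:uniformly_optimal} (reachability form) then gives uniform optimality.

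\emph{Necessity.} If $\formNF = \langle \St_\A,\St_\B,\outComeNF,\outCNF \rangle$ is not RM, unfolding the definition (Appendix~\ref{subsubsec:partial_val_reach}) yields a witness valuation $v : \outComeNF \to [0,1]$ with $\va_{\gameNF{\formNF}{v}} > 0$ for which every optimal $\GF$-strategy of $\gameNF{\formNF}{v}$ is forced by some Player~$\B$ action to ``loop back'' (onto outcomes of value $< 1$) with probability bounded away from $0$, while still only attaining the value. Construct a reachability game with a single non-trivial state $q_0$ carrying $\formNF$: wire each value-$1$ outcome to a target state $\top$, each value-$0$ outcome to a non-target sink $\bot$, each outcome of value $r \in (0,1)$ to a trivial purely probabilistic gadget reaching $\top$ with probability $r$ and $\bot$ otherwise, and the designated ``loop'' outcomes back to $q_0$; the valuation is arranged so that the value of $q_0$ in this game equals $\va_{\gameNF{\formNF}{v}} > 0$. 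Any optimal Player~$\A$ strategy has to play at every visit of $q_0$ an optimal $\GF$-strategy of $\gameNF{\formNF}{v}$; against the bad Player~$\B$ action it then returns to $q_0$ with probability bounded away from $0$ forever, reaching $\top$ with probability strictly below $\va_{\gameNF{\formNF}{v}}$. Hence no Player~$\A$ strategy is optimal from $q_0$, and $\formNF$ is the only non-trivial interaction of this game.

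The bulk of the work is the sufficiency direction, and within it the ``composition'' fact that locally RM interactions, once combined locally optimally, cannot create a positive-value end component trapping the play away from $T$; this is where both the global shape of the reachability value function and the simultaneous-over-all-Player-$\B$-responses nature of the RM guarantee are essential.
\end{proofs}
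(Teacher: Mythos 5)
First, note that the paper itself does not prove this statement: it is imported verbatim from \cite{BBSCSL22} (Lemma~33 and Theorem~36 there), so there is no in-paper argument to compare against. Judged on its own merits, your sketch reproduces the right overall skeleton (local optimality plus an end-component condition à la Lemma~\ref{lem:uniformly_optimal}, and a one-state gadget game for the converse), but the sufficiency direction has a genuine gap exactly at the point you yourself identify as ``the bulk of the work''. The RM property is defined relative to a \emph{partition} of the outcomes into exit and loop outcomes and a \emph{partial} valuation (Appendix~\ref{subsubsec:partial_val_reach}); it does not make sense to invoke it ``applied to the lift $\mu_v$'' at each state, and it does not deliver what you claim, namely a single $\GF$-strategy that is optimal in $\gameNF{\formNF_q}{\mu_v}$ and, against every $b$, has positive probability of ``progressing towards target outcomes''. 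The natural partition at a state of a value class $Q_u$ takes as exits the Nature states leaving $Q_u$, and with that partition the local least-fixed-point value equals $u$ only on a \emph{non-empty subset} of $Q_u$ (this is precisely Lemma~\ref{lem:value_in_RM_up_to_u}); at the remaining states any locally optimal $\GF$-strategy may legitimately have columns whose whole support stays inside $Q_u$, and RM gives nothing there. The actual construction therefore has to proceed by stages within each value class: handle the states of $Q_u^{\lf}$ first, add them to the exit set, and iterate — this is the peeling $\mathcal{Q}^i_n$ that the present paper carries out explicitly in the analogous proof of Theorem~\ref{lem:varepsilon_buchi_sufficient}. Your end-component analysis inherits the problem: the assertion that for $0<v_H<1$ the RM guarantee ``would require a positive probability of hitting a value-$1$ outcome while remaining inside $Q_H$'' is not what RM says (it speaks of exit outcomes, not value-$1$ outcomes), and the correct conclusion (that a positive-value EC either cannot exist or has sub-game value $1$) is obtained by applying the exit guarantee to the \emph{earliest-handled} state of the EC in the staged order, which your argument has no access to.

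The necessity sketch is closer to the cited proof in spirit (a one-state game built from a witness partial valuation, with loop outcomes wired back to $q_0$ and exit outcomes to probabilistic gadgets, whose value at $q_0$ is the least fixed point $u_{\formNF,\lf}(\alpha)>0$), but it too glosses the delicate step: non-RM only says that every \emph{optimal} $\GF$-strategy of the valued local game has a fully-looping column, whereas a putative optimal strategy in the graph game may play sub-optimal but nearly optimal local strategies at some visits, and one must show that Player $\B$ can punish these as well without losing more than the accumulated slack. As a citation-level summary your text is serviceable, but as a proof the composition argument for sufficiency — the per-value-class partial valuations, Lemma~\ref{lem:value_in_RM_up_to_u}, and the staged exit sets — is missing and cannot be replaced by the pointwise shortcut you use.
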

Proposition~\ref{prop:buchi_equal_reach} translates a
Büchi game $\G = \Games{\Aconc}{\Bu(T)}$ into a reachability game
$\G_{\mathsf{reach}} := \Games{\Aconc_{\mathsf{reach}}}{\Reach(\{ \top
  \})}$ while keeping the same value and the same local interactions
in states outside the target $T$; furthermore, 
the local interactions in all states in $T$ in $\G_{\mathsf{reach}}$
become trivial (i.e. the outcome 
is independent of the actions of the players), which are 
RM. Hence, if all local interactions at states in $Q \setminus T$ 
in the original game $\G$ are RM, then all game forms in $\G_{\mathsf{reach}}$ are RM, 
thus there is a uniformly optimal positional strategy for Player $\A$ in that game by Theorem~\ref{thm:rm_in_reach}. Such a strategy can
then be translated back into the Büchi game to get a positional Player
$\A$ uniformly optimal strategy.
We obtain the following result.

\begin{theorem}[Proof in Appendix~\ref{appen:proof_lem_rm_in_buchi}]
  In all Büchi games $\G = \Games{\Aconc}{\Bu(T)}$ where all
  interactions at states in $Q \setminus T$ are RM, there exist
  positional uniformly optimal strategies (for Player
  $\A$). Furthermore, if a game form $\formNF$ is not RM, one can
  build a Büchi game where $\formNF$ is the only non-trivial
  interaction, in which there is no optimal strategy for Player
  $\A$. \label{lem:rm_in_buchi}
\end{theorem}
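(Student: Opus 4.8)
The plan is to prove both directions of Theorem~\ref{lem:rm_in_buchi} by transferring results between Büchi games and reachability games, exactly mirroring the discussion that precedes the statement. First, for the positive direction, I would start from a Büchi game $\G = \Games{\Aconc}{\Bu(T)}$ in which all local interactions at states in $Q \setminus T$ are RM. Using Proposition~\ref{prop:buchi_equal_reach} (the Büchi-to-reachability transformation of Appendix~\ref{app:buchi2reach}), I pass to the reachability game $\G_{\mathsf{reach}} = \Games{\Aconc_{\mathsf{reach}}}{\Reach(\{\top\})}$. The key observations are: (i) the value functions of $\G$ and $\G_{\mathsf{reach}}$ agree on all states of $Q$; (ii) the local interactions at states in $Q\setminus T$ are unchanged by the transformation, hence still RM; and (iii) the local interactions at states in $T$ become trivial after the transformation (the outcome no longer depends on either player's action, since such a state just goes to $\top$ with probability $u$ and to $\bot$ otherwise), and trivial game forms are RM. Therefore all interactions at non-target states of $\G_{\mathsf{reach}}$ are RM, and Theorem~\ref{thm:rm_in_reach} yields a positional uniformly optimal strategy $\s_\A^{\mathsf{reach}}$ for Player~$\A$ in $\G_{\mathsf{reach}}$.

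The second step is to lift $\s_\A^{\mathsf{reach}}$ back to a positional uniformly optimal strategy $\s_\A$ in $\G$. I define $\s_\A$ to coincide with $\s_\A^{\mathsf{reach}}$ on $Q\setminus T$ and, on each target state $q\in T$, to be an arbitrary optimal $\GF$-strategy in $\gameNF{\formNF_q}{\mu_v}$ where $v = \MarVal{\G}$; such a strategy exists by Von Neumann's minimax theorem, so $\s_\A$ is locally optimal by construction. To conclude that $\s_\A$ is uniformly optimal I invoke Lemma~\ref{lem:uniformly_optimal}: local optimality is immediate, so it remains to check the end-component condition, namely that in every EC $H = (Q_H,\beta)$ of the MDP induced by $\s_\A$ with $v_H > 0$, one has $\MarVal{\Aconc_H^{\s_\A}}(q) = 1$ for all $q\in Q_H$. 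By Proposition~\ref{prop:ec_same_vale} all states of $H$ have value $v_H$ in both games. Since $\s_\A^{\mathsf{reach}}$ is (uniformly) optimal in $\G_{\mathsf{reach}}$, for any Player~$\B$ strategy $\s_\B$ that stays in $H$, the value $v_H$ at a state $q\in Q_H\setminus T$ is bounded above by the convex combination over $t\in T\cap Q_H$ of the reachability values at $t$ (each equal to $v_H>0$) weighted by the first-hitting probabilities of $t$ while staying in $Q_H\setminus T$ under $\s_\A^{\mathsf{reach}},\s_\B$; cancelling $v_H$ forces these probabilities to sum to $1$, i.e.\ $T\cap Q_H$ is reached almost surely, hence visited infinitely often almost surely since $H$ is strongly connected. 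So $\MarVal{\Aconc_H^{\s_\A}}(q)=1$, and Lemma~\ref{lem:uniformly_optimal} applies. (This is essentially the argument of the proof sketch of Proposition~\ref{prop:positional_suffice_buchi} in the commented-out section, carried out without assuming optimal strategies exist, since the RM hypothesis supplies them.)

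For the negative direction, I would start from a game form $\formNF$ that is not RM. By Theorem~\ref{thm:rm_in_reach}, there is a reachability game $\G' = \Games{\Aconc'}{\Reach(T')}$ in which $\formNF$ is the only non-trivial interaction and Player~$\A$ has no optimal strategy. I then turn $\G'$ into a Büchi game by making every target state of $T'$ absorbing with a self-loop and setting the Büchi target to $T'$: reaching $T'$ and looping there forever satisfies $\Bu(T')$, and conversely visiting $T'$ infinitely often requires reaching it at least once, so $\Reach(T')$ and $\Bu(T')$ have the same value from every state, and the correspondence is strategy-preserving at non-target states. Since the self-loop at a target state is a trivial interaction, $\formNF$ remains the only non-trivial interaction in the Büchi game; and since optimal strategies transfer back and forth (via Proposition~\ref{prop:buchi_equal_reach} applied to this Büchi game, whose target states already have value as in a reachability game), the absence of an optimal Player~$\A$ strategy in $\G'$ gives the absence of one in the Büchi game.

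The main obstacle I anticipate is the bookkeeping in the first direction: verifying carefully that the transformation of Appendix~\ref{app:buchi2reach} preserves RM-ness of the untouched local interactions and produces genuinely trivial (hence RM) interactions at target states, and making the end-component argument fully rigorous — in particular that "reaching $T\cap Q_H$ almost surely" upgrades to "visiting $T\cap Q_H$ infinitely often almost surely" uses strong connectivity of the EC together with the fact that $\s_\A$ is also locally optimal (hence the induced Markov chain on $H$ remains in $H$ and is recurrent). The negative direction is comparatively routine once the reachability-to-Büchi correspondence with absorbing targets is set up.
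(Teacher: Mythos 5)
Your proposal is correct and follows essentially the same route as the paper: transform the Büchi game into the reachability game of Appendix~\ref{app:buchi2reach}, note that interactions outside $T$ are unchanged (hence RM) while those at $T$, $\top$, $\bot$ become trivial (hence RM), apply Theorem~\ref{thm:rm_in_reach}, and lift the positional uniformly optimal strategy back by playing locally optimally at target states and verifying the end-component condition of Lemma~\ref{lem:uniformly_optimal} — which is exactly the content of Proposition~\ref{prop:uniform_reach_uniform_buchi} used in the paper's appendix. Your treatment of the ``furthermore'' direction (turning the non-RM reachability counterexample of Theorem~\ref{thm:rm_in_reach} into a Büchi game with absorbing targets) is the natural completion that the paper leaves implicit, and it is sound.
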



\subsection{GFs ensuring the existence of positional almost-optimal strategies}
\label{subsec:varepsilon_buchi}

While positional strategies are sufficient to play optimally in
B\"uchi games where optimal strategies do actually exist, the case is
really bad for those B\"uchi games where optimal strategies do not
exist. Indeed, it can be the case that infinite memory is necessary to play almost-optimally, as we illustrate in the next paragraph. Then we characterize game forms that ensure the existence of positional almost-optimal strategies.

\subparagraph*{An example of a B\"uchi game where playing
  almost-optimally requires infinite memory.} \label{ref:Buchi-cex}

Consider the Büchi game $\G = \Games{\Aconc}{\Bu(\top)}$ in
Figure~\ref{fig:local_optimal_not_uniformly}
. As argued earlier (in Section~\ref{sec:CNS_uniform_optimal}), there are no optimal strategies in
that game. 
First, notice that the value of the game at state $q_0$ is $1$. However, any finite-memory strategy (i.e. a strategy that can be described with a finite automaton) has value $0$. Indeed, consider such a Player $\A$ 
strategy $\s_\A$. There is some probability $p > 0$ such that if $\s_\A$ plays
an action with a positive probability, this probability is at least
$p$. If the strategy $\s_\A$ plays, at state $q_0$ the bottom row with
positive probability 
and if Player $\B$ plays the right column with probability $1$, then
state $\bot$ is reached with probability at least $p$. If this
happens infinitely often, then the state $\bot$ is eventually reached 
almost-surely: the value of the strategy 
is then $0$. Hence, Player $\A$ has to play, 
from some time on, 
the top row with probability 1. From that time on, Player $\B$ can play the left column with probability $1$, leading to avoid state $\top$ almost-surely. Hence the value of strategy $\s_\A$ is $0$ as well.  
In fact, all $\varepsilon$-optimal strategies (for every $\varepsilon>0$) cannot be finite-memory strategies. A Player $\A$ strategy with value at least $1 - \varepsilon$ (with $0 < \varepsilon < 1$) will have to play the bottom row with positive probability infinitely often, but that probability will have to
decrease arbitrarily close to $0$. More details are given in
Appendix~\ref{appen:buchi_varepsilon_infinite}.

\begin{figure}
	\begin{minipage}[b]{0.3\linewidth}
		\centering
		\includegraphics[scale=1]{formNFBu.pdf}
		\caption{The game form used as local interaction at $q_0$.}
		\label{fig:buchi_form}
	\end{minipage}
	\hfill
	\begin{minipage}[b]{0.33\linewidth}
		\centering
		\includegraphics[scale=0.8]{BuchiGame1.pdf}
		\caption{The Büchi game $\G_1$.}
		\label{fig:buchi_ok}
	\end{minipage}
	\hfill
	\begin{minipage}[b]{0.33\linewidth}
		\centering
		\includegraphics[scale=0.8]{BuchiGame2.pdf}
		\caption{The Büchi game $\G_2$. 
		}
		\label{fig:buchi_not_ok}
	\end{minipage}
\end{figure}

\subparagraph*{Game forms ensuring the existence of positional
  $\varepsilon$-optimal strategies.} Below we characterize the game
forms which ensure the existence of positional almost-optimal
strategies in B\"uchi games. The approach is inspired by the one
developed in~\cite{BBSCSL22} for reachability games.

We start by discussing an example, and then generalize the
approach. Let us consider the game form
$\formNF
$ 
depicted in Figure~\ref{fig:buchi_form}
, where $\outComeNF = \{x,y,z\}$. We embed this game form into two
different environments, depicted in Figures~\ref{fig:buchi_ok}
and~\ref{fig:buchi_not_ok}. These define two B\"uchi
games 
using the following interpretation: (a) values \textcolor{green}{$0$}
and \textcolor{green}{$1$} in green represent output values giving the
probability to satisfy the B\"uchi condition when these outputs are
selected; (b) other outputs lead to either orange state
\textcolor{orange}{$q_T$} (
a target for the B\"uchi condition) or red state
\textcolor{red}{$q_{\bar{T}}$} (
not a target). In particular, the game of
Figure~\ref{fig:buchi_not_ok} is another representation of the game of
Figure~\ref{fig:local_optimal_not_uniformly}.
  
Let us compare these two games. First notice that there are no optimal
strategies in both cases, as already argued for the game in
Figure~\ref{fig:buchi_not_ok}; the arguments are similar for the game
of Figure~\ref{fig:buchi_ok}.
Interestingly, in the game $\G_1$ in Figure~\ref{fig:buchi_ok}, there
are positional almost-optimal strategies (it is in fact a reachability
game) whereas there are none in the game $\G_2$ in
Figure~\ref{fig:buchi_not_ok} (as already discussed above); despite
the fact that the local interaction at $q_0$ valued with $\mu_v$, for
$v$ the value vector of the game, is that of
Figure~\ref{fig:valued_local_iter_not_uniform} in both cases.

We analyze the two settings to better understand the differences. A
positional $\varepsilon$-optimal strategy $\s_\A$ at $q_0$ in both
games has to be an $\varepsilon$-optimal $\GF$-strategy
$\s_\A(q_0) \in \Dist(\A)$ in the game in normal form
$\gameNF{\formNF_{q_0}}{\mu_v}$ (similarly to how a uniformly optimal
positional strategy needs to be locally optimal
(Lemma~\ref{lem:uniformly_optimal})).  Consider for instance the
Player $\A$ positional strategy $\s_\A$ that plays (at $q_0$) the top
row with probability $1 - \varepsilon$
(which is $\varepsilon$-optimal in
$\gameNF{\formNF_{q_0}}{\mu_v}$). This strategy has value
$1-\varepsilon$ in $\G_1$, but has value $0$ in $\G_2$. In both cases,
if Player $\B$ plays the left column, the target is seen infinitely
often almost-surely (since the bottom row is played with positive
probability). The difference arises if Player $\B$ plays the right
column with probability $1$: in $\G_1$, the target is reached and
never left with probability $1 - \varepsilon$; however, in $\G_2$,
with probability $1-\varepsilon$, the game visits the target but loops
back to $q_0$, hence playing for ever this strategy leads with
probability $1$ to the green value \textcolor{green}{$0$}.
Actually, $\s_\A$ 
is $\varepsilon$-optimal 
if for each column, either there is
no green value but at least one orange \textcolor{orange}{$q_T$}, or
there are green values and their average 
is at least $1-\varepsilon$.

\medskip This intuitive explanation can be generalized to any game
form $\formNF$ as follows, which we will embed in several
  environments. To define an environment, we fix (i) a
partition $\outComeNF = \outComeNFLp \uplus \outComeNFEx$ of the
outcomes ($\mathsf{Lp}$ stands for loop -- i.e. orange and red
outcomes -- and $\mathsf{Ex}$ stands for exit -- i.e. the green
outcomes), (ii) a partial valuation of the outcomes
$\alpha : \outComeNFEx \to [0,1]$ and a probability
$p_T : \outComeNFLp \to [0,1]$ to visit the target $T$ in the next
step and loop back (above, the probability was either $1$ (represented
by orange state \textcolor{orange}{$q_T$}) or $0$ (represented by red
state \textcolor{red}{$q_{\bar{T}}$})). We then consider the B\"uchi
game $\G^\Bu_{\formNF,\alpha,p_T}$ which embeds game form $\formNF$ in
the environment given by $\alpha$ and $p_T$
as follows: an outcome $o \in \outComeNFLp$ leads to $q_T$ with
probability $p_T(o)$ and $q_{\bar{T}}$ otherwise; from $q_T$ or
$q_{\bar{T}}$, surely we go back to $q_0$; an outcome
$o \in \outComeNFEx$ leads to the target $T$ with probability
$\alpha(o)$ and outside $T$ otherwise (in both cases, it stays there
forever); this is formally defined in
Appendix~\ref{appen:proof_prop_varepsilon_buchi_necessary}. We want to
specify that there are positional $\varepsilon$-optimal strategies in
the game $\G^\Bu_{\formNF,\alpha,p_T}$ if and only if there are
$\varepsilon$-optimal $\GF$-strategies in the game form $\formNF$
ensuring the properties described above. However, to express what is
an $\varepsilon$-optimal strategy, we need to know the value of the
game $\G^\Bu_{\formNF,\alpha,p_T}$ at state $q_0$: to do so, we
use~\cite{AM04b}, in which the value of concurrent games with parity
objectives (generalizations of Büchi and co-Büchi objectives) is
computed using $\mu$-calculus. In our case, this can be expressed with
nested 
fixed point operations, as described in
Appendix~\ref{subsubsec:gf_in_buchi}.  We denote this value
$u_{\formNF,\alpha,p_T}^{\Bu}$ or simply $u$.  We can now define
almost-Büchi maximizable (aBM for short) game form.

\begin{definition}[Almost-Büchi maximizable game forms]
  Consider a game form $\formNF$, a partition of the outcomes $\outComeNF = \outComeNFLp \uplus \outComeNFEx$, a partial valuation of the outcomes $\alpha : \outComeNFEx \to [0,1]$ and a probability $p_T : \outComeNFLp \to [0,1]$ to visit the target $T$.
  The game form $\formNF$ is
  \emph{almost-Büchi maximizable} (aBM for short)
  w.r.t. $\alpha$ and $p_T$ if for all $0 < \varepsilon < u$, there
  exists a $\GF$-strategy $\sigma_\A \in \Dist(\St_\A)$ such that, for
  all $b \in \St_\B$, letting
  $A_b := \{ a \in \Supp(\sigma_\A) \mid \outCNF(a,b) \in
  \outComeNFEx \}$, either:
  \begin{itemize}
  \item $A_b = \emptyset$, and there exists $a \in \Supp(\sigma_\A)$
    such that $p_T(\outCNF(a,b)) > 0$ (i.e. if all outcomes loop back
    to $q_0$, there is a positive probability to visit $T$: left
    column in the game $\G_2$);
  \item $A_b \neq \emptyset$, and
    $\sum_{ a \in A_b } \sigma_\A(a) \cdot \alpha(\outCNF(a,b)) \geq
    (u - \varepsilon) \cdot \sigma_\A(A_b)$ (i.e. for the action $b$,
    the value of $\sigma_\A$ restricted to the outcomes in
    $\outComeNFEx$ is at least $u - \varepsilon$: right
    column in the game $\G_1$).
  \end{itemize}
  The game form $\formNF$ is \emph{almost-B\"uchi maximizable (aBM for
    short)} if for all partitions
  $\outComeNF = \outComeNFLp \uplus \outComeNFEx$, for all partial
  valuations $\alpha: \outComeNFEx \rightarrow [0,1]$ and probability
  function $p_T: \outComeNFLp \rightarrow [0,1]$, it is aBM
  w.r.t. $\alpha$ and $p_T$.
  \label{def:epsilon_buchi_maximizable}
\end{definition}

This definition relates to the existence of positional almost-optimal
strategies in $\G^\Bu_{\formNF,\alpha,p_T}$:
\begin{lemma}[Proof in Appendix~\ref{appen:proof_prop_varepsilon_buchi_necessary}]
  The game form $\formNF$ is aBM w.r.t. $\alpha$ and $p_T$ if and only
  if there are positional almost-optimal
  strategies 
  from 
  state $q_0$ in the game $\G^\Bu_{\formNF,\alpha,p_T}$.
  \label{lem:varepsilon_buchi_necessary}
\end{lemma}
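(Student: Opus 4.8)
\textbf{Proof plan for Lemma~\ref{lem:varepsilon_buchi_necessary}.}

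The plan is to prove both directions by relating positional strategies in the one-state-recurrent game $\G^\Bu_{\formNF,\alpha,p_T}$ to $\GF$-strategies in $\formNF$, using the fact that the value at $q_0$ is $u = u_{\formNF,\alpha,p_T}^{\Bu}$ (which we computed via the fixed-point characterization), and that the structure of the game is essentially: from $q_0$ play $\formNF$; an $\outComeNFEx$-outcome $o$ is absorbing with payoff $\alpha(o)$; an $\outComeNFLp$-outcome $o$ gives payoff-neutral passage through $q_T$ (value-neutral for the Büchi condition, but it marks a target visit) with probability $p_T(o)$, or through $q_{\bar T}$ otherwise, and then returns to $q_0$. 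The key observation, which I would state as a preliminary claim, is that any positional Player $\A$ strategy in $\G^\Bu_{\formNF,\alpha,p_T}$ is determined (up to irrelevant choices at the absorbing/return states) by a single $\GF$-strategy $\sigma_\A := \s_\A(q_0) \in \Dist(\St_\A)$, and that its value from $q_0$ can be computed explicitly: against a Player $\B$ positional response $b \in \St_\B$ (pure responses suffice against a fixed $\sigma_\A$ in this finite normal-form-like interaction, and by prefix-independence Player $\B$ may as well play positionally at $q_0$), the play either reaches an absorbing $\outComeNFEx$-outcome — in which case the conditional payoff is the $\sigma_\A$-average of $\alpha$ over $A_b$ — or loops forever in $q_0, q_T, q_{\bar T}$, in which case the Büchi condition is met iff $q_T$ is visited infinitely often, i.e. iff $p_T(\outCNF(a,b)) > 0$ for some $a \in \Supp(\sigma_\A)$ with $\outCNF(a,b) \in \outComeNFLp$.

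For the forward direction (aBM $\Rightarrow$ positional $\varepsilon$-optimal exists), I would fix $\varepsilon$ with $0 < \varepsilon < u$ (the case $\varepsilon \geq u$ is trivial since any strategy is $\varepsilon$-optimal), take the $\GF$-strategy $\sigma_\A$ guaranteed by the aBM definition, and let $\s_\A$ be the positional strategy playing $\sigma_\A$ at $q_0$. I then bound $\MarVal{\G^\Bu_{\formNF,\alpha,p_T}}[\s_\A](q_0)$ from below by $u - \varepsilon$: for any Player $\B$ strategy, decompose according to whether the play is eventually absorbed in an $\outComeNFEx$-outcome or loops forever at $q_0$. In the looping case, the first aBM bullet ensures $q_T$ is hit with positive probability each round (uniformly bounded below since the game is finite), hence $q_T$ is visited infinitely often almost surely, so the Büchi condition holds with probability $1$ on that event. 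In the absorbing case, the second aBM bullet gives that the expected payoff conditioned on exiting via column $b$ is at least $u - \varepsilon$; taking a suitable combination (a martingale / one-step-optimality argument, or simply the generalized version of Lemma~\ref{lem:uniformly_optimal} mentioned after its statement, applied with the valuation $w$ equal to $u$ on $q_0, q_T, q_{\bar T}$ and $\alpha$ on the absorbing states) yields overall value at least $u - \varepsilon$. Since $\MarVal{\G^\Bu_{\formNF,\alpha,p_T}}(q_0) = u$, the strategy $\s_\A$ is $\varepsilon$-optimal.

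For the converse (positional $\varepsilon$-optimal exists $\Rightarrow$ aBM), fix $0 < \varepsilon < u$ and take a positional $\varepsilon$-optimal $\s_\A$ from $q_0$, with $\sigma_\A := \s_\A(q_0)$. First, I would argue $\Supp(\sigma_\A)$ contains no row $a$ such that some $\outCNF(a,b)$ is an $\outComeNFEx$-outcome of $\alpha$-value $0$ and, against the corresponding pure $b$, exiting happens with positive probability in a way that drags the value strictly below $u-\varepsilon$ — more precisely, I examine each pure column response $b$ and the resulting value, which by the preliminary claim equals the $\sigma_\A$-average of $\alpha$ over $A_b$ scaled by the exit probability $\sigma_\A(A_b)$ plus (on the non-exit part) the Büchi-probability of the loop. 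If for some $b$ we had $A_b = \emptyset$ and $p_T(\outCNF(a,b)) = 0$ for all relevant $a$, then playing $b$ forever gives value $0 < u - \varepsilon$, contradicting $\varepsilon$-optimality; this forces the first bullet. If $A_b \neq \emptyset$ but $\sum_{a \in A_b} \sigma_\A(a)\alpha(\outCNF(a,b)) < (u-\varepsilon)\sigma_\A(A_b)$, then Player $\B$ plays $b$ with a tiny admixture (or $b$ together with a near-optimal continuation after $q_{\bar T}$) to push the value below $u - \varepsilon$ — here I would note that the loop part contributes at most its own near-optimal value which is at most $u$, and a careful averaging against $b$-forever versus $b$-with-escape still beats $u-\varepsilon$; this forces the second bullet. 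Since $\varepsilon$ was arbitrary in $(0,u)$, $\formNF$ is aBM w.r.t.\ $\alpha, p_T$.

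\textbf{Main obstacle.} The delicate point is the converse direction's second bullet: showing that a violation of the inequality $\sum_{a\in A_b}\sigma_\A(a)\alpha(\outCNF(a,b)) \geq (u-\varepsilon)\sigma_\A(A_b)$ really lets Player $\B$ refute $\varepsilon$-optimality. One must correctly account for the "residual" value when the play does not exit via $\outComeNFEx$ on a given round but returns to $q_0$ — Player $\B$ cannot simply ignore it, and a naive bound could be off. The clean way is to compute, for the fixed $\sigma_\A$, the exact value against "play $b$ forever" and against mixtures, observing that the best Player $\B$ can hope for on the non-exiting branch is capped by $u$ (the game value), so a strict deficit on the exiting branch against the right $b$ propagates to a strict deficit overall — and to make the uniform positive lower bound on per-round $q_T$-visit probability precise using finiteness of $\St_\A, \St_\B$. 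I expect this bookkeeping, together with justifying that pure positional Player $\B$ responses are worst-case against a fixed positional $\s_\A$ (which follows from the one-state recurrent structure and prefix-independence), to be where the real work lies; the forward direction is comparatively routine given the generalized Lemma~\ref{lem:uniformly_optimal}.
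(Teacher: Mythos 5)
Your forward direction is essentially the paper's own argument: take the $\GF$-strategy $\sigma_\A$ given by the aBM definition, play it positionally at $q_0$, and invoke the generalized form of Lemma~\ref{lem:uniformly_optimal} (Lemma~\ref{lem:uniform_guarantee}); the end-component check is exactly that any EC containing $q_0$ forces $A_b=\emptyset$ for every compatible $b$, hence a fixed positive per-round probability of hitting $q_T$ and value $1$ inside the EC. One correction: the valuation you dominate must be $u-\varepsilon$ (not $u$) on $q_0,q_T,q_{\bar{T}}$, since the second aBM bullet only guarantees the exit average $\geq u-\varepsilon$; with $w(q_0)=u$ the local-domination check fails, while with $u-\varepsilon$ it goes through verbatim and still yields $\varepsilon$-optimality, as your own concluding sentence intends.

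For the converse you take a genuinely different route from the paper, and the obstacle you flag largely dissolves. Against the pure positional response ``play $b$ forever'', whenever $\sigma_\A(A_b)>0$ the play is absorbed almost surely (the per-round exit probability is the constant $\sigma_\A(A_b)$), so there is no residual looping term to control: the value of $\s_\A$ against this single response is exactly $\sum_{a\in A_b}\sigma_\A(a)\,\alpha(\outCNF(a,b))/\sigma_\A(A_b)$, and $\varepsilon$-optimality (the value against any fixed response upper-bounds the value of $\s_\A$) gives the second bullet immediately --- no admixtures, no averaging of ``$b$-forever versus $b$-with-escape'', and no ``capped by $u$'' argument, which as phrased points the wrong way since Player $\B$ is the minimizer. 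The $A_b=\emptyset$ case is your argument and the paper's alike. The paper instead avoids constructing any refuting response: it lets $w$ be the value of the MDP induced by $\s_\A$, observes $w(\top)=1$, $w(\bot)=0$ and $w(q_T)=w(q_{\bar{T}})=w(q_0)\geq u-\varepsilon$ by prefix-independence and $\varepsilon$-optimality, and reads the second bullet off the one-step inequality $w(q_0)\leq \outM_{\gameNF{\formNF_{q_0}}{\mu_w}}(\sigma_\A,b)$. The two arguments are equivalent in substance; yours is more elementary once you commit to the exact geometric computation against $b$-forever, while the paper's is shorter because the MDP value absorbs all the bookkeeping you were worried about.
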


More interestingly, if a concurrent Büchi game has all its game forms
aBM (possibly except at target states), then there exist positional
almost-optimal strategies.
\begin{theorem}[Proof in
  Appendix~\ref{appen:proof_lem_varepsilon_buchi_sufficient}]
  Consider a Büchi game $\G = \Games{\Aconc}{\Bu(T)}$ and assume that
  all local interactions at states in $Q \setminus T$ are aBM. Then,
  for every $\varepsilon > 0$, there is a
  positional strategy that is $\varepsilon$-optimal from every state
  $q \in Q$.
	\label{lem:varepsilon_buchi_sufficient}
\end{theorem}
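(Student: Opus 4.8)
The plan is to build a single positional Player~$\A$ strategy $\s_\A$ and then invoke (the slightly more general version, for an arbitrary valuation of the states, of) Lemma~\ref{lem:uniformly_optimal} to conclude that $\s_\A$ is $\varepsilon$-optimal from every state. As in Proposition~\ref{prop:ec_same_vale}, the key structural fact is that a locally optimal positional strategy induces an MDP in which every end component is value-homogeneous; so the construction can be carried out level by level, one value $u \in \MarVal{\G}[Q]$ at a time, on the slices $Q_u := \MarVal{\G}^{-1}[\{u\}]$, and the resulting per-slice strategies glued together. On target states $q \in Q_u \cap T$ we only require $\s_\A(q)$ to be optimal in $\gameNF{\formNF_q}{\mu_v}$ (local optimality); the work is at states $q \in Q_u \setminus T$.

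The core of the argument is the slice $Q_u \setminus T$. First I would fix a small $0 < \varepsilon_u \le \varepsilon$ (to be tuned: roughly $\varepsilon_u \approx \varepsilon / |Q|$, so that accumulated losses along any path staying in $Q_u$ remain below $\varepsilon$). Then I would process the states of $Q_u \setminus T$ one at a time, maintaining a growing set of ``already handled'' states (played orange in Figures~\ref{fig:explain1}--\ref{fig:explain2}). At each step, for a not-yet-handled state $q$, consider the embedded Büchi game $\G^\Bu_{\formNF_q,\alpha,p_T}$, where $\alpha$ valuates the Nature states that exit $Q_u$ (so $\outComeNFEx$ = Nature states with positive probability of leaving $Q_u$, valued by the $\MarVal{\G}$-value of where they land), and $p_T(d) := \distribFunc(d)[(Q_u \cap T)\cup \text{handled states}]$ for Nature states $d$ staying inside $Q_u$. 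The crucial claim is that among the not-yet-handled states there is always at least one $q$ with $u^{\Bu}_{\formNF_q,\alpha,p_T} \ge u$: this follows from the fixed-point characterization of the Büchi value (Appendix~\ref{subsubsec:gf_in_buchi}) together with the fact that $\MarVal{\G}(q) = u$ for all these states and that the slice-value profile is a fixed point of the relevant operator --- if no such $q$ existed, one contraction step of the least/greatest fixed-point operator on $Q_u$ would strictly drop the value below $u$ at some state, contradicting $\MarVal{\G} \equiv u$ on $Q_u$. For that $q$ I would apply the aBM property of $\formNF_q$ with parameter $\varepsilon_u$ to obtain a $\GF$-strategy $\sigma_\A$, set $\s_\A(q) := \sigma_\A$, declare $q$ handled, and iterate. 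By induction the property $u^{\Bu}_{\formNF_{q'},\alpha,p_T} \ge u$ then propagates: once enough states are orange, every remaining state of $Q_u \setminus T$ has positive probability (under the chosen $\GF$-strategies, against any $\B$-move) of either exiting $Q_u$ towards value $\ge u$ or moving to an orange state, i.e.\ getting strictly closer to $(Q_u\cap T)\cup\{\text{handled}\}$.

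Finally I would verify the two hypotheses of Lemma~\ref{lem:uniformly_optimal} (in its generalized form, applied with the valuation $w := \MarVal{\G}$, not necessarily the value of the game played with $\s_\A$). Local optimality holds by construction on target states and by the aBM-chosen $\GF$-strategies on $Q_u \setminus T$ (one checks the aBM inequalities exactly say that $\sigma_\A$ is $\varepsilon_u$-optimal in $\gameNF{\formNF_q}{\mu_w}$, whence local $\varepsilon_u$-optimality, which is what the generalized lemma needs). For the end-component condition: let $H$ be an EC in the MDP induced by $\s_\A$ with $v_H > 0$; by Proposition~\ref{prop:ec_same_vale} all states of $H$ lie in a single slice $Q_u$ with $u = v_H > 0$. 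If $H$ were disjoint from $T$, then inside $H$ no Nature state exits $Q_u$ and no state is ``handled from outside'', so by the ``getting closer to the target'' property above, from every state of $H$ there is a uniformly positive probability of reaching $T \cap Q_H$ in a bounded number of steps; but then $H$ must actually meet $T$, and the same argument shows that against every Player~$\B$ strategy staying in $H$, $T\cap Q_H$ is visited infinitely often almost surely, i.e.\ $\MarVal{\Aconc_H^{\s_\A}}(q) = 1$ for all $q \in Q_H$. Applying Lemma~\ref{lem:uniformly_optimal} (generalized) with valuation $\MarVal{\G}$ and the slack $\varepsilon$ then yields that $\MarVal{\G}[\s_\A](q) \ge \MarVal{\G}(q) - \varepsilon$ for all $q$, which is the claim.

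\textbf{Main obstacle.} The delicate point is the existence claim --- that in each slice $Q_u \setminus T$ there is always a state $q$ with $u^{\Bu}_{\formNF_q,\alpha,p_T} \ge u$, and that this property keeps propagating as states become orange. This requires carefully matching the nested fixed-point definition of $u^{\Bu}_{\formNF_q,\alpha,p_T}$ (from Appendix~\ref{subsubsec:gf_in_buchi}) with the global $\mu$-calculus characterization of $\MarVal{\G}$ from \cite{AM04b}, and arguing that the restriction of a fixed point of the global operator to $Q_u$ is ``locally'' a fixed point in the sense detected by the embedded games. A secondary nuisance is the bookkeeping of the $\varepsilon_u$'s: one must ensure that the losses incurred at the (at most $|Q_H|$) states visited infinitely often in an EC, and the losses incurred along a finite prefix before settling in an EC, together stay below $\varepsilon$ --- this is where one needs $\varepsilon_u$ chosen small relative to $1/|Q|$ and uses that states of different slice-values cannot both be visited infinitely often.
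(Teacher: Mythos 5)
Your overall architecture is the one the paper uses: slice the state space into the level sets $Q_u$ of $\MarVal{\G}$, make $\s_\A$ locally optimal on $Q_u\cap T$, treat the states of $Q_u\setminus T$ one at a time through the embedded games $\G^\Bu_{\formNF_q,\alpha,p_T}$, use the key claim that some untreated state satisfies $u^{\Bu}_{\formNF_q,\alpha,p_T}\geq u$ (the paper derives this from Lemma~\ref{lem:value_in_RM_up_to_u} together with Lemma~\ref{lem:partial_val_reach_buchi}, which is the same fixed-point argument you sketch), apply aBM at that state, promote it to the target set and iterate, and finally argue that any end component confined to $Q_u$ sees $T$ infinitely often almost surely. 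So far this matches the appendix proof.

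The genuine gap is in your last step. The generalized Lemma~\ref{lem:uniformly_optimal} (Lemma~\ref{lem:uniform_guarantee}) requires the positional strategy to \emph{locally dominate} a single fixed valuation $w$, i.e.\ $\va_{\gameNF{\formNF_q}{\mu_w}}(\s_\A(q))\geq w(q)$ at every state; local $\varepsilon_u$-optimality with respect to $\MarVal{\G}$ is not what it needs, and your proposal to ``apply the lemma with valuation $\MarVal{\G}$ and slack $\varepsilon$'' does not type-check against the lemma. Nor does a uniform perturbation $w:=\MarVal{\G}-\varepsilon/|Q|$ repair this: if every state carries the same error $c$ and $\s_\A(q)$ is only $\varepsilon_u$-optimal in $\gameNF{\formNF_q}{\mu_v}$, then the best one gets is $\va_{\gameNF{\formNF_q}{\mu_w}}(\s_\A(q))\geq w(q)-\varepsilon_u$, which is strictly short of domination; accumulating losses ``along a path'' is not an available mode of reasoning once you channel everything through the lemma. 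The paper resolves this by building a value-dependent perturbation $v'(q)=v_i-\varepsilon_i$ on $Q_{v_i}$ with an \emph{increasing} error sequence $0=\varepsilon_0<\varepsilon_1<\dots<\varepsilon_{k-1}\leq\varepsilon$ calibrated to the minimal value gap $m$ and the maximal transition probability $p$, satisfying the implication~(\ref{eqn:implication}); the aBM definition is then invoked not with $\varepsilon_u\approx\varepsilon/|Q|$ but with the margin $\varepsilon_{err}=\min(m/2,(1-p)\cdot\varepsilon\cdot\alpha^2\cdot\eta)$, and local domination of $v'$ is verified by splitting each column into its exit part (where the aBM inequality gives average $\geq v_i-\varepsilon_{err}$) and its loop part (where $w$ loses nothing), using that the weight placed on lower-value classes is bounded via $\alpha<1$. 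This calibration is a substantive part of the proof, not bookkeeping, and as stated your argument would not go through without it.
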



\begin{proofs} 
  Let $\varepsilon > 0$. We build a positional Player $\A$ strategy
  $\s_\A$ and then apply (a slight generalization of)
  Lemma~\ref{lem:uniformly_optimal} to show that it is
  $\varepsilon$-optimal. Let $v := \MarVal{\G}$ be the value vector of
  the game and $u \in v[Q] \setminus \{ 0 \}$ be some positive value
  of the game. Consider the set $Q_u := v^{-1}[\{u\}] \subseteq Q$ of
  states whose values w.r.t. $v$ is $u$. We 
  define the strategy $\s_\A$ on each 
  $Q_u$ for $u \in v[Q]$ 
  and then glue the portions of $\s_\A$ together. This is possible
  because 
  the Player $\A$ strategy we
  build 
  enforces end-components in which the value given by $v$ of all
  states is the same (similarly to Proposition~\ref{prop:ec_same_vale}
  for locally optimal strategies).

	
  In Figure~\ref{fig:explain1}, the set $Q_u$ corresponds to the white
  area. Target states ($T$) are in orange, while non-target states are
  in red ($q_1,q_2,q_3$ in the figure).  From every state of $Q_u$,
  there may be several split arrows, which correspond to choices by
  Player $\B$ (actions $b \in \St_\B$); black split edges stay within
  $Q_u$ while blue edges partly lead outside $Q_u$; once a split edge
  is chosen by Player $\B$, Player $\A$ may ensure any leaving edge
  with some positive probability.

  In a state $q \in Q_u \cap T$, the strategy $\s_\A$ only needs to be
  locally optimal, i.e. such that
  $\va_{\gameNF{\formNF_q}{\mu_v}}(\s_\A(q)) = v(q)$. Then, the states
  in $Q_u \setminus T$ will be considered one by one. Let
  $q \in Q_u \setminus T$, we consider the Büchi game
  $\G^\Bu_{\formNF_q,\alpha,p_T}$ built from the aBM game form
  $\formNF_q$ and the immediate environment of $q$ as follows:
  the partial valuation $\alpha$ of the outcomes (i.e. the Nature
  states) is defined on those with a positive probability to reach
  $Q \setminus Q_u$ (i.e. 
  the green area -- green outcomes in Figures~\ref{fig:buchi_ok},~\ref{fig:buchi_not_ok}), and $p_T$ maps
  a Nature state $d$ staying in $Q_u$ 
  to the probability $\distribFunc(d)[Q_u \cap T]$ to reach the
  target $T$ (i.e. the probability to reach the orange states in
  Figure~\ref{fig:explain1} -- they correspond to the orange and red
  outcomes in Figures~\ref{fig:buchi_ok},~\ref{fig:buchi_not_ok}).
  
  In Figure~\ref{fig:explain1}, states in red are non-winning yet
  (non-target in the first stage) and therefore if Player $\B$ can
  choose a black split-edge leading only to red states, then the value
  of game $\G^\Bu_{\formNF_q,\alpha,p_T}$ is $0$ (this is the case of
  $q_2$). On the other hand, if all split-edges are either blue or
  black with at least one orange end, then the value of game
  $\G^\Bu_{\formNF_q,\alpha,p_T}$ is positive (this is the case of
  states $q_1$ and $q_3$).  Then, we realize that there must be at
  least one state $q \in Q_u \setminus T$ such that
  $u_{\formNF_q,\alpha,p_T}^{\Bu} \geq u$ (this is a key argument, and
  it is due to the definition of 
  $u_{\formNF_q,\alpha,p_T}^{\Bu}$, which relates to how the value in
  B\"uchi games is computed via fixed-point operations). In
  Figure~\ref{fig:explain1}, there are actually two such states, $q_1$
  and $q_3$: the value at $q_1$ is $1$ while the value at $q_3$ is an
  average of the values at the two (green) ends of the (blue)
  split-edge. 
  We can then use the facts that $\formNF_{q_1}$ and $\formNF_{q_3}$
  are aBM and apply the definition with a well-chosen
  $0 < \varepsilon_u \leq \varepsilon$ to obtain the $\GF$-strategies
  played by the strategy $\s_\A$ at states $q_1$ and $q_3$.

  We then iterate the process by going to the second stage by
  considering that the previously dealt states ($q_1$ and $q_3$ in our
  example)
  are now orange, as in Figure~\ref{fig:explain2}: they are now
  considered as targets. The property that
  $u_{\formNF_q,\alpha,p_T}^{\Bu} \geq u$ (with a new $p_T$ taking
  into account the larger set of targets) then propagates throughout
  the game to all states in the white area. The strategy $\s_\A$ is
  now fully defined on $Q_u$, and we can check that, under any Player
  $\B$
  strategy, 
  if the game eventually stays within an EC within $Q_u$, it will
  reach the target $T$ infinitely often almost-surely. Indeed, from
  each state, there is either a positive probability to leave the EC
  (i.e. see a green outcome) or a positive probability to get closer
  to the target (i.e. see an orange outcome).
\end{proofs}

\begin{figure}
	\begin{minipage}[b]{0.49\linewidth}
		\centering
		\includegraphics[scale=0.3]{DrawingArrow.pdf}
		\caption{First stage of the construction of strategy
                  $\s_\A$ within $Q_u$.}
		\label{fig:explain1}
	\end{minipage}
	\hfill
	\begin{minipage}[b]{0.49\linewidth}
		\centering
		\includegraphics[scale=0.3]{DrawingArrow2.pdf}
		\caption{Second stage of the construction of strategy
                  $\s_\A$ within $Q_u$.}
		\label{fig:explain2}
	\end{minipage}
\end{figure}


With Lemma~\ref{lem:varepsilon_buchi_necessary} and
  Theorem~\ref{lem:varepsilon_buchi_sufficient}, we obtain a corollary
  analogous to Theorem~\ref{lem:rm_in_buchi} for aBM game forms and the
  existence of almost-optimal strategies.
\begin{corollary}
  In all Büchi games $\G = \Games{\Aconc}{\Bu(T)}$ where all
  interactions at states in $Q \setminus T$ are aBM, there exist
  positional uniformly almost-optimal strategies (for Player $\A$).
  Furthermore, if a game form $\formNF$ is not aBM, one can
    build a Büchi game where $\formNF$ is the only non-trivial
    interaction, in which there is no positional almost-optimal
    strategy for Player $\A$.
	\label{coro:aBM_in_buchi}
\end{corollary}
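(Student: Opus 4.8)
The plan is to obtain this corollary as a direct packaging of Theorem~\ref{lem:varepsilon_buchi_sufficient} (for the positive part) and Lemma~\ref{lem:varepsilon_buchi_necessary} (for the converse part), so that essentially no new argument is needed beyond assembling these two facts and verifying one triviality condition on the witnessing game.

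For the first statement, I would simply invoke Theorem~\ref{lem:varepsilon_buchi_sufficient}: given a Büchi game $\G = \Games{\Aconc}{\Bu(T)}$ in which every local interaction at a state of $Q \setminus T$ is aBM, the theorem produces, for each fixed $\varepsilon > 0$, a single positional Player~$\A$ strategy that is $\varepsilon$-optimal from every state of $Q$. Since a positional strategy that is $\varepsilon$-optimal from every state is exactly what is meant by a positional uniformly almost-optimal strategy, this settles the claim with nothing further to check.

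For the second statement, I would argue from a witness of non-aBM-ness. If $\formNF$ is not aBM, then unfolding Definition~\ref{def:epsilon_buchi_maximizable} there exist a partition $\outComeNF = \outComeNFLp \uplus \outComeNFEx$, a partial valuation $\alpha : \outComeNFEx \to [0,1]$ and a probability function $p_T : \outComeNFLp \to [0,1]$ such that $\formNF$ is not aBM w.r.t.\ $\alpha$ and $p_T$. I would then consider the Büchi game $\G^\Bu_{\formNF,\alpha,p_T}$ constructed in Appendix~\ref{appen:proof_prop_varepsilon_buchi_necessary}; applied contrapositively, Lemma~\ref{lem:varepsilon_buchi_necessary} yields that there is no positional almost-optimal Player~$\A$ strategy from $q_0$ in this game, hence no positional almost-optimal strategy for Player~$\A$ at all. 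It then remains to observe that $\G^\Bu_{\formNF,\alpha,p_T}$ places $\formNF$ at the non-target state $q_0$ and that every other state is trivial — $q_T$ and $q_{\bar T}$ deterministically loop back to $q_0$, and each state reached via an outcome of $\outComeNFEx$ is absorbing — so $\formNF$ is indeed the only non-trivial local interaction, which is precisely the required game.

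The only mildly delicate point, and the one I would write out most carefully, is this last bookkeeping step: confirming that the construction of $\G^\Bu_{\formNF,\alpha,p_T}$ in Appendix~\ref{appen:proof_prop_varepsilon_buchi_necessary} genuinely contains no non-trivial local interaction besides the copy of $\formNF$ at $q_0$, and noting that ruling out a positional almost-optimal strategy "from $q_0$" already rules one out "for Player~$\A$", which is all the corollary asks. Neither involves any real computation; the whole substantive content lives in Theorem~\ref{lem:varepsilon_buchi_sufficient} and Lemma~\ref{lem:varepsilon_buchi_necessary}, already established in the appendices, so I expect the proof to be short.
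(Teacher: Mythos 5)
Your proposal is correct and matches the paper's (implicit) argument exactly: the positive part is Theorem~\ref{lem:varepsilon_buchi_sufficient} verbatim, and the negative part is the contrapositive of Lemma~\ref{lem:varepsilon_buchi_necessary} applied to a witnessing partition, valuation $\alpha$ and probability function $p_T$, with the game $\G^\Bu_{\formNF,\alpha,p_T}$ serving as the required counterexample since all its local interactions other than the copy of $\formNF$ at $q_0$ are action-independent, hence trivial. Nothing further is needed.
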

Note that game forms that are not aBM may behave well in some
environments, even though they do not behave well in some other
environments. Hence, it might be the case that in a specific Büchi game
with local interactions that are not aBM, there are some positional
almost-optimal strategies. This observation stands for all the type of
game forms we have characterized: RM, aBM and also coBM in the next
section.

Finally, note that it is decidable whether a game form is aBM.
\begin{proposition}[Proof in Appendix~\ref{appen:ebm_rm_gf}]
	It is decidable if a game form is aBM. Moreover, all RM game forms are aBM game forms and there is a game form that is aBM but not RM.
	\label{prop:rm_ebm}
\end{proposition}


\section{Playing Optimally in co-Büchi Games}
\label{sec:co-Buchi}

Although they may seem quite 
close, Büchi and co-Büchi objectives do not enjoy the same properties
in the setting of concurrent games. For instance, we have seen that
there are Büchi games in which a state has value $1$ but any
finite-memory strategy has value $0$.
This cannot happen in co-Büchi games, since strategies with values
arbitrarily close to $1$ can be found among positional
strategies~\cite{CAH06}. We have also seen in
Section~\ref{sec:optimal_in_buchi} that in all concurrent Büchi games,
if there is an optimal strategy from all states, then there is a
uniformly optimal positional strategy. As we will see in the next
subsection, this does not hold for co-Büchi games. This shows
  that concurrency in games complicates a lot the model: the results
  of this paper has to be put in regards with the model of turn-based
  games where pure positional strategies are sufficient to play
  optimally, for parity
  objectives~\cite{MM02
  }.

\subsection{Optimal strategies may require infinite memory in
  co-B\"uchi games}
\label{subsec:optimal_strat_in_co_buchi}

Consider the game depicted in Figure~\ref{fig:co_buchi_infinite},
which uses the same convention as in the previous section. The
objective is $W = \coBu(\{ q_T,q_T',\top \})$ where the state $\top$
is not represented, but implicitly present via the green values (for
instance, green value \textcolor{green}{$1/2$} leads to $\top$ with
probability $1/2$ and to $\bot$ with probability $1/2$). Let
$A := \{ a_1,a_2,a_3 \}$ with $a_1$ the top row and $a_3$ the bottom
row and $B := \{ b_1,b_2,b_3 \}$ with $b_1$ the leftmost column and
$b_3$ the rightmost column. 
If a green value is not reached, Player $\A$ wins if and only if eventually, the red state is not seen anymore.
The values of the states
$q_0,q_T,q_{T'}$ and $q_{\bar{T}}$ are the same and are at most $1/2$. Indeed, if
Player $\B$ almost-surely plays $b_3$ at $q_0$,
she ensures that the value of the game from $q_0$ is at most
$1/2$. Let us argue that any Player $\A$ positional strategy has value
less than $1/2$ and exhibit an infinite-memory Player $\A$ strategy
whose value is $1/2$.

Consider a Player $\A$ positional strategy $\s_\A$.
We define a Player $\B$ strategy $\s_\B$ 
as follows: if $\s_\A(q_0)(a_3) = \varepsilon > 0$,
then we set $\s_\B(q_0)(b_3) := 1$ 
and the value of the game w.r.t. $\s_\A,\s_\B$ is at most $1/2-\varepsilon<1/2$.
If 
$\s_\A(q_0)(a_1) = 1$, 
we set $\s_\B(q_0)(b_2) := 1$ and the
state $q_T' \in T$ is visited infinitely often almost-surely. Otherwise, $\s_\A(q_0)(a_2) > 0$ and $\s_\A(q_0)(a_3) = 0$, hence
choosing $\s_\B(q_0)(b_1) := 1$
ensures that the state $q_T \in T$ is visited infinitely often
almost-surely. 
In the last two cases, $\s_\A$ has value $0$. 
Overall, any positional strategy 
$\s_\A$ has value 
less than $1/2$.

We briefly describe a Player $\A$ optimal strategy $\s_\A$ (whose
value is $1/2$). The idea is the following: along histories that have
not visited 
$q'_T$ yet (this happens 
when Player $\B$ has not played $b_2$), $\s_\A$ plays $a_1$ with very high
probability $1 - \varepsilon_k < 1$ and $a_2$ with probability
$\varepsilon_{k} > 0$, where $k$ denotes the number of steps.  The
values $(\varepsilon_{k})_{k \in \N}$ are chosen so that, if Player
$\B$ only plays $b_1$ with probability $1$, then the state $q_T \in T$
is seen finitely often almost-surely. Some details are
given in Appendix~\ref{appen:infinite_memory_cobuchi}. After the first
visit to $q'_T$, Player $\s_\A$ switches to a positional strategy of
value $\frac{1}{2} - \varepsilon'_{k}$, for $\varepsilon'_k>0$ and
$k$ is the number of steps after the first visit to $q'_T$. A first
visit to $q'_T$ occurs when 
$b_2$ is played by Player $\B$.
The value of $\s_\A$ after that point is then
$(1 - \varepsilon_{k}) \cdot (\frac{1}{2} - \varepsilon'_{k}) +
\varepsilon_{k} \cdot 1$
. It suffices to choose $\varepsilon'_k$ small enough
so that 
the above value is at least $1/2$. 
Such a Player $\A$ strategy is optimal from $q_0$. 
It follows that, contrary to the Büchi case, requiring that positional optimal strategies exists from all states in a co-Büchi game is stronger than requiring that optimal strategies exists from all states.



\begin{figure}
	\begin{minipage}[b]{0.3\linewidth}
		\hspace*{-1.5cm}
		\centering
		\includegraphics[scale=0.75]{InfiniteMemoryCoBuchi.pdf}
		\caption{The co-Büchi game $\G_1$.}
		\label{fig:co_buchi_infinite}
	\end{minipage}
        \hfill
	\begin{minipage}[b]{0.3\linewidth}
		\hspace*{-0.5cm}
		\centering
		\includegraphics[scale=1]{formCoBuchi.pdf}
		\caption{The local interaction at state
                  $q_0$. 
              }
		\label{fig:game_form_cobuchi}
	\end{minipage}
        \hfill
	\begin{minipage}[b]{0.3\linewidth}
		\centering
		\includegraphics[scale=0.75]{PositionalMemoryCoBuchi.pdf}
		\caption{Another co-Büchi game ($\G_2$) built on $\formNF$.
              }
		\label{fig:game_cobuchi}
	\end{minipage}
\end{figure}

\subsection{GFs in co-Büchi games which ensure positional
  optimal strategies}

Contrary to the Büchi objectives, RM game forms do not suffice to
ensure the existence of positional uniformly optimal strategies in
co-Büchi games,
see Proposition~\ref{prop:rm_cobm}.
In this subsection, we characterize the game forms ensuring this
property.

We proceed as in Subsection~\ref{subsec:varepsilon_buchi}, and we
explain the approach on an example. Consider the game form $\formNF$
depicted in Figure~\ref{fig:game_form_cobuchi}, that is the local
interaction at state $q_0$ of the game in
Figure~\ref{fig:co_buchi_infinite}. Let
$\outComeNF = \outComeNFEx \uplus \outComeNFLp$ for
$\outComeNFEx := \{ t,r,s \}$, $\outComeNFLp := \{ x,y,z \}$ and
consider the partial valuation
$\alpha: \outComeNFEx \rightarrow [0,1]$ such that $\alpha(t) := 1/2$,
$\alpha(r) := 1$ and $\alpha(s) := 0$. We then consider two
probability functions $p_T^1,p_T^2: \outComeNFLp \rightarrow [0,1]$
such that $p_T^1(x),p_T^2(x) := 0$, $p_T^1(z),p_T^2(z) := 1$,
$p_T^1(y) := 1$ whereas $p_T^2(y) := 0$. The games
$\G_1 := \G^\coBu_{\formNF,\alpha,p_T^1}$
(Figure~\ref{fig:co_buchi_infinite}) and
$\G_2 := \G^\coBu_{\formNF,\alpha,p_T^2}$
(Figure~\ref{fig:game_cobuchi}) are defined just like their Büchi
counterparts, except for the objective which is $W = \coBu(T)$ with
$q_T,q_T' \in T$ and $q_{\bar{T}}
\notin
T$. 
We have already argued 
that there is no positional
optimal strategy in the game $\G_1$
(Figure~\ref{fig:co_buchi_infinite}). The issue is the following: when
considering a locally optimal strategy (
recall Lemma~\ref{lem:uniformly_optimal}), there is a column where, in the
support of the strategy, there is a red outcome (i.e. positive
probability to reach $T$) and no green outcome
. 
Then, if Player $\B$ plays, with
probability $1$, the corresponding action
, she 
ensures that the set $T$ 
is visited infinitely often almost-surely. Now, in the game $\G_2$ of
Figure~\ref{fig:game_form_cobuchi}, any Player $\A$ positional
strategy playing $a_3$ with probability $0$ and $a_2$ with positive
probability is uniformly optimal. Indeed, in that case, (i) if $b_1$
is played, then the set $T$ is not seen; (ii) if $b_2$ or $b_3$ are
played, then the game will end in a green outcome almost-surely.

In the general case, as for B\"uchi games, the value of co-Büchi games
can be computed with nested fixed points (see
Appendix~\ref{subsubsec:gf_in_cobuchi}). Using this value, we can
define the notion of co-Büchi maximizable game forms (coBM for short),
which, while requiring a slightly more complex setting,
  formalizes the intuition given in the previous example, see
  Appendix~\ref{proof:thm_coBM_necessary}. It ensures a lemma
analogous to Lemma~\ref{lem:varepsilon_buchi_necessary} in the context
of co-Büchi games.

Furthermore, 
coBM game forms also ensure that, when they are used in a co-Büchi game, there always exists a positional uniformly optimal strategy.
\begin{theorem}[Proof in Appendix~\ref{proof:thm_coBM_sufficient}]
  In all co-Büchi games $\G = \Games{\Aconc}{\coBu(T)}$ where all
  local interactions at states in $T$ are RM and all local
  interactions at states in $Q \setminus T$ are coBM, there exist
  positional uniformly optimal strategies (for Player
  $\A$). Furthermore, if a game form $\formNF$ is not coBM, one can
  build a co-Büchi game where $\formNF$ is the only non-trivial
  interaction where there is no positional optimal strategy for Player
  $\A$.
  \label{lem:co_buchi_sufficient}
\end{theorem}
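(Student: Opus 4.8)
The plan is to mirror the structure of the proof of Theorem~\ref{lem:varepsilon_buchi_sufficient}, but now targeting \emph{optimal} (not just $\varepsilon$-optimal) positional strategies, using Lemma~\ref{lem:uniformly_optimal} as the final verification tool. First I would handle the ``furthermore'' part, which is the easier direction: if $\formNF$ is not coBM, then by the definition of coBM there is a choice of environment $(\alpha, p_T)$ (equivalently, a probabilistic exit function, in the formalism of Appendix~\ref{proof:thm_coBM_necessary}) witnessing the failure; by the analogue of Lemma~\ref{lem:varepsilon_buchi_necessary} in the co-Büchi setting, the game $\G^{\coBu}_{\formNF,\alpha,p_T}$ has no positional optimal strategy, and in that game $\formNF$ is the only non-trivial interaction. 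So the bulk of the work is the positive direction.

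For the positive direction, let $\G = \Games{\Aconc}{\coBu(T)}$ with all interactions in $T$ RM and all interactions in $Q\setminus T$ coBM, and let $v := \MarVal{\G}$. As in the Büchi case, by Proposition~\ref{prop:ec_same_vale}-type reasoning, I would build $\s_\A$ layer by layer on each value class $Q_u := v^{-1}[\{u\}]$ for $u \in v[Q]$, positive values first, and glue; within each $Q_u$ I would iterate a two-phase construction. On target states $q \in Q_u \cap T$, I set $\s_\A(q)$ to be an RM-optimal local strategy (this is exactly where RM at target states is used: in an EC, the subgame restricted to it is essentially a reachability game towards the ``exit'' outcomes, and RM guarantees a positional optimal choice). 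On non-target states $q \in Q_u \setminus T$, I set up the embedding $\G^{\coBu}_{\formNF_q,\alpha,p_T}$: $\alpha$ records the value obtained by exiting $Q_u$ through a Nature state with positive probability to leave $Q_u$, and $p_T$ (resp. the probabilistic exit function) records, for Nature states staying in $Q_u$, the probability of landing in $Q_u \cap T$. The key combinatorial fact — just as in the Büchi proof — is that within $Q_u \setminus T$ there is always at least one state $q$ with $u^{\coBu}_{\formNF_q,\alpha,p_T} \geq u$; this follows from the way $v(q)$ is computed as a nested fixed point (the greatest-fixed-point iteration over the set of ``not-yet-safe'' target states cannot have stabilized at a value $< u$ on all of $Q_u\setminus T$ while $v \equiv u$ on $Q_u$). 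At such a state I invoke coBM to get a positional $\GF$-strategy $\s_\A(q)$ realizing the guarantee; then I move $q$ into the ``good'' set (mark it as a target for the next $p_T$) and iterate until all of $Q_u \setminus T$ is handled.

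To finish, I would verify via Lemma~\ref{lem:uniformly_optimal} that the glued $\s_\A$ is uniformly optimal: it is locally optimal by construction, and for every EC $H$ in the induced MDP with $v_H = u > 0$, I must show $\MarVal{\Aconc_H^{\s_\A}}(q) = 1$ for all $q \in Q_H$, i.e. Player $\A$ wins the \emph{co-Büchi} objective almost-surely against every Player $\B$ strategy that stays in $H$. Here is where the coBM/RM guarantees pay off: the coBM condition at non-target states of $H$ ensures that either $T$ is avoided from some point on, or the play exits toward a good outcome; combined with RM-optimality at target states of $H$ (which forces the play, once it stays in $H$ and keeps visiting $T$, to reach an exit almost-surely), one obtains that $T$ is visited only finitely often or the EC is left — but leaving the EC contradicts staying in $H$, so in fact $T$ is visited finitely often almost-surely, giving value $1$. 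The main obstacle I anticipate is precisely this last verification: the interplay between the co-Büchi winning condition inside an EC and the fact that coBM is a \emph{per-column} guarantee requires care to turn a one-shot local property into an almost-sure statement about infinite plays (a martingale/Borel–Cantelli argument, analogous to but more delicate than the Büchi case, because ``good'' here means \emph{eventually} avoiding $T$ rather than \emph{infinitely often} visiting $T$); getting the bookkeeping right across the nested fixed-point layers is the crux.
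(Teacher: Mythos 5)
Your overall skeleton (value-class decomposition, local embeddings of each $\formNF_q$ in its immediate environment, coBM/RM hypotheses, final verification via Lemma~\ref{lem:uniformly_optimal}, and the converse via the coBM characterization, i.e. Lemma~\ref{prop:co_buchi_necessary_other}) does match the paper, but the core combinatorial step of the positive direction is transplanted from the B\"uchi proof in a way that does not work here. Your key claim --- that within $Q_u\setminus T$ there is always some state whose embedded co-B\"uchi value is at least $u$, which you then handle and ``mark as a target for the next $p_T$'', growing a good set one state at a time --- is false as stated: if the value of the class flows through a target state $t\in Q_u\cap T$ with high reachability-style value towards the exits, every non-target state of $Q_u$ can have embedded co-B\"uchi value strictly below $u$ (in the embedding, looping through $q_T$ forever is losing). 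The paper's construction instead alternates two removal phases on a shrinking area $Q_u^i$: first remove the target states whose reach value towards the exits of $Q_u^i$ is at least $u$ and give them reach-maximizing strategies w.r.t.\ the stage-dependent partial valuations of Definition~\ref{def:for_co_buchi_proof} (this, not a one-shot locally optimal choice, is where RM at target states is used); only when no such target state remains does the key Lemma~\ref{lem:crucial_co_buchi} apply, and it produces not one good state but the nonempty limit $W_u^i\subseteq Q_u^i\setminus T$ of a \emph{decreasing}, greatest-fixed-point-style iteration, all of whose states are handled simultaneously with coBM strategies whose ``red'' part is $Q_u^i\setminus W_u^i$ and whose ``exit'' part is $Q\setminus Q_u^i$. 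In particular, previously handled states are counted as \emph{exits} in later stages, not as targets; marking them as targets (red) is backwards for co-B\"uchi and would derail both the coBM requirement and the final verification.

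The end-component verification, which you defer as an anticipated martingale/Borel--Cantelli argument, is precisely where this structure pays off and needs no probabilistic reasoning: for an EC $H$ compatible with the constructed strategy and with $v_H>0$, the reach-maximizing choices make any RM-handled target state incompatible with $H$ (they force a positive probability of leaving $Q_u^i$), and the coBM choices force every compatible action to stay inside $W_u^i$ surely (any positive probability of hitting the red set comes with a positive probability of exiting, contradicting EC-ness); by strong connectedness $Q_H\cap T=\emptyset$, so the co-B\"uchi objective holds \emph{surely} inside $H$, the value there is $1$, and Lemma~\ref{lem:uniformly_optimal} concludes. As written, your proposal lacks the alternating removal, the correct co-inductive key lemma, and this structural EC argument, so the positive direction has a genuine gap.
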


\begin{proposition}[Proof in Appendix~\ref{appen:gf_rm_not_coBM}]
  It is decidable if a game form is coBM. All coBM game forms are RM
  game forms. There exists a game form that is RM but not
  coBM. 
  \label{prop:rm_cobm}
\end{proposition}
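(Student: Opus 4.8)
The plan is to express ``$\formNF$ is coBM'' as a sentence of the first-order theory of $(\R,+,\cdot,<,0,1)$, which is decidable by Tarski's theorem. The outer quantification is a finite conjunction over the finitely many partitions $\outComeNF=\outComeNFLp\uplus\outComeNFEx$, followed by universal quantifiers over the real parameters $\alpha,p_T$ and then a property of $u:=u^{\coBu}_{\formNF,\alpha,p_T}$. The one point needing care is that ``$y$ is the value of $\G^{\coBu}_{\formNF,\alpha,p_T}$'' is first-order in $y,\alpha,p_T$: that value is the nested fixed point over $[0,1]^{Q}$ of Appendix~\ref{subsubsec:gf_in_cobuchi} (over a fixed finite state set, the environment parameters entering only as coefficients), and least/greatest fixed points of a monotone first-order-definable operator $F$ are first-order definable (by $v=F(v)\wedge\forall z\,(z=F(z)\Rightarrow v\le z)$, resp. dually), while $F$ is first-order because it is assembled from matrix-game values $\sup_{\sigma_\A}\inf_{\sigma_\B}$ of bilinear forms, which are first-order by minimax/linear-programming duality. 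The inner clause ``for all $0<\varepsilon<u$ there is $\sigma_\A\in\Dist(\St_\A)$ with, for all $b$,\,\dots{}'' becomes first-order once the support-dependent set $A_b$ is eliminated by a finite disjunction over the possible values of $\Supp(\sigma_\A)$ (each fixed support turning $A_b$ into an explicit set). Hence the whole condition is a first-order sentence.

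\textbf{All coBM game forms are RM.} I would prove the contrapositive. If $\formNF$ is not RM, Theorem~\ref{thm:rm_in_reach} yields a reachability game $\Games{\Aconc^r}{\Reach(S)}$ in which $\formNF$ is the only non-trivial interaction and Player $\A$ has no optimal strategy from some $q^\star$; we may take $S$ absorbing and let $q_0\notin S$ be the state carrying $\formNF$. Build a co-Büchi game $\G^c$ by splitting $q_0$ into a fresh entry copy $q_0^{\mathrm{in}}$ (trivial, surely moving to $q_0^{\mathrm{out}}$) and a play copy $q_0^{\mathrm{out}}$ (carrying $\formNF$), redirecting to $q_0^{\mathrm{in}}$ every edge that used to enter $q_0$, and setting $T:=(Q^r\setminus S\setminus\{q_0\})\cup\{q_0^{\mathrm{in}}\}$, so that $q_0^{\mathrm{out}}\notin T$. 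In $\G^c$ a play satisfies $\coBu(T)$ exactly when it reaches $S$ (every non-$S$ outcome of $\formNF$ enters $T$, and $q_0^{\mathrm{out}}$ is reached only through $q_0^{\mathrm{in}}\in T$), so $\G^c$ has the same values as the reachability game and positional strategies of $\G^c$ restrict to positional strategies of the reachability game with the same values. In $\G^c$ every target state has a trivial (hence RM) interaction and every non-target state has a trivial (hence coBM) interaction or carries $\formNF$; so if $\formNF$ were coBM, Theorem~\ref{lem:co_buchi_sufficient} would give a positional uniformly optimal strategy of $\G^c$, whose restriction is an optimal positional strategy from $q^\star$ in the reachability game, a contradiction. (Equivalently, one may match the definitions directly: the coBM clauses, instantiated to environments in which every non-exit outcome is a losing loop, are precisely the RM clauses of Appendix~\ref{subsubsec:partial_val_reach}.)

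\textbf{An RM game form that is not coBM.} I would take $\formNF$ to be the game form of Figure~\ref{fig:game_form_cobuchi}, i.e.\ the local interaction at $q_0$ in the game $\G_1$ of Figure~\ref{fig:co_buchi_infinite}. It is \emph{not} coBM: in $\G_1$ all target states carry trivial (hence RM) interactions, all non-target states carry a trivial (hence coBM) interaction or $\formNF$, and $\G_1$ has no positional optimal strategy (Subsection~\ref{subsec:optimal_strat_in_co_buchi}: its value from $q_0$ is $1/2$ while every positional strategy has value $<1/2$); so by Theorem~\ref{lem:co_buchi_sufficient} (contrapositive), $\formNF$ cannot be coBM. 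Yet $\formNF$ is RM: one verifies this against the RM criterion of Appendix~\ref{subsubsec:partial_val_reach}, the point being that the very feature of $\formNF$ that breaks the co-Büchi condition — a column whose support contains an outcome that with positive probability hits the target and then loops back, with no green exit outcome — is harmless for reachability, where such an outcome only ever helps Player $\A$. This in particular delivers the claim that RM game forms do not ensure positional uniformly optimal strategies in co-Büchi games.

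\textbf{Main obstacle.} Since the substantive earlier results (Theorems~\ref{thm:rm_in_reach} and~\ref{lem:co_buchi_sufficient}) may be invoked, what remains is careful bookkeeping: for decidability, arguing that the nested-fixed-point value is first-order definable \emph{uniformly} in $\alpha,p_T$; for ``coBM $\Rightarrow$ RM'', the transformation relocating $\formNF$ onto a non-target state of a co-Büchi game while preserving values and positional strategies — or, alternatively, the verbatim matching of the RM and coBM definitions; and for the separating example, checking that the concrete game form of Figure~\ref{fig:game_form_cobuchi} meets the RM criterion. The transformation step is the one most prone to a subtle slip.
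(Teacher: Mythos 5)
Your decidability argument is the paper's own (encode the coBM property in the first-order theory of the reals, exactly as done for RM), and your proof that every coBM game form is RM is correct though it takes a different route: the paper argues directly on the definitions, instantiating coBM with $p_\alpha$ equal to $0$ on $\outComeNFLp$ and $1$ on $\outComeNFEx$ and $p_T=1-p_\alpha$, so that the nested fixed point collapses to the reachability fixed point and a co-Büchi-maximizing strategy is literally reach-maximizing (this is the alternative you mention in parentheses). Your main, game-level contrapositive argument via Theorem~\ref{thm:rm_in_reach} and Theorem~\ref{lem:co_buchi_sufficient} also works, at the price of a few unstated but easily checked facts (the reachability counterexample can be taken with an absorbing target and with $\formNF$ at a non-target state; trivial game forms are coBM). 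Your non-coBM argument for the form of Figure~\ref{fig:game_form_cobuchi} is likewise fine, via the contrapositive of Theorem~\ref{lem:co_buchi_sufficient} applied to $\G_1$ (or, more directly, via the appendix lemma relating coBM w.r.t.\ an environment to positional optimality in the one-state game).

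The genuine gap is the other half of your separating example: the assertion that the game form of Figure~\ref{fig:game_form_cobuchi} \emph{is} RM is exactly what needs to be proved, and you do not prove it. RM quantifies over \emph{all} partitions $\outComeNF=\outComeNFLp\uplus\outComeNFEx$ and \emph{all} partial valuations $\alpha$, whereas your one-sentence justification only addresses the environment inherited from $\G_1$ (loops $x,y,z$, exits $t,r,s$), where the feature that kills coBM is indeed harmless for reachability. It says nothing about adversarially chosen partitions in which, for instance, some of $t,r,s$ are declared loop outcomes: the typical RM failure is that the least fixed point is generated using a row which every optimal strategy at the fixed point must abandon, leaving a loop-only column in the support — precisely the phenomenon of the paper's aBM-but-not-RM form (Figure~\ref{fig:gf_eBM_not_BM}), and ruling it out for a concrete form requires a case analysis over where the outcomes sit. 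Tellingly, the paper does not reuse the form of Figure~\ref{fig:game_form_cobuchi} here: it constructs a different game form (Figure~\ref{fig:gf_RM_not_coBM}), exhibits an explicit environment $\alpha,p_\alpha,p_T$ witnessing non-coBM-ness, and then establishes RM-ness by an exhaustive discussion of the possible partitions and valuations. Until you carry out the analogous verification for your candidate (or switch to a candidate for which you can), the claim ``there exists a game form that is RM but not coBM'' is not established by your proposal.
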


\section{Conclusion}
\definecolor{mygreen}{rgb}{0, 0.2, 0.4}
\definecolor{myyellow}{rgb}{0, 0.48, 0.65}
\definecolor{myorange}{rgb}{0, 0.72, 0.92}
\definecolor{myred}{rgb}{0.61, 0.87, 1}
\begin{table}[t]
	\centering
	\renewcommand{\arraystretch}{0.9}
	\begin{tabular}{|c|c|c|c|c|}
		\hline
		\multirow{2}{2cm}{} & \multicolumn{2}{c|}{\textbf{Positional Optimal Strategy}} & \multicolumn{2}{c|}{\textbf{Positional $\varepsilon$-Optimal Strategy}}\\
		\cline{2-5}
		& $T$ & $Q \setminus T$ & $T$ & $Q \setminus T$\\
		\hline
		$\Safe(T)$ & \phantom{AllAll} {\color{mygreen}\bf All} \phantom{AllAll} & \phantom{AllAll} {\color{mygreen}\bf All} \phantom{AllAll}& \phantom{AllAll} {\color{mygreen}\bf All} \phantom{AllAll}& \phantom{AllAll} {\color{mygreen}\bf All} \phantom{AllAll}\\ \hline
		$\Reach(T)$ & {\color{mygreen}\bf All} & {\color{myorange}\bf RM} & {\color{mygreen}\bf All} & {\color{mygreen}\bf All} \\ \hline
		$\Bu(T)$ & {\color{mygreen}\bf All} & {\color{myorange}\bf RM} & {\color{mygreen}\bf All} & {\color{myyellow}\bf aBM}  \\ \hline
		$\coBu(T)$ & {\color{myorange}\bf RM} & {\color{myred}\bf coBM} & {\color{mygreen}\bf All} & {\color{mygreen}\bf All}  \\ \hline
	\end{tabular}
	\caption{Game forms necessary and sufficient for the existence of
		positional strategies for (almost-)optimality.}
	\label{tab:GFprop}
\end{table}



We have studied game
forms 
and defined various conditions 
such that these game forms in isolation behave properly w.r.t. some
fixed property (like existence of optimal strategies for B\"uchi
objectives)
, and we have 
proven that they can be used collectively in graph games while
preserving this
property. 
These conditions, summarized in
Table~\ref{tab:GFprop}
, give the unique way to construct games which will satisfy good
memory properties by construction for playing (almost-)optimally.

Let us explicit how to read a specific row of this table, say the third one for the Büchi objective. A Büchi objective is defined along with
a target $T \subseteq Q$. The game forms necessary and sufficient to ensure
the existence of positional optimal strategies are given in the
leftmost part of the table, and to ensure the existence of positional
almost-optimal strategies, in the rightmost part of the table. For
instance, for the leftmost part, if a game form is not RM, there is a
Büchi game built from it -- where it is the only non-trivial local interaction -- in which there is no positional optimal strategy. Conversely, if in a Büchi game, all local interactions at states outside the target $T$ are RM game forms (no further assumption is made on the game forms appearing at states in
T), then there is a positional optimal strategy. The rightmost part of
the table can be read similarly.

Finally, we would like to mention that all two-variable game forms
$\formNF = \langle \St_\A,\St_\B,\outComeNF,\outCNF \rangle$
(i.e. such that $|\outComeNF| \leq 2$) are coBM (this is a direct
consequence of the definition), hence RM. From this, we obtain as a
corollary of our results that in all finite Büchi and co-B\"uchi games
where all local interactions are two-variable game forms, both players
have positional uniformly optimal strategies.
\begin{corollary}[Proof in Appendix~\ref{proof:coro_two_var_unif_optim}]
  In a Büchi or co-B\"uchi game $\G$ such that for all $q \in Q$,
  $|\delta(q,A,B)| \leq 2$, both players have a positional uniformly
  optimal strategy.
  \label{coro:two_var_unif_optim}
\end{corollary}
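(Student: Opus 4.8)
The plan is to reduce the statement to the results already established for the maximiser. Two observations make this work. First, a game form $\formNF = \langle \St_\A,\St_\B,\outComeNF,\outCNF \rangle$ with $|\outComeNF| \le 2$ is coBM, hence also RM by Proposition~\ref{prop:rm_cobm}. Second, this property is insensitive to which player indexes the rows, since transposing a game form leaves its set of outcomes unchanged. (Also, whether a local interaction $\formNF_q$ is two-variable depends only on the image $\delta(q,A,B)$ of its outcome function, because RM and coBM refer solely to the values taken by $\outCNF$; so the hypothesis $|\delta(q,A,B)| \le 2$ is exactly what is needed.)

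For Player $\A$ the argument is immediate. If $\G = \Games{\Aconc}{\Bu(T)}$, then every local interaction is two-variable, hence RM, so in particular all interactions at states of $Q \setminus T$ are RM, and Theorem~\ref{lem:rm_in_buchi} yields a positional uniformly optimal Player $\A$ strategy. If $\G = \Games{\Aconc}{\coBu(T)}$, then every local interaction is two-variable, hence coBM and a fortiori RM; thus the interactions at states of $T$ are RM and those at states of $Q \setminus T$ are coBM, and Theorem~\ref{lem:co_buchi_sufficient} yields a positional uniformly optimal Player $\A$ strategy.

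For Player $\B$ I would pass to the dual game $\G^\bot$, which has the same states and Nature structure as $\G$ but in which, at each state $q$, the local interaction is replaced by its transpose (so that the row player of $\G^\bot$ is Player $\B$ and its column player is Player $\A$), and whose objective is the complement $Q^\omega \setminus W$ of the objective $W$ of $\G$. Using $Q^\omega \setminus \Bu(T) = \coBu(T)$ and $Q^\omega \setminus \coBu(T) = \Bu(T)$, the dual of a Büchi game is a co-Büchi game and vice versa, with the same target set $T$. A routine identification of the induced probability measures on $Q^\omega$ gives $\prob{\Aconc^\bot,q}{\s_\B,\s_\A}(Q^\omega \setminus W) = 1 - \prob{\Aconc,q}{\s_\A,\s_\B}(W)$ for all strategies and all $q$; hence $\MarVal{\G^\bot}(q) = 1 - \MarVal{\G}(q)$, and a positional uniformly optimal strategy for the maximiser (the row player) of $\G^\bot$ is precisely a positional uniformly optimal Player $\B$ strategy in $\G$. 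Since transposition preserves being two-variable, all local interactions of $\G^\bot$ are again two-variable, so applying the Player-$\A$ argument of the previous paragraph to $\G^\bot$ — in which the role of ``Player $\A$'' is played by the original Player $\B$ — produces the desired positional uniformly optimal Player $\B$ strategy.

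The whole proof is bookkeeping around dualities already present in the paper. The only genuine (and short) verification is the above identity of induced measures under transposition, together with the elementary complementation identities for $\Bu(T)$ and $\coBu(T)$; I do not expect any step to be a real obstacle.
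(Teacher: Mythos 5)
Your argument is correct and matches the paper's proof: the paper likewise observes that every local interaction, having at most two outcomes, is coBM (Proposition~\ref{prop:two_var_co_bm}) and hence RM (Proposition~\ref{prop:rm_cobm}), and then invokes Theorems~\ref{lem:rm_in_buchi} and~\ref{lem:co_buchi_sufficient}. The only differences are cosmetic: you take the two-variable $\Rightarrow$ coBM claim for granted (the paper verifies it by a short case analysis in the appendix), while your explicit transposition-and-complementation argument for Player $\B$ spells out the duality that the paper leaves implicit when asserting the result for both players.
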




\bibliography{biblio}

\newpage \appendix

\noindent\textbf{\LARGE{\sffamily Technical appendix}}

\section{Probability distribution given two strategies}
\label{appen:def_prob_distrib}

In the remainder of the Appendix, we use the following notations. For a non-empty set $Q$ and
$\rho = \rho_0 \cdot \rho_1 \cdots \in Q^\omega$, for all $i \geq 0$
we denote by $\rho_{\leq i} \in Q^*$ the finite sequence
$\rho_{\leq i} = \rho_{0} \cdots \rho_{i}$ and by
$\rho_{\geq i} \in Q^\omega$ the infinite suffix
$\rho_i \cdot \rho_{i+1} \cdots$. Furthermore, for a finite path
$\rho \in Q^+$, we denote by the $\cyl(\rho) \subseteq Q^\omega$ the
cylinder generated by $\rho$, i.e.
$\cyl(\rho) := \{ \rho \cdot \rho' \in Q^\omega \mid \rho' \in
Q^\omega \}$.

\begin{definition}[Probability distribution given two strategies]
	Consider 
	$(\s_\A,\s_\B) \in \SetStrat{\Aconc}{\A} \times
	\SetStrat{\Aconc}{\B}$ two arbitrary strategies for Player
	$\A$ and $\B$. We denote by
	$\mathbb{P}_{\s_\A,\s_\B}: Q^+ \rightarrow \Dist(Q)$ the
	function giving the probabilistic distribution over the next
	state of the arena given the sequence of states already
	seen. That is, for all finite path
	$\pi = \pi_0 \ldots \pi_n \in Q^+$ and $q \in Q$, we have
	$\mathbb{P}_{\s_\A,\s_\B}(\pi)[q] :=
	\prob{}{\s_\A(\pi),\s_\B(\pi)}(\pi_n,q)$.
	
	Then, the probability of occurrence of a finite path
	$\pi = \pi_0 \cdots \pi_n \in Q^+$ from a state $q_0 \in Q$
	with the pair of strategies $(\s_\A,\s_\B)$ is equal to
	$\prob{\Aconc,q_0}{\s_\A,\s_\B}(\pi) := \Pi_{i = 0}^{n-1}
	\mathbb{P}_{\s_\A,\s_\B}(\pi_{\leq i})[\pi_{i+1}]$ if
	$\pi_0 = q_0$ and $0$ otherwise.  The probability of a
	cylinder set $\cyl(\pi)$ is
	$\prob{\Aconc,q_0}{\s_\A,\s_\B}[\cyl(\pi)] :=
	\prob{}{\s_\A,\s_\B}(\pi)$ for any finite path $\pi \in
	Q^*$. This induces the probability of any Borel set in the
	usual way, and we denote by
	$\prob{\Aconc,q_0}{\s_\A,\s_\B}: \Borel(Q) \rightarrow [0,1]$
	the corresponding probability measure.
	\label{def:prob_distrib_given_strat}
\end{definition}

\section{Proof from Section~\ref{sec:CNS_uniform_optimal}}
\label{sec:proof_lem_uniformly_optimal}
We first recall a proposition from \cite{BBSCSLarXiv} (specifically, Proposition 42 in \cite{BBSCSLarXiv}).
\begin{proposition}
	\label{prop:outcome_valuation}
	Consider a concurrent game $\G = \Games{\Aconc}{W}$ and an arbitrary valuation of the states $v \in [0,1]^Q$. Consider also a state $q \in Q$ and strategies $\sigma_\A,\sigma_\B \in \Dist(A) \times \Dist(B)$ for both players in the game form $\formNF_q$. We have the following relation:
	\[\sum_{q' \in Q} \prob{}{\sigma_\A,\sigma_\B}(q,q')
	\cdot v(q') = \outM_{\gameNF{\formNF_q}{\mu_v}}(\sigma_\A,\sigma_\B)\]
\end{proposition}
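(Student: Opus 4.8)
The plan is to establish the identity by a direct unfolding of the relevant definitions, after which both sides reduce to the same finite multiple sum. There is essentially no ingenuity involved: the statement is the bookkeeping fact that the one-step expected value of $v$ after playing $(\sigma_\A,\sigma_\B)$ at $q$ coincides with the outcome of the game form $\formNF_q$ evaluated against the lifted valuation $\mu_v$.

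First I would unfold the right-hand side. By the definition of $\gameNF{\formNF}{v}$, the game in normal form $\gameNF{\formNF_q}{\mu_v}$ has action sets $A,B$ and outcome function $(a,b)\mapsto \mu_v(\delta(q,a,b))$, so by Definition~\ref{def:outcome_game_form} we get $\outM_{\gameNF{\formNF_q}{\mu_v}}(\sigma_\A,\sigma_\B)=\sum_{a\in A}\sum_{b\in B}\sigma_\A(a)\,\sigma_\B(b)\,\mu_v(\delta(q,a,b))$. Substituting the definition of the lift, $\mu_v(d)=\sum_{q'\in Q}\distribFunc(d)(q')\,v(q')$, turns this into the triple sum $\sum_{a\in A}\sum_{b\in B}\sum_{q'\in Q}\sigma_\A(a)\,\sigma_\B(b)\,\distribFunc(\delta(q,a,b))(q')\,v(q')$. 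Then I would unfold the left-hand side: by Definition~\ref{def:mu_state}, $\prob{}{\sigma_\A,\sigma_\B}(q,q')$ is the outcome of $(\sigma_\A,\sigma_\B)$ in the game in normal form $\gameNF{\formNF_q}{\distribFunc(\cdot)(q')}$, whose outcome function is $(a,b)\mapsto\distribFunc(\delta(q,a,b))(q')$; hence, again by Definition~\ref{def:outcome_game_form}, $\prob{}{\sigma_\A,\sigma_\B}(q,q')=\sum_{a\in A}\sum_{b\in B}\sigma_\A(a)\,\sigma_\B(b)\,\distribFunc(\delta(q,a,b))(q')$. Multiplying through by $v(q')$ and summing over $q'\in Q$ yields exactly the same triple sum as on the right-hand side.

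The only step that even merits a comment is the interchange of the order of summation between the index $q'$ and the pair $(a,b)$, which is immediate because $A$, $B$ and $Q$ are finite (and, were one to drop finiteness of $Q$, would still be fine by Tonelli's theorem since every summand is nonnegative). So I do not anticipate any genuine obstacle here; the lemma is purely computational and serves only as a convenient rewriting rule used repeatedly in the sequel.
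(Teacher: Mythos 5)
Your proof is correct: unfolding Definitions~\ref{def:outcome_game_form} and~\ref{def:mu_state} together with the definition of the lift $\mu_v$ reduces both sides to the same finite triple sum over $A\times B\times Q$, and the interchange of summations is harmless since all sets involved are finite. The paper does not reprove this statement (it is recalled from the companion version \cite{BBSCSLarXiv}), and the intended argument there is precisely this purely computational unfolding, so your approach is essentially the same.
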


We want to prove Lemma~\ref{lem:uniformly_optimal}. To do so, 
we show a slightly more general result that can be applied to prove that a strategy is uniformly $\varepsilon$-optimal. Specifically, given a valuation $v$ of the states, we introduce the notion of locally dominating that valuation and uniformly guaranteeing that valuation 
(that is a generalization of local optimality and uniform optimality, respectively).
\begin{definition}[Locally dominating and guaranteeing a valuation]
	Consider a concurrent game $\G = \Games{\Aconc}{W}$ with a prefix-independent objective $W$ and a valuation $v: Q \rightarrow [0,1]$ of the states. We say that a Player $\A$ positional strategy $\s_\A$ \emph{locally dominates} the valuation $v$ if, for all states $q \in Q$, we have $\va_{\gameNF{\formNF_q}{\mu_v}}(\s_\A(q)) \geq v(q)$ (i.e. the value of the local strategy $\s_\A(q)$, w.r.t. the valuation $\mu_v$ of the Nature states, in the local interaction $\formNF_q$ is at least $v(q))$. Furthermore, we say that the strategy $\s_\A$ \emph{guarantees} the valuation $v$ if, for all states $q \in Q$, we have $\MarVal{\G}[\s_\A](q) \geq v(q)$. In particular, if $\MarVal{\G} - \varepsilon \preceq v$ for some $\varepsilon > 0$, the strategy $\s_\A$ is said \emph{uniformly $\varepsilon$-optimal}.
\end{definition}
Then, we prove the following lemma:
\begin{lemma}
	Consider a concurrent game $\G = \Games{\Aconc}{W}$ with $W$ prefix-independent. Assume that, in all finite MDP with objective $Q^\omega \setminus W$, there is a positional optimal strategy. Let $v: Q \rightarrow [0,1]$ be a valuation of the states and consider a Player $\A$ positional strategy $\s_\A \in \SetPosStrat{\Aconc}{\A}$ and assume that:
	\begin{itemize}
		\item it locally dominates the valuation $v$;
		\item for all ECs $H$ in the MDP induced by the strategy $\s_\A$, if $\min_{q \in H} v(q) > 0$, then for all $q \in Q_H$, we have $\MarVal{\Aconc_H^{\s_\A}}(q) = 1$ (i.e. in sub-game that is the EC $H$, the value from the state $q$ is 1).
	\end{itemize}
	Then, the strategy $\s_\A$ guarantees the valuation $v$.
	\label{lem:uniform_guarantee}
\end{lemma}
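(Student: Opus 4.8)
The plan is to fix the positional strategy $\s_\A$ and prove that for every state $q_0$ and every Player $\B$ strategy $\s_\B$ one has $\prob{\Aconc,q_0}{\s_\A,\s_\B}[W] \geq v(q_0)$; this is exactly the statement that $\s_\A$ guarantees $v$. First I would reduce to \emph{positional} Player $\B$ strategies: since $\s_\A$ is positional it induces the MDP $\Gamma$, and viewing $\Gamma$ with Player $\B$ as controller for the objective $Q^\omega\setminus W$, the assumption provides a positional optimal strategy there; hence $\MarVal{\G}[\s_\A](q_0) = \inf_{\s_\B}\prob{\Aconc,q_0}{\s_\A,\s_\B}[W] = \inf_{\s_\B \text{ positional}}\prob{\Aconc,q_0}{\s_\A,\s_\B}[W]$, so it suffices to prove the inequality for an arbitrary positional $\s_\B$. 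Fix such a $\s_\B$, let $M$ be the finite Markov chain induced by $\s_\A$ and $\s_\B$, and let $(X_n)_{n\ge 0}$ be the corresponding path started at $q_0$.

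The first key step is that $(v(X_n))_n$ is a bounded submartingale. Indeed, local dominance gives $\outM_{\gameNF{\formNF_q}{\mu_v}}(\s_\A(q),\sigma_\B)\geq v(q)$ for every $q$ and every $\GF$-strategy $\sigma_\B$, which by Proposition~\ref{prop:outcome_valuation} reads $\sum_{q'}\prob{}{\s_\A(q),\sigma_\B}(q,q')\,v(q')\geq v(q)$, i.e. $\mathbb{E}[v(X_{n+1})\mid X_0,\dots,X_n]\geq v(X_n)$ in $M$. By the bounded martingale convergence theorem, $v(X_n)$ converges almost surely to some $v_\infty$ with $\mathbb{E}_{q_0}[v_\infty]\geq v(q_0)$. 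Almost surely the path eventually remains forever in a bottom strongly connected component (BSCC) of $M$, visiting each of its states infinitely often; since $v(X_n)$ converges, $v$ is constant on each BSCC the path can end in — write $v_C$ for its value on $C$ — and $v_\infty=v_C$ almost surely on the event that the path eventually remains in $C$. Moreover, $W$ being Borel and prefix-independent, the standard $0$--$1$ law for tail events of a finite irreducible Markov chain (Lévy's $0$--$1$ law together with the Markov property and prefix-independence) gives that $\prob{q}{\s_\A,\s_\B}[W]$ is the same for all $q\in C$ and lies in $\{0,1\}$; call $C$ \emph{good} in the latter case. Writing $p_C$ for the probability, from $q_0$ under $\s_\A,\s_\B$, that the path eventually remains in $C$, we thus have $\prob{\Aconc,q_0}{\s_\A,\s_\B}[W]=\sum_{C\text{ good}}p_C$ and $\sum_C v_C\,p_C=\mathbb{E}_{q_0}[v_\infty]\geq v(q_0)$.

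The crux is to show $v_C=0$ for every \emph{bad} BSCC $C$. Suppose $v_C>0$. For $q\in C$ put $\beta(q):=\{\,b\in B \mid \Supp(\iota(q,b))\subseteq C\,\}$; since $C$ is a BSCC of $M$ one checks $\Supp(\s_\B(q))\subseteq\beta(q)$, so $\beta(q)\neq\emptyset$, and the graph on $C$ with an edge $q\to q'$ whenever $q'\in\Supp(\iota(q,b))$ for some $b\in\beta(q)$ contains every edge of $M$ inside $C$, hence is strongly connected. Therefore $H:=(C,\beta)$ is an end component of $\Gamma$ with $\min_{q\in C}v(q)=v_C>0$, and the second hypothesis gives $\MarVal{\Aconc_H^{\s_\A}}(q)=1$ for all $q\in C$, that is, playing $\s_\A$ from $q$ against any Player $\B$ strategy that keeps the path inside $C$ wins $W$ almost surely. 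But the restriction of $\s_\B$ to $C$ is such a strategy, so $\prob{q}{\s_\A,\s_\B}[W]=1$, contradicting the badness of $C$. Consequently $\prob{\Aconc,q_0}{\s_\A,\s_\B}[W]=\sum_{C\text{ good}}p_C\geq\sum_C v_C\,p_C\geq v(q_0)$, using $v_C\le 1$ for good $C$ and $v_C=0$ for bad $C$. As $\s_\B$ was an arbitrary positional strategy, this yields $\MarVal{\G}[\s_\A](q_0)\geq v(q_0)$, as required.

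The main obstacle is the crux step: recognizing that each BSCC of the Markov chain obtained against a worst-case positional Player $\B$ strategy is itself an end component of $\Gamma$ satisfying the hypotheses, and correctly unwinding "$\MarVal{\Aconc_H^{\s_\A}}(q)=1$" into "$\s_\A$ wins $W$ almost surely against every Player $\B$ strategy confined to $H$", so that a bad BSCC — which exhibits a confined Player $\B$ strategy winning $W$ with probability $0$ — produces the contradiction. The reduction to positional Player $\B$ strategies in the first paragraph (needed so that we really work inside a Markov chain) and the $0$--$1$ law for prefix-independent events on BSCCs are the remaining ingredients, the former using precisely the stated MDP assumption and the latter being a standard but non-trivial probabilistic fact.
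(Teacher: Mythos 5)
Your proof is correct and takes essentially the same route as the paper's: reduce to a positional Player $\B$ strategy via the MDP assumption, use local dominance to show that $v$ behaves as a (sub)martingale along the induced finite Markov chain, and then handle each BSCC by viewing it as an end component of the induced MDP and unfolding $\MarVal{\Aconc_H^{\s_\A}}(q)=1$ as ``$\s_\A$ wins $W$ almost surely against every Player $\B$ strategy confined to $H$''. The only differences are presentational: the paper establishes the martingale-type inequality by an explicit induction on path length and bounds $\mathbb{P}[W\mid B]\geq v_B$ directly, where you invoke bounded submartingale convergence and a $0$--$1$ law for prefix-independent events on BSCCs.
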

\begin{proof}
	

	
	
	Consider the finite MDP $\Gamma$ induced by the strategy $\s_\A$. By assumption, there is an optimal Player $\B$ positional strategy in $\Gamma$. The arena $\Aconc$ with positional strategies $\s_\A,\s_\B$ (or equivalently, the MDP $\Gamma$ with the strategy $\s_\B$) then consists of a Markov chain $\mathcal{M} = (Q,\mathbb{P})$ with $\mathbb{P}: Q \times Q \rightarrow [0,1]$ where, for all $q,q' \in Q$, we have $\mathbb{P}(q,q') := \prob{}{\s_\A(q),\s_\B(q)}(q,q')$. Consider an arbitrary state $q_0 \in Q$. The probability measure $\mathbb{P}$ is extended to finite paths starting at $q_0$, cylinders and arbitrary Borel sets (in particular, to $W$). 
	
	In the Markov chain $\mathcal{M}$, a bottom strongly connected component $B$ (BSCC for short) is a subset of states $B \subseteq Q$ such that for all $q \in B$, we have $\mathbb{P}_q[B] = 1$ (that is, once a BSCC is entered, it is left with probability 0). Furthermore, the underlying graph is strongly connected. Note that, all BSCCs $B$ can be seen as an EC $H_B = (B,\Supp(\s_\A))$ in the MDP $\Gamma$ (with, for all states $q \in B$, $\Supp(\s_\A)(q) = \{ b \in B \mid \s_\A(q)(b) > 0 \}$). In particular, for all BSCCs $B$, there exists $v_B \in [0,1]$, such that, for all $q \in B$, we have $v(q) = v_B$ (this is because the strategy $\s_\A$ locally dominates the valuation $v$, from Proposition 18 in \cite{BBSCSL22}).	We denote by $\mathsf{BSCC}$ the set of all BSCCs of the Markov chain $\mathcal{M}$ and by $\mathcal{B} := \cup_{B \in \mathsf{BSCC}} B$ the union of all BSCCs. Fix a BSCC $B$. For all finite sequence $\rho = q_0 \cdots q_n \in Q^*$, we denote by $\rho \in B$ the fact that the last element of $\rho$ is in $B$: $q_n \in B$. Furthermore, we denote by $\mathbb{P}(\diamondsuit B)$ the probability to eventually reach the BSCC $B$: $\mathbb{P}(\diamondsuit B) := \sum_{\substack{\rho \in q_0 \cdot Q^+ \cap B}} \mathbb{P}(\rho)$. From \cite{baier2008principles} (Theorem 10.27), we know that with probability 1, the game ends up in a BSCC. That is:
	\begin{equation}
		\forall \varepsilon > 0,\; \exists n \in \N,\; \mathbb{P}(q_0 \cdot Q^n \setminus \mathcal{B}) \leq \varepsilon
		\label{eqn:end_up_bscc}
	\end{equation}
	Alternatively, we have:
	\begin{equation}
		\sum_{B \in \mathsf{BSCC}} \mathbb{P}(\diamondsuit B) = 1
		\label{eqn:sum_reach_bscc_one}
	\end{equation}
	
	Let us now show the following equality:
	\begin{equation}
		\sum_{B \in \mathsf{BSCC}} \mathbb{P}(\diamondsuit B) \cdot v_B \geq v(q_0)
		\label{eqn:convex_bscc_ok}
	\end{equation}
	To do so, let us show by induction on $n$ the following property $\mathcal{P}(n)$: '$\sum_{\rho \in q_0 \cdot Q^n} \mathbb{P}(\rho) \cdot v(\rho) \geq v(q_0)$' where $v(\rho) \in [0,1]$ refers to $v(q)$ for $q \in Q$ the last state of $\rho$. The property $\mathcal{P}(0)$ straightforwardly holds. Assume now that this property holds for some $n \in \N$. We have, by Proposition~\ref{prop:outcome_valuation} and since $\s_\A$ locally dominates the valuation $v$:
	\begin{align*}
		\sum_{\rho \in q_0 \cdot Q^{n+1}} \mathbb{P}(\rho) \cdot v(\rho) & = \sum_{\rho = q_0 \cdots q_n \cdot q_{n+1} \in q_0 \cdot Q^{n+1}} \mathbb{P}(\rho) \cdot v(q_{n+1}) \\
		& = \sum_{\rho' = q_0 \cdots q_n \in q_0 \cdot Q^{n}} \sum_{q_{n+1} \in Q} \mathbb{P}(\rho' \cdot q_{n+1}) \cdot v(q_{n+1}) \\
		& = \sum_{\rho' = q_0 \cdots q_n \in q_0 \cdot Q^{n}} \sum_{q_{n+1} \in Q} \mathbb{P}(\rho') \cdot \mathbb{P}(q_n,q_{n+1}) \cdot v(q_{n+1}) \\
		& = \sum_{\rho' = q_0 \cdots q_n \in q_0 \cdot Q^{n}} \mathbb{P}(\rho') \cdot \sum_{q_{n+1} \in Q} \mathbb{P}(q_n,q_{n+1}) \cdot v(q_{n+1}) \\
		& = \sum_{\rho' = q_0 \cdots q_n \in q_0 \cdot Q^{n}} \mathbb{P}(\rho') \cdot \sum_{q_{n+1} \in Q} \mathbb{P}_{\s_\A(q_n),\s_\A(q_n)}(q_n,q_{n+1}) \cdot v(q_{n+1}) \\
		& = \sum_{\rho' = q_0 \cdots q_n \in q_0 \cdot Q^{n}} \mathbb{P}(\rho') \cdot \outM_{\gameNF{\formNF_{q_n}}{\mu_v}}(\s_\A(q_n),\s_\B(q_n)) \\
		& \geq \sum_{\rho' = q_0 \cdots q_n \in q_0 \cdot Q^{n}} \mathbb{P}(\rho') \cdot \va_{\gameNF{\formNF_{q_n}}{\mu_v}}(\s_\A(q_n)) \\
		& \geq \sum_{\rho' = q_0 \cdots q_n \in q_0 \cdot Q^{n}} \mathbb{P}(\rho') \cdot v(q_n) \\
		& \geq v(q_0)
	\end{align*}
	Hence, the property $\mathcal{P}(n)$ is ensured. In fact, it holds for all $n \in \N$. Consider now some $\varepsilon > 0$. Consider some $n \in \N$ as in~(\ref{eqn:end_up_bscc}). Then by the property $\mathcal{P}(n)$:
	\begin{align*}
		v(q_0) & \leq \sum_{\rho \in q_0 \cdot Q^{n}} \mathbb{P}(\rho) \cdot v(\rho) = \sum_{B \in \mathsf{BSCC}} \sum_{\rho \in q_0 \cdot Q^{n} \cap B} \mathbb{P}(\rho) \cdot v(\rho) + \sum_{\rho \in q_0 \cdot Q^{n} \setminus \mathcal{B}} \mathbb{P}(\rho) \cdot v(\rho) \\
		& \leq \sum_{B \in \mathsf{BSCC}} \sum_{\rho \in q_0 \cdot Q^{n} \cap B} \mathbb{P}(\rho) \cdot v_B + \sum_{\rho \in q_0 \cdot Q^{n} \setminus \mathcal{B}} \mathbb{P}(\rho) \\
		& \leq \sum_{B \in \mathsf{BSCC}} \mathbb{P}(\diamondsuit B) \cdot v_B + \sum_{\rho \in q_0 \cdot Q^{n} \setminus \mathcal{B}} \mathbb{P}(q_0 \cdot Q^n \setminus \mathcal{B}) \\
		& \leq \sum_{B \in \mathsf{BSCC}} \mathbb{P}(\diamondsuit B) \cdot v_B + \varepsilon
	\end{align*}
	As this holds for all $\varepsilon > 0$, we obtain Equation~(\ref{eqn:convex_bscc_ok}). 
	
	Let us now consider the probability of the event $W$ conditioned that a  BSCC $B$. Assume that $v_B > 0$ and consider the EC $H_B$ of the MDP $\Gamma$. Then, the strategy $\s_\B$ restricted to the states in $B$ is a strategy in the game $\Aconc_{H_B}^{\s_\A}$. By assumption, the value of the value with such a strategy is $1$. We obtain that, for all $q \in B$, the conditional probability of the event $W$ given that the state $q$ is reached is $1$: $\mathbb{P}[W \mid q] = 1$. It follows that $\mathbb{P}[W \mid B] = 1$. Overall, we have that, for all BSCCs $B \in \mathcal{B}$, $\mathbb{P}[W \mid B] \geq v_B$. 
	
	We can now conclude, since the objective $W$ is prefix-independent and by~(\ref{eqn:sum_reach_bscc_one}):
	\begin{align*}
		\mathbb{P}[W] & = \sum_{B \in \mathsf{BSCC}} \mathbb{P}[W \cap \diamondsuit B] + 
		\mathbb{P}[W \cap \lnot \diamondsuit \mathcal{B}] \\
		& = \sum_{B \in \mathsf{BSCC}} \mathbb{P}(\diamondsuit B) \cdot \mathbb{P}[W \mid \diamondsuit B] \\
		& \geq \sum_{B \in \mathsf{BSCC}} \mathbb{P}(\diamondsuit B) \cdot v_B \\
		& \geq v(q_0)
	\end{align*}
	As this holds for all states $q_0 \in Q$, we have that the Player $\A$ strategy dominates the valuation $v$ (recall that the Player $\B$ strategy is optimal in the MDP $\Gamma$), which concludes the proof.
\end{proof}
We believe that this result could also hold without the assumption that positional strategies suffice in finite MDPs with objective $Q^\omega \setminus W$. However, this would complicate the proof since the Markov chain we obtain would not be finite anymore.

Let us proceed to the proof of Lemma~\ref{lem:uniformly_optimal}.
\begin{proof}[Proof of Lemma~\ref{lem:uniformly_optimal}]
	We consider the valuation $v = \MarVal{\G}$ of the lemma.
	Let us argue that the conditions of the lemma are necessary conditions. Assume towards a contradiction that $\s_\A$ is not locally optimal. That is, there is a state $q \in Q$ such that $\va_{\gameNF{\formNF_q}{\mu_v}}(\s_\A(q)) \leq v(q) - \varepsilon$ for some $\varepsilon > 0$. Consider a Player $\B$ strategy $\s_\B'$ such that for all $q' \in Q$, the value of the strategy at state $q'$ is at most $\MarVal{\G}(q') + \varepsilon/2$: $\MarVal{\G}[\s_\B'](q') \leq \MarVal{\G}(q') + \varepsilon/2$. We then define a Player $\B$ strategy $\s_\B$ as follows: $\s_\B(q)$ is a $\GF$-strategy in the game form $\formNF_q$ such that $\outM_{\gameNF{\formNF_q}{\mu_v}}(\s_\A(q),\s_\B(q)) \leq v(q) - \varepsilon$. Furthermore, for all $q' \in Q$ and $\pi \in Q^*$, we set $\s_\B(q \cdot \pi) := \s_\B'(\pi)$. Then, as $W$ is prefix-independent and by  Proposition~\ref{prop:outcome_valuation}, we obtain:
	\begin{align*}
		\prob{\Aconc,q}{\s_\A,\s_\B}[\rho \in Q^\omega \mid \rho \in W] & = \prob{\Aconc,q}{\s_\A,\s_\B}[q \cdot \rho \in Q^\omega \mid \rho \in W] \\
		& = \sum_{q' \in Q} \prob{\Aconc,q}{\s_\A,\s_\B}[q \cdot q' \cdot \rho \in Q^\omega \mid \rho \in W] \\
		& = \sum_{q' \in Q} \prob{\Aconc,q}{\s_\A,\s_\B}[q \cdot q'] \cdot \prob{\Aconc,q}{\s_\A,\s_\B'}[q' \cdot \rho \in Q^\omega \mid \rho \in W] \\
		& \leq \sum_{q' \in Q} \mathbb{P}_{\s_\A,\s_\B}(q)[q'] \cdot (v(q') + \varepsilon/2) \\
		& = \sum_{q' \in Q} \mathbb{P}_{\s_\A,\s_\B}(q)[q'] \cdot v(q') + \varepsilon/2 \\
		& = \outM_{\langle \formNF_q,\mu_v \rangle}(\s_\A(q),\s_\B(q)) + \varepsilon/2\\
		& \leq v(q) - \varepsilon + \varepsilon/2 = v(q) - \varepsilon/2 < v(q)
	\end{align*}
	That is, the strategy $\s_\A$ is not optimal from $q$.
	
	Assume now, still towards a contradiction, that there is an EC $H$ in the MDP induced by the strategy $\s_\A$ such that $v_H > 0$ and there is some $q \in Q_H$ such $\MarVal{\Aconc_H^{\s_\A}}(q) < 1$. Then, by Theorem~\ref{thm:chaterjee_value_0_1}, there is a state $q' \in Q_H$ such that $\MarVal{\Aconc_H^{\s_\A}}(q) = 0$. It follows that there is a Player $\B$ strategy $\s_\B$ in the game $\Aconc_H^{\s_\A}$ such that the value from the state $q$ with the strategy $\s_\B$ is at most $v_H/2 < v_H$: $\MarVal{\Aconc_H^{\s_\A}}[\s_\B](q) \leq v_H/2$. Note that the strategy $\s_\B$ can be seen as a strategy in the game $\G$. Hence, the value of the game $\G$ of state $q$ with strategies $\s_\A$ and $\s_\B$ is at most $v_H/2 < v_H = v(q)$. That is, the strategy $\s_\A$ is not optimal from $q$.
	
	We then apply Lemma~\ref{lem:uniform_guarantee} to conclude that these conditions are also sufficient for uniform optimality.
\end{proof}

\section{Proofs from Section~\ref{sec:optimal_in_buchi}}
\subsection{Construction of a reachability game from a B\"uchi game}
\label{app:buchi2reach}

	Consider a Büchi game $\G = \Games{\Aconc}{\Bu(T)}$ for some $T \subseteq Q$. We build the reachability game $\G_{\mathsf{reach}} := \Games{\Aconc_{\mathsf{reach}}}{\Reach(\{ \top \})}$ with $\Aconc_{\mathsf{reach}} := \langle Q \cup \{ \top,\bot \},A,B,\distribSet \cup \{ d_{v} \mid v \in \mathsf{Val} \} \cup \{ d_0,d_1 \},\delta',\distribFunc' \rangle$ with $\mathsf{Val} := \{ \MarVal{\G}(q) \mid q \in T \} \cup \{ 0,1 \}$ such that:
	\begin{itemize}
		\item for all $q \in Q \setminus T$, $\delta'(q,\cdot,\cdot) := \delta(q,\cdot,\cdot)$;
		\item for all $q \in T$, $a \in A$ and $b \in B$, we have $\delta'(q,a,b) := d_{\MarVal{\G}(q)}$, $\delta'(\top,a,b) := d_{1}$ and $\delta'(\bot,a,b) := d_{0}$;
		\item for all $d \in \distribSet$, $\distribFunc'(d) := \distribFunc(d)$;
		\item for all $u \in \mathsf{Val}$, $\distribFunc'(d_u) := \{ \top \mapsto u, \bot \mapsto 1 - u \}$.
	\end{itemize}

%
\subsection{Proof of a first proposition}
\label{appen:proof_prop_buchi_equal_reach}
Recall the definition of Appendix~\ref{app:buchi2reach}, it ensures the following proposition.
\begin{proposition}
  For every $q \in Q$,
  $\MarVal{\G}(q) = \MarVal{\G_\mathsf{reach}}(q)$. Moreover, if there
  is an optimal strategy in $\G$ from some $q \in Q$, then there is
  also an optimal strategy in $\G_\mathsf{reach}$ from
  $q$. 
  \label{prop:buchi_equal_reach}
\end{proposition}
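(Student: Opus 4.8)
The plan is to exploit the fact that $\Aconc$ and $\Aconc_{\mathsf{reach}}$ literally coincide outside the target: for every $q \in Q \setminus T$ the local interaction $\formNF_q$ is the same in both arenas, while each target state becomes a trivial ``coin-flipping'' state in $\G_{\mathsf{reach}}$ that leaves $Q \setminus T$ forever. Consequently, Player $\A$ (resp. $\B$) strategies in $\G$, once restricted to histories in $(Q\setminus T)^+$, are in canonical bijection with the corresponding strategies in $\G_{\mathsf{reach}}$ (in $\G_{\mathsf{reach}}$ only such histories matter, since the game is absorbed into $\top/\bot$ as soon as a target is reached). I would first record this correspondence, together with the elementary identity: for any pair of strategies $(\tau_\A,\tau_\B)$ in $\G_{\mathsf{reach}}$, a play reaches $\top$ iff it visits some first target $t \in T$ and then the $d_{\MarVal{\G}(t)}$-coin comes up $\top$, so
\[
  \prob{\Aconc_{\mathsf{reach}},q}{\tau_\A,\tau_\B}[\Reach(\{ \top \})] \;=\; \sum_{t \in T} \prob{\Aconc_{\mathsf{reach}},q}{\tau_\A,\tau_\B}[\text{first visited target is }t]\cdot \MarVal{\G}(t).
\]

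The core of the argument is a single lemma. For a Player $\A$ strategy $\s_\A$ in $\G$, let $\tau_\A$ be the Player $\A$ strategy in $\G_{\mathsf{reach}}$ mimicking it on $(Q\setminus T)^+$ (and acting trivially on target states). I would show (i) $\MarVal{\G}[\s_\A](q) \le \MarVal{\G_{\mathsf{reach}}}[\tau_\A](q)$ for all $q$, and (ii) conversely, for every Player $\A$ strategy $\tau_\A$ in $\G_{\mathsf{reach}}$ and every $\varepsilon>0$ there is a Player $\A$ strategy $\s_\A$ in $\G$ with $\MarVal{\G}[\s_\A](q)\ge \MarVal{\G_{\mathsf{reach}}}[\tau_\A](q)-\varepsilon$. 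For (i), fix a Player $\B$ strategy $\tau_\B$ in $\G_{\mathsf{reach}}$ and $\varepsilon_0>0$, and let $\s_\B$ be the Player $\B$ strategy in $\G$ that mimics $\tau_\B$ on $(Q\setminus T)^+$ and, upon the first visit to a target $t$, switches forever to an $\varepsilon_0$-optimal (for Player $\B$) strategy from $t$ in $\G$. Using prefix-independence of $\Bu(T)$ — a play lies in $\Bu(T)$ iff its suffix from the first $T$-visit does — and conditioning on the first visited target $t$ and the history up to it, one gets $\prob{\Aconc,q}{\s_\A,\s_\B}[\Bu(T)] \le \sum_{t\in T} \prob{\Aconc,q}{\s_\A,\s_\B}[\text{first visited target is }t]\cdot(\MarVal{\G}(t)+\varepsilon_0)$; since the distribution of the first visited target under $(\s_\A,\s_\B)$ depends only on the behaviour on $(Q\setminus T)^+$ it equals the one under $(\tau_\A,\tau_\B)$, so with the identity above $\prob{\Aconc,q}{\s_\A,\s_\B}[\Bu(T)] \le \prob{\Aconc_{\mathsf{reach}},q}{\tau_\A,\tau_\B}[\Reach(\{ \top \})] + \varepsilon_0$. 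Taking the infimum over $\tau_\B$ and letting $\varepsilon_0 \to 0$ yields (i). Direction (ii) is symmetric: append an $\varepsilon$-optimal (for Player $\A$) continuation from the first visited target and bound from below in the same way.

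From the lemma the proposition follows. Taking suprema over Player $\A$ strategies, (i) gives $\MarVal{\G}(q) \le \MarVal{\G_{\mathsf{reach}}}(q)$, and (ii) gives $\MarVal{\G}(q) \ge \MarVal{\G_{\mathsf{reach}}}[\tau_\A](q)-\varepsilon$ for every $\tau_\A$ and $\varepsilon$, hence $\MarVal{\G}(q) \ge \MarVal{\G_{\mathsf{reach}}}(q)$; so the values agree on $Q$. Moreover, if $\s_\A$ is optimal in $\G$ from $q$, the mimicking $\tau_\A$ satisfies, by (i) and the value equality, $\MarVal{\G_{\mathsf{reach}}}[\tau_\A](q) \ge \MarVal{\G}[\s_\A](q) = \MarVal{\G}(q) = \MarVal{\G_{\mathsf{reach}}}(q)$, so $\tau_\A$ is optimal in $\G_{\mathsf{reach}}$ from $q$. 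All the $\varepsilon$-optimal and optimal strategies from individual states invoked above exist by Martin's determinacy (recalled in Definition~\ref{def:determinacy}).

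I expect the main obstacle to be bookkeeping rather than a genuine difficulty: making the ``first visit to $T$'' conditioning rigorous — measurability of the first hitting time, and the decomposition of $\prob{\Aconc,q}{\s_\A,\s_\B}[\Bu(T)]$ into the sum over first visited targets plus the null contribution of plays that never meet $T$ — and justifying that switching the continuation strategy exactly at the first $T$-visit is legitimate. This last point is precisely where prefix-independence of $\Bu(T)$ is essential, since the (finitely many) visits to $T$ occurring before the switch do not affect membership in $\Bu(T)$.
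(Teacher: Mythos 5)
Your proposal is correct and follows essentially the same route as the paper's proof: decompose the probability over the first visited target state, use prefix-independence of $\Bu(T)$ to justify switching to an $\varepsilon$-optimal continuation for the relevant player at that first visit, and derive the two inequalities (lossless when transferring a Player $\A$ strategy from $\G$ to $\G_{\mathsf{reach}}$, with an $\varepsilon$-loss in the converse direction), which also yields the transfer of optimal strategies. The only difference is cosmetic: the paper proves a more general statement (replacing an arbitrary set $S$ of states by coin-flip states while keeping a Büchi or co-Büchi objective) and then specializes to $S=T$, observing that $\Bu(T\cup\{\top\})$ and $\Reach(\{\top\})$ coincide in measure in the modified arena, whereas you argue directly on the reachability game.
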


In fact, for arbitrary prefix-independent objectives, such a construction preserves the values of states if it keeps the same objectives (instead of changing it from Büchi to reachability in this case). 
We consider this more general construction for Büchi and co-Büchi objectives here, and then deduce Proposition~\ref{prop:buchi_equal_reach}.
\begin{definition}
	Consider a game $\G = \Games{\Aconc}{W}$ with an the objective $W$ being either $\Bu(T)$ ot $\coBu(T)$ for some $T \subseteq Q$. Let $S \subseteq Q$. We build the game $\G_S := \Games{\Aconc_S}{W'}$ (where $W'$ will be specified later) with $\Aconc_S := \langle Q \cup \{ \top,\bot \},A,B,\distribSet \cup \{ d_{v} \mid v \in \mathsf{Val} \} \cup \{ d_0,d_1 \},\delta',\distribFunc' \rangle$ with $\mathsf{Val} := \{ \MarVal{\G}(q) \mid q \in S \} \cup \{ 0,1 \}$ such that:
	\begin{itemize}
		\item for all $q \in Q \setminus S$, $\delta'(q,\cdot,\cdot) := \delta(q,\cdot,\cdot)$;
		\item for all $q \in S$, $a \in A$ and $b \in B$, we have $\delta'(q,a,b) := d_{\MarVal{\G}(q)}$, $\delta'(\top,a,b) := d_{1}$ and $\delta'(\bot,a,b) := d_{0}$;
		\item for all $d \in \distribSet$, $\distribFunc'(d) := \distribFunc(d)$;
		\item for all $u \in \mathsf{Val}$, $\distribFunc'(d_u) := \{ \top \mapsto u, \bot \mapsto 1 - u \}$.
	\end{itemize}
	Now, if $W = \Bu(T)$, then $W' := \Bu(T \cup \{ \top \})$ and if $W = \coBu(T)$ then $W' := \coBu(T \cup \{ \bot \})$. (That is, in any case, the state $\top$ has value 1 and the state $\bot$ has value 0).
	\label{def:game_extracted}
\end{definition}
We obtain a proposition analogous to Proposition~\ref{prop:buchi_equal_reach}.
\begin{proposition}
	Consider a game $\G = \Games{\Aconc}{W}$ with $W$ either equal to $\Bu(T)$ or to $\coBu(T)$ for some subset of states $T \subseteq Q$ and the game $\G_S = \Games{\Aconc_S}{W'}$ built from it, for some $S \subseteq Q$. Then, for all states $q \in Q$, the value of both games from state $q$ is the same: $\MarVal{\G}(q) = \MarVal{\G_S}(q)$. Furthermore, any Player $\A$ strategy that is optimal from some state in $\G$ can be translated into an optimal strategy in $\G_S$.
	\label{prop:W_equal_W}
\end{proposition}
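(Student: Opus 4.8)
The plan is to prove Proposition~\ref{prop:W_equal_W} by a two-direction inequality argument on the values, together with an explicit translation of strategies between $\G$ and $\G_S$. The crucial observation is that the modified arena $\Aconc_S$ differs from $\Aconc$ only in that, at each state $q \in S$, instead of continuing the play according to the original transition function, the game is ``short-circuited'' into a one-shot lottery that goes to $\top$ with probability $\MarVal{\G}(q)$ and to $\bot$ otherwise, and $\top$ (resp. $\bot$) is a winning (resp. losing) absorbing state for $W'$. Since $W$ is prefix-independent, what matters for a play is only what happens from the first time it (possibly) enters $S$; the trick is that entering $S$ at $q$ and then playing optimally in $\G$ from $q$ yields exactly the probability $\MarVal{\G}(q)$ of satisfying $W$, which is precisely what the lottery $d_{\MarVal{\G}(q)}$ simulates.

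First I would set up notation: write $v := \MarVal{\G}$ for the value vector of $\G$, and note that in $\G_S$ the states $\top,\bot$ have value $1,0$ respectively (for $\Bu$ because $\top$ is added to the target, for $\coBu$ because $\bot$ is added and $\top$ is outside the target; in both cases $\top$ and $\bot$ are absorbing). Then I would show $\MarVal{\G_S}(q) \geq \MarVal{\G}(q)$ for all $q \in Q$: given $\varepsilon > 0$ and a uniformly $\varepsilon$-optimal strategy $\s_\A$ for Player $\A$ in $\G$, build a strategy $\s_\A^S$ in $\G_S$ that mimics $\s_\A$ as long as the play stays in $Q \setminus S$; once the play is absorbed into $\top$ or $\bot$ (which happens on the first visit to $S$), nothing more needs to be done. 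One checks, by conditioning on the first visit to $S$ and using prefix-independence of $W$ (hence of $W'$ on the $Q$-part), that $\prob{\Aconc_S,q}{\s_\A^S,\cdot}[W'] \geq \MarVal{\G}(q) - \varepsilon$ against any Player $\B$ strategy: on plays never reaching $S$, the two games coincide; on plays reaching $S$ for the first time at some $q' \in S$, the contribution is $v(q') = \MarVal{\G}(q')$, and the local $\varepsilon$-optimality inequality propagated along the play (exactly the computation already carried out in the proof of Lemma~\ref{lem:uniform_guarantee}, via Proposition~\ref{prop:outcome_valuation}) gives that the expectation of $v$ at the first $S$-visit, plus the $W'$-probability on plays avoiding $S$, is at least $v(q) - \varepsilon$. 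Letting $\varepsilon \to 0$ gives $\MarVal{\G_S}(q) \geq v(q)$. The symmetric argument for Player $\B$ — using an $\varepsilon$-optimal Player $\B$ strategy in $\G$ and the dual inequality — gives $\MarVal{\G_S}(q) \leq v(q)$, hence equality.

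For the strategy-translation statement, suppose $\s_\A$ is optimal from some $q_0$ in $\G$, i.e. $\MarVal{\G}[\s_\A] = \MarVal{\G} = v$ (we may assume uniform optimality on the set of states from which optimal strategies exist, or just argue pointwise). Define $\s_\A^S$ in $\G_S$ to agree with $\s_\A$ on histories staying in $Q \setminus S$; at states in $S$ the actions are irrelevant since the transition is the fixed lottery $d_{v(q)}$. Then against any Player $\B$ strategy $\s_\B^S$ in $\G_S$, by conditioning on the first visit to $S$: plays that stay forever in $Q \setminus S$ have the same $W'$-probability in $\G_S$ as in $\G$ under $\s_\A$ (which is $\geq v$), and plays first hitting $S$ at $q'$ contribute $v(q')$, which by optimality of $\s_\A$ is the exact conditional winning probability Player $\A$ would have guaranteed in $\G$. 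Summing, using prefix-independence, and using the fact (already established, via the Lemma~\ref{lem:uniform_guarantee} computation) that the $v$-weighted first-hitting expectation dominates $v(q_0)$, one gets $\MarVal{\G_S}[\s_\A^S](q_0) \geq v(q_0) = \MarVal{\G_S}(q_0)$, so $\s_\A^S$ is optimal from $q_0$ in $\G_S$. Proposition~\ref{prop:buchi_equal_reach} then follows by taking $S = T$, $W = \Bu(T)$, and observing that the resulting game $\G_T$ (with objective $\Bu(T \cup \{\top\})$) has the same values and the same translation of optimal strategies as the reachability game $\G_{\mathsf{reach}}$ with objective $\Reach(\{\top\})$ — indeed once $T \cup \{\top\}$ is reached in $\G_T$ the play is absorbed into $\{\top,\bot\}$, so seeing the target infinitely often is equivalent to reaching $\top$, and the arenas $\Aconc_T$ and $\Aconc_\mathsf{reach}$ are literally the same.

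\textbf{Main obstacle.} The delicate point is the conditioning-on-first-visit-to-$S$ argument and making the prefix-independence bookkeeping fully rigorous: one must carefully decompose the probability of $W'$ in $\G_S$ as a sum over finite paths that first enter $S$ (plus the residual event of never entering $S$), show that on the ``never enters $S$'' event the measures induced in $\G$ and $\G_S$ coincide, and then invoke the inductive first-hitting inequality from the proof of Lemma~\ref{lem:uniform_guarantee} — which was stated there for positional strategies in an induced Markov chain, so here I would either re-run that induction directly for the (possibly history-dependent) strategies at hand, or reduce to the Markov-chain setting by the usual argument that it suffices to consider the value against an optimal Player $\B$ response. Everything else is routine.
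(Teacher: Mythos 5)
Your overall route is the paper's own: transfer strategies between $\G$ and $\G_S$ unchanged on histories in $(Q\setminus S)^+$, decompose the probability of $W'$ in $\G_S$ according to the first visit to $S$ (the continuation value after first hitting $s\in S$ being exactly $\MarVal{\G}(s)$ by construction of the lottery $d_{\MarVal{\G}(s)}$), use prefix-independence on the event of never visiting $S$, and close the remaining gap with $\varepsilon$-optimal continuations. The paper does exactly this, obtaining the two inequalities by augmenting, respectively, Player $\B$'s strategy and Player $\A$'s strategy with $\varepsilon$-optimal continuations from the states of $S$; whether one starts from an $\varepsilon$-optimal Player $\A$ strategy (as you do) or from an arbitrary one (as the paper does, which then gives the optimal-strategy translation for free) is immaterial.

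The one step whose justification would fail as written is the central first-hitting inequality, i.e. that for the transferred strategy $\s_\A$ and every opponent strategy $\s_\B$ one has $\sum_{s\in S}\MarVal{\G}(s)\cdot\prob{\Aconc,q}{\s_\A,\s_\B}[\diamondsuit^{\mathsf{st}}(s,S)]+\prob{\Aconc,q}{\s_\A,\s_\B}[W\cap\lnot\diamondsuit S]\geq \MarVal{\G}[\s_\A](q)$ (and its dual). You propose to get it from the induction in the proof of Lemma~\ref{lem:uniform_guarantee} (via Proposition~\ref{prop:outcome_valuation}), or by reducing to the Markov-chain setting through an optimal Player $\B$ response. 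Neither applies: that induction requires the strategy to locally dominate the valuation $v$ at every history, which a merely globally $\varepsilon$-optimal, history-dependent strategy need not do, and against a history-dependent $\s_\A$ the induced one-player problem is not a finite MDP, so no positional optimal response is available. The paper closes this point with a composition argument that needs no local reasoning: given $\s_\B$, let $\s_\B^{\delta}$ coincide with $\s_\B$ on $(Q\setminus S)^+$ and switch to a $\delta$-optimal Player $\B$ strategy of $\G$ upon first entering $S$; then $\MarVal{\G}[\s_\A](q)\leq\prob{\Aconc,q}{\s_\A,\s_\B^{\delta}}[W]\leq\sum_{s\in S}(\MarVal{\G}(s)+\delta)\cdot\prob{\Aconc,q}{\s_\A,\s_\B}[\diamondsuit^{\mathsf{st}}(s,S)]+\prob{\Aconc,q}{\s_\A,\s_\B}[W\cap\lnot\diamondsuit S]$, and $\delta\to 0$ gives the claim. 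This also repairs your translation step, which otherwise relies on the dubious assertion that the conditional winning probability after first hitting $s$ equals $v(s)$ exactly under an optimal $\s_\A$; the converse value inequality is obtained dually by augmenting the Player $\A$ strategy transferred from $\G_S$ with $\varepsilon$-optimal continuations from $S$, as in the paper. With that replacement your argument goes through.
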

\begin{proof}	
	We have $\MarVal{\G_S}(\top) = 1$ and $\MarVal{\G_S}(\bot) = 0$, therefore for all states $q \in S$, we have $\MarVal{\G}(q) = \MarVal{\G_S}(q)$ by construction. 
	
	Consider now a state $q \in Q \setminus S$ and a Player $\A$ strategy $\s_\A$ in the game $\G$. Let us consider what happens in the game $\G_S$ if Player $\A$ plays the strategy $\s_\A$ from $q$ (it plays arbitrary -- and irrelevant -- actions in $\top$ and $\bot$). Let $\s_\B$ be a Player $\B$ strategy in $\G_S$. With these choices, what happens with $\s_\A,\s_\B$ is the same in both games $\G$ and $\G_S$ for paths in $(Q \setminus S)^+$. In the following, for a set of states $X \subseteq Q$, we denote by $\diamondsuit X$ the event 'reaching the set $X$' and by $\diamondsuit^{\mathsf{st}} (x,X)$ the event 'reaching the set $X$ and the first element seen in $X$ is $x$'. Furthermore, note that if $S$ is not reached, neither is $\top$ and $\bot$. Hence, the events $W' \cap \lnot \diamondsuit S$ and $W \cap \lnot \diamondsuit S$ have the same probability measure. Therefore, we obtain:
	\begin{align*}
	\prob{\Aconc_S,q}{\s_\A,\s_\B}[W'] & = \prob{\Aconc_S,q}{\s_\A,\s_\B}[W' \cap \diamondsuit S] + \prob{\Aconc_S,q}{\s_\A,\s_\B}[W' \cap \lnot \diamondsuit S] \\
	& = \sum_{s \in S} \prob{\Aconc_S,q}{\s_\A,\s_\B}[W' \cap \diamondsuit^{\mathsf{st}} (s,S)] + \prob{\Aconc_S,q}{\s_\A,\s_\B}[W \cap \lnot \diamondsuit S]\\
	& = \sum_{s \in S} \prob{\Aconc_S,q}{\s_\A,\s_\B}[W' \mid \diamondsuit^{\mathsf{st}} (s,S)] \cdot \prob{\Aconc_S,q}{\s_\A,\s_\B}[\diamondsuit^{\mathsf{st}} (s,S)] + \prob{\Aconc,q}{\s_\A,\s_\B}[W \cap \lnot \diamondsuit S]\\
	& = \sum_{s \in S} \MarVal{\G_S}(s) \cdot \prob{\Aconc_S,q}{\s_\A,\s_\B}[\diamondsuit^{\mathsf{st}} (s,S)]+ \prob{\Aconc,q}{\s_\A,\s_\B}[W \cap \lnot \diamondsuit S]\\
	& = \sum_{s \in S} \MarVal{\G}(s) \cdot \prob{\Aconc,q}{\s_\A,\s_\B}[\diamondsuit^{\mathsf{st}} (s,S)]+ \prob{\Aconc,q}{\s_\A,\s_\B}[W \cap \lnot \diamondsuit S]\\
	\end{align*}
	Overall, we obtain:
	\begin{equation}
	\prob{\Aconc_S,q}{\s_\A,\s_\B}[W'] = \sum_{s \in S} \MarVal{\G}(s) \cdot \prob{\Aconc,q}{\s_\A,\s_\B}[\diamondsuit^{\mathsf{st}} (s,S)]+ \prob{\Aconc,q}{\s_\A,\s_\B}[W \cap \lnot \diamondsuit S]
	\label{eqn:expression_proba_reach}
	\end{equation}
	
	Consider in addition some $\varepsilon > 0$ and a Player $\B$ strategy $\s_\B^\varepsilon$ equal to the strategy $\s_\B$ on $(Q \setminus S)^+$ and such that, for all states $s \in S$, the value $\MarVal{\G}[\s_\B](s)$ of the game $\G$ from $s$ with the strategy $\s_\B^\varepsilon$ is at most $\MarVal{\G}(s) + \varepsilon$. With that strategy, we obtain:
	\begin{align*}
	\prob{\Aconc,q}{\s_\A,\s_\B^\varepsilon}[W] & = \prob{\Aconc,q}{\s_\A,\s_\B^\varepsilon}[W \cap \diamondsuit S] + \prob{\Aconc,q}{\s_\A,\s_\B^\varepsilon}[W \cap \lnot \diamondsuit S] \\
	& = \sum_{s \in S} \prob{\Aconc,q}{\s_\A,\s_\B^\varepsilon}[W \cap \diamondsuit^{\mathsf{st}} (s,S)] + \prob{\Aconc,q}{\s_\A,\s_\B^\varepsilon}[W \cap \lnot \diamondsuit S] \\
	& = \sum_{s \in S} \prob{\Aconc,q}{\s_\A,\s_\B^\varepsilon}[W \mid \diamondsuit^{\mathsf{st}} (s,S)] \cdot \prob{\Aconc,q}{\s_\A,\s_\B^\varepsilon}[\diamondsuit^{\mathsf{st}} (s,S)] + \prob{\Aconc,q}{\s_\A,\s_\B^\varepsilon}[W \cap \lnot \diamondsuit S] \\
	& \leq \sum_{s \in S} (\MarVal{\G}(s) + \varepsilon) \cdot \prob{\Aconc,q}{\s_\A,\s_\B}[\diamondsuit^{\mathsf{st}} (s,S)] + \prob{\Aconc,q}{\s_\A,\s_\B^\varepsilon}[W \cap \lnot \diamondsuit S]\\
	& \leq \sum_{s \in S} \MarVal{\G}(s) \cdot \prob{\Aconc,q}{\s_\A,\s_\B}[\diamondsuit^{\mathsf{st}} (s,S)] + \prob{\Aconc,q}{\s_\A,\s_\B^\varepsilon}[W \cap \lnot \diamondsuit S] + \varepsilon \\
	& = \prob{\Aconc_S,q}{\s_\A,\s_\B}[W'] + \varepsilon
	\end{align*}
	As this holds for all $\varepsilon > 0$, it follows that $\MarVal{\G}[\s_\A](q) \leq \prob{\Aconc_S,q}{\s_\A,\s_\B}[W']$. As this holds for all Player $\B$ strategy $\s_\B$, it follows that $\MarVal{\G}[\s_\A](q) \leq \MarVal{\G_S}[\s_\A](q) \leq \MarVal{\G_S}(q)$. 
	It follows that $\MarVal{\G}(q) \leq \MarVal{\G_S}(q)$ and (as we will show that the values in both games are the same) a Player $\A$ optimal strategy in $\G$ can be translated into a Player $\A$ optimal strategy in $\G_S$.
	
	We then proceed similarly to show that $\MarVal{\G_S}(q) \leq \MarVal{\G}(q)$. 
	Consider a Player $\A$ strategy $\s_\A$ from $q$ in $\G_S$. For $\varepsilon > 0$, let us define a Player $\A$ strategy $\s_\A^\varepsilon$ in the game $\G$ in the following way: $\s_\A^\varepsilon$ coincides with $\s_\A$ on $(Q \setminus S)^+$ and then plays $\varepsilon$-optimal strategies from all states in $S$, i.e. the value of the game $\G$ from $s$ is at least $\MarVal{\G}(s) - \varepsilon$ with strategy $\s_\A^\varepsilon$. Consider a Player $\B$ strategy $\s_\B$ in $\G$ (that can be seen as a strategy in $\G_S$), we have:
	\begin{align*}
	\prob{\Aconc,q}{\s_\A^\varepsilon,\s_\B}[W] & = \prob{\Aconc,q}{\s_\A^\varepsilon,\s_\B}[W \cap \diamondsuit S] + \prob{\Aconc,q}{\s_\A^\varepsilon,\s_\B}[W \cap \lnot \diamondsuit S] \\
	& = \sum_{s \in S} \prob{\Aconc,q}{\s_\A^\varepsilon,\s_\B}[W \cap \diamondsuit^{\mathsf{st}} (s,S)] + \prob{\Aconc,q}{\s_\A^\varepsilon,\s_\B}[W \cap \lnot \diamondsuit S] \\
	& = \sum_{s \in S} \prob{\Aconc,q}{\s_\A^\varepsilon,\s_\B}[W \mid \diamondsuit^{\mathsf{st}} (s,S)] \cdot \prob{\Aconc,q}{\s_\A^\varepsilon,\s_\B}[\diamondsuit^{\mathsf{st}} (s,S)] + \prob{\Aconc,q}{\s_\A^\varepsilon,\s_\B}[W \cap \lnot \diamondsuit S] \\
	& \geq \sum_{s \in S} (\MarVal{\G}(s) - \varepsilon) \cdot \prob{\Aconc,q}{\s_\A,\s_\B}[\diamondsuit^{\mathsf{st}} (s,S)] + \prob{\Aconc,q}{\s_\A^\varepsilon,\s_\B}[W \cap \lnot \diamondsuit S] \\
	& \geq \sum_{s \in S} \MarVal{\G}(s) \cdot \prob{\Aconc,q}{\s_\A,\s_\B}[\diamondsuit^{\mathsf{st}} (s,S)] + \prob{\Aconc,q}{\s_\A^\varepsilon,\s_\B}[W \cap \lnot \diamondsuit S] - \varepsilon\\
	& = \prob{\Aconc_S,q}{\s_\A,\s_\B}[W'] - \varepsilon
	\end{align*}
	This last equality comes from Equation~(\ref{eqn:expression_proba_reach}). As this holds for all Player $\B$ strategies $\s_\B$, we have $\MarVal{\G_S}[\s_\A](q) - \varepsilon \leq \MarVal{\G}[\s_\A^\varepsilon](q)
	\leq \MarVal{\G}[\A](q)$. As this holds for all $\varepsilon > 0$, we obtain $\MarVal{\G_S}[\s_\A](q) \leq \MarVal{\G}(q)$. And as this holds for all Player $\A$ strategy $\s_\A$, we have: $\MarVal{\G_S}(q) \leq \MarVal{\G}(q)$. 
	
	Overall, we have $\MarVal{\G}(q) = \MarVal{\G_S}(q)$ for all $q \in Q$.
\end{proof}

We can now proceed to the proof of Proposition~\ref{prop:buchi_equal_reach}.
\begin{proof}[Proof of Proposition~\ref{prop:buchi_equal_reach}]
	In the case where $S = T$ and $W = \Bu(T)$, we have $W' = \Bu(T \cup \{ \top \})$ but $\top$ is not reached if $T$ is not. Hence, the events $W'$ and $\Reach(\top)$ have the same probability measure. Furthermore, we have $\Aconc_T = \Aconc_\mathsf{reach}$. It follows that, in the games $\Games{\Aconc_T}{\Bu(T \cup \{ \top \})}$ and $\G^\mathsf{reach} = \Games{\Aconc_\mathsf{reach}}{\Reach(\{ \top \})}$, all states have the same values. We conclude by applying Proposition~\ref{prop:W_equal_W}.
\end{proof}

\subsection{Proof of Proposition~\ref{prop:positional_suffice_buchi}}
\label{appen:proof_prop_positional_suffice_buchi}
Let us show the following proposition:
\begin{proposition}
	Consider a Büchi game $\G = \Games{\Aconc}{\Bu(T)}$ and assume that there is an optimal strategy from all states in the reachability game $\G^\mathsf{reach} = \Games{\Aconc^\mathsf{reach}}{\Reach(T)}$ obtained from it. Then, there exists a Player $\A$ positional strategy that is uniformly optimal in $\G$.
	\label{prop:uniform_reach_uniform_buchi}
\end{proposition}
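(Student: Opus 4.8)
The plan is to lift a positional uniformly optimal strategy from the reachability game $\G^\mathsf{reach}$ back to $\G$, and then certify uniform optimality in $\G$ via Lemma~\ref{lem:uniformly_optimal}. Write $v := \MarVal{\G}$; by Proposition~\ref{prop:buchi_equal_reach}, $v$ is also the value vector of $\G^\mathsf{reach}$, and by hypothesis there is an optimal strategy from every state of $\G^\mathsf{reach}$. Since a reachability game in which optimal strategies exist from all states admits a positional uniformly optimal strategy~\cite{BBSCSL22}, I would fix such a positional Player $\A$ strategy $\s_\A^\mathsf{reach}$ in $\G^\mathsf{reach}$; it is in particular locally optimal there. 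As $\Aconc^\mathsf{reach}$ and $\Aconc$ share the same local interactions at states of $Q \setminus T$ and the same value vector on $Q$, the restriction $\restriction{\s_\A^\mathsf{reach}}{Q \setminus T}$ remains locally optimal in $\G$ at those states. I then define $\s_\A \in \SetPosStrat{\Aconc}{\A}$ by $\restriction{\s_\A}{Q \setminus T} := \restriction{\s_\A^\mathsf{reach}}{Q \setminus T}$ and, for $q \in T$, by choosing $\s_\A(q) \in \Opt_\A(\gameNF{\formNF_q}{\mu_v})$ arbitrarily. By construction $\s_\A$ is locally optimal in $\G$, which is the first condition of Lemma~\ref{lem:uniformly_optimal} (whose hypothesis holds here: $\Bu(T)$ is Borel and prefix-independent, and its complement $\coBu(T)$ is a parity objective, so positional optimal strategies exist in finite MDPs).

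It then remains to check the second condition of Lemma~\ref{lem:uniformly_optimal}: for every EC $H = (Q_H,\beta)$ in the MDP induced by $\s_\A$ with $v_H > 0$, that $\MarVal{\Aconc_H^{\s_\A}}(q) = 1$ for all $q \in Q_H$. By Proposition~\ref{prop:ec_same_vale}, $v$ is constant equal to $v_H$ on $Q_H$, a value shared by $\G$ and $\G^\mathsf{reach}$. Fix an arbitrary Player $\B$ strategy $\s_\B$ in $\Aconc_H^{\s_\A}$, viewed also as a strategy in $\Aconc^\mathsf{reach}$. Playing $\s_\A^\mathsf{reach}$ against $\s_\B$ from $q \in Q_H$, the play stays in $Q_H$ as long as it does not visit $T \cap Q_H$ (non-target states of $Q_H$ have identical interactions in both arenas, and $\beta$ keeps the play within $Q_H$); in $\G^\mathsf{reach}$ the sink $\top$ can be reached only by first visiting some $t \in T \cap Q_H$, after which $\top$ is reached in one step with probability exactly $v(t) = v_H$. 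Hence, using the uniform optimality of $\s_\A^\mathsf{reach}$ in $\G^\mathsf{reach}$,
\[
  v_H = \MarVal{\G^\mathsf{reach}}(q) = \MarVal{\G^\mathsf{reach}}[\s_\A^\mathsf{reach}](q) \leq \prob{\Aconc^\mathsf{reach},q}{\s_\A^\mathsf{reach},\s_\B}[\Reach(\{ \top \})] = v_H \cdot \prob{\Aconc_H^{\s_\A},q}{\s_\A,\s_\B}[\Reach(T \cap Q_H)] ,
\]
and since $v_H > 0$ this forces $\prob{\Aconc_H^{\s_\A},q}{\s_\A,\s_\B}[\Reach(T \cap Q_H)] = 1$.

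This bound holds from every state of $Q_H$ and for every Player $\B$ strategy, so a standard repetition argument (re-applying it from the state occupied right after each visit to $T \cap Q_H$, using finiteness of $Q_H$) shows that, under $\s_\A$ and any $\s_\B$, the set $T \cap Q_H$ is visited infinitely often almost-surely, i.e. $\prob{\Aconc_H^{\s_\A},q}{\s_\A,\s_\B}[\Bu(T \cap Q_H)] = 1$ for every $q \in Q_H$. Since $\s_\A$ is an admissible Player $\A$ strategy in $\Aconc_H^{\s_\A}$, taking the supremum over Player $\A$ strategies yields $\MarVal{\Aconc_H^{\s_\A}}(q) = 1$, the second condition of Lemma~\ref{lem:uniformly_optimal}, and the lemma gives that $\s_\A$ is uniformly optimal in $\G$. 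The main obstacle, and the only non-bookkeeping step, is this EC argument: one must make sure that within an EC the reachability and the B\"uchi values genuinely coincide and are uniform, that $\top$ is reachable in $\G^\mathsf{reach}$ from $Q_H$ only through $T \cap Q_H$ (which is where the one-step $v_H$-factor comes from), and that plugging $\s_\A$ back in realises value $1$ for the B\"uchi objective inside $H$; Proposition~\ref{prop:positional_suffice_buchi} then follows since by Proposition~\ref{prop:buchi_equal_reach} optimal strategies in $\G$ lift to $\G^\mathsf{reach}$.
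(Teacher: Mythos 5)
Your proof is correct and follows essentially the same route as the paper's: lift the positional uniformly optimal strategy of $\G^\mathsf{reach}$, complete it with locally optimal choices at target states, and verify the two conditions of Lemma~\ref{lem:uniformly_optimal}, with the end-component condition obtained from $\prob{\Aconc^\mathsf{reach},q}{\s_\A^\mathsf{reach},\s_\B}[\Reach(\{\top\})] = v_H \cdot \prob{\Aconc,q}{\s_\A,\s_\B}[\Reach(T \cap Q_H)]$. The only differences are presentational (you re-derive this decomposition directly rather than citing it from the proof of Proposition~\ref{prop:buchi_equal_reach}, and you spell out the iteration giving $\Bu(T)$ almost-surely), so there is nothing substantive to change.
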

\begin{proof}
	By Theorem 28 in \cite{BBSCSL22}, there exists a positional Player $\A$ strategy $\s_\A^\mathsf{reach}$ in $\G^\mathsf{reach}$ that is uniformly optimal. Note that we can apply Lemma~\ref{lem:uniformly_optimal}, since, although the reachability objective is not prefix-independent, it is in that case equivalent to a Büchi objective since the only outgoing edge of the target $\top$ is a self-loop. Hence, by Lemma~\ref{lem:uniformly_optimal}, the strategy $\s_\A^\mathsf{reach}$ is locally optimal in $\G^\mathsf{reach}$. Let us now define a Player $\A$ positional strategy $\s_\A$ in $\G$ in the following way:
	\begin{itemize}
		\item for all $q \in Q \setminus T$, we set $\s_\A(q) := \s_\A^\mathsf{reach}(q)$;
		\item for all $t \in T$, we set $\s_\A(t) \in \Opt_{\gameNF{\formNF_t}{\mu_v}}$ where $v := \MarVal{\G}$ is the valuation giving the values of the states, i.e. the strategy $\s_\A$ is locally optimal on $T$.
	\end{itemize}

	Let us show that this Player $\A$ positional strategy is uniformly optimal. We want to apply Lemma~\ref{lem:uniformly_optimal}. Since $\s_\A^\mathsf{reach}$ is uniformly optimal in $\G^\mathsf{reach}$, and the values of all states in $Q$ are the same in $\G$ and $\G^\mathsf{reach}$, it follows that the strategy $\s_\A$ is locally optimal. Furthermore, consider an EC $H = (Q_H,\beta)$ such that $v_H > 0$ in the MDP $\Gamma$ induced by the strategy $\s_\A$ in the game $\G$ and a Player $\B$ strategy $\s_\B$ in the game $\Aconc_H^{\s_\A}$. Note that the strategy $\s_\B$ is also a strategy in $\G$ and $\G^\mathsf{reach}$. Consider now a state $q \in Q_H \setminus T$. By Equation~(\ref{eqn:expression_proba_reach}) (in the proof of Proposition~\ref{prop:buchi_equal_reach}), we have $\prob{\Aconc_\mathsf{reach},q}{\s_\A^\mathsf{reach},\s_\B}[\Reach(\top)] = \sum_{t \in T} \MarVal{\G}(t) \cdot \prob{\Aconc,q}{\s_\A,\s_\B}[\diamondsuit^{\mathsf{st}} (t,T)]$. In addition, for all $q \in Q_H$, we have $\MarVal{\G}(t) = v_H$. Hence, for all $t \in T$ such that $\MarVal{\G}(t) \neq v_H$, we have $\prob{\Aconc,q}{\s_\A,\s_\B}[\diamondsuit^{\mathsf{st}} (t,T)] = 0$. Furthermore, note that for all $t \in T$, we have $\prob{\Aconc^\mathsf{reach},q}{\s_\A^\mathsf{reach},\s_\B}[\diamondsuit^{\mathsf{st}} (t,T)] = \prob{\Aconc,q}{\s_\A,\s_\B}[\diamondsuit^{\mathsf{st}} (t,T)]$ (as the strategies $\s_\A$ and $\s_\A^\mathsf{reach}$ coincide on $Q \setminus T$). Then, as the strategy $\s_\A^\mathsf{reach}$ is uniformly optimal in $\G^\mathsf{reach}$, we have:
	\begin{displaymath}
		v_H = \MarVal{\G}(q) = \MarVal{\G^\mathsf{reach}}(q) \leq \prob{\Aconc_\mathsf{reach},q}{\s_\A^\mathsf{reach},\s_\B}[\Reach(\top)] = v_H \cdot \prob{\Aconc,q}{\s_\A,\s_\B}[\diamondsuit T]
	\end{displaymath}
	That is: $\prob{\Aconc,q}{\s_\A,\s_\B}[\diamondsuit T] = 1$. As this holds for all $q \in Q_H \setminus T$, it follows that $\prob{\Aconc,q}{\s_\A,\s_\B}[\Bu(T)] = 1$. As this holds for all Player $\B$ strategy $\s_\B$, we have $\MarVal{\Aconc_H^{\s_\A}}(q) = 1$. We can then apply Lemma~\ref{lem:uniformly_optimal} to obtain that the strategy $\s_\A$ is uniformly optimal in $\G$. 
\end{proof}

We can then deduce the proof of Proposition~\ref{prop:positional_suffice_buchi}:
\begin{proof}[Proof of Proposition~\ref{prop:positional_suffice_buchi}]
	By Proposition~\ref{prop:buchi_equal_reach}, as optimal strategies in the Büchi game $\G$ are translated into optimal strategies in the reachability game $\G^\mathsf{reach}$ it follows that there are optimal strategies from all states in reachability game $\G^\mathsf{reach}$. We can then conclude by applying Proposition~\ref{prop:uniform_reach_uniform_buchi}.
\end{proof}

\subsection{Proof of Theorem~\ref{lem:rm_in_buchi}}
\label{appen:proof_lem_rm_in_buchi}
\begin{proof}	
	We want to apply Proposition~\ref{prop:uniform_reach_uniform_buchi}. To do so, we have to show that there exists an optimal strategy from all states in the reachability game $\G^\mathsf{reach}$. This is done by applying Theorem~\ref{thm:rm_in_reach}. Indeed, all local interaction at states $t \in T \cup \{ \top,\bot \}$ are trivial and therefore RM (see Proposition 39 from \cite{BBSCSL22} as trivial interactions are a special case of determined local interactions). Furthermore, as the local interactions at states in $Q \setminus T$ are not modified from $\Aconc$ to $\Aconc^\mathsf{reach}$, they are also RM (by assumption). Therefore, we can apply Theorem~\ref{thm:rm_in_reach} to obtain that there exists an optimal strategy from all states in $\G^\mathsf{reach}$. We conclude by applying Proposition~\ref{prop:uniform_reach_uniform_buchi}.
\end{proof}


\section{Values in Büchi and co-Büchi games, valuation of game forms}
We do not prove any new result in this section, however we recall how to compute the values in Büchi and co-Büchi games (with nested fixed points). We also formally define partial valuations in game forms, and the corresponding values. The formal definition of RM game forms is given in Subsubsection~\ref{subsubsec:partial_val_reach}.

\subsection{Additional Preliminaries}
For a set of states $Q$ and two valuations $v,v' \in [0,1]^Q$, we denote by $v \preceq v'$ the fact that, for all $q \in Q$ we have $v(q) \leq v'(q)$. We define $v \prec v'$ is $v \preceq v'$ and $v \neq v'$. For a subset of states $S \subseteq Q$ and two valuations $v_S \in [0,1]^S$ and $v_{\bar{S}} \in [0,1]^{Q \setminus S}$, we denote by $v_S \sqcup v_{\bar{S}} \in [0,1]^Q$ the valuation $v$ such that, for all $q \in Q$:
\begin{equation*}
v(q) := \begin{cases}
v_S(q) \quad &\text{if} \, q \in S \\
v_{\bar{S}}(q) \quad &\text{if} \, q \notin S \\
\end{cases}
\end{equation*}

An operator $\Delta: [0,1]^Q \rightarrow [0,1]^Q$ is non-decreasing if, for all $v,v' \in [0,1]^Q$, if $v \preceq v'$ then $\Delta(v) \preceq \Delta(v')$. 

We also recall Knaster-Tarski fixed-point theorem.
\begin{theorem}[Knaster-Tarski]
	In a complete lattice $(E,\preceq)$, for a non-decreasing operator $f: E \rightarrow E$, the set of fixed pints of $f$ is also a complete lattice. In particular, it has a least and a greatest fixed point. These are denoted $\lfp(f)$ and $\gfp(f)$ respectively.
\end{theorem}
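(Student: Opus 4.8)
The plan is to prove the statement in two stages. First I would establish, by the classical Knaster--Tarski argument, that $f$ possesses both a least fixed point and a greatest fixed point; this already yields the ``in particular'' part. Then I would bootstrap this to show that the whole set of fixed points $F := \{x \in E \mid f(x) = x\}$ is a complete lattice. Throughout, $\bigwedge$ and $\bigvee$ denote the infimum and supremum in $E$, and $\bot := \bigwedge E$, $\top := \bigvee E$ its bottom and top elements, which exist because $E$ is a complete lattice.

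For the first stage, I would consider the set of post-fixed points $P := \{x \in E \mid f(x) \preceq x\}$, which contains $\top$ and hence is non-empty, and set $a := \bigwedge P$. For every $x \in P$ we have $a \preceq x$, so monotonicity of $f$ (which is exactly what ``non-decreasing'' means here) gives $f(a) \preceq f(x) \preceq x$; thus $f(a)$ is a lower bound of $P$, whence $f(a) \preceq a$, i.e. $a \in P$. Applying $f$ again to $f(a) \preceq a$ yields $f(f(a)) \preceq f(a)$, so $f(a) \in P$ and therefore $a \preceq f(a)$. Combining the two inequalities gives $f(a) = a$, and $a$ is the least fixed point since every fixed point $y$ lies in $P$ and hence satisfies $a \preceq y$. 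The greatest fixed point $b$ is obtained dually, as $b := \bigvee\{x \in E \mid x \preceq f(x)\}$ (equivalently, by applying the previous argument in the order-dual lattice). We set $\lfp(f) := a$ and $\gfp(f) := b$.

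For the second stage, let $S \subseteq F$ be arbitrary; I claim $S$ has a supremum in $F$. Put $u := \bigvee S$, the supremum computed in $E$. For each $x \in S$ we have $x \preceq u$, so $x = f(x) \preceq f(u)$ by monotonicity; since $u$ is the least upper bound of $S$, this forces $u \preceq f(u)$. Consequently $f$ maps the interval $[u,\top] := \{z \in E \mid u \preceq z\}$ into itself: this interval is itself a complete lattice, and for $z \in [u,\top]$ we get $u \preceq f(u) \preceq f(z) \preceq \top$. Applying the first stage to $f$ restricted to $[u,\top]$ produces a least fixed point $w$ of $f$ within $[u,\top]$. Then $w \in F$, $w$ is an upper bound of $S$ (as $w \succeq u \succeq x$ for all $x \in S$), and any $y \in F$ bounding $S$ from above satisfies $y \succeq u$, hence $y \in [u,\top]$ and, by minimality of $w$ there, $w \preceq y$. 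Thus $w$ is the supremum of $S$ in $F$. The infimum of $S$ in $F$ is obtained symmetrically, working instead in $[\bot,\ell]$ with $\ell := \bigwedge S$. Hence $F$ is a complete lattice, which completes the proof.

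The argument is entirely classical, so there is no genuine obstacle; the only point requiring a little care is the verification that $f$ maps $[u,\top]$ into itself --- equivalently, the inequality $u \preceq f(u)$ --- which is precisely where both the monotonicity of $f$ and the least-upper-bound property of $u$ are used.
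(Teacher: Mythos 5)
Your proof is correct: it is the classical Knaster--Tarski argument (least fixed point as the infimum of the post-fixed points, then completeness of the fixed-point set via restricting $f$ to the intervals $[u,\top]$ and $[\bot,\ell]$), and all steps, including the verification that $u \preceq f(u)$ so that $f$ maps $[u,\top]$ into itself, are sound; the case $S = \emptyset$ is also covered since $\bigvee \emptyset = \bot$. The paper itself only recalls this theorem as a standard result and gives no proof, so there is nothing to compare against beyond noting that your argument is the standard one.
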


Note that the least fixed-point of a non-decreasing operator $f$ also ensures the following property:
\begin{proposition}
	Consider a complete lattice $(E,\preceq)$, a non-decreasing operator $f: E \rightarrow E$ and its least fixed point $v = \lfp(f) \in E$. Then, for all $u \prec v$, we have that $f(u) \preceq u$ does not hold. 
	\label{prop:no_derase_before_lfp}
\end{proposition}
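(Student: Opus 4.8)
The plan is to prove the statement by contradiction. Suppose there were some $u \prec v$ with $f(u) \preceq u$. I will show that this forces $v \preceq u$; together with $u \preceq v$ (which is part of $u \prec v$) and antisymmetry of $\preceq$, this gives $u = v$, contradicting $u \neq v$ (also part of $u \prec v$).

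The core of the argument is to extract from the assumption $f(u) \preceq u$ a fixed point of $f$ lying below $u$; minimality of $v = \lfp(f)$ then yields $v \preceq u$. To this end I would consider the principal down-set $E_u := \{ x \in E \mid x \preceq u \}$. It is itself a complete lattice: it is closed under the meet operation of $E$ (for any nonempty family, with $u$ playing the role of the empty meet / greatest element), hence it has all meets, and therefore all joins as well. Moreover $f$ restricts to a self-map of $E_u$: if $x \preceq u$ then, by monotonicity of $f$ and the hypothesis $f(u) \preceq u$, one gets $f(x) \preceq f(u) \preceq u$, so $f(x) \in E_u$; and this restriction $f|_{E_u}\colon E_u \to E_u$ is plainly non-decreasing.

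Then I would invoke the Knaster-Tarski theorem on the complete lattice $(E_u,\preceq)$ and the non-decreasing operator $f|_{E_u}$: it admits a fixed point $w \in E_u$ (for instance its least one -- only the existence of \emph{some} fixed point below $u$ is needed). Such a $w$ satisfies $f(w) = w$ in $E$ as well, so by definition of $v$ as the least fixed point of $f$ we have $v \preceq w$; and $w \in E_u$ means $w \preceq u$, whence $v \preceq u$, which closes the contradiction.

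I do not expect a genuine obstacle here: the only point requiring care is the routine verification that $E_u$ is a complete lattice stable under $f$, which is immediate from monotonicity of $f$ and $f(u) \preceq u$. Alternatively, if one is willing to use the stronger, characterisation form of the Knaster-Tarski theorem, namely $\lfp(f) = \bigwedge \{ x \in E \mid f(x) \preceq x \}$, the proposition is immediate, since $f(u) \preceq u$ directly gives $v = \lfp(f) \preceq u$; I would use the sublattice route only because the excerpt states Knaster-Tarski in its bare existence form.
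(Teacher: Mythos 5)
Your proposal is correct and follows essentially the same route as the paper's proof: restrict $f$ to the down-set $E_u=\{x\in E\mid x\preceq u\}$, which is stable under $f$ by monotonicity and $f(u)\preceq u$, invoke Knaster--Tarski there to obtain a fixed point below $u$, and contradict the minimality of $v=\lfp(f)$. Your write-up is in fact slightly more careful than the paper's, since you explicitly verify that $E_u$ is a complete lattice before applying Knaster--Tarski.
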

\begin{proof}
	Towards a contradiction, let $E_u := \{ u' \in E \mid u' \preceq u \} \neq \emptyset$. Then, we have $f(E_u) \subseteq E_u$ since, for all $u' \in E_u$, we have $f(u') \preceq f(u) \preceq u$ since $f$ is non-decreasing. Hence, by Knaster-Tarski Theorem, we have that $f$ has a fixed point $u' \in E_u$. Hence the contradiction since $u' \preceq u \prec v$ and $v$ is assumed the least fixed-point of $f$.
\end{proof}

Consider a non-empty set of states $Q$. Let us define an operator on valuations of states. We consider the operator $\OneStep: [0,1]^Q \rightarrow [0,1]^Q$ such that, for all valuations $v \in [0,1]^Q$ and all $q \in Q$, we have $\OneStep(v)(q) := \va_{\gameNF{\formNF_q}{\mu_v}}$. Furthermore, for all subset of states $S \subseteq Q$ and valuations $v_S \in [0,1]^S$, we have $\OneStepSet{S}{v_S}: [0,1]^{Q \setminus S} \rightarrow [0,1]^{Q \setminus S}$ such that, for all valuations $v_{\bar{S}} \in [0,1]^{Q \setminus S}$ and $q \in Q \setminus S$, we have $\OneStepSet{S}{v_S}(v_{\bar{S}})(q) := \OneStep(v_S \sqcup v_{\bar{S}})(q)$, i.e. we fix the valuation $v_S$.

These operators ensure the following proposition:
\begin{proposition}
	For every concurrent arena $\Aconc$, for every subset of states $S \subseteq Q$ and each valuation $v_S \in [0,1]^S$:
	\begin{itemize}
		\item the operator $\OneStep$ is non-decreasing;
		\item $([0,1]^{Q \setminus S},\preceq)$ is a complete lattice;
		\item the operator $\Delta$ is 1-Lipschitz
		\item $\OneStepSet{S}{v_S}$ has a least and a greatest fixed point in $[0,1]^{Q \setminus S}$. They are denoted $v_{\lf \bar{S}}[v_S] := \lfp(\OneStepSet{S}{v_S})$ and $v_{\gf \bar{S}}[v_S] := \gfp(\OneStepSet{S}{v_S})$ respectively.
	\end{itemize}
	\label{prop:properties_delta}
\end{proposition}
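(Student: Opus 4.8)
The plan is to establish the four items in sequence; the first three are elementary facts about the value of a game in normal form and how it behaves under perturbations of the payoff function, while the fourth is then immediate via the Knaster-Tarski theorem recalled above.

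First I would prove monotonicity of $\OneStep$. The key observation is that the lift $w \mapsto \mu_w$ is itself monotone: if $v \preceq v'$, then for every $d \in \distribSet$ we have $\mu_v(d) = \sum_q \distribFunc(d)(q)\cdot v(q) \le \sum_q \distribFunc(d)(q)\cdot v'(q) = \mu_{v'}(d)$, since the weights $\distribFunc(d)(q)$ are fixed and nonnegative. I combine this with the fact that the value of a game in normal form is monotone in its payoff function: from $\outM_\formNF(\sigma_\A,\sigma_\B) = \sum_{a,b}\sigma_\A(a)\sigma_\B(b)\outCNF(a,b)$, replacing $\outCNF$ pointwise by a larger function only increases $\outM_\formNF$, hence increases $\inf_{\sigma_\B}\outM_\formNF(\sigma_\A,\cdot)$, hence increases the outer $\sup_{\sigma_\A}$. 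Applying this with the payoff functions $\mu_v\circ\delta(q,\cdot,\cdot)$ and $\mu_{v'}\circ\delta(q,\cdot,\cdot)$ yields $\OneStep(v)(q) = \va_{\gameNF{\formNF_q}{\mu_v}} \le \va_{\gameNF{\formNF_q}{\mu_{v'}}} = \OneStep(v')(q)$ for every $q$.

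Next, $([0,1]^{Q\setminus S},\preceq)$ is a complete lattice because $([0,1],\le)$ is one (every subset of the closed bounded interval has a supremum and an infimum lying in it, with $\sup\emptyset = 0$ and $\inf\emptyset = 1$), and an arbitrary pointwise product of complete lattices is again a complete lattice, with suprema and infima taken coordinatewise; no finiteness of $Q\setminus S$ is needed. For the 1-Lipschitz claim, with respect to the sup-norm $\|w\|_\infty := \sup_q |w(q)|$, I would again use that the lift is 1-Lipschitz, since $|\mu_v(d) - \mu_{v'}(d)| \le \sum_q \distribFunc(d)(q)\cdot|v(q) - v'(q)| \le \|v - v'\|_\infty$ for every $d$, together with the fact that the normal-form value is 1-Lipschitz in its payoff function: if two payoff functions are within $\varepsilon$ in sup-norm, then so are the corresponding maps $\outM_\formNF(\cdot,\cdot)$, and since $\inf_{\sigma_\B}$ and $\sup_{\sigma_\A}$ are each 1-Lipschitz, the two values differ by at most $\varepsilon$. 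Composing the two bounds gives $|\OneStep(v)(q) - \OneStep(v')(q)| \le \|v - v'\|_\infty$ for every $q$, hence $\|\OneStep(v) - \OneStep(v')\|_\infty \le \|v - v'\|_\infty$.

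For the last item, I would note that $\OneStepSet{S}{v_S}$ is the operator obtained from $\OneStep$ by freezing the coordinates in $S$ at $v_S$; it maps $[0,1]^{Q\setminus S}$ into itself (normal-form values lie in $[0,1]$), and it inherits both the monotonicity and the 1-Lipschitz bound from $\OneStep$, because plugging a larger (resp.\ nearby) $v_{\bar S}$ into $\OneStep$ produces a larger (resp.\ nearby) output, and restricting to the coordinates in $Q\setminus S$ only shrinks the sup-norm. Since $([0,1]^{Q\setminus S},\preceq)$ is a complete lattice, Knaster-Tarski then provides a least and a greatest fixed point of $\OneStepSet{S}{v_S}$, which we name $v_{\lf \bar S}[v_S]$ and $v_{\gf \bar S}[v_S]$. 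I do not expect any genuine obstacle here; the single point deserving a little care is the monotonicity and 1-Lipschitz behaviour of the normal-form value operator under perturbations of the payoff matrix, which is where the actual content of the argument resides.
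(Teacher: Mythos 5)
Your proof is correct: monotonicity and 1-Lipschitzness of the normal-form value in its payoff, composed with the same two properties of the lift $w \mapsto \mu_w$, the coordinatewise complete-lattice structure of $[0,1]^{Q\setminus S}$, and Knaster--Tarski applied to the frozen operator $\OneStepSet{S}{v_S}$ give all four items, the value $\va_{\gameNF{\formNF_q}{\mu_v}}$ existing by von Neumann's minimax theorem as the paper already records. The paper itself offers no argument here --- it defers to Proposition 43 of \cite{BBSCSLarXiv} --- so your proposal is a self-contained rendering of the standard argument that citation stands for, and it matches it in substance.
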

\begin{proof}
	See Proposition 43 in \cite{BBSCSLarXiv}.
\end{proof}

\subsection{Least fixed-point operation}
\subsubsection{Values in Reachability Game}
Consider a reachability game $\G = \Games{\Aconc}{\Reach(T)}$ for a subset of states $T \subseteq Q$. In such a game, all states in the target $T$ have value 1. Their values are given by the valuation $v_T \in [0,1]^T$ such that $v_T(t) = 1$ for all $t \in T$. The values of states in $Q \setminus T$ are then obtained via a least fixed-point operation. 

\begin{theorem}[\cite{everett57,filar2012competitive}]
	The values of the reachability game $\G = \Games{\Aconc}{\Reach(T)}$ are given by the least fixed point $v_{\lf \bar{T}}[v_T]$ of the operator $\OneStepSet{T}{v_T}$. Indeed: $\MarVal{\G} := v_T \sqcup v_{\lf \bar{T}}[T,v_T]$.
\end{theorem}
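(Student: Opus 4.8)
The plan is to identify $\MarVal{\G}$ first as a fixed point of $\OneStepSet{T}{v_T}$ and then as the least one. Write $v := \MarVal{\G}$. Since a play starting in $T$ has already reached the target, $v$ and $v_T$ agree on $T$ (both equal $1$), so it is enough to work with $\restriction{v}{Q\setminus T}\in[0,1]^{Q\setminus T}$ and establish (i) $\OneStepSet{T}{v_T}(\restriction{v}{Q\setminus T}) = \restriction{v}{Q\setminus T}$, and (ii) $\restriction{v}{Q\setminus T}\preceq w$ for every fixed point $w$ of $\OneStepSet{T}{v_T}$. Proposition~\ref{prop:properties_delta} guarantees that the least fixed point $v_{\lf\bar T}[v_T]$ exists in the complete lattice $([0,1]^{Q\setminus T},\preceq)$, so (i) and (ii) together give $\MarVal{\G} = v_T\sqcup \restriction{v}{Q\setminus T} = v_T\sqcup v_{\lf\bar T}[v_T]$.

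For (i) I would prove the one-step (Bellman) identity $v(q) = \va_{\gameNF{\formNF_q}{\mu_v}}$ for every $q\in Q\setminus T$, which is exactly $\OneStep(v)(q)=v(q)$. Throughout I use Martin's determinacy theorem (Definition~\ref{def:determinacy}), so that $\MarVal{\G}[\A] = \MarVal{\G}[\B] = v$ and near-optimal strategies exist for both players from every state. For the inequality ``$\ge$'': fix $\varepsilon>0$, pick for each $q'\in Q$ an $\varepsilon$-optimal Player~$\A$ strategy $\s_\A^{q'}$ from $q'$ and an optimal $\GF$-strategy $\sigma_\A\in\Dist(A)$ in $\gameNF{\formNF_q}{\mu_v}$ (which exists by von Neumann's theorem~\cite{vonNeuman}), and let $\s_\A$ play $\sigma_\A$ at $q$ and then, after a first move to $q'$, continue as $\s_\A^{q'}$. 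Decomposing $\prob{\Aconc,q}{\s_\A,\s_\B}[\Reach(T)]$ over the first step — the branches with $q'\in T$ contribute $\prob{}{\sigma_\A,\s_\B(q)}(q,q')\cdot 1 = \prob{}{\sigma_\A,\s_\B(q)}(q,q')\cdot v(q')$, so they fit the same pattern — and using $\varepsilon$-optimality of the continuations, Proposition~\ref{prop:outcome_valuation} yields $\prob{\Aconc,q}{\s_\A,\s_\B}[\Reach(T)] \ge \outM_{\gameNF{\formNF_q}{\mu_v}}(\sigma_\A,\s_\B(q)) - \varepsilon \ge \va_{\gameNF{\formNF_q}{\mu_v}} - \varepsilon$ for every $\s_\B$ (the last step because $\sigma_\A$ is optimal), whence $v(q)\ge \va_{\gameNF{\formNF_q}{\mu_v}}$ as $\varepsilon\to 0$. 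The ``$\le$'' direction is symmetric, with Player~$\B$ playing an optimal $\GF$-strategy in $\gameNF{\formNF_q}{\mu_v}$ at $q$ and then an $\varepsilon$-optimal strategy from each successor. The only non-routine point here is that $\Reach(T)$ is not prefix-independent, which is why the first-step-into-$T$ branches must be handled separately — but as noted they satisfy the same formula.

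For (ii), let $w$ be a fixed point of $\OneStepSet{T}{v_T}$, extended to $Q$ by $w(t):=1$ on $T$, so that $w(q) = \va_{\gameNF{\formNF_q}{\mu_w}}$ for all $q\in Q\setminus T$. The idea is, for each $q_0\in Q\setminus T$ and $\varepsilon>0$, to build a Player~$\B$ strategy $\s_\B$ capping the winning probability at $w(q_0)+O(\varepsilon)$: let $\s_\B$, at step $k$ in a state $q\notin T$, play a $\GF$-strategy that is $(\varepsilon 2^{-k})$-optimal in $\gameNF{\formNF_q}{\mu_w}$ (and arbitrary in $T$). Against an arbitrary Player~$\A$ strategy, with $\tau$ the first time the play enters $T$, I would check that $M_k := w(q_{\min(k,\tau)}) + 2\varepsilon\cdot 2^{-\min(k,\tau)}$ is a bounded supermartingale for the natural filtration $(\mathcal{F}_k)_k$ of $\prob{\Aconc,q_0}{\s_\A,\s_\B}$: on $\{\tau>k\}$ one has $q_k\notin T$, so Proposition~\ref{prop:outcome_valuation} together with the slack of $\s_\B$'s choice gives $\mathbb{E}[w(q_{k+1})\mid\mathcal{F}_k] \le \va_{\gameNF{\formNF_{q_k}}{\mu_w}} + \varepsilon 2^{-k} = w(q_k) + \varepsilon 2^{-k}$, and adding the deterministic correction term turns this into $\mathbb{E}[M_{k+1}\mid\mathcal{F}_k]\le M_k$; on $\{\tau\le k\}$ the process is frozen. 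Hence $\mathbb{E}[M_k]\le \mathbb{E}[M_0] = w(q_0)+2\varepsilon$, and since $0\le M_k$ everywhere while $M_k\ge 1$ on $\{\tau\le k\}$ (there $w(q_\tau)=1$), we get $\mathbb{P}(\tau\le k)\le w(q_0)+2\varepsilon$; letting $k\to\infty$, $\prob{\Aconc,q_0}{\s_\A,\s_\B}[\Reach(T)] = \mathbb{P}(\tau<\infty)\le w(q_0)+2\varepsilon$. As this holds for every Player~$\A$ strategy and every $\varepsilon$, $v(q_0) = \MarVal{\G}[\B](q_0)\le w(q_0)$.

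I expect step (ii) to be the crux: it is the only place where the reachability objective is genuinely exploited, through the ``follow-$w$'' Player~$\B$ strategy with a summable sequence of slacks and the supermartingale/optional-stopping bound it produces; step (i) is comparatively routine one-step reasoning. As a cross-check, the inequality $v_{\lf\bar T}[v_T]\preceq \MarVal{\G}$ can also be obtained without (ii): $\OneStepSet{T}{v_T}$ is $1$-Lipschitz by Proposition~\ref{prop:properties_delta}, so $v_{\lf\bar T}[v_T] = \sup_n \big(\OneStepSet{T}{v_T}\big)^n(\vec 0)$ by Kleene iteration, and $\big(\OneStepSet{T}{v_T}\big)^n(\vec 0)$ is, by induction and von Neumann's theorem, the value of the finite-horizon game ``reach $T$ within $n$ steps'', which is pointwise $\le \MarVal{\G}$; but once (i) is known this is subsumed, since $\MarVal{\G}$ is then a fixed point lying above the least one.
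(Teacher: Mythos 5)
Your proposal is correct. Note, however, that the paper does not prove this statement at all: it is imported from the literature (Everett~\cite{everett57}, Filar--Vrieze~\cite{filar2012competitive}), so there is no in-paper argument to compare against. What you give is a sound, self-contained reconstruction of the classical result, and it is the standard route: (i) the Bellman/one-step identity $\MarVal{\G}(q)=\va_{\gameNF{\formNF_q}{\mu_{\MarVal{\G}}}}$ for $q\notin T$, obtained by splicing an optimal $\GF$-strategy at $q$ with $\varepsilon$-optimal continuations (your separate treatment of first-step-into-$T$ branches is exactly the right fix for $\Reach(T)$ not being prefix-independent, and the value being $1$ on $T$ makes those branches fit the same formula); and (ii) minimality, via the ``follow-$w$'' Player~$\B$ strategy with summable slacks $\varepsilon 2^{-k}$ and the supermartingale/stopping bound, which correctly yields $\MarVal{\G}\preceq v_T\sqcup w$ for every fixed point $w$ of $\OneStepSet{T}{v_T}$. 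Your closing cross-check (Kleene iteration of $\OneStepSet{T}{v_T}$ from $\vec 0$ equals the finite-horizon values) only re-derives the easy inequality $v_{\lf \bar{T}}[v_T]\preceq\MarVal{\G}$, so (ii) is indeed indispensable, as you say. Compared with simply citing the literature, your argument costs about a page but has the advantage of using only tools already present in the paper (Martin's determinacy, von Neumann's theorem, and Proposition~\ref{prop:outcome_valuation}), and its step (ii) is essentially the same mechanism the paper later exploits in Lemma~\ref{lem:uniform_guarantee}, just run for the minimizer instead of the maximizer.
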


\subsubsection{Partial valuation and game forms}
\label{subsubsec:partial_val_reach}
Consider a game form $\formNF = \langle A,B,\outComeNF,\outCNF \rangle$, a partition $\outComeNF = \outComeNFLp \uplus \outComeNFEx$ and a partial valuation $\alpha: \outComeNFEx \rightarrow [0,1]$ of the outcomes. For all $u \in [0,1]$ we define the total valuation $\alpha[u] \in [0,1]^\outComeNF$ such that, for all $o \in \outComeNF$:
\begin{equation*}
\alpha[u](o) := \begin{cases}
u \quad &\text{if } o \in \outComeNFLp \\
\alpha(o) \quad &\text{if } o \in \outComeNFEx \\
\end{cases}
\end{equation*}
Then, we define the function $f_\formNF\langle \alpha\rangle: [0,1] \rightarrow [0,1]$ such that, for all $u \in [0,1]$, we have $f_\formNF\langle \alpha\rangle(u) := \va_{\gameNF{\formNF}{\alpha[u]}}$. This function ensures the properties below:
\begin{proposition}
	Consider a game form $\formNF = \langle A,B,\outComeNF,\outCNF \rangle$ a partition $\outComeNF = \outComeNFLp \uplus \outComeNFEx$ and a partial valuation $\alpha: \outComeNFEx \rightarrow [0,1]$ of the outcomes. Then:
	\begin{itemize}
		\item $([0,1],\leq)$ is a complete lattice;
		\item $f_\formNF\langle \alpha\rangle$ is non-decreasing;
		\item $f_\formNF\langle \alpha\rangle$ is 1-lipschitz;
		\item $f_\formNF\langle \alpha\rangle$ has a least and a greatest fixed-point denoted $u_{\formNF,\lf}(\alpha)$ and $u_{\formNF,\gf}(\alpha)$ respectively. We denote by $\alpha_{\formNF, \lf} := \alpha[u_{\formNF, \lf}(\alpha)]$ and $\alpha_{\formNF, \gf} := \alpha[u_{\formNF, \gf}(\alpha)]$ the corresponding valuations.
	\end{itemize}
	\label{prop:properties_f_alpha}
\end{proposition}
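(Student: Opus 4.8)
The plan is to verify the four assertions in turn; the argument is essentially a one-dimensional specialization of the one used for the operator $\OneStep$ on valuations of states in Proposition~\ref{prop:properties_delta} (Proposition~43 in \cite{BBSCSLarXiv}). First I would note that $f_\formNF\langle\alpha\rangle$ is well-defined: for every $u \in [0,1]$, $\alpha[u]$ is a \emph{total} valuation of the outcomes, so $\gameNF{\formNF}{\alpha[u]}$ is a finite game in normal form and therefore has a value $\va_{\gameNF{\formNF}{\alpha[u]}}$ by von Neumann's minimax theorem~\cite{vonNeuman}. The fact that $([0,1],\leq)$ is a complete lattice is classical: every subset of $[0,1]$ has a supremum and an infimum (its real supremum and infimum), and these lie in $[0,1]$ since $[0,1]$ is closed.

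For monotonicity, fix $u \leq u'$ in $[0,1]$. By definition of $\alpha[\cdot]$ we have $\alpha[u](o) \leq \alpha[u'](o)$ for every outcome $o \in \outComeNF$ (equality on $\outComeNFEx$, and $u \leq u'$ on $\outComeNFLp$). Since $\outM_{\gameNF{\formNF}{w}}(\sigma_\A,\sigma_\B) = \sum_{a \in \St_\A}\sum_{b \in \St_\B} \sigma_\A(a)\,\sigma_\B(b)\, w(\outCNF(a,b))$ is a convex combination of the values $w$ assigns to outcomes, it follows that $\outM_{\gameNF{\formNF}{\alpha[u]}}(\sigma_\A,\sigma_\B) \leq \outM_{\gameNF{\formNF}{\alpha[u']}}(\sigma_\A,\sigma_\B)$ for every pair of $\GF$-strategies $(\sigma_\A,\sigma_\B)$. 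Taking the infimum over $\sigma_\B$ and then the supremum over $\sigma_\A$ preserves this inequality, so $f_\formNF\langle\alpha\rangle(u) \leq f_\formNF\langle\alpha\rangle(u')$, i.e. $f_\formNF\langle\alpha\rangle$ is non-decreasing.

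For the $1$-Lipschitz property, again take $u \leq u'$ (the general case follows by symmetry). Now $0 \leq \alpha[u'](o) - \alpha[u](o) \leq u'-u$ for every outcome $o$. Let $\sigma_\A^\star \in \Opt_\A(\gameNF{\formNF}{\alpha[u']})$ be an optimal $\GF$-strategy for Player $\A$ in $\gameNF{\formNF}{\alpha[u']}$. Comparing the two convex combinations gives $\outM_{\gameNF{\formNF}{\alpha[u']}}(\sigma_\A^\star,\sigma_\B) \leq \outM_{\gameNF{\formNF}{\alpha[u]}}(\sigma_\A^\star,\sigma_\B) + (u'-u)$ for every $\sigma_\B$, hence, taking the infimum over $\sigma_\B$,
\[
  f_\formNF\langle\alpha\rangle(u') = \va_{\gameNF{\formNF}{\alpha[u']}}(\sigma_\A^\star) \leq \va_{\gameNF{\formNF}{\alpha[u]}}(\sigma_\A^\star) + (u'-u) \leq f_\formNF\langle\alpha\rangle(u) + (u'-u).
\]
Together with monotonicity, $0 \leq f_\formNF\langle\alpha\rangle(u') - f_\formNF\langle\alpha\rangle(u) \leq u'-u$, so $f_\formNF\langle\alpha\rangle$ is $1$-Lipschitz. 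Finally, since $f_\formNF\langle\alpha\rangle$ is a non-decreasing map on the complete lattice $([0,1],\leq)$, the Knaster-Tarski theorem recalled above yields a least and a greatest fixed point, which we name $u_{\formNF,\lf}(\alpha)$ and $u_{\formNF,\gf}(\alpha)$; the valuations $\alpha_{\formNF,\lf} := \alpha[u_{\formNF,\lf}(\alpha)]$ and $\alpha_{\formNF,\gf} := \alpha[u_{\formNF,\gf}(\alpha)]$ are then well-defined. There is no genuine obstacle in this proof; the only points needing a little care are the bookkeeping of the direction of inequalities through the $\sup\inf$ defining the value, and recording that well-definedness of $f_\formNF\langle\alpha\rangle$ itself relies on von Neumann's minimax theorem.
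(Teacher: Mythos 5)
Your proof is correct and follows the same standard route the paper has in mind: the paper merely says the result ``is similar to Proposition~\ref{prop:properties_delta}'' (the analogous statement for the operator $\OneStep$, proved in the arXiv version), and your argument---monotonicity of $\outM$ under pointwise-larger valuations, the $1$-Lipschitz bound obtained by playing an optimal $\GF$-strategy of $\gameNF{\formNF}{\alpha[u']}$ in $\gameNF{\formNF}{\alpha[u]}$, and Knaster--Tarski on the complete lattice $([0,1],\leq)$---is exactly the intended instantiation of that proof in one dimension.
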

\begin{proof}
	This is similar to Proposition~\ref{prop:properties_delta}.
\end{proof}

Furthermore, we can link the values of states in a reachability game and the function $f_\formNF\langle \alpha\rangle$. To do so, we have to consider a partial valuation of the Nature states in a reachability game.
\begin{definition}[Partial valuation of Nature states in reachability games]
	Consider a concurrent reachability $\G = \Games{\Aconc}{\Reach(T)}$, a value $u \in [0,1]$ such that $u > 0$ and a non-empty subset of states $\emptyset \neq Q_u \subseteq Q$ such that, for all $q \in Q_u$, we have $\MarVal{\G}(q) = u$. Let $\distribSet_{\mathsf{Lp}}^u := \{ d \in \distribSet \mid \Supp(d) \subseteq Q_u \}$. Then, we define the partition of the Nature states $\distribSet = \distribSet_{\mathsf{Lp}}^{Q_u} \uplus \distribSet_{\mathsf{Ex}}^{Q_u}$ with 
	$\distribSet_{\mathsf{Ex}}^{Q_u} := \distribSet \setminus \distribSet_{\mathsf{Lp}}^{Q_u}$. We then define the partial valuation of the Nature states $\alpha^{Q_u}: \distribSet_{\mathsf{Ex}}^{Q_u} \rightarrow [0,1]$ such that, for all $d \in \distribSet_{\mathsf{Ex}}^{Q_u}$, we have $\alpha^{Q_u}(d) = \mu_v(d)$ for $v := \MarVal{\G}$ the valuation giving the values of the states. 
	\label{def:partial_val_in_nature_states}
\end{definition}

With this definition, we recall the lemma below, which corresponds to Proposition 57 in \cite{BBSCSLarXiv}. We will use it both when dealing with the Büchi and co-Büchi objectives. 
\begin{lemma}[Proposition 57 in \cite{BBSCSLarXiv}]
	Consider a concurrent reachability game $\G = \Games{\Aconc}{\Reach(T)}$, a value $u \in [0,1]$ such that $u > 0$ and a non-empty subset of states $\emptyset \neq Q_u \subseteq Q$ such that, for all $q \in Q_u$, we have $\MarVal{\G}(q) = u$. We set $Q_u^\diamondsuit := \{ q  \in Q_u \mid u_{\formNF_q,\diamondsuit}(\alpha^{Q_u}) = u \}$ the set of states in $Q_u$ whose values in the reachability game when only considering its immediate successors is at least $u$.	
	Then, we have $Q_u^\diamondsuit \neq \emptyset$.
	\label{lem:value_in_RM_up_to_u}
\end{lemma}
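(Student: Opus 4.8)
The plan is to prove the statement by contradiction: assuming $u_{\formNF_q,\diamondsuit}(\alpha^{Q_u}) \neq u$ for every $q \in Q_u$, I will exhibit a valuation of $Q_u$ that is strictly below the constant valuation $u$ and is a post-fixed point of the relevant one-step operator, contradicting the minimality of the value vector via Proposition~\ref{prop:no_derase_before_lfp}.

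I would first record two elementary facts. Write $v := \MarVal{\G}$, and assume $Q_u \subseteq Q \setminus T$ (automatic when $u<1$ since targets have value $1$, and satisfied in the situations where the lemma is applied). By the fixed-point characterization of reachability values recalled above, $v$ restricted to $Q \setminus T$ is the least fixed point of $\OneStepSet{T}{v_T}$; moreover $v \equiv u$ on $Q_u$. Next, unfolding Definition~\ref{def:partial_val_in_nature_states}, the lift $\mu_v$ of $v$ to Nature states coincides with $\alpha^{Q_u}[u]$: on $d \in \distribSet_{\mathsf{Lp}}^{Q_u}$ one has $\mu_v(d) = u$ because $\Supp(d) \subseteq Q_u$, and on $d \in \distribSet_{\mathsf{Ex}}^{Q_u}$ one has $\mu_v(d) = \alpha^{Q_u}(d)$ by definition. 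Hence $f_{\formNF_q}\langle \alpha^{Q_u}\rangle(u) = \va_{\gameNF{\formNF_q}{\mu_v}} = \OneStep(v)(q) = v(q) = u$, so $u$ is a fixed point of the non-decreasing map $f_{\formNF_q}\langle \alpha^{Q_u}\rangle$, and therefore $u_{\formNF_q,\diamondsuit}(\alpha^{Q_u}) \leq u$ for every $q \in Q_u$.

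Now for the core of the argument. Consider the operator $g := \OneStepSet{Q \setminus Q_u}{v|_{Q \setminus Q_u}}$ on the complete lattice $[0,1]^{Q_u}$, which by Proposition~\ref{prop:properties_delta} is non-decreasing, $1$-Lipschitz, and admits a least fixed point. I claim that this least fixed point is the constant valuation $u$ on $Q_u$: it is a fixed point because $v$ is $\OneStep$-stable; and if $w \in [0,1]^{Q_u}$ satisfied $w \preceq v|_{Q_u}$ and $g(w) \preceq w$, then $w \sqcup v|_{(Q\setminus T)\setminus Q_u}$ would be a post-fixed point of $\OneStepSet{T}{v_T}$ lying below $v|_{Q\setminus T}$ (on $Q_u$ this reads $g(w) \preceq w$; elsewhere, monotonicity of $\OneStep$ together with $w \sqcup v|_{Q\setminus Q_u} \preceq v$ gives $\OneStep(w \sqcup v|_{Q\setminus Q_u}) \preceq \OneStep(v) = v$), hence by Proposition~\ref{prop:no_derase_before_lfp} it equals $v|_{Q\setminus T}$ and so $w = v|_{Q_u}$. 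Suppose now, for contradiction, that $u_{\formNF_q,\diamondsuit}(\alpha^{Q_u}) < u$ for every $q \in Q_u$; since $Q_u$ is finite, set $\bar u := \max_{q \in Q_u} u_{\formNF_q,\diamondsuit}(\alpha^{Q_u}) < u$, and let $c_{\bar u} \in [0,1]^{Q_u}$ be the valuation constantly equal to $\bar u$. The key elementary observation is that a non-decreasing $1$-Lipschitz map $f : [0,1] \to [0,1]$ with least fixed point $\ell$ satisfies $f(w) \leq w$ for every $w \geq \ell$, because $f(w) \leq f(\ell) + (w - \ell) = w$; applied to each $f_{\formNF_q}\langle \alpha^{Q_u}\rangle$ (which is $1$-Lipschitz with least fixed point $u_{\formNF_q,\diamondsuit}(\alpha^{Q_u}) \leq \bar u$, by Proposition~\ref{prop:properties_f_alpha}) this gives $f_{\formNF_q}\langle \alpha^{Q_u}\rangle(\bar u) \leq \bar u$. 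Finally I compute $g(c_{\bar u})(q)$ for $q \in Q_u$: it equals $\va_{\gameNF{\formNF_q}{\mu_\nu}}$ with $\nu := c_{\bar u} \sqcup v|_{Q\setminus Q_u}$, and since $\bar u \leq u = v(q')$ for $q' \in Q_u$ one checks $\mu_\nu \preceq \alpha^{Q_u}[\bar u]$ pointwise on $\distribSet$ (equality on $\distribSet_{\mathsf{Lp}}^{Q_u}$, and $\mu_\nu(d) \leq \mu_v(d) = \alpha^{Q_u}(d)$ on $\distribSet_{\mathsf{Ex}}^{Q_u}$); by monotonicity of $\va_{\gameNF{\formNF_q}{\cdot}}$ in its valuation, $g(c_{\bar u})(q) \leq \va_{\gameNF{\formNF_q}{\alpha^{Q_u}[\bar u]}} = f_{\formNF_q}\langle \alpha^{Q_u}\rangle(\bar u) \leq \bar u$. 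Thus $c_{\bar u}$ is a post-fixed point of $g$ strictly below its least fixed point (the constant valuation $u$), contradicting Proposition~\ref{prop:no_derase_before_lfp}. Hence $Q_u^\diamondsuit \neq \emptyset$.

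I expect the main obstacle to be the claim that the least fixed point of the restricted operator $g$ is exactly the constant valuation $u$: this is the only place where the minimality of the global value vector is genuinely used, and it has to be arranged so that Proposition~\ref{prop:no_derase_before_lfp} applies to the glued valuation $w \sqcup v|_{(Q\setminus T)\setminus Q_u}$. The remaining work is bookkeeping — keeping the three "one-step value" objects straight (the operator $\OneStep$ on $[0,1]^Q$, the restricted operator $g$ on $[0,1]^{Q_u}$, and the scalar maps $f_{\formNF_q}\langle\alpha^{Q_u}\rangle$ on $[0,1]$) and verifying the pointwise inequality $\mu_\nu \preceq \alpha^{Q_u}[\bar u]$ that ties $g$ back to these scalar maps.
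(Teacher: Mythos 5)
Your argument is sound. Note first that the paper does not actually prove this lemma: it is imported verbatim as Proposition~57 of \cite{BBSCSLarXiv}, so there is no in-paper proof to compare against; what you wrote is a self-contained reconstruction, and it is in exactly the same spirit as the paper's own proof of the co-B\"uchi analogue (Lemma~\ref{lem:crucial_co_buchi}), which likewise assumes all the local values fall short of $u$, takes the maximum $\bar u<u$, shows the relevant one-step operator maps the corresponding constant valuation below itself, and contradicts minimality of the value vector via Proposition~\ref{prop:no_derase_before_lfp}. All the ingredients you invoke are available in the paper's toolkit: the lfp characterization of reachability values, monotonicity of $\OneStep$ and of $\va_{\gameNF{\formNF_q}{\cdot}}$ in the valuation, and the $1$-Lipschitz property of $f_{\formNF_q}\langle\alpha^{Q_u}\rangle$ from Proposition~\ref{prop:properties_f_alpha} --- the latter is genuinely needed for your step ``$f(w)\leq w$ for $w\geq\ell$'' (monotonicity alone does not give it, cf.\ $x\mapsto\sqrt{x}$), so it is good that you used it explicitly. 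Your identification $\mu_v=\alpha^{Q_u}[u]$, the gluing argument showing that the constant valuation $u$ is the least fixed point of $\OneStepSet{Q\setminus Q_u}{v|_{Q\setminus Q_u}}$, and the pointwise bound $\mu_\nu\preceq\alpha^{Q_u}[\bar u]$ are all correct.

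One remark rather than a gap: the extra hypothesis $Q_u\subseteq Q\setminus T$ that you flag is not cosmetic --- it is what guarantees $\OneStep(v)(q)=v(q)$ on $Q_u$, hence that $u$ is a fixed point of each $f_{\formNF_q}\langle\alpha^{Q_u}\rangle$. As literally transcribed in this paper the lemma omits it, and without it the statement can fail (e.g.\ a target state of value $1$ all of whose outcomes lead back into $Q_u$, or out to value-$0$ states, gives $u_{\formNF_q,\lf}(\alpha^{Q_u})<1$). Every application in the paper (both in the B\"uchi proof and inside Lemma~\ref{lem:crucial_co_buchi}) takes $Q_u$ disjoint from the target of the reachability game in question, so your reading is the intended one; you were right to make the assumption explicit.
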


We can now define what are reach-maximizable game forms.
\begin{definition}[Reach-maximizable game forms, Definitions 32, 34 from \cite{BBSCSL22}]
	Consider a game form $\formNF$. For a partition of the outcomes $\outComeNF = \outComeNFEx \uplus \outComeNFLp$ and a partial valuation $\alpha: \outComeNFEx \rightarrow [0,1]$, we say that the game form $\formNF$ is \emph{reach-maximizable} w.r.t. to valuation $\alpha$ if either $u_{\formNF \lf}(\alpha) = 0$ or there exists an optimal strategy $\sigma_\A \in \Opt_{\gameNF{\formNF}{\alpha_{\formNF, \lf}}}$ such that, for all $b \in B$, there exists some action $a \in A$ in the support of $\sigma_\A$ -- i.e. $a \in \Supp(\sigma_\A)$ -- such that $\outCNF(a,b) \in \outComeNFEx$. Such strategies are said to be \emph{reach-maximizing}.
	
	Then, the game form $\formNF$ is \emph{reach-maximizable} if it is reach-maximizable w.r.t. all partial valuations $\alpha$.
\end{definition}

\subsection{Values in Büchi Games and co-Büchi Games}
In this subsection, we express the values of Büchi and co-Büchi games with fixed-point operations. These characterizations come from \cite{AM04b} and were proven using $\mu$-calculus. Note that these results are central to our study of the game forms of interest since the properties we express on game forms come from the way the values of Büchi and co-Büchi games are obtained via fixed-point operations.

\subsubsection{Values in a Büchi Game}
Consider a Büchi game $\G = \Games{\Aconc}{\Bu(T)}$ for a subset of states $T \subseteq Q$. Contrary to the reachability objective, the values of the states in the target $T$ is not necessarily 1. Now the values of the game are obtained via a nested fixed-point operation. Specifically, we consider the operator $\OneStepLFP{T}: [0,1]^{T} \rightarrow [0,1]^T$ such that, for all $v_T \in [0,1]^T$ and $q \in T$, we have:
$\OneStepLFP{T}(v_T)(q) := \OneStep(v_T \sqcup v_{\lf \bar{T}}[v_T])(q)$. That is, we fix $v_T$ the valuation of states in $T$, then we consider the least fixed point of the operator $\OneStepSet{T}{v_T}$, which gives the valuation $v_{\lf \bar{T}}[v_T]$ of the states in $Q \setminus T$. Note that, if the values of the states in $T$ are fixed, the game we obtain is a reachability game (hence its value is computed with a least fixed-point). Finally, we apply the operator $\OneStep$ on the obtained valuation $v_T \sqcup v_{\lf \bar{T}}[T,v_T]$. In fact, the operator $\OneStepLFP{T}$ ensures the proposition below.
\begin{proposition}
	Consider an arena $\Aconc$ and a set of states $T \subseteq Q$. Then, the operator $\OneStepLFP{T}: [0,1]^{T} \rightarrow [0,1]^T$ is non-decreasing and its set of fixed-points is a complete lattice. 
	\label{prop:property_delta_lfp}
\end{proposition}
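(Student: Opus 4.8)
The plan is to reduce the statement to the Knaster--Tarski theorem, so that the only real work is to check that $\OneStepLFP{T}$ is non-decreasing. Since $([0,1]^T,\preceq)$ is a finite product of the complete lattice $[0,1]$, it is itself a complete lattice, and Knaster--Tarski then gives at once that the set of fixed points of a non-decreasing operator on it is a complete lattice -- this is exactly the pattern already used for $\OneStep$ and $\OneStepSet{S}{v_S}$ in Proposition~\ref{prop:properties_delta}. Note also that $\OneStepLFP{T}$ is well-defined as a map $[0,1]^T \to [0,1]^T$ since $v_T \sqcup v_{\lf \bar{T}}[v_T] \in [0,1]^Q$ and $\OneStep$ maps $[0,1]^Q$ to $[0,1]^Q$.

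To prove that $\OneStepLFP{T}$ is non-decreasing, I would fix $v_T,v_T' \in [0,1]^T$ with $v_T \preceq v_T'$ and first establish the auxiliary claim that $v_{\lf \bar{T}}[v_T] \preceq v_{\lf \bar{T}}[v_T']$, i.e.\ that the least fixed point of $\OneStepSet{T}{\cdot}$ is itself monotone in its parameter. The key observation is that $\OneStepSet{T}{v_T} \preceq \OneStepSet{T}{v_T'}$ pointwise: for every $v_{\bar{T}} \in [0,1]^{Q \setminus T}$ we have $v_T \sqcup v_{\bar{T}} \preceq v_T' \sqcup v_{\bar{T}}$, so non-decreasingness of $\OneStep$ (Proposition~\ref{prop:properties_delta}) gives $\OneStepSet{T}{v_T}(v_{\bar{T}}) \preceq \OneStepSet{T}{v_T'}(v_{\bar{T}})$ after restricting to $Q \setminus T$. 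Since moreover each $\OneStepSet{T}{v_T}$ is non-decreasing (being the composition of the monotone map $v_{\bar{T}} \mapsto v_T \sqcup v_{\bar{T}}$, the non-decreasing operator $\OneStep$, and restriction to $Q\setminus T$), the standard ``least fixed points are monotone in the operator'' argument applies: $\OneStepSet{T}{v_T}(v_{\lf \bar{T}}[v_T']) \preceq \OneStepSet{T}{v_T'}(v_{\lf \bar{T}}[v_T']) = v_{\lf \bar{T}}[v_T']$, so $v_{\lf \bar{T}}[v_T']$ is a pre-fixed point of $\OneStepSet{T}{v_T}$, and as $v_{\lf \bar{T}}[v_T] = \lfp(\OneStepSet{T}{v_T})$ is the least pre-fixed point, we conclude $v_{\lf \bar{T}}[v_T] \preceq v_{\lf \bar{T}}[v_T']$.

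With the claim in hand, the rest is immediate: $v_T \sqcup v_{\lf \bar{T}}[v_T] \preceq v_T' \sqcup v_{\lf \bar{T}}[v_T']$ coordinatewise, so applying the non-decreasing operator $\OneStep$ and restricting to $T$ yields $\OneStepLFP{T}(v_T) \preceq \OneStepLFP{T}(v_T')$. Hence $\OneStepLFP{T}$ is non-decreasing and, by Knaster--Tarski on the complete lattice $([0,1]^T,\preceq)$, its set of fixed points is a complete lattice. I do not expect a genuine obstacle here: the only step needing a little care is the auxiliary monotonicity of $v_{\lf \bar{T}}[\cdot]$, which is the single place where one invokes Knaster--Tarski on the inner lattice $[0,1]^{Q\setminus T}$ (through the least-pre-fixed-point characterisation of $\lfp$) rather than merely the existence of least fixed points recorded in Proposition~\ref{prop:properties_delta}; everything else follows directly from non-decreasingness of $\OneStep$.
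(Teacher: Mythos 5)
Your proof is correct and follows exactly the route the paper intends: the paper's own proof is just the one-line remark that the statement is ``a direct consequence of the fact that $\OneStep$ is non-decreasing,'' and your argument—monotonicity of $v_T \mapsto v_{\lf \bar{T}}[v_T]$ via the least-pre-fixed-point characterisation of $\lfp$, followed by Knaster--Tarski on $([0,1]^T,\preceq)$—is precisely the standard elaboration of that remark. No gaps; the only step requiring care (monotonicity of the inner least fixed point in its parameter) is handled correctly.
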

\begin{proof}
	This is a direct consequence of the fact that the function $\OneStep$ is non-decreasing.
\end{proof}
The values of states in $T$ are then given by the greatest fixed-point of this operator, denoted $v_{\gf T \lf \bar{T}} := \gfp(\OneStepLFP{T})$, as stated in the theorem below.

\begin{theorem}[Theorem 2 in \cite{AM04b}]
	Consider a Büchi game $\G = \Games{\Aconc}{\Bu(T)}$. The values of the states in $T$ are given by the valuation $v_T := v_{\gf T \lf \bar{T}} = \gfp(\OneStepLFP{T})$. The values of the states in $Q \setminus T$ are given by the valuation $v_{\bar{T}} := v_{\lf \bar{T}}[v_T]$.
	\label{thm:value_buchi}
\end{theorem}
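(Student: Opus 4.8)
The plan is to prove both claims simultaneously by showing that the value $\MarVal{\G}$ of the B\"uchi game coincides with the valuation $h := v_T \sqcup v_{\lf\bar T}[v_T]$, where $v_T := \gfp(\OneStepLFP{T})$; the statement follows at once since $h$ restricted to $T$ is $v_T$ and $h$ restricted to $Q\setminus T$ is $v_{\lf\bar T}[v_T]$. Write $v^\star := \MarVal{\G}$, which exists by Martin's theorem (this also gives, for every state, $\varepsilon$-optimal B\"uchi strategies for both players, which I use as black boxes, so there is no circularity). I freely use that $\OneStep$, $\OneStepSet{T}{\cdot}$ and $\OneStepLFP{T}$ are non-decreasing $1$-Lipschitz operators on the relevant complete lattices (Proposition~\ref{prop:properties_delta}, Proposition~\ref{prop:property_delta_lfp}), together with a mild generalization of the recalled reachability characterization: for any target valuation $w\in[0,1]^T$, the value of the game ``reach $T$ and receive reward $w(t)$ at the first visit, reward $0$ if $T$ is never visited'' equals $v_{\lf\bar T}[w]$, with the same proof as for pure reachability.

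First I would establish two consistency equations for $v^\star$. (a) For $q\in Q\setminus T$, $v^\star(q) = v_{\lf\bar T}[v^\star|_T](q)$: for ``$\ge$'', Player $\A$ plays an $\varepsilon$-optimal strategy of the reach-with-reward-$v^\star|_T$ game until $T$ is first visited, then switches to an $\varepsilon$-optimal B\"uchi strategy from that state; for ``$\le$'', Player $\B$ plays a reach-optimal strategy of that same reachability game before the first visit and, afterwards, an $\varepsilon$-optimal B\"uchi strategy, using that — by prefix-independence — conditioned on first visiting $T$ at $t$ the continuation B\"uchi probability is at most $v^\star(t)+\varepsilon$, so that the overall probability is at most the expected reward at first hit plus $\varepsilon$, hence at most $v_{\lf\bar T}[v^\star|_T](q)+\varepsilon$. (b) For $t\in T$, $v^\star(t) = \OneStep(v^\star)(t)$: since a single visit to $t$ does not affect whether $T$ is visited infinitely often, Player $\A$ (resp. $\B$) plays an $\varepsilon$-optimal $\GF$-strategy in $\gameNF{\formNF_t}{\mu_{v^\star}}$ for one step and an $\varepsilon$-optimal B\"uchi strategy from the successor. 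Combining (a) and (b) gives $v^\star|_T = \OneStep\big(v^\star|_T \sqcup v_{\lf\bar T}[v^\star|_T]\big)|_T = \OneStepLFP{T}(v^\star|_T)$, so $v^\star|_T$ is a fixed point of $\OneStepLFP{T}$, whence $v^\star|_T \preceq v_T$; by monotonicity of $v_{\lf\bar T}[\cdot]$ and (a), $v^\star \preceq h$.

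It remains to prove $v^\star \succeq h$. First, $h$ is a fixed point of $\OneStep$: unfolding $v_T = \OneStepLFP{T}(v_T)$ and $v_{\lf\bar T}[v_T] = \OneStepSet{T}{v_T}(v_{\lf\bar T}[v_T])$ yields $h = \OneStep(h)$. Fix $\varepsilon>0$ and a sequence $(\varepsilon_k)_{k\ge 0}$ with $2\sum_k \varepsilon_k \le \varepsilon$. Player $\A$ plays an infinite-memory strategy organized in rounds: in round $k$, from the current state $s$, if $s\in T$ she first takes one step according to an $\varepsilon_k$-optimal $\GF$-strategy in $\gameNF{\formNF_s}{\mu_h}$ (legitimate since the value there is $\OneStep(h)(s)=h(s)$), and then, from the reached state, an $\varepsilon_k$-optimal strategy of the reach-with-reward-$h|_T$ game (whose value is $h|_{Q\setminus T}$); as soon as $T$ is visited, round $k{+}1$ starts from that state (if a round never reaches $T$, the strategy plays arbitrarily afterwards). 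Let $Y_k\in[0,1]$ be $h$ of the state at the beginning of round $k$, set to $0$ if some round $\le k{-}1$ failed to ever reach $T$. The choices above yield $\mathbb{E}[Y_{k+1}\mid Y_k>0]\ge \mathbb{E}[Y_k\mid Y_k>0]-2\varepsilon_k$ (a one-step gain bounded by $\varepsilon_k$ from the $\gameNF{\formNF_s}{\mu_h}$ step, another $\varepsilon_k$ from the reachability phase), hence $\mathbb{E}[Y_{k+1}]\ge \mathbb{E}[Y_k]-2\varepsilon_k$ and so $\mathbb{E}[Y_k]\ge h(q)-\varepsilon$ for every $k$. The events $\{Y_k>0\}$ are non-increasing in $k$, and $\mathbb{P}[Y_k>0]\ge \mathbb{E}[Y_k]$; continuity from above gives $\mathbb{P}\big[\bigcap_k\{Y_k>0\}\big]=\lim_k \mathbb{P}[Y_k>0]\ge h(q)-\varepsilon$. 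Since on $\bigcap_k\{Y_k>0\}$ every round reaches $T$, this event is contained in $\Bu(T)$, so Player $\A$ guarantees at least $h(q)-\varepsilon$; letting $\varepsilon\to 0$ gives $v^\star(q)\ge h(q)$, hence $v^\star = h$, which is exactly the claimed characterization.

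The main obstacle is this last step: designing the round-based strategy so that it ``recharges'' its potential $h$ at each visit to $T$, and — crucially — passing from the expectation bound $\mathbb{E}[Y_k]\ge h(q)-\varepsilon$ to the probability bound $\mathbb{P}[\Bu(T)]\ge h(q)-\varepsilon$ via the monotonicity of the events $\{Y_k>0\}$; this is precisely the place where B\"uchi differs from mere reachability. A secondary, routine point is justifying the generalized reachability characterization (value $v_{\lf\bar T}[w]$ for arbitrary target rewards $w$) and the lattice-theoretic prerequisites. An alternative route for the easy direction $v^\star\preceq h$ is a transfinite iteration of $\OneStepLFP{T}$ from the top valuation together with a matching Player $\B$ strategy, mirroring the $\mu$-calculus computation of~\cite{AM04b}.
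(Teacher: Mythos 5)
Your proposal is correct in substance, and it necessarily takes a different route from the paper: the paper does not prove this statement itself but recalls it from \cite{AM04b}, where it is established through the quantitative $\mu$-calculus for concurrent parity games. Your argument is a direct, self-contained one. For $\MarVal{\G}\preceq v_T\sqcup v_{\lf\bar{T}}[v_T]$ you combine the one-step (Bellman) equation at target states with the identification of the Büchi value outside $T$ with a reachability-with-terminal-reward value (this is exactly the $\G_{\mathsf{reach}}$-style transformation the paper itself uses around Proposition~\ref{prop:buchi_equal_reach}), concluding that $\MarVal{\G}|_T$ is a fixed point of $\OneStepLFP{T}$ and hence below its greatest fixed point. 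For the converse you build an explicit round-based infinite-memory strategy whose potential $h$ is recharged, up to an error $2\varepsilon_k$, at each visit to $T$, and you convert the expectation bound into a probability bound on the event that all rounds succeed. Compared with the $\mu$-calculus proof of \cite{AM04b}, your approach is more elementary and closer to the machinery already used in this paper, at the price of covering only the Büchi case rather than general parity objectives; the black boxes you invoke (Martin determinacy, the least-fixed-point characterization of reachability values with terminal rewards, the one-step equation for prefix-independent objectives) are standard and consistent with what the paper assumes elsewhere.

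One imprecision should be repaired in the last step: with your definition, the events $\{Y_k>0\}$ need not be non-increasing, because a round may start in a state with $h=0$ (so $Y_k=0$) and nevertheless reach $T$ at a state with $h>0$ (so $Y_{k+1}>0$). Replace them by the events $E_k$ defined as ``rounds $0,\dots,k-1$ all reached $T$'': these are non-increasing, they satisfy $\{Y_k>0\}\subseteq E_k$, hence $\mathbb{P}[E_k]\ge\mathbb{P}[Y_k>0]\ge\mathbb{E}[Y_k]\ge h(q)-\varepsilon$, and $\bigcap_k E_k\subseteq\Bu(T)$ since each round ends with a strictly later visit to $T$. With this substitution the rest of your argument goes through unchanged.
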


\subsubsection{Values in a co-Büchi game}
Consider now a co-Büchi game $\G = \Games{\Aconc}{\coBu(T)}$ for a subset of states $T \subseteq Q$. This objective is the dual of the Büchi objective. In fact, the values in this game are obtained via a least fixed-point of a function defined by a greatest fixed-point (the converse of a Büchi game: the greatest fixed-point of a function defined by a least fixed-point). Specifically, we consider the operator $\OneStepGFP{T}: [0,1]^{T} \rightarrow [0,1]^T$ such that, for all $v_T \in [0,1]^T$ and $q \in T$, we have:
$\OneStepGFP{T}(v_T)(q) := \OneStep(v_T \sqcup v_{\gf \bar{T}}[v_T])(q)$. That is, we fix $v_T$ the valuation of states in $T$, then we consider the greatest fixed point of the operator $\OneStepSet{T}{v_T}$, which gives the valuation $v_{\gf \bar{T}}[v_T]$ of the states in $Q \setminus T$. Finally, we apply the operator $\OneStep$ on the obtained valuation $v_T \sqcup v_{\gf \bar{T}}[v_T]$. The operator $\OneStepGFP{T}$ ensures the proposition below.
\begin{proposition}
	Consider an arena $\Aconc$ and a set of states $T \subseteq Q$. Then, the operator $\OneStepGFP{T}: [0,1]^{T} \rightarrow [0,1]^T$ is non-decreasing and its set of fixed-point is a complete lattice.
	\label{prop:proeprties_delta_gfp} 
\end{proposition}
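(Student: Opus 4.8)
The plan is to imitate the proof of Proposition~\ref{prop:property_delta_lfp}: once we know that $\OneStepGFP{T}$ is non-decreasing on the complete lattice $([0,1]^{T},\preceq)$, the Knaster--Tarski theorem immediately yields that its set of fixed points is a complete lattice. So essentially all the work goes into showing that $\OneStepGFP{T}$ is non-decreasing, and within that the only non-routine point is that the parametrized greatest fixed point $v_T \mapsto v_{\gf \bar{T}}[v_T] = \gfp(\OneStepSet{T}{v_T})$ is itself non-decreasing in $v_T$.

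First I would reduce the statement. Recall $\OneStepGFP{T}(v_T)(q) = \OneStep(v_T \sqcup v_{\gf \bar{T}}[v_T])(q)$. Since $\OneStep$ is non-decreasing (Proposition~\ref{prop:properties_delta}) and, for each fixed $w \in [0,1]^{Q\setminus T}$, the map $v_T \mapsto v_T \sqcup w$ is non-decreasing, it suffices to prove: if $v_T \preceq v_T'$ then $v_{\gf \bar{T}}[v_T] \preceq v_{\gf \bar{T}}[v_T']$; then $\OneStepGFP{T}$ is a composition of non-decreasing maps, hence non-decreasing.

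Next I would establish this parametrized monotonicity. Fix $v_T \preceq v_T'$. For every $w \in [0,1]^{Q\setminus T}$ we have $v_T \sqcup w \preceq v_T' \sqcup w$, so monotonicity of $\OneStep$ gives $\OneStepSet{T}{v_T}(w) \preceq \OneStepSet{T}{v_T'}(w)$; that is, the non-decreasing operator $\OneStepSet{T}{v_T'}$ pointwise dominates $\OneStepSet{T}{v_T}$. Now put $w^\star := v_{\gf \bar{T}}[v_T] = \gfp(\OneStepSet{T}{v_T})$, so that $w^\star = \OneStepSet{T}{v_T}(w^\star) \preceq \OneStepSet{T}{v_T'}(w^\star)$, i.e. $w^\star$ is a post-fixed point of $\OneStepSet{T}{v_T'}$. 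By the greatest-fixed-point half of Knaster--Tarski --- namely that on a complete lattice the greatest fixed point of a non-decreasing operator is the supremum of its post-fixed points, which is the dual of Proposition~\ref{prop:no_derase_before_lfp} and relies on the existence guaranteed by Proposition~\ref{prop:properties_delta} --- we conclude $w^\star \preceq \gfp(\OneStepSet{T}{v_T'}) = v_{\gf \bar{T}}[v_T']$, which is the claim.

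Finally, combining the two steps, $\OneStepGFP{T}$ is a non-decreasing operator on the complete lattice $([0,1]^{T},\preceq)$, so Knaster--Tarski gives that its set of fixed points is a complete lattice. The main (and rather mild) obstacle is the middle step: spelling out that a pointwise-dominated family of monotone operators has pointwise-ordered greatest fixed points. Once this is phrased via post-fixed points it is short and computation-free, so I do not anticipate any serious difficulty; the rest is bookkeeping identical to the $\OneStepLFP{T}$ case.
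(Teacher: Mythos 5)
Your proof is correct and follows essentially the same route as the paper, which simply notes that the claim is a direct consequence of the monotonicity of $\OneStep$ (mirroring the argument for $\OneStepLFP{T}$) and then invokes Knaster--Tarski; you merely spell out the one implicit step, namely that $v_T \mapsto v_{\gf \bar{T}}[v_T]$ is non-decreasing via the post-fixed-point characterization of greatest fixed points, which is exactly the intended bookkeeping.
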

\begin{proof}
	This is similar to Proposition~\ref{prop:property_delta_lfp}.
\end{proof}
The values of states in $T$ are then given by the least fixed-point of this operator, denoted $v_{\lf T \gf \bar{T}} := \lfp(\OneStepGFP{T})$, as stated in the theorem below.

\begin{theorem}[Theorem 2 in \cite{AM04b}]
	Consider a co-Büchi game $\G = \Games{\Aconc}{\coBu(T)}$. The values of the states in $T$ are given by the valuation $v_T := v_{\lf T \gf \bar{T}} = \lfp(\OneStepGFP{T})$. The values of the states in $Q \setminus T$ are given by the valuation $v_{\bar{T}} := v_{\gf \bar{T}}[v_T]$.
\end{theorem}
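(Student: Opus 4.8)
The plan is to prove the two directions separately, following the Büchi development of Subsection~\ref{subsec:varepsilon_buchi} but aiming at \emph{exact} optimality via the positional optimal $\GF$-strategies guaranteed by coBM game forms (in place of the $\varepsilon$-optimal ones guaranteed by aBM). For the positive direction I would build a single positional Player~$\A$ strategy $\s_\A$ and certify its uniform optimality through Lemma~\ref{lem:uniformly_optimal}: it suffices to make $\s_\A$ locally optimal and to guarantee that every end-component $H$ of the induced MDP with $v_H > 0$ satisfies $\MarVal{\Aconc_H^{\s_\A}}(q) = 1$ for all $q \in Q_H$. The observation specific to co-Büchi is that, inside an end-component, every state is recurrent, so a positive-value EC $H$ can have sub-game value $1$ if and only if $Q_H \cap T = \emptyset$: a recurrent $T$-state would be visited infinitely often and force $\coBu(T)$ to fail, whereas a play trapped in $Q_H \subseteq Q \setminus T$ never visits $T$ and wins surely. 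Hence the whole content of the positive direction reduces to producing a locally optimal positional strategy whose positive-value end-components are $T$-free.

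For the construction, let $v := \MarVal{\G}$. By Proposition~\ref{prop:ec_same_vale} all states of an EC share the same $v$-value, so I treat each value class $Q_u := v^{-1}[\{u\}]$ (for $u>0$) in isolation and glue the pieces. Within $Q_u$ I split each state's outcomes into exit outcomes (Nature states with positive probability to leave $Q_u$, valued by the expected exit value $\alpha$) and loop outcomes (staying in $Q_u$), with $p_T(d) := \distribFunc(d)[Q_u \cap T]$ the probability that a looping outcome hits $T$; this is exactly the environment of the gadget games $\G^\coBu_{\formNF_q,\alpha,p_T}$. At states $q \in Q_u \cap T$ I use the RM hypothesis to choose $\s_\A(q)$ reach-maximizing, which is locally optimal and, crucially, guarantees a uniformly positive probability of leaving any loop no matter what Player~$\B$ does, so that no positive-value EC can trap a $T$-state. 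At states $q \in Q_u \setminus T$ I use the coBM hypothesis together with its characterization lemma (the co-Büchi analogue of Lemma~\ref{lem:varepsilon_buchi_necessary}) to pick $\s_\A(q)$ as a positional optimal $\GF$-strategy in $\G^\coBu_{\formNF_q,\alpha,p_T}$. As in the Büchi proof I process $Q_u \setminus T$ stage by stage: I first locate a state whose gadget value $u^\coBu_{\formNF_q,\alpha,p_T}$ is at least $u$ — invoking the fixed-point description of co-Büchi values from~\cite{AM04b}, namely $v_{\bar{T}} = v_{\gf \bar{T}}[v_T] = \gfp(\OneStepSet{T}{v_T})$, to guarantee existence, the role played by Lemma~\ref{lem:value_in_RM_up_to_u} in the reachability/Büchi setting — fix $\s_\A$ there, then absorb that state into the already-secured region (updating $p_T$ to record the enlarged safe target set) and iterate until $\s_\A$ is defined on all of $Q_u$.

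For the verification I would check that the assembled positional strategy is locally optimal (each local choice realizes $v(q)$ in $\gameNF{\formNF_q}{\mu_v}$, matching gadget values with $v$ via the value-preserving transformation of Proposition~\ref{prop:W_equal_W}), and that under any Player~$\B$ strategy, once the play stays forever in an EC $H$ of $Q_u$ it avoids $T$ almost surely: from every state there is a uniformly positive probability either to exit $H$ (a green outcome) or, at a $T$-state, to escape via the reach-maximizing strategy, so a recurrent EC cannot contain a $T$-state. This yields $Q_H \cap T = \emptyset$ for all positive-value ECs, and Lemma~\ref{lem:uniformly_optimal} (or its valuation-generalization Lemma~\ref{lem:uniform_guarantee}) concludes that $\s_\A$ is uniformly optimal. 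For the \emph{necessity} statement, if $\formNF$ is not coBM then, by the coBM characterization lemma, there is an environment $(\alpha,p_T)$ for which the gadget co-Büchi game $\G^\coBu_{\formNF,\alpha,p_T}$ admits no positional optimal strategy from $q_0$; since in that game $\formNF$ is the only non-trivial local interaction (all other states are absorbing value gadgets or the deterministic $q_T,q_{\bar{T}}$), this game is precisely the required witness.

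The main obstacle lies entirely in the positive direction and is twofold. First, the values of $Q \setminus T$ are a \emph{greatest} fixed point, so the stage-by-stage existence argument (finding a state with gadget value $\geq u$ and propagating the property $u^\coBu_{\formNF_q,\alpha,p_T} \geq u$ across the whole class as the secured set grows) is the dual of the least-fixed-point argument of Lemma~\ref{lem:value_in_RM_up_to_u}, and getting this dual monotonicity and propagation right is the delicate combinatorial core. Second, because we demand exact rather than $\varepsilon$-optimality, the constructed positional strategy must \emph{attain} the greatest-fixed-point value, not merely dominate it; this forces careful bookkeeping at the interface between $T$-states (handled by RM) and $Q \setminus T$-states (handled by coBM), so that the two kinds of local guarantees combine into a globally $T$-free recurrence structure without any value loss.
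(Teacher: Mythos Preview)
Your proposal does not address the stated theorem at all. The statement you were asked to prove is the nested fixed-point characterization of the value function in co-B\"uchi games: that $\MarVal{\G}$ on $T$ equals $\lfp(\OneStepGFP{T})$ and on $Q\setminus T$ equals the greatest fixed point $v_{\gf \bar{T}}[v_T]$ of $\OneStepSet{T}{v_T}$. This is a result about \emph{computing values}, cited in the paper from~\cite{AM04b} (where it is derived via $\mu$-calculus) and not proved anew here. Your write-up instead sketches a proof of Theorem~\ref{lem:co_buchi_sufficient}, namely that RM interactions at $T$-states together with coBM interactions at $(Q\setminus T)$-states guarantee a positional uniformly optimal strategy. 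These are entirely different claims: the first says what the value \emph{is}; the second says when it can be \emph{achieved} positionally.

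Nothing in your outline speaks to the actual content of the statement. A proof would need to show, for an arbitrary co-B\"uchi game with no hypothesis on the local game forms, that $\MarVal{\G}$ coincides with the indicated nested fixed point (e.g.\ by exhibiting, for each player, strategies whose value approaches the fixed-point quantity from the appropriate side, or by appealing to the quantitative $\mu$-calculus machinery of~\cite{AM04b}). Your construction, by contrast, \emph{assumes} the fixed-point description (you explicitly invoke $v_{\bar{T}} = \gfp(\OneStepSet{T}{v_T})$ as an input) and uses the additional RM/coBM hypotheses, which are absent from the statement, to build a strategy. So the proposal is not merely a different route; it proves a different theorem and would be circular if offered as a proof of this one.
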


\subsection{Partial Valuation and Game Forms}
In this subsection, we define more involved (compared to Subsubsection~\ref{subsubsec:partial_val_reach}) partial valuations of outcomes of game forms to deal with the Büchi and co-Büchi objectives. 

\subsubsection{Game forms in Büchi games}
\label{subsubsec:gf_in_buchi}
Let us first consider the case of the Büchi objective. We now have to consider partial valuations with the information of the probability to reach $T$ and the probability to reach $\lnot T$. Specifically, let us consider a game form $\formNF = \langle A,B,\outComeNF,\outCNF \rangle$. Then, we consider a partition of the outcomes $\outComeNF = \outComeNFEx \uplus \outComeNFLp$ and a partial valuation $\alpha: \outComeNFEx \rightarrow [0,1]$ of the outcomes along with a probability function $p_T: \outComeNFLp \rightarrow [0,1]$. Then, extending $\alpha,p_T$ into a total valuations requires two values in $[0,1]$: one for $T$ and one for $\lnot T$. Hence, given $u_T,u_{\bar{T}} \in [0,1]$, we define the total valuation $\alpha[p_T(u_T \sqcup u_{\bar{T}})] \in [0,1]^\outComeNF$ such that, for all $o \in \outComeNF$:
\begin{equation*}
\alpha[p_T(u_T \sqcup u_{\bar{T}})](o) := \begin{cases}
p_T(o) \cdot u_T + (1 - p_T(o)) \cdot u_{\bar{T}} \quad &\text{if } o \in \outComeNFLp \\
\alpha(o) \quad &\text{if } o \in \outComeNFEx \\
\end{cases}
\end{equation*}
Then, we can fix $u_T$ and consider the function that given the value $u_{\bar{T}}$, outputs the outcome of the game form with the corresponding valuation. That is, given $\alpha: \outComeNFEx \rightarrow [0,1]$, $p_T: \outComeNFLp \rightarrow [0,1]$ and $u_T \in [0,1]$, we define the function $f_{\formNF,\bar{T}}\langle \alpha[p_T(u_T)]\rangle: [0,1] \rightarrow [0,1]$ such that, for all $u_{\bar{T}} \in [0,1]$ we have $f_{\formNF,\bar{T}}\langle \alpha[p_T(u_T)] \rangle(u) := \va_{\gameNF{\formNF}{\alpha[p_T(u_T \sqcup u)]}}$. This function ensures the following properties:
\begin{proposition}
	Consider a game form $\formNF = \langle A,B,\outComeNF,\outCNF \rangle$, a partition of the outcomes $\outComeNF = \outComeNFEx \uplus \outComeNFLp$ and a partial valuation $\alpha: \outComeNFEx \rightarrow [0,1]$, a probability function $p_T: \outComeNFLp \rightarrow [0,1]$ and a value $u_T \in [0,1]$. Then:
	\begin{itemize}
		\item $f_{\formNF,\bar{T}}\langle \alpha[p_T(u_T)]\rangle$ is non-decreasing;
		\item $f_{\formNF,\bar{T}}\langle \alpha[p_T(u_T)]\rangle$ has a least 
		fixed-point denoted $u_{\formNF,\lf \bar{T}}(\alpha[p_T(u_T)])$
		. We denote by $(\alpha[p_T(u_T)])_{\formNF,\lf \bar{T}} := \alpha[p_T(u_T \sqcup u_{\bar{T}})]$ 
		the corresponding valuation for $u_{\bar{T}} = u_{\formNF,\lf \bar{T}}(\alpha[p_T(u_T)])$.
	\end{itemize}
\end{proposition}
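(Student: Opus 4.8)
The plan is to mimic the proof of Proposition~\ref{prop:properties_f_alpha}, writing $f_{\formNF,\bar{T}}\langle \alpha[p_T(u_T)]\rangle$ as a composition of two order-preserving maps. First I would fix the game form $\formNF$, the partition $\outComeNF = \outComeNFEx \uplus \outComeNFLp$, the partial valuation $\alpha : \outComeNFEx \to [0,1]$, the probability function $p_T : \outComeNFLp \to [0,1]$ and the value $u_T \in [0,1]$, and consider the map $g : [0,1] \to [0,1]^{\outComeNF}$ defined by $g(u) := \alpha[p_T(u_T \sqcup u)]$. For an outcome $o \in \outComeNFEx$ we have $g(u)(o) = \alpha(o)$, which does not depend on $u$; for an outcome $o \in \outComeNFLp$ we have $g(u)(o) = p_T(o)\cdot u_T + (1 - p_T(o))\cdot u$, an affine function of $u$ with non-negative slope $1 - p_T(o) \geq 0$ (since $p_T(o) \in [0,1]$), hence non-decreasing in $u$; and each $g(u)(o)$ is a convex combination of values in $[0,1]$, so it lies in $[0,1]$. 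Thus $g$ is well-defined and non-decreasing for the product order $\preceq$ on $[0,1]^{\outComeNF}$.

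Next I would invoke the monotonicity of the value of a game in normal form with respect to its valuation of outcomes: if $w \preceq w'$ in $[0,1]^{\outComeNF}$, then $\va_{\gameNF{\formNF}{w}} \leq \va_{\gameNF{\formNF}{w'}}$. This holds because for every fixed pair of $\GF$-strategies $(\sigma_\A,\sigma_\B)$ the outcome $\outM_{\gameNF{\formNF}{w}}(\sigma_\A,\sigma_\B) = \sum_{a}\sum_{b}\sigma_\A(a)\,\sigma_\B(b)\,w(\outCNF(a,b))$ is non-decreasing in $w$, so taking $\sup_{\sigma_\A}\inf_{\sigma_\B}$ preserves the inequality; this is precisely the argument already used for the non-decreasingness of $\OneStep$ in Proposition~\ref{prop:properties_delta} and of $f_\formNF\langle\alpha\rangle$ in Proposition~\ref{prop:properties_f_alpha}. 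Composing with $g$, the map $f_{\formNF,\bar{T}}\langle \alpha[p_T(u_T)]\rangle(u) = \va_{\gameNF{\formNF}{g(u)}}$ is a non-decreasing self-map of $[0,1]$.

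Finally, since $([0,1],\leq)$ is a complete lattice, the Knaster--Tarski theorem applies to the non-decreasing operator $f_{\formNF,\bar{T}}\langle \alpha[p_T(u_T)]\rangle$, which therefore has a least fixed point; I would set $u_{\formNF,\lf \bar{T}}(\alpha[p_T(u_T)]) := \lfp(f_{\formNF,\bar{T}}\langle \alpha[p_T(u_T)]\rangle)$ and $(\alpha[p_T(u_T)])_{\formNF,\lf \bar{T}} := \alpha[p_T(u_T \sqcup u_{\bar{T}})]$ for $u_{\bar{T}} = u_{\formNF,\lf \bar{T}}(\alpha[p_T(u_T)])$. There is essentially no obstacle here: the only point needing a (routine) check is that $g$ keeps its values inside $[0,1]^{\outComeNF}$, and everything else is a direct transcription of the reasoning already carried out for Propositions~\ref{prop:properties_delta} and~\ref{prop:properties_f_alpha}.
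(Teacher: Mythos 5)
Your proposal is correct and follows essentially the same route as the paper, whose proof is simply "similar to Proposition~\ref{prop:properties_f_alpha}": monotonicity of the normal-form value in the outcome valuation (the sup-inf of expressions that are pointwise non-decreasing in the valuation) composed with the non-decreasing, $[0,1]$-valued extension map $u \mapsto \alpha[p_T(u_T \sqcup u)]$, followed by Knaster--Tarski on the complete lattice $([0,1],\leq)$. Nothing is missing; your explicit check that the extension stays in $[0,1]^{\outComeNF}$ is the only detail the paper leaves implicit.
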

\begin{proof}
	This is similar to Proposition~\ref{prop:properties_f_alpha}.
\end{proof}


Let us now define the function $f_{\formNF,\lf \bar{T}}\langle \alpha[p_T]\rangle: [0,1] \rightarrow [0,1]$ such that, for all $u_T \in [0,1]$, we have $f_{\formNF,\lf \bar{T}}\langle \alpha[p_T]\rangle(u_T) := \va_{\gameNF{\formNF}{(\alpha[p_T(u_T)])_{\formNF,\lf \bar{T}}}}$. This function ensures the properties below:
\begin{proposition}
	Consider a game form $\formNF = \langle A,B,\outComeNF,\outCNF \rangle$, a partition of the outcomes $\outComeNF = \outComeNFEx \uplus \outComeNFLp$, a partial valuation $\alpha: \outComeNFEx \rightarrow [0,1]$ and a probability function $p_T: \outComeNFLp \rightarrow [0,1]$. Then:
	\begin{itemize}
		\item $f_{\formNF,\lf \bar{T}}\langle \alpha[p_T]\rangle$ is non-decreasing;
		\item $f_{\formNF,\lf \bar{T}}\langle \alpha[p_T]\rangle$ is 1-Lipschitz;
		\item $f_{\formNF,\lf \bar{T}}\langle \alpha[p_T]\rangle$ has a greatest fixed-point denoted $u_{\formNF,\gf T \lf \bar{T}}(\alpha[p_T])$. If we let $u_T := u_{\formNF,\gf T \lf \bar{T}}(\alpha[p_T])$, we denote by $(\alpha[p_T])_{\formNF,\gf T \lf \bar{T}} := \alpha[p_T(u_T \sqcup u_{\formNF,\lf \bar{T}}(\alpha[p_T(u_T)]))]$ the corresponding valuation.
	\end{itemize}
\end{proposition}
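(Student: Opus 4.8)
The plan is to reduce everything to one elementary property of the value operator on games in normal form and then push it through the two nested fixed points that define $f_{\formNF,\lf\bar{T}}\langle\alpha[p_T]\rangle$. First I would record the following fact (the generalization of Proposition~\ref{prop:properties_f_alpha} to arbitrary, rather than uniform, changes of the outcome valuation): for a fixed game form $\formNF$, the map $\beta\in[0,1]^{\outComeNF}\mapsto\va_{\gameNF{\formNF}{\beta}}$ is non-decreasing for $\preceq$ and $1$-Lipschitz for $\|\cdot\|_\infty$. This holds because $\outM_{\gameNF{\formNF}{\beta}}(\sigma_\A,\sigma_\B)$ is a convex combination of the entries of $\beta$, hence non-decreasing and $1$-Lipschitz in $\beta$ uniformly in $(\sigma_\A,\sigma_\B)$, and $\sup_{\sigma_\A}\inf_{\sigma_\B}$ preserves both properties.

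Next I would analyze the inner fixed point as a function of the parameter $u_T$. For $u_T\in[0,1]$ write $g_{u_T}:=f_{\formNF,\bar{T}}\langle\alpha[p_T(u_T)]\rangle$ and $w(u_T):=u_{\formNF,\lf\bar{T}}(\alpha[p_T(u_T)])=\lfp(g_{u_T})$, which exists by the preceding proposition. The key point is that, for every $x\in[0,1]$, the total valuation used to compute $g_{u_T}(x)$ assigns $\alpha(o)$ to each exit outcome $o\in\outComeNFEx$ (independently of $u_T$) and $p_T(o)\,u_T+(1-p_T(o))\,x$ to each loop outcome $o\in\outComeNFLp$. Hence raising $u_T$ to $u_T'=u_T+\varepsilon$ raises this valuation pointwise, so $g_{u_T}(x)\le g_{u_T'}(x)$ for all $x$; then $\lfp(g_{u_T'})$ is a pre-fixed point of $g_{u_T}$, and since the least fixed point is the least pre-fixed point (Knaster--Tarski; cf. Proposition~\ref{prop:no_derase_before_lfp}) we get $w(u_T)\le w(u_T')$, i.e. $w$ is non-decreasing. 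For $1$-Lipschitzness of $w$, I would evaluate $g_{u_T'}$ at the shifted point $w(u_T)+\varepsilon$ (if $w(u_T)+\varepsilon>1$ there is nothing to prove, since $w(u_T')\le 1$): its valuation is the valuation of $g_{u_T}(w(u_T))$ with every loop entry increased by exactly $\varepsilon$ and every exit entry unchanged, hence at $\|\cdot\|_\infty$-distance at most $\varepsilon$; by the $1$-Lipschitz fact $g_{u_T'}(w(u_T)+\varepsilon)\le g_{u_T}(w(u_T))+\varepsilon=w(u_T)+\varepsilon$, so $w(u_T)+\varepsilon$ is a pre-fixed point of $g_{u_T'}$ and therefore $w(u_T')=\lfp(g_{u_T'})\le w(u_T)+\varepsilon$; combined with monotonicity, $0\le w(u_T')-w(u_T)\le\varepsilon$.

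Finally I would plug this back into $f_{\formNF,\lf\bar{T}}\langle\alpha[p_T]\rangle(u_T)=\va_{\gameNF{\formNF}{\alpha[p_T(u_T\sqcup w(u_T))]}}$, whose valuation has exit entries $\alpha(o)$ and loop entries $p_T(o)\,u_T+(1-p_T(o))\,w(u_T)$. When $u_T$ increases, the loop entries weakly increase (both $u_T$ and $w(u_T)$ do) and the exit entries stay fixed, so the valuation increases for $\preceq$ and monotonicity of $\va$ gives that $f_{\formNF,\lf\bar{T}}\langle\alpha[p_T]\rangle$ is non-decreasing; for $u_T'=u_T+\varepsilon$, each loop entry changes by $p_T(o)\,\varepsilon+(1-p_T(o))(w(u_T')-w(u_T))\le\varepsilon$ by the $1$-Lipschitzness of $w$, the exit entries do not change, so the two valuations are $\|\cdot\|_\infty$-close within $\varepsilon$ and the $1$-Lipschitz fact gives $|f_{\formNF,\lf\bar{T}}\langle\alpha[p_T]\rangle(u_T')-f_{\formNF,\lf\bar{T}}\langle\alpha[p_T]\rangle(u_T)|\le\varepsilon$. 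Since $([0,1],\le)$ is a complete lattice and $f_{\formNF,\lf\bar{T}}\langle\alpha[p_T]\rangle$ is non-decreasing, Knaster--Tarski yields a greatest fixed point, which we name $u_{\formNF,\gf T\lf\bar{T}}(\alpha[p_T])$; taking $u_T$ equal to it, the associated valuation is exactly $\alpha[p_T(u_T\sqcup u_{\formNF,\lf\bar{T}}(\alpha[p_T(u_T)]))]=(\alpha[p_T])_{\formNF,\gf T\lf\bar{T}}$, as claimed. I expect the only delicate step to be the $1$-Lipschitz estimate for the inner fixed point $w$: the naive bound ``$g_{u_T'}\le g_{u_T}+\varepsilon$ composed with $g_{u_T}$ being $1$-Lipschitz'' only yields Lipschitz constant $2$, and one genuinely has to exploit that the exit entries of the valuation are left untouched when $u_T$ moves, which is precisely what the shifted-fixed-point computation above does.
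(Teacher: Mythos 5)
Your proof is correct and follows the route the paper takes implicitly: the paper dismisses this statement as ``similar to Proposition~\ref{prop:property_delta_lfp}'', i.e.\ monotonicity of the value operator $\beta\mapsto\va_{\gameNF{\formNF}{\beta}}$ propagated through the inner least fixed point, followed by Knaster--Tarski on the complete lattice $[0,1]$, which is exactly your skeleton. The only place where you supply more than the paper records is the $1$-Lipschitz bound for the inner fixed point $u_T\mapsto u_{\formNF,\lf\bar{T}}(\alpha[p_T(u_T)])$ via the shifted pre-fixed-point argument (exploiting that the $\outComeNFEx$-entries are unchanged when $u_T$ moves), which is indeed the correct way to avoid the factor-$2$ loss and is left implicit in the paper's one-line proof.
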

\begin{proof}
	This is similar to Proposition~\ref{prop:property_delta_lfp}.
\end{proof}
The value $u_{\formNF,\gf T \lf \bar{T}}(\alpha[p_T])$ corresponds to the value $u^{\Bu(T)}_{\formNF,\alpha,p_T}$ in the main part of the paper.

\subsubsection{Game forms in co-Büchi games}
\label{subsubsec:gf_in_cobuchi}
In the case of the co-Büchi objective, we in fact consider a slight generalization of what was presented in Section~\ref{sec:co-Buchi}. Indeed, in that case, 
we do not consider a partial valuation of the outcomes 
but rather a total valuation $\alpha: \outComeNF \rightarrow [0,1]$ along with a function $p_\alpha: \outComeNF \rightarrow [0,1]$. Specifically, let us consider a game form $\formNF = \langle A,B,\outComeNF,\outCNF \rangle$. Then, we consider a valuation $\alpha: \outComeNF \rightarrow [0,1]$ of the outcomes, along with two probability functions $p_\alpha,p_T: \outComeNF \rightarrow [0,1]$. Then, as in the Büchi case, extending $\alpha[p_\alpha,p_T]$ into a total valuations requires two values in $[0,1]$: one for $T$ and one for $\lnot T$. Hence, given $u_T,u_{\bar{T}} \in [0,1]$, we define the total valuation $\alpha[p_\alpha,p_T(u_T \sqcup u_{\bar{T}})] \in [0,1]^\outComeNF$ such that, for all $o \in \outComeNF$:
\begin{displaymath}
	\alpha[p_\alpha,p_T(u_T \sqcup u_{\bar{T}})](o) := p_\alpha(o) \cdot \alpha(o) + (1 - p_\alpha(o)) \cdot (p_T(o) \cdot u_T + (1 - p_T(o)) \cdot u_{\bar{T}})
\end{displaymath}
Note that in the case where $p_\alpha$ is equal to either $0$ or $1$ on every outcome, we retrieve the definition with $\alpha,p_T$ presented in Section~\ref{sec:co-Buchi} for $\outComeNFLp := \{ o \in \outComeNF \mid p_\alpha(o) = 0 \}$.

We can now fix $u_T$ and consider the function that given the value $u_{\bar{T}}$, outputs the outcome of the game form with the corresponding valuation. That is, we define the function $f_{\formNF,\bar{T}}\langle \alpha[p_\alpha,p_T(u_T)] \rangle: [0,1] \rightarrow [0,1]$ such that, for all $u \in [0,1]$ we have $f_{\formNF,\bar{T}}\langle \alpha[p_\alpha,p_T(u_T)] \rangle(u) := \va_{\gameNF{\formNF}{\alpha[p_\alpha,p_T(u_T \sqcup u)]}}$. This function ensures the following properties:
\begin{proposition}
	Consider a game form $\formNF = \langle A,B,\outComeNF,\outCNF \rangle$, a valuation $\alpha: \outComeNF \rightarrow [0,1]$, two probability functions $p_\alpha,p_T: \outComeNF \rightarrow [0,1]$ and a value $u_T \in [0,1]$. Then:
	\begin{itemize}
		\item $f_{\formNF,\bar{T}}\langle \alpha[p_\alpha,p_T(u_T)] \rangle$ is non-decreasing;
		\item $f_{\formNF,\bar{T}}\langle \alpha[p_\alpha,p_T(u_T)] \rangle$ has a greatest 
		fixed-point denoted $u_{\formNF,\gf \bar{T}}(\alpha[p_\alpha,p_T(u_T)])$. 
	\end{itemize}
\end{proposition}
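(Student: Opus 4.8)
The plan is to follow the same pattern as for the analogous Büchi-case statement, and ultimately Proposition~\ref{prop:properties_f_alpha}: everything reduces to the monotonicity of the value of a game in normal form with respect to its outcome valuation, together with the Knaster--Tarski fixed-point theorem applied to the complete lattice $([0,1],\leq)$.

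First I would record the elementary monotonicity fact: for a fixed game form $\formNF = \langle A,B,\outComeNF,\outCNF \rangle$, the map $v \mapsto \va_{\gameNF{\formNF}{v}}$ from $[0,1]^{\outComeNF}$ to $[0,1]$ is non-decreasing. This is immediate from Definition~\ref{def:outcome_game_form}: for any fixed pair of $\GF$-strategies $(\sigma_\A,\sigma_\B)$, the outcome $\outM_{\gameNF{\formNF}{v}}(\sigma_\A,\sigma_\B)$ is a convex combination, with non-negative weights, of the numbers $v(o)$ for $o \in \outComeNF$, hence is non-decreasing in $v$; and taking $\sup_{\sigma_\A}\inf_{\sigma_\B}$ preserves this monotonicity.

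Next I would observe that, for the fixed data $\alpha$, $p_\alpha$, $p_T$, $u_T$, the map $u_{\bar{T}} \mapsto \alpha[p_\alpha,p_T(u_T \sqcup u_{\bar{T}})]$ is non-decreasing as a map from $[0,1]$ to $[0,1]^{\outComeNF}$. Indeed, on each outcome $o \in \outComeNF$ it equals $p_\alpha(o) \cdot \alpha(o) + (1 - p_\alpha(o)) \cdot p_T(o) \cdot u_T + (1 - p_\alpha(o))(1 - p_T(o)) \cdot u_{\bar{T}}$, which is affine in $u_{\bar{T}}$ with non-negative coefficient $(1 - p_\alpha(o))(1 - p_T(o)) \geq 0$. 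Composing this non-decreasing map with the non-decreasing map $v \mapsto \va_{\gameNF{\formNF}{v}}$ yields that $f_{\formNF,\bar{T}}\langle\alpha[p_\alpha,p_T(u_T)]\rangle : [0,1] \to [0,1]$ is non-decreasing, which is the first item. For the second item, $([0,1],\leq)$ is a complete lattice, so the Knaster--Tarski fixed-point theorem applies to this non-decreasing operator and in particular yields a greatest fixed point, which we denote $u_{\formNF,\gf \bar{T}}(\alpha[p_\alpha,p_T(u_T)])$.

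There is essentially no obstacle here: the only point deserving a moment of care is checking that the expression defining $\alpha[p_\alpha,p_T(\cdot \sqcup \cdot)]$ is genuinely a convex combination with non-negative coefficients (it is, being assembled from products of probabilities), so that monotonicity is not destroyed; everything else is a direct transcription of the proof of Proposition~\ref{prop:properties_f_alpha}.
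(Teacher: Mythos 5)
Your proof is correct and follows exactly the route the paper intends: the paper's own proof is just a pointer to Proposition~\ref{prop:properties_f_alpha}, whose argument is precisely the monotonicity of $v \mapsto \va_{\gameNF{\formNF}{v}}$ combined with the Knaster--Tarski theorem on the complete lattice $([0,1],\leq)$, which is what you spell out (including the check that the coefficient of $u_{\bar{T}}$ in $\alpha[p_\alpha,p_T(u_T \sqcup u_{\bar{T}})](o)$ is the non-negative quantity $(1-p_\alpha(o))(1-p_T(o))$). No gap; your writeup simply makes explicit the details the paper leaves to the reader.
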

\begin{proof}
	This is similar to Proposition~\ref{prop:properties_f_alpha}.
\end{proof}

As when expressing the values of a co-Büchi game with fixed-points, we fix $u_T$ and we consider a greatest fixed-point on $u_{\bar{T}}$. Specifically, consider a game form $\formNF$. Given $\alpha: \outComeNF \rightarrow [0,1]$, $p_\alpha,p_T: \outComeNF \rightarrow [0,1]$, we define the function $f_{\formNF,\gf \bar{T}}\langle \alpha[p_\alpha,p_T] \rangle: [0,1] \rightarrow [0,1]$ such that, for all $u_T \in [0,1]$, we have $f_{\formNF,\gf \bar{T}}\langle \alpha[p_\alpha,p_T] \rangle(u_T) := 
u_{\formNF,\gf \bar{T}}(\alpha[p_\alpha,p_T(u_T)])$. This function ensures the properties below: This function ensures the properties below:
\begin{proposition}
	Consider a game form $\formNF = \langle A,B,\outComeNF,\outCNF \rangle$, a valuation $\alpha: \outComeNF \rightarrow [0,1]$ and two probability functions $p_\alpha,p_T: \outComeNF \rightarrow [0,1]$. Then:
	\begin{itemize}
		\item $f_{\formNF,\gf \bar{T}}\langle \alpha[p_\alpha,p_T] \rangle$ is non-decreasing;
		\item $f_{\formNF,\gf \bar{T}}\langle \alpha[p_\alpha,p_T] \rangle$ is 1-Lipschitz;
		\item $f_{\formNF,\gf \bar{T}}\langle \alpha[p_\alpha,p_T] \rangle$ has a least fixed-point denoted $u_{\formNF,\lf T \gf \bar{T}}(\alpha[p_\alpha,p_T])$. If we let $u := u_{\formNF,\lf T \gf \bar{T}}(\alpha[p_\alpha,p_T])$
		, we denote by $(\alpha[p_\alpha,p_T])_{\formNF,\lf T \gf \bar{T}} := \alpha[p_\alpha,p_T(u \sqcup u)]$ the corresponding valuation.
	\end{itemize}
\end{proposition}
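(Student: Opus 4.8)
The plan is to derive the three items from the corresponding statement about the inner greatest-fixed-point operator (the immediately preceding proposition) together with one new $1$-Lipschitz estimate for the outer map. Throughout I will use two standard facts about games in normal form that underlie the earlier results (Proposition~\ref{prop:properties_f_alpha}, and the $1$-Lipschitzness of the operator $\OneStep$ in Proposition~\ref{prop:properties_delta}): the map $w \mapsto \va_{\gameNF{\formNF}{w}}$ from $[0,1]^{\outComeNF}$ to $[0,1]$ is non-decreasing and $1$-Lipschitz for the supremum norm, because $\va$ is an infimum over Player~$\B$ $\GF$-strategies of convex combinations of the $w(o)$. I will also repeatedly use that, for fixed $u_T$, the coordinatewise-affine map $u_{\bar{T}} \mapsto \alpha[p_\alpha,p_T(u_T \sqcup u_{\bar{T}})]$ is non-decreasing and non-expansive, since the coefficient of $u_{\bar{T}}$ in coordinate $o$ is $(1-p_\alpha(o))(1-p_T(o)) \in [0,1]$; composing with $\va$ recovers that $f_{\formNF,\bar{T}}\langle \alpha[p_\alpha,p_T(u_T)] \rangle$ is a non-decreasing, $1$-Lipschitz self-map of the complete lattice $([0,1],\le)$ whose greatest fixed point $u_{\formNF,\gf \bar{T}}(\alpha[p_\alpha,p_T(u_T)])$ exists by Knaster--Tarski, which is precisely the immediately preceding proposition. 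By definition $f_{\formNF,\gf \bar{T}}\langle \alpha[p_\alpha,p_T] \rangle(u_T)$ is exactly this greatest fixed point.

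For monotonicity of $f_{\formNF,\gf \bar{T}}\langle \alpha[p_\alpha,p_T] \rangle$, take $u_T \le u_T'$. Since the coefficient of $u_T$ in coordinate $o$ is $(1-p_\alpha(o))\,p_T(o) \ge 0$, we get $\alpha[p_\alpha,p_T(u_T \sqcup u_{\bar{T}})] \preceq \alpha[p_\alpha,p_T(u_T' \sqcup u_{\bar{T}})]$ for every $u_{\bar{T}}$, hence by monotonicity of $\va$ the operator $f_{\formNF,\bar{T}}\langle \alpha[p_\alpha,p_T(u_T)] \rangle$ is pointwise below $f_{\formNF,\bar{T}}\langle \alpha[p_\alpha,p_T(u_T')] \rangle$. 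A standard consequence of Knaster--Tarski is that a pointwise-dominated non-decreasing operator has a pointwise-dominated greatest fixed point (its gfp is a post-fixed point of the larger operator, hence lies below that operator's gfp, which is the join of its post-fixed points), so $f_{\formNF,\gf \bar{T}}\langle \alpha[p_\alpha,p_T] \rangle(u_T) \le f_{\formNF,\gf \bar{T}}\langle \alpha[p_\alpha,p_T] \rangle(u_T')$.

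The only step with any content is the $1$-Lipschitzness, which I will prove via post-fixed points. Fix $u_T \le u_T' =: u_T + \delta$ and set $\bar{u} := f_{\formNF,\gf \bar{T}}\langle \alpha[p_\alpha,p_T] \rangle(u_T')$, a fixed point of $f_{\formNF,\bar{T}}\langle \alpha[p_\alpha,p_T(u_T')] \rangle$; we may assume $\bar{u} \ge \delta$, else $\bar{u}-\delta < 0$ and the estimate below is trivial. A direct coordinatewise computation from the defining formula for $\alpha[p_\alpha,p_T(u_T \sqcup u_{\bar{T}})]$ gives $\alpha[p_\alpha,p_T(u_T \sqcup (\bar{u}-\delta))](o) = \alpha[p_\alpha,p_T(u_T' \sqcup \bar{u})](o) - (1-p_\alpha(o))\,\delta \ge \alpha[p_\alpha,p_T(u_T' \sqcup \bar{u})](o) - \delta$; applying the $1$-Lipschitz map $\va$ and using that $\bar{u}$ is a fixed point yields $f_{\formNF,\bar{T}}\langle \alpha[p_\alpha,p_T(u_T)] \rangle(\bar{u}-\delta) \ge \bar{u}-\delta$, so $\bar{u}-\delta$ is a post-fixed point of $f_{\formNF,\bar{T}}\langle \alpha[p_\alpha,p_T(u_T)] \rangle$ and therefore $f_{\formNF,\gf \bar{T}}\langle \alpha[p_\alpha,p_T] \rangle(u_T) \ge \bar{u}-\delta$. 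Combined with monotonicity this gives $|f_{\formNF,\gf \bar{T}}\langle \alpha[p_\alpha,p_T] \rangle(u_T') - f_{\formNF,\gf \bar{T}}\langle \alpha[p_\alpha,p_T] \rangle(u_T)| \le \delta$. Finally, $f_{\formNF,\gf \bar{T}}\langle \alpha[p_\alpha,p_T] \rangle$ being a non-decreasing self-map of $([0,1],\le)$, Knaster--Tarski provides its least fixed point $u := u_{\formNF,\lf T \gf \bar{T}}(\alpha[p_\alpha,p_T])$; the fixed-point identity $u = f_{\formNF,\gf \bar{T}}\langle \alpha[p_\alpha,p_T] \rangle(u) = u_{\formNF,\gf \bar{T}}(\alpha[p_\alpha,p_T(u)])$ forces the inner greatest fixed point to equal $u$ as well, so the associated total valuation is $\alpha[p_\alpha,p_T(u \sqcup u)] = (\alpha[p_\alpha,p_T])_{\formNF,\lf T \gf \bar{T}}$, as stated. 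The only care needed anywhere is getting the signs right in the coordinatewise estimate of the last paragraph; the rest transcribes the reachability and Büchi cases.
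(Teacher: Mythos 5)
Your proof is correct and follows essentially the route the paper intends: the paper proves this proposition only by reference ("similar to" the earlier $\OneStep$/$f_{\formNF}\langle\alpha\rangle$ propositions), i.e. monotonicity of $\va$ plus Knaster--Tarski, which is exactly what you flesh out. The one step you work out explicitly — the $1$-Lipschitz bound via the post-fixed point $\bar{u}-\delta$ of the inner operator — is a valid instantiation of the argument the paper leaves implicit, and your coordinatewise estimate with coefficient $(1-p_\alpha(o))$ is right.
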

\begin{proof}
	This is similar to Proposition~\ref{prop:proeprties_delta_gfp}.
\end{proof}
The value $u^{\coBu(T)}_{\formNF,\alpha,p_T}$ in the main part of the paper corresponds to the value $u_{\formNF,\lf T \gf \bar{T}}(\alpha[p_\alpha,p_T])$ for $p_\alpha: \outComeNF \rightarrow [0,1]$ null on $\outComeNFLp$ and equal to 1 on $\outComeNFEx$.

\section{Proof from Subsection~\ref{subsec:varepsilon_buchi}}
\subsection{Proof of Lemma~\ref{lem:varepsilon_buchi_necessary}}
\label{appen:proof_prop_varepsilon_buchi_necessary}
Let us first define formally the game built on a game form, a partial valuation and a probability function.
\begin{definition}
	Consider a game form $\formNF = \langle A,B,\outComeNF,\outCNF \rangle$, a partition of the outcomes $\outComeNF = \outComeNFEx \uplus \outComeNFLp$, a partial valuation $\alpha: \outComeNFEx \rightarrow [0,1]$ and a probability function $p_T: \outComeNFLp \rightarrow [0,1]$. We define the game $\G^\Bu_{\formNF,\alpha,p_T} = \Games{\Aconc}{\Bu(T)}$ where $\Aconc := \AConc$ with:
	\begin{itemize}
		\item $Q := \{ q_0,q_T,q_{\bar{T}},\top,\bot \}$;
		\item $\distribSet := \distribSet_{\outComeNFLp} \cup \distribSet_{\outComeNFEx} \cup \{ d^{\outComeNFEx}_0,d^{\outComeNFEx}_1 \} \cup \{ d_{q_0} \}$ with $\distribSet_{\outComeNFLp} := \{ d^{\outComeNFLp}_x \mid x \in p_T[\outComeNFLp] \}$ and $\distribSet_{\outComeNFEx} := \{ d^{\outComeNFEx}_x \mid x \in \alpha[\outComeNFEx] \}$;
		\item we define the function $g: \outComeNF \rightarrow \distribSet_{\outComeNFLp} \cup \distribSet_{\outComeNFEx}$ such that, for all $o \in \outComeNF$:
		\begin{equation*}
			g(o) := \begin{cases}
				d^{\outComeNFEx}_x \quad &\text{for } x := \alpha(o) \text{ if } \, o \in \outComeNFEx \\
				d^{\outComeNFLp}_x \quad &\text{for } x := p_T(o) \text{ if } \, o \in \outComeNFLp \\
			\end{cases}
		\end{equation*}
	
		Then, for all $a \in A$, $b \in B$, we have: $\delta(q_0,a,b) := g \circ \outCNF(a,b)$.
		Furthermore, $\delta(q_T,a,b) = \delta(q_{\bar{T}},a,b) = d_{q_0}$, $\delta(\top,a,b) = d^{\outComeNFEx}_1$ and $\delta(\bot,a,b) = d^{\outComeNFEx}_0$;
		\item for all $x \in \alpha[\outComeNFEx] \cup \{ 0,1 \}$, we have $\distribFunc(d_x^{\outComeNFEx})(\top) := x$ and $\distribFunc(d_x^{\outComeNFEx})(\bot) := 1 - x$. For all $x \in p_T[\outComeNFLp]$, we have $\distribFunc(d_x^{\outComeNFLp})(q_T) := x$ and $\distribFunc(d_x^{\outComeNFLp})(q_{\bar{T}}) := 1 - x$. Finally, we have $\distribFunc(d_{q_0})(q_0) := 1$;
		\item $T := \{ q_T,\top \}$.
	\end{itemize}
\end{definition}

Let us state a useful proposition relating the value of the state $q_0$ in the Büchi game $\G^\Bu_{\formNF,\alpha,p_T}$ and the value $u_{\formNF,\gf T \lf \bar{T}}(\alpha[p_T])$.
\begin{proposition}
	\label{prop:same_value_game_game_form}
	Consider a game form $\formNF$, a partition of the outcomes $\outComeNF = \outComeNFEx \uplus \outComeNFLp$, a partial valuation $\alpha: \outComeNFEx \rightarrow [0,1]$ and a probability function $p_T: \outComeNFLp \rightarrow [0,1]$. Consider the game $\G := \G^\Bu_{\formNF,\alpha,p_T}$. Let $v := \MarVal{\G}$ be the valuation of the states in $\G$ giving the value of states. Then: $v(q_0) = u_{\formNF,\gf T \lf \bar{T}}(\alpha[p_T])$. 
\end{proposition}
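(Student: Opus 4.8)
The plan is to instantiate Theorem~\ref{thm:value_buchi} on the game $\G := \G^\Bu_{\formNF,\alpha,p_T}$ with target $T = \{q_T,\top\}$, and then exploit the fact that every local interaction except the one at $q_0$ is trivial, so that the two nested fixed-point computations collapse onto the one-dimensional fixed points that define $u_{\formNF,\gf T \lf \bar{T}}(\alpha[p_T])$. Writing $v := \MarVal{\G}$, the first step is to record how $\OneStep$ acts on the four auxiliary states: for every valuation $w \in [0,1]^Q$ one has $\OneStep(w)(\top) = w(\top)$ and $\OneStep(w)(\bot) = w(\bot)$ (self-loops through $d^{\outComeNFEx}_1$ and $d^{\outComeNFEx}_0$), and $\OneStep(w)(q_T) = \OneStep(w)(q_{\bar{T}}) = w(q_0)$ (both move surely to $q_0$ through $d_{q_0}$). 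At $q_0$, a direct computation of $\mu_w \circ g$ shows that, whenever $w(\top) = 1$ and $w(\bot) = 0$,
\[
  \OneStep(w)(q_0) \;=\; \va_{\gameNF{\formNF}{\alpha[p_T(w(q_T) \sqcup w(q_{\bar{T}}))]}},
\]
since $\mu_w(g(o)) = \alpha(o)$ for $o \in \outComeNFEx$ and $\mu_w(g(o)) = p_T(o)\,w(q_T) + (1 - p_T(o))\,w(q_{\bar{T}})$ for $o \in \outComeNFLp$.

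Next I would fix $v_T \in [0,1]^T$ with $v_T(\top) = 1$ and evaluate $v_{\lf \bar{T}}[v_T] = \lfp(\OneStepSet{T}{v_T})$ on $Q \setminus T = \{q_0, q_{\bar{T}}, \bot\}$. Since $\OneStepSet{T}{v_T}$ is the identity on the $\bot$-coordinate, the sub-lattice $\{w : w(\bot) = 0\}$ is invariant and still contains the least fixed point, so $(v_{\lf \bar{T}}[v_T])(\bot) = 0$; and since $q_{\bar{T}}$ merely copies $q_0$, any fixed point satisfies $w(q_{\bar{T}}) = w(q_0)$. After this substitution the operator on the pair $(w(q_0),w(q_{\bar{T}}))$ becomes $(x_0,x_{\bar{T}}) \mapsto \big(f_{\formNF,\bar{T}}\langle \alpha[p_T(v_T(q_T))]\rangle(x_{\bar{T}}),\, x_0\big)$, and a short Knaster--Tarski argument (any pre-fixed point has $x_0 \ge \lfp$ of the inner map, while the diagonal of that $\lfp$ is a fixed point) identifies its least fixed point as the diagonal. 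Hence
\[
  (v_{\lf \bar{T}}[v_T])(q_0) \;=\; u_{\formNF,\lf \bar{T}}\big(\alpha[p_T(v_T(q_T))]\big) \;=\; f_{\formNF,\lf \bar{T}}\langle \alpha[p_T]\rangle\big(v_T(q_T)\big),
\]
the last equality being the very definition of $f_{\formNF,\lf \bar{T}}\langle \alpha[p_T]\rangle$.

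Then I would run the outer greatest fixed point $v|_T = \gfp(\OneStepLFP{T})$. As $\OneStepLFP{T}$ is the identity on the $\top$-coordinate, its greatest fixed point has $v(\top) = 1$, so it suffices to understand $\OneStepLFP{T}$ on $\{v_T : v_T(\top) = 1\}$, where, by the previous display, it sends $v_T(q_T)$ to $\OneStep(v_T \sqcup v_{\lf \bar{T}}[v_T])(q_T) = (v_{\lf \bar{T}}[v_T])(q_0) = f_{\formNF,\lf \bar{T}}\langle \alpha[p_T]\rangle(v_T(q_T))$. Taking the greatest fixed point of this one-dimensional non-decreasing map gives $v(q_T) = u_{\formNF,\gf T \lf \bar{T}}(\alpha[p_T])$. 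Finally, Theorem~\ref{thm:value_buchi} gives $v(q_0) = (v_{\lf \bar{T}}[v|_T])(q_0) = f_{\formNF,\lf \bar{T}}\langle \alpha[p_T]\rangle(v(q_T))$, and since $v(q_T) = u_{\formNF,\gf T \lf \bar{T}}(\alpha[p_T])$ is a fixed point of $f_{\formNF,\lf \bar{T}}\langle \alpha[p_T]\rangle$, this equals $u_{\formNF,\gf T \lf \bar{T}}(\alpha[p_T])$, as claimed.

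I expect the only real work to be bookkeeping rather than ideas: making rigorous the collapse of the three- and two-dimensional fixed points onto their essential one-dimensional coordinates (the invariant-sub-lattice and Knaster--Tarski arguments above, together with the fact that nothing forces the $\bot$-coordinate up nor the $\top$-coordinate down, so they are pinned to $0$ and $1$ respectively), and checking monotonicity of the value of a game in normal form in its outcome valuation, which is already available through Propositions~\ref{prop:properties_delta} and~\ref{prop:properties_f_alpha}.
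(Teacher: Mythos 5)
Your proof is correct and takes essentially the same route as the paper's own argument: identify $\mu_w \circ g$ with $\alpha[p_T(\cdot \sqcup \cdot)]$, collapse the inner least fixed point and the outer greatest fixed point onto the one-dimensional fixed points defining $u_{\formNF,\gf T \lf \bar{T}}(\alpha[p_T])$, and conclude via Theorem~\ref{thm:value_buchi}. If anything, your handling of the collapse (the two-coordinate map $(x_0,x_{\bar{T}}) \mapsto (f(x_{\bar{T}}),x_0)$ and the pinning of the $\top$- and $\bot$-coordinates to $1$ and $0$) is more careful than the paper's write-up, which asserts these identifications directly.
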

This is in fact quite obvious, even if it may not seem so because of the convoluted definitions with fixed points of different elements in these equalities. 
\begin{proof}
	First, given $u_T \in [0,1]$, we define a valuation $v_T(u_T) \in [0,1]^T$ such that $v_T(u_T)(\top) := 1$ and $v_T(u_T)(q_T) := u_T$. Furthermore, given $u_{\bar{T}} \in [0,1]$, we define a valuation $v_{\bar{T}}(u_{\bar{T}}) \in [0,1]^{Q \setminus T}$ such that $v_{\bar{T}}(u_{\bar{T}})(\bot) := 0$ and $v_{\bar{T}}(u_{\bar{T}})(q_{\bar{T}}) = v_{\bar{T}}(u_{\bar{T}})(q_0) := u_{\bar{T}}$. Let now show that, for all $u_T,u_{\bar{T}} \in [0,1]$, we have: $\alpha[p_T(u_T \sqcup u_{\bar{T}})] = \mu_{v_T(u_T) \sqcup v_{\bar{T}}(u_{\bar{T}})} \circ g: \outComeNF \rightarrow [0,1]$. Indeed:
	\begin{itemize}
		\item let $o \in \outComeNFEx$ and $x := \alpha(o) \in \alpha[\outComeNFEx]$. Then:
		\begin{align*}
		\mu_{v_T(u_T),\sqcup v_{\bar{T}}(u_{\bar{T}})} \circ g(o) & = \mu_{v_T(u_T) \sqcup v_{\bar{T}}(u_{\bar{T}})}(d^{\outComeNFEx}_x) \\
		& = x \cdot v_T(u_T)(\top) + (1 - x) \cdot v_{\bar{T}}(u_{\bar{T}})(\bot) = x \cdot 1 + (1 - x) \cdot 0 \\
		& = x = \alpha(o) = \alpha[p_T(u_T \sqcup u_{\bar{T}})](o)
		\end{align*}
		\item let $o \in \outComeNFLp$ and $x := p_T(o) \in p_T[\outComeNFLp]$. Then:
		\begin{align*}
		\mu_{v_T(u_T) \sqcup v_{\bar{T}}(u_{\bar{T}})} \circ g(o) & = \mu_{v_T(u_T) \sqcup v_{\bar{T}}(u_{\bar{T}})}(d^{\outComeNFLp}_x) \\
		& = x \cdot v_T(u_T)(q_T) + (1 - x) \cdot v_{\bar{T}}(u_{\bar{T}})(q_{\bar{T}}) = x \cdot u_T + (1 - x) \cdot u_{\bar{T}} \\
		& = p_T(o) \cdot u_T + (1 - p_T(o)) \cdot u_{\bar{T}} \\
		& = \alpha[p_T(u_T \sqcup u_{\bar{T}})](o)
		\end{align*}
	\end{itemize}
	
	Let us now fix some $u_T \in [0,1]$. Let us show that, for all $u_{\bar{T}} \in [0,1]$ we have: $\OneStepSet{T}{v_T(u_T)}(v_{\bar{T}}(u_{\bar{T}})) = v_{\bar{T}}(f_{\formNF,\bar{T}}\langle \alpha[p_T(u_T)]\rangle(u_{\bar{T}}))$. Straightforwardly, they coincide on the state $\bot$ as they are both equal to $0$. Let us consider now what happens at state $q_{\bar{T}}$.
	\begin{align*}
	v_{\bar{T}}(f_{\formNF,\bar{T}}\langle \alpha[p_T(u_T)]\rangle(u_{\bar{T}}))(q_{\bar{T}}) & = f_{\formNF,\bar{T}}\langle \alpha[p_T(u_T)]\rangle(u_{\bar{T}}) = \va_{\gameNF{\formNF}{\alpha[p_T(u_T \sqcup u_{\bar{T}})]}} \\ 
	& = \va_{\gameNF{\formNF}{\mu_{(v_T(u_T) \sqcup v_{\bar{T}}(u_{\bar{T}}))} \circ g}} = \va_{\gameNF{\formNF_{q_{\bar{T}}}}{\mu_{(v_T(u_T) \sqcup v_{\bar{T}}(u_{\bar{T}}))}}} \\
	& = \OneStep(v_T(u_T) \sqcup v_{\bar{T}}(u_{\bar{T}}))(q_{\bar{T}})\\
	& = \OneStepSet{T}{v_T(u_T)}(v_{\bar{T}}(u_{\bar{T}}))(q_{\bar{T}}) \\  
	\end{align*}
	The result is the same for $q_0$. This holds for all $u_{\bar{T}} \in [0,1]$. Let us fix $u_{\bar{T}} := u_{\formNF,\lf \bar{T}}(\alpha[p_T(u_T)]$. By considering a least fixed point, we have:
	\begin{displaymath}
	v_{\lf \bar{T}}[v_T(u_T)] = v_{\bar{T}}(u_{\bar{T}})
	\end{displaymath}
	From this, it follows that:
	\begin{displaymath}
	\alpha[p_T(u_T)]_{\formNF,\lf \bar{T}} = \alpha[p_T(u_T \sqcup u_{\bar{T}})] = \mu_{v_T(u_T),v_{\bar{T}}(u_{\bar{T}})} \circ g
	\end{displaymath}
	
	Finally, let us show that, for all $u_T \in [0,1]$, we have $v_T(f_{\formNF,\lf \bar{T}}\langle \alpha[p_T]\rangle(u_T)) = \OneStepLFP{T}(v_T(u_T))$. The equality holds for the state $\top$, as they are both equal to $1$ on that state. Let us now consider the state $q_T$. 
	
	\begin{align*}
	v_T(f_{\formNF,\lf \bar{T}}\langle \alpha[p_T]\rangle(u_T))(q_T) & = f_{\formNF,\lf \bar{T}}\langle \alpha[p_T]\rangle(u_T) = \va_{\gameNF{\formNF}{(\alpha[p_T(u_T)])_{\formNF,\lf \bar{T}}}} \\ 
	& = \va_{\gameNF{\formNF_{q_T}}{\mu_{(v_T(u_T) \sqcup v_{\bar{T}}(u_{\bar{T}}))}}}  \\
	& = \OneStep(v_T(u_T) \sqcup v_{\bar{T}}(u_{\bar{T}}))(q_T)\\
	& = \OneStep(v_T(u_T) \sqcup v_{\lf \bar{T}}[v_T])(q_T)\\
	& = \OneStepLFP{T}(v_T(u_T))(q_T) \\  
	\end{align*}
	As this holds for all $u_T \in [0,1]$, by considering the greatest fixed point, we obtain: 
	\begin{displaymath}
	v_T(u_{\formNF,\gf T \lf \bar{T}}(\alpha[p_T])) = v_{\gf T \lf \bar{T}} \in [0,1]^T
	\end{displaymath}
	Let $u := u_{\formNF,\gf T \lf \bar{T}}(\alpha[p_T]) \in [0,1]$. By Theorem~\ref{thm:value_buchi}, the valuation $v \in [0,1]^Q$ giving the values of the Büchi game is equal to:
	\begin{displaymath}
	v = v_{\gf T \lf \bar{T}} \sqcup v_{\lf \bar{T}}[v_{\gf T \lf \bar{T}}]
	\end{displaymath}
	In fact, we obtain that:
	\begin{displaymath}
	v = v_T(u) \sqcup v_{\bar{T}}(u)
	\end{displaymath}
	Following, $v(q_0) = v_T(u)(q_0) = u = u_{\formNF,\gf T \lf \bar{T}}(\alpha,p_T)$. 
\end{proof}

We can now proceed to the proof of Lemma~\ref{lem:varepsilon_buchi_necessary}.
\begin{proof}[Proof of Lemma~\ref{lem:varepsilon_buchi_necessary}]
	Consider a game form $\formNF$, a partition of the outcomes $\outComeNF = \outComeNFEx \uplus \outComeNFLp$, a partial valuation $\alpha: \outComeNFEx \rightarrow [0,1]$ and a probability function $p_T: \outComeNFLp \rightarrow [0,1]$. Consider the game $\G := \G^\Bu_{\formNF,\alpha,p_T}$. Let $v := \MarVal{\G}$ be the valuation of the states in $\G$ giving the value of states. By Proposition~\ref{prop:same_value_game_game_form}, we have $v(q_0) = u_{\formNF,\gf T \lf \bar{T}}(\alpha[p_T])$. 
		
	
	Assume that the game form $\formNF$ is aBM w.r.t. $\alpha,p_T$. If $u = 0$, any Player $\A$ strategy is optimal. Assume now that $u > 0$ and consider some $\varepsilon > 0$. Then, there exists a $\GF$-strategy $\sigma_\A$ in the game form $\formNF$ as in Definition~\ref{def:epsilon_buchi_maximizable}. Consider then a Player $\A$ positional strategy $\s_\A$ playing the $\GF$-strategy $\sigma_\A$ at state $q_0$, i.e. $\s_\A(q_0) := \sigma_\A$. Let us show that it is $\varepsilon$-optimal by applying Lemma~\ref{lem:uniform_guarantee}. Specifically, consider the valuation $w: Q \rightarrow [0,1]$ such that: $w(q_0) = w(q_T) = w(q_{\bar{T}}) := u - \varepsilon$, $w(\top) := 1$ and $w(\bot) := 0$. Straightforwardly, we have $\va_{\gameNF{\formNF_\top}{\mu_w}}(\s_\A(\top)) = w(\top) $ and similarly for the state $\bot$. Furthermore, $\va_{\gameNF{\formNF_{q_T}}{\mu_w}}(\s_\A(q_T)) = \va_{\gameNF{\formNF_{q_{\bar{T}}}}{\mu_w}}(\s_\A(q_{\bar{T}})) = w(q_0) = w(q_T) = w(q_{\bar{T}})$. Consider now what happens in the local interaction $\formNF_{q_0}$. Note that, for all $x \in p_T[\outComeNFLp]$ we have $\mu_w(d_x^{\outComeNFLp}) = u - \varepsilon$. Furthermore, for all $o \in \outComeNFEx$, we have $\mu_w(g(o)) = x = \alpha(o)$. Recall that $A_b = \{ a \in \Supp(\sigma_\A) \mid \outCNF(a,b) \in \outComeNFEx \}$ from Definition~\ref{def:epsilon_buchi_maximizable}. It follows that, for all $b \in B$:
	\begin{align*}
		\outM_{\gameNF{\formNF_{q_0}}{\mu_w}}(\s_\A(q_0),b) & = \sum_{a \in A} \sigma_\A(a) \cdot \mu_w(\delta(q_0,a,b)) \\
		& = \sum_{a \in \Supp(\sigma_\A) \setminus A_b} \sigma_\A(a) \cdot \mu_w(\delta(q_0,a,b)) + \sum_{a \in A_b} \sigma_\A(a) \cdot \mu_w(\delta(q_0,a,b)) \\
		& = \sum_{a \Supp(\sigma_\A) \setminus A_b} \sigma_\A(a) \cdot \mu_w(d^{\outComeNFLp}_{p_T(\outCNF(a,b))}) + \sum_{a \in A_b} \sigma_\A(a) \cdot \mu_w(d^{\outComeNFEx}_{\alpha(\outCNF(a,b))}) \\
		& = \sum_{a \Supp(\sigma_\A) \setminus A_b} \sigma_\A(a) \cdot (u - \varepsilon) + \sum_{a \in A_b} \sigma_\A(a) \cdot \alpha(\outCNF(a,b)) \\
		& \geq \sum_{a \in A_b} \sigma_\A(a) \cdot (u - \varepsilon) + (u - \varepsilon) \cdot \sigma_\A[A_b] \\
		& = (u - \varepsilon) \cdot \sum_{a \in \Supp(\sigma_\A)} \sigma_\A(a) \\
		& = u - \varepsilon = w(q_0)
	\end{align*}
	The inequaity comes from the fact that $\sigma_\A$ is given by Definition~\ref{def:epsilon_buchi_maximizable}. Hence, the strategy $\s_\A$ locally dominates the valuation $w$. Consider now an end component in the MDP induced by the strategy $\s_\A$. 
	If this EC $H = (Q_H,\beta)$ does not contain the state $q_0$, there is no issue. Either it is $\{ \top \}$ of value 1 in any case, or it is $\{ \bot \}$ of value 0. Assume now that $H$ contains $q_0$ and consider the set of actions $\beta(q_0) \subseteq B$ compatible with this end component (see Definition~\ref{def:end_component}). For all $b \in \beta(q_0)$, it must be that $A_b = \emptyset$ (otherwise the game could leave the state $q_0$ to $\top$ or $\bot$). Hence, by choice of the $\GF$-strategy $\sigma_\A$, it follows that there is $a \in \Supp(\sigma_\A)$ such that $p_T(\outCNF(a,b)) > 0$. That is, if Player $\B$ plays this action $b$, there is a (fixed) positive probability to visit the $T$. As this holds for all actions $b \in \beta(q_0)$ it follows that for all Player $\B$ strategies in the game  $\Aconc_H^{\s_\A}$, the set $T$ is infinitely often almost-surely. That is, the value of the game $\Aconc_H^{\s_\A}$ is one for all states. We can then apply Lemma~\ref{lem:uniform_guarantee}: the strategy $\s_\A$ guarantees the valuation $w$. In particular, it is $\varepsilon$-optimal from the state $q_0$ (as $\MarVal{\G}(q_0) = v(q_0) = u = w(q_0) + \varepsilon$).
	
	Let us now assume that, for all $\varepsilon > 0$ there exists a positional Player $\A$ strategy that is $\varepsilon$-optimal from $q_0$. Let us show that the game form $\formNF$ is aBM w.r.t. $\alpha,p_T$. Let $0 < \varepsilon < u$. Consider a positional Player $\A$ strategy $\s_\A$ that is $\varepsilon$-optimal from $q_0$. Let $\sigma_\A := \s_\A(q_0)$ be a $\GF$-strategy in $\formNF_{q_0}$, consider some action $b \in B$ and consider a Player $\B$ positional strategy $\s_\B$ such that $\s_\B(q_0) := b$. Assume that $A_b = \emptyset$ (with $A_b$ defined w.r.t. $\s_\A(q_0)$ in $\formNF_{q_0}$). If, in the Markov chain induced by both strategies $\s_\A,\s_\B$, the only state reachable from $q_0$ is $q_{\bar{T}}$ then the set $T$ is never visited and the value of the game with strategies $\s_\A,\s_\B$ is $0 < u - \varepsilon$, which is not possible by assumption. Therefore, there must be an action $a \in \Supp(\sigma_\A)$ such that $\distribFunc(d^E_\outCNF(a,b))(q_T) > 0$. That is, $p_T(\outCNF(a,b)) > 0$. Assume now $A_b \neq \emptyset$. Let $w: Q \rightarrow [0,1]$ be the value of the game $\G$ with the Player $\A$ strategy $\sigma_\A$ fixed (it is therefore an MDP). Straightforwardly, $w(\top) = 1$ and $w(\bot) = 0$. In addition, by assumption, we have $w(q_0) \geq u - \varepsilon$. Furthermore, as the Büchi objective is prefix-independent, the valuation $w$ ensures that, for all $q \in Q$, we have $\va_{\gameNF{\formNF_{q}}{\mu_w}} = w(q)$. In particular,  $w(q_T) = \va_{\gameNF{\formNF_{q_T}}{\mu_w}} = w(q_0)$ and similarly for the state $q_{\bar{T}}$. Overall, we have:
	\begin{align*}
		w(q_0) \leq \outM_{\gameNF{\formNF_{q_0}}{\mu_w}}(\s_\A(q_0),b) & = \sum_{a \in A} \s_\A(a) \cdot \mu_w(\delta(q_0,a,b)) \\
		& = \sum_{a \in \Supp(\sigma_\A) \setminus A_b} \sigma_\A(a) \cdot \mu_w(\delta(q_0,a,b)) + \sum_{a \in A_b} \sigma_\A(a) \cdot \mu_w(\delta(q_0,a,b)) \\
		& = \sum_{a \in \Supp(\sigma_\A) \setminus A_b} \sigma_\A(a) \cdot \mu_w(d^{\outComeNFLp}_{p_T(\outCNF(a,b))}) + \sum_{a \in A_b} \sigma_\A(a) \cdot \mu_w(d^{\outComeNFEx}_{\alpha(\outCNF(a,b))}) \\
		& = \sum_{a \in \Supp(\sigma_\A) \setminus A_b} \sigma_\A(a) \cdot w(q_0) + \sum_{a \in A_b} \sigma_\A(a) \cdot \alpha(\outCNF(a,b)) \\
	\end{align*}
	We obtain:
	\begin{align*}
		\sum_{a \in A_b} \sigma_\A(a) \cdot \alpha(\outCNF(a,b)) & \geq w(q_0) \cdot (1 - \sum_{a \in \Supp(\sigma_\A) \setminus A_b} \sigma_\A(a))
		& = w(q_0) \cdot \sigma_\A[A_b] \geq (u - \varepsilon) \cdot \sigma_\A[A_b]
	\end{align*}
	We have that the $\GF$-strategy $\sigma_\A$ satisfies the specifications of Definition~\ref{def:epsilon_buchi_maximizable}. As this holds for all $0 < \varepsilon < u$, it follows that the game form $\formNF$ is aBM w.r.t. $\alpha,p_T$.
	
\end{proof}

\subsection{Proof of Theorem~\ref{lem:varepsilon_buchi_sufficient}}
\label{appen:proof_lem_varepsilon_buchi_sufficient}

Before proving Theorem~\ref{lem:varepsilon_buchi_sufficient}, let us state a useful proposition that links the value of a Büchi game and a reachability game obtained from it. 
More precisely:
\begin{lemma}
	Consider a game form $\formNF$, a partition of the outcomes $\outComeNF = \outComeNFEx \uplus \outComeNFLp$ and a partial valuation $\alpha: \outComeNFEx \rightarrow [0,1]$. Let $u_\mathsf{reach} := u_{\formNF,\diamondsuit}[\alpha]$. Then, let $\outComeNF_T := \{ o \in \outComeNFEx \mid \alpha(o) = u_\mathsf{reach} \}$. We then consider the partial valuation $\alpha^\Bu: \outComeNFEx \setminus \outComeNF_T \rightarrow [0,1]$ such that $\alpha^\Bu := \restriction{\alpha}{\outComeNFEx \setminus \outComeNF_T}$ and the probability function $p_T: \outComeNFLp \cup \outComeNF_T \rightarrow [0,1]$ such that, for all $o \in \outComeNFLp$, we have $p_T(o) := 0$ and for all $o \in \outComeNF_T$, we have $p_T(o) := 1$. In that case, we have $u_{\formNF,\gf T \lf \bar{T}}(\alpha^\Bu[p_T]) \geq u_\mathsf{reach} = u_{\formNF,\diamondsuit}(\alpha)$.
	\label{lem:partial_val_reach_buchi}
\end{lemma}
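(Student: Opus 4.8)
The plan is to reduce the whole statement to a single identity between valuations of the outcomes of $\formNF$, and then to invoke Knaster--Tarski. First I would recall the only abstract ingredient needed: for a non-decreasing map $g$ on the complete lattice $([0,1],\leq)$, the greatest fixed point dominates every post-fixed point, i.e. $c \leq g(c)$ implies $\gfp(g) \geq c$. I intend to apply this with $g := f_{\formNF,\lf \bar{T}}\langle \alpha^\Bu[p_T]\rangle$ — which is non-decreasing by the proposition on $f_{\formNF,\lf \bar{T}}\langle \cdot \rangle$ in Subsubsection~\ref{subsubsec:gf_in_buchi} — and with $c := u_\mathsf{reach}$. Since $u_{\formNF,\gf T \lf \bar{T}}(\alpha^\Bu[p_T])$ is by definition $\gfp(f_{\formNF,\lf \bar{T}}\langle \alpha^\Bu[p_T]\rangle)$, it will suffice to show $f_{\formNF,\lf \bar{T}}\langle \alpha^\Bu[p_T]\rangle(u_\mathsf{reach}) \geq u_\mathsf{reach}$; in fact I expect equality to hold.

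The heart of the argument is the following identity, to be proved by a three-case analysis on an outcome $o$: for every $u \in [0,1]$, the total valuation $\alpha^\Bu[p_T(u_\mathsf{reach} \sqcup u)] \in [0,1]^\outComeNF$ coincides with the reachability-style total valuation $\alpha[u] \in [0,1]^\outComeNF$. Here one must keep in mind that the environment $(\alpha^\Bu, p_T)$ comes with the partition $\outComeNF = (\outComeNFEx \setminus \outComeNF_T) \uplus (\outComeNFLp \cup \outComeNF_T)$, so $\outComeNFLp \cup \outComeNF_T$ are the loop outcomes. If $o \in \outComeNFLp$, then $p_T(o) = 0$ and both sides equal $u$; if $o \in \outComeNF_T$, then $p_T(o) = 1$, so the left side equals $u_\mathsf{reach}$ while the right side equals $\alpha(o)$, which equals $u_\mathsf{reach}$ precisely by the definition of $\outComeNF_T$; and if $o \in \outComeNFEx \setminus \outComeNF_T$, both sides equal $\alpha(o) = \alpha^\Bu(o)$.

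From this identity two things follow immediately. First, the map $u \mapsto \va_{\gameNF{\formNF}{\alpha^\Bu[p_T(u_\mathsf{reach} \sqcup u)]}}$, i.e. $f_{\formNF,\bar{T}}\langle \alpha^\Bu[p_T(u_\mathsf{reach})]\rangle$, is literally the same function as $u \mapsto \va_{\gameNF{\formNF}{\alpha[u]}} = f_\formNF\langle \alpha\rangle$, so their least fixed points agree: $u_{\formNF,\lf \bar{T}}(\alpha^\Bu[p_T(u_\mathsf{reach})]) = u_{\formNF,\diamondsuit}(\alpha) = u_\mathsf{reach}$. Plugging $u = u_\mathsf{reach}$ into the identity then gives $(\alpha^\Bu[p_T(u_\mathsf{reach})])_{\formNF,\lf \bar{T}} = \alpha^\Bu[p_T(u_\mathsf{reach} \sqcup u_\mathsf{reach})] = \alpha[u_\mathsf{reach}]$, whence $f_{\formNF,\lf \bar{T}}\langle \alpha^\Bu[p_T]\rangle(u_\mathsf{reach}) = \va_{\gameNF{\formNF}{\alpha[u_\mathsf{reach}]}} = f_\formNF\langle \alpha\rangle(u_\mathsf{reach}) = u_\mathsf{reach}$, the last equality because $u_\mathsf{reach}$ is a fixed point of $f_\formNF\langle \alpha\rangle$. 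So $u_\mathsf{reach}$ is a fixed point — hence a post-fixed point — of the non-decreasing map $f_{\formNF,\lf \bar{T}}\langle \alpha^\Bu[p_T]\rangle$ on $[0,1]$, and the Knaster--Tarski step from the first paragraph yields $u_{\formNF,\gf T \lf \bar{T}}(\alpha^\Bu[p_T]) \geq u_\mathsf{reach} = u_{\formNF,\diamondsuit}(\alpha)$, which is the claim. The only real obstacle is bookkeeping: the expression $u_{\formNF,\gf T \lf \bar{T}}(\alpha^\Bu[p_T])$ nests three fixed-point operators and two distinct extension operators ($\alpha[\cdot]$ versus $\alpha[p_T(\cdot \sqcup \cdot)]$), so the care lies entirely in matching the substitutions and reading the modified partition correctly; once the valuation identity is established, nothing else is needed.
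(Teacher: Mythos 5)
Your proposal is correct and follows essentially the same route as the paper: the same three-case identity $\alpha^\Bu[p_T(u_\mathsf{reach} \sqcup u)] = \alpha[u]$, the resulting coincidence of the inner function with $f_\formNF\langle\alpha\rangle$ so that $u_{\formNF,\lf\bar{T}}(\alpha^\Bu[p_T(u_\mathsf{reach})]) = u_\mathsf{reach}$, and the conclusion that $u_\mathsf{reach}$ is a fixed point of $f_{\formNF,\lf\bar{T}}\langle\alpha^\Bu[p_T]\rangle$ and hence below its greatest fixed point. The only cosmetic difference is that you phrase the last step via post-fixed points and Knaster--Tarski, while the paper directly uses that the greatest fixed point dominates any fixed point.
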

\begin{proof}
	Let $u_\Bu := u_{\formNF,\gf T \lf \bar{T}}(\alpha^\Bu[p_T])$. Let $u \in [0,1]$. For all $o \in \outComeNFEx \setminus \outComeNF_T$, we have $\alpha[p_T(u \sqcup u_\mathsf{reach})](o) = \alpha(o) = \alpha^\mathsf{reach}[u](o)$. Consider now some $o \in \outComeNFLp \cup \outComeNF_T$. 
	\begin{itemize}
		\item If $o \in \outComeNF_T$, we have $p_T(o) = 1$ and: 
		\begin{displaymath}
		\alpha^\Bu[p_T(u_\mathsf{reach} \sqcup u)](o) = p_T(o) \cdot u_\mathsf{reach} + (1 - p_T(o)) \cdot u = u_\mathsf{reach} = \alpha(o) = \alpha^\mathsf{reach}[u](o)
		\end{displaymath}
		\item If $o \in \outComeNFLp$, then $p_T(o) = 0$. Hence:
		\begin{displaymath}
		\alpha^\Bu[p_T(u_\mathsf{reach} \sqcup u)](o) = p_T(o) \cdot u_\mathsf{reach} + (1 - p_T(o)) \cdot u = u = \alpha^\mathsf{reach}[u](o)
		\end{displaymath}
	\end{itemize}
	That is, for all $u \in [0,1]$, we have $\alpha^\Bu[p_T(u_\mathsf{reach} \sqcup u)]  = \alpha^\mathsf{reach}[u]$. 
	
	For all $u \in [0,1]$, we have $f_{\formNF,\bar{T}}\langle \alpha^\Bu[p_T(u_\mathsf{reach})]\rangle(u) = \va_{\gameNF{\formNF}{\alpha^\Bu[p_T(u_\mathsf{reach} \sqcup u)]}} = 
	\va_{\gameNF{\formNF}{\alpha[u]}} =
	 f_\formNF\langle \alpha\rangle(u)$. Hence, the least fixed of the function $f_{\formNF,\bar{T}}\langle \alpha^\Bu[p_T(u_\mathsf{reach})]\rangle$ is equal to $u_\mathsf{reach}$. That is, $f_{\formNF,\lf \bar{T}}\langle \alpha[p_T]\rangle(u_\mathsf{reach}) = u_\mathsf{reach}$. It follows that $u_\Bu$, which if the greatest fixed point of the function $f_{\formNF,\lf \bar{T}}[\langle \alpha[p_T]\rangle$, is greater than or equal to $u_\mathsf{reach}$, i.e. $u_\Bu \geq u_\mathsf{reach}$.
	 
\end{proof}

We can now proceed to the proof of Theorem~\ref{lem:varepsilon_buchi_sufficient}.
\begin{proof}[Proof of Theorem~\ref{lem:varepsilon_buchi_sufficient}]
	Consider a Büchi game $\G = \Games{\Aconc}{\Bu(T)}$ and assume that all local interactions at states in $Q \setminus T$ are aBM. 
	For all $u \in [0,1]$, we denote by $Q_u$ the set of states in $Q$ whose values w.r.t. the valuation $v := \MarVal{\G}: Q \rightarrow [0,1]$ giving the values of the game is $u$: $Q_u := \{ q \in Q \mid v(q) = u \}$. 
	
	
	Let $V := \{ v(q) \mid q \in Q \}$. Let us write $V$ as $\{ 1 = v_0 > v_1 > v_2 > \ldots > v_k = 0 \}$ for $k = |V| - 1$. We define:
	\begin{itemize}
		\item $m := \min_{i \leq k-1} v_i - v_{i+1} > 0$;
		\item $p := \max_{d \in \distribSet} \max_{\substack{q \in Q \\ \distribFunc(d)(q) < 1}} \distribFunc(d)(q)$;
	\end{itemize}
	Finally, consider some $0 < \varepsilon \leq \min(m/2,1-p)$. 
	
	We define a sequence $0 = \varepsilon_0 < \varepsilon_1 < \varepsilon_2 < \ldots < \varepsilon_{k-1} \leq \varepsilon$ with $\varepsilon_{k} := 0$
	such that, for all probability functions $p \in \Dist(\llbracket 0,k \rrbracket)$ and $j \in \llbracket 1,k \rrbracket$ with $p_j \leq p$
	, we have the following implication, for some positive $\alpha,\eta$:
	\begin{equation}
	\sum_{i = 0}^{k} p_i \cdot v_i \geq v_j - \varepsilon \cdot m/2 \Rightarrow \sum_{\substack{i = 0 \\ i \neq j}}^{k} p_i \cdot \varepsilon_i \leq (1 - p_j) \cdot (\varepsilon_j - \cdot \varepsilon \cdot \alpha^2 \cdot \eta)
	\label{eqn:implication}
	\end{equation}
	This sequence will be explicitly defined at the end of this proof.
	
	Let $v': Q \rightarrow [0,1]$ be the valuation of the states such that, for all $0 \leq i \leq k$ and $q \in Q_{v_i}$, we have $v'(q) := v(q) - \varepsilon_i = v_i - \varepsilon_i \geq v_i - \varepsilon$. In particular, 
	since $\varepsilon \leq m/2$, note that $v'(q) = v'(q')$ implies $v(q) = v(q')$ for all $q,q' \in Q$. Let us now exhibit a positional Player $\A$ strategy dominating this valuation. In turn, this strategy will be uniformly $\varepsilon$-optimal (and uniformly almost-sure when restricted to $Q_1$ since $\varepsilon_0 = 0$).
	
	The strategy $\s_\A$ is arbitrarily defined on $Q_0$. Consider now some $0 \leq i \leq k-1$
	. Let $Q'_i := Q_{v_i} \setminus T$ be the states of $Q_{v_i}$ not in the target $T$. Let $\G_i$ be the reachability game $\G_i := \G_{Q \setminus Q'_i} = \Games{\Aconc_{Q \setminus Q'_i}}{\Reach(\{ \top \})}$ extracted from $\G$ from Definition~\ref{def:game_extracted}. We define inductively an increasing sequence of sets of states $(\mathcal{Q}^{i}_{n})_{n \geq 0}$ to properly define the strategy $\s_\A$. Specifically, we set $\mathcal{Q}^i_0 := \emptyset$ and for all $n \geq 0$, we consider the partial valuation $\alpha^{\mathcal{Q}^i_n}: \distribSet^{\mathcal{Q}^i_{k-1}}_{\mathsf{Ex}} \rightarrow [0,1]$ of the Nature states from Definition~\ref{def:partial_val_in_nature_states}. Then, by Lemma~\ref{lem:value_in_RM_up_to_u}, the set $\mathcal{Q}_n^{i,\diamondsuit} \neq \emptyset$ (defined in the same lemma). We let $\mathcal{Q}^i_{n+1} :=\mathcal{Q}^i_n \cup  \mathcal{Q}_n^{i,\diamondsuit}$. Then, the sequence $(\mathcal{Q}^i_{n})_{n \geq 0}$ is equal to the set $Q'_i$: $\mathcal{Q}^i_{j} = Q'_i$ for some $j \geq 0$. 
	
	Let us now define a Player $\A$ positional strategy $\s_\A^i$ on $Q_{v_i}$ in the game $\G$. For all states $q \in Q_{v_i} \cap T$, let $\s_\A^i(q)$ be a $\GF$-strategy that is optimal in the game form $\formNF_{q}$ w.r.t. the valuation of the states $v$: $\va_{\gameNF{\formNF_q}{\mu_v}}(\s_\A^i(q)) = v(q)$. Let us now define the strategy $\s_\A^i$ on $Q'_i = Q_{v_i} \setminus T$. Let $q \in Q_i'$. There exists some $k \in \N$ such that $q \in \mathcal{Q}^i_k \setminus \mathcal{Q}^i_{k-1}$. That is, $q \in \mathcal{Q}_{k-1}^{i,\diamondsuit}$. In other words, we have $u_{\formNF_q,\diamondsuit}(\alpha^{\mathcal{Q}^i_{k-1}}) = v_i$. Let us  denote by $\distribSet_T^{v_i}$ the subset of Nature states of $\distribSet^{v_i}_{\mathsf{Ex}}$ whose values w.r.t. $\alpha^{\mathcal{Q}^i_{k-1}}$ is $v_i$: $\distribSet_T^{v_i} := \{ d \in \distribSet^{v_i}_{\mathsf{Ex}} \mid \alpha^{\mathcal{Q}^i_{k-1}}(d) = v_i \}$. Then, we can apply Lemma~\ref{lem:partial_val_reach_buchi} with the partial valuation $\alpha^{\mathcal{Q}^i_{k-1}}$ to obtain a partial valuation $\alpha^{\mathcal{Q}^i_{k-1},\Bu}: \distribSet^{\mathcal{Q}^i_{k-1}}_{\mathsf{Ex}} \setminus \distribSet_T^{\mathcal{Q}^i_{k-1}} \rightarrow [0,1]$ and the probability function $p_T: \distribSet^{\mathcal{Q}^i_{k-1}}_{\mathsf{Lp}} \cup \distribSet_T^{\mathcal{Q}^i_{k-1}} \rightarrow [0,1]$ such that if $d \in \distribSet^{\mathcal{Q}^i_{k-1}}_{\mathsf{Lp}}$, then $p_T(d) := 0$ and if $d \in \distribSet_T^{\mathcal{Q}^i_{k-1}}$, then $p_T(d) := 1$. 
	We then obtain that the value $u_{\formNF_{q},\square T \diamondsuit \bar{T}}(\alpha^{\mathcal{Q}_{k-1},\Bu}[p_T]) \geq v_i$. Since the game form $\formNF_{q}$ is aBM, there exists a $\GF$-strategy $\sigma_\A^q \in \Dist(A)$ in the game form $\formNF_{q}$ for $\varepsilon_{err} := \min(m/2,(1 - p) \cdot \varepsilon \cdot \alpha ^2 \cdot \eta)$ as in Definition~\ref{def:epsilon_buchi_maximizable} (of aBM game forms). We set $\s_\A^i(q) := \sigma_\A^q$. 
	
	We then consider the positional Player $\A$ strategy $\s_\A$ that is obtained by 'gluing' together all strategies $\s_\A^i$. Specifically, for all $0 \leq i \leq k-1$ and $q \in Q_{v_i}$, we set $\s_\A(q) := \s_\A^i(q)$. The strategy $\s_\A$ plays arbitrarily on $Q_0$. Let us show that it guarantees the valuation $v'$ defined above. We want to apply Lemma~\ref{lem:uniform_guarantee}. First, let us show that the strategy $\s_\A$ locally dominates the valuation $v'$. Let $1 \leq i \leq k-1$. Consider some state $q \in Q_{v_i} \cap T$. The strategy $\s_\A$ is such that $\va_{\gameNF{\formNF_q}{\mu_v}}(\s_\A(q)) = v(q)$. Then, consider some action $b \in B$. Let $A_b := \{ a \in \Supp(\s_\A(q)) \mid \delta(q,a,b) \in \distribSet^{Q_{v_i}}_{\mathsf{Ex}} \}$. If $A_b \neq \emptyset$, we have $p_b := \s_\A(q)[A_b] > 0$. Then, for all $0 \leq j \leq k$, we let $p_j \in [0,1]$ be such that $p_j := \sum_{a \in A_b} \s_\A(q)(a) \cdot \distribFunc(\delta(q,a,b))[Q_{v_j}]/p_b$. Note that, by definition, we have $p_i \leq p$. Furthermore, we have $\sum_{0 \leq j \leq k} p_j \cdot v_j \geq v_i$ (since $\s_\A(q)$ is optimal in $\gameNF{\formNF_{q}}{\mu_v}$). That is, by Equation~\ref{eqn:implication}, we have: 
	\begin{displaymath}
		\sum_{\substack{j = 0 \\ j \neq i}}^{k} p_j \cdot \varepsilon_j \leq (1 - p_i) \cdot (\varepsilon_i - \varepsilon \cdot \alpha^2 \cdot \eta) \leq (1 - p_i) \cdot \varepsilon_i
	\end{displaymath}
	We obtain:
	\begin{align*}
		\outM_{\gameNF{\formNF_{q}}{\mu_{v'}}}(\s_\A(q),b) & = \sum_{a \in \Supp(\s_\A(q))} \s_\A(q)(a) \cdot \mu_{v'}(\delta(q,a,b)) \\
		& = \sum_{a \in \Supp(\s_\A(q)) \setminus A_b} \s_\A(q)(a) \cdot \mu_{v'}(\delta(q,a,b)) + \sum_{a \in A_b} \s_\A(q)(a) \cdot \mu_{v'}(\delta(q,a,b)) \\
		& = \s_\A(q)[A \setminus A_b] \cdot v'(q) + \sum_{a \in A_b} \s_\A(q)(a) \cdot \sum_{0 \leq j \leq k} \distribFunc(\delta(q,a,b))[Q_{v_j}] \cdot (v_j - \varepsilon_j) \\
		& = \s_\A(q)[A \setminus A_b] \cdot (v_i - \varepsilon_i) + \sum_{0 \leq j \leq k} \sum_{a \in A_b} \s_\A(q)(a) \cdot  \distribFunc(\delta(q,a,b))[Q_{v_j}] \cdot (v_j - \varepsilon_j) \\
		& = \s_\A(q)[A \setminus A_b] \cdot (v_i - \varepsilon_i) + \s_\A(q)[A_b] \left( p_i (v_i - \varepsilon_i) + \sum_{\substack{j = 0 \\ j \neq i}}^k p_j \cdot (v_j - \varepsilon _j) \right) \\
		& \geq \s_\A(q)[A \setminus A_b] \cdot (v_i - \varepsilon_i) + \s_\A(q)[A_b] \cdot \left( p_i (v_i - \varepsilon_i) + (1 - p_i) v_i - \sum_{\substack{j = 0 \\ j \neq i}}^k p_j \cdot \varepsilon _j \right)
		\\
		& \geq \s_\A(q)[A \setminus A_b] \cdot (v_i - \varepsilon_i) + \s_\A(q)[A_b] \cdot \left( p_i (v_i - \varepsilon_i) + (1 - p_i) v_i - (1 - p_i) \cdot \varepsilon _i \right) \\
		& = \s_\A(q)[A \setminus A_b] \cdot (v_i - \varepsilon_i) + \s_\A(q)[A_b] \cdot (v_i - \varepsilon_i) = v_i - \varepsilon_i = v'(q)
	\end{align*}
	As this holds for all action $b \in B$, we have $\va_{\gameNF{\formNF_{q}}{\mu_{v'}}}(\s_\A(q)) \geq v'(q)$. 

	Consider now some state $q \in Q'_i = Q_{v_i} \setminus T$. The arguments are similar, but we have to be a little more precise. Indeed, now we have $\sum_{0 \leq j \leq k} p_j \cdot v_j \geq v_i - \varepsilon_{err}$ with $\varepsilon_{err} \leq (1 - p) \cdot \varepsilon \cdot \alpha^2 \cdot \eta$ by definition of the strategy $\s_\A$
	. Furthermore, Equation~(\ref{eqn:implication}) gives that 
	$\sum_{\substack{j = 0 \\ j \neq i}}^{k} p_j \cdot \varepsilon_j \leq (1 - p_i) \cdot (\varepsilon_i - \varepsilon \cdot \alpha^2 \cdot \eta)$. Hence, we obtain:
	\begin{align*}
		\hspace*{-0.5cm}
		\outM_{\gameNF{\formNF_{q}}{\mu_{v'}}}(\s_\A(q),b) & = \s_\A(q)[A \setminus A_b] \cdot (v_i - \varepsilon_i) + \s_\A(q)[A_b] \cdot \sum_{0 \leq j \leq k} p_j \cdot (v_i - \varepsilon _i) \\
		& \geq \s_\A(q)[A \setminus A_b] \cdot (v_i - \varepsilon_i) + \s_\A(q)[A_b] \cdot \left( p_i \cdot (v_i - \varepsilon_{i}) + (1 - p_i) \cdot v_i - \varepsilon_{err} 
		- \sum_{\substack{j = 0 \\ j \neq i}}^k p_j \cdot \varepsilon_j) \right)\\
		& \geq \s_\A(q)[A \setminus A_b] \cdot (v_i - \varepsilon_i) + \s_\A(q)[A_b] \cdot \left( v_i - p_i \cdot \varepsilon_i  - \varepsilon_{err} - (1 - p_i) \cdot (\varepsilon_i - \varepsilon \cdot \alpha^2 \cdot \eta) \right)\\
		& \geq \s_\A(q)[A \setminus A_b] \cdot (v_i - \varepsilon_i) + \s_\A(q)[A_b] \cdot \left( v_i - \varepsilon_i  - (1 - p) \cdot \varepsilon \cdot \alpha^2 \cdot \eta + (1 - p_i) \cdot \varepsilon \cdot \alpha^2 \cdot \eta \right)\\
		& = \s_\A(q)[A \setminus A_b] \cdot (v_i - \varepsilon_i) + \s_\A(q)[A_b] \cdot \left( v_i - \varepsilon_i  + (p - p_i) \cdot  \varepsilon \cdot \alpha^2 \cdot \eta \right)\\
		& \geq v_i - \varepsilon_i = v'(q)
	\end{align*}
	Again, as this holds for all action $b \in B$, we have $\va_{\gameNF{\formNF_{q}}{\mu_{v'}}}(\s_\A(q)) \geq v'(q)$. For the case $i = 0$ and $v_i = 1$, in fact we have that $A_b$ is necessarily empty since the mean of the values not in $Q_1$ must be at least $1 - m/2 > v_2$. That is, $\va_{\gameNF{\formNF_{q}}{\mu_{v'}}}(\s_\A(q)) = 1 = v'(q)$ for all $q \in Q_1$. 
	
	Consider now an EC $H := (Q_H,\beta)$ in the MDP induced by the positional strategy $\s_\A$. If $Q_H \subseteq T$, then the value of the game $\Aconc_H^{\s_\A}$ of any state in $Q_H$ is 1. Let us now assume that $Q_H \setminus T \neq \emptyset$. Since the strategy $\s_\A$ locally dominates the valuation $v'$, there exists a value $v'_H \in [0,1]$ such that all states in $H$ have that value w.r.t. the valuation $v'$: for all $q \in Q_H$, we have $v'(q) = v'_H$. Let $0 \leq i \leq k$ be such that $v'_H = v_i - \varepsilon_i$. If $i = k$, then $v'_H = 0$. Hence, let us assume that $i \leq k-1$. Consider the least index $n \in \N$ such that there is a state $q \in (Q_H \setminus T) \cap \mathcal{Q}^i_n$ which exists since $(Q_H \setminus T) \subseteq Q'_i = \mathcal{Q}^i_j$ for some $j \in \N$. Consider an action $b \in \beta(q)$. Consider the set of actions $A_b := \{ a \in \Supp(\s_\A(q)) \mid \delta(q,a,b) \in \distribSet^{\mathcal{Q}^i_n}_{\mathsf{Ex}} \}$. It is necessarily empty as an action in that set would lead to a Nature state leaving the EC $H$. Hence, by definition of the strategy $\s_\A(q)$ (and by Definition~\ref{def:epsilon_buchi_maximizable} of aBM game forms), there exists some $a \in \Supp(\s_\A(q))$ such that $p_T(\delta(q,a,b)) = 1$ 
	. Since $\delta(q,a,b) \in \distribSet^{Q_{v_i}}_{\mathsf{Lp}}$ and $\distribFunc(\delta(q,a,b))[\mathcal{Q}^i_{n-1}] = 0$ (by definition of $n$) it follows that $\distribFunc(\delta(q,a,b))[Q^{v_i} \cap T] > 0$\footnote{Informally, there is a positive probability to exit $\mathcal{Q}^i_{n}$, it is not via $Q \setminus Q^{v_i}$ and it is not via $\mathcal{Q}^i_{n-1} \subseteq Q^{v_i} \setminus T$, hence it must via $Q^{v_i} \cap T$.}.	Hence, regardless of the actions $b \in \beta(q)$ considered, there is probability at least $p > 0$ (for some well chosen $p$, for instance the least probability to reach $T$ over all possible actions) to reach the target $T$. Hence, for all Player $\B$ strategies $\s_\B$ in the game $\Aconc_H^{\s_\A}$, if the state $q$ is seen infinitely often, then the set $T$ is seen infinitely often almost-surely. This is the case for any state in $(Q_H \setminus T) \cap \mathcal{Q}^i_n$. With a similar argument, we can argue that, for all Player $\B$ strategies $\s_\B$ in the game $\Aconc_H^{\s_\A}$, if a state $q \in (Q_H \setminus T) \cap \mathcal{Q}^{i+1}_n$ is seen infinitely often, then the set $T \cup \mathcal{Q}^{i}_n$ is seen infinitely often almost-surely. That is, the set $T$ is seen infinitely often almost surely. In fact, we can show that this holds for any state in $(Q_H \setminus T) \cap \mathcal{Q}^{i}_j = Q_H \setminus T$. It follows that this also holds for any state in $Q_H$. That is, the value of the game $\Aconc_H^{\s_\A}$ is equal to 1 for all states $q \in Q_H$. We can then conclude by applying Lemma~\ref{lem:uniform_guarantee}.
	
	Now, consider some positive $0 < x \leq m/2$. We define a sequence of $0 = \varepsilon_0 < \varepsilon_1 < \varepsilon_2 < \ldots < \varepsilon_{k-1} \leq \varepsilon$ with $\varepsilon_{k} := 0$
	such that, for all probability function $p \in \Dist(\llbracket 0,k \rrbracket)$ and $j \in \llbracket 1,k \rrbracket$ with $p_j = 0$, we have the following implication, for $\alpha,\eta$ defined below:
	\begin{displaymath}
	\sum_{i = 0}^{k} p_i \cdot v_i \geq v_j - 
	x \Rightarrow \sum_{i = 0}^{k} p_i \cdot \varepsilon_i \leq \varepsilon_j - \varepsilon \cdot \alpha^2 \cdot \eta
	\end{displaymath}
	with:
	\begin{itemize}
		\item $\alpha := \max_{1 \leq i \leq k-1} \frac{1 - (v_i -
		x)}{1 - v_{i+1}} < 1$ by definition of $m$ and since $x \leq m/2$;
		\item $\eta := (1 - \alpha)^{k+1}/2 > 0$;
	\end{itemize}
	In that case, for all $1 \leq i \leq k-1$, we set:
	\begin{displaymath}
		\varepsilon_i := \varepsilon \cdot \alpha \cdot (\sum_{j = 0}^{i} (1-\alpha)^j + \eta) 
		= \varepsilon \cdot (1 - (1-\alpha)^{i+1} + \eta \cdot \alpha) \leq \varepsilon \cdot (1 - (1-\alpha)^{k+1} + \alpha \cdot \eta) \leq \varepsilon
	\end{displaymath}

	Consider some probability function $p \in \Dist(\llbracket 0,k \rrbracket)$ and $j \in \llbracket 1,k \rrbracket$ with $p_j = 0$. 
	Assume that the convex combination of values (i.e. the $v_i$) is at least $v_j - \varepsilon \cdot m/2$, that is:
	\begin{displaymath}
	\sum_{i = 0}^{k} p_i \cdot v_i \geq v_j - x
	\end{displaymath}
	Then, we have that the convex combination of the 'errors' (i.e. the $\varepsilon_i$) is at most $\varepsilon_j$. Indeed, let $p_+ := \sum_{i = 0}^{j-1} p_i$ and $p_- := \sum_{i = j+1}^{k} p_i = 1 - p_+$ (the $+$ and $-$ refer to the the comparison of the corresponding values with $v_j$). Then, we have: 
	\begin{align*}
		v_j - x
		& \leq \sum_{i = 0}^{k} p_i \cdot v_i = \sum_{i = 0}^{j-1} p_i \cdot v_i + \sum_{i = j+1}^{k} p_i \cdot v_i \\ & \leq \sum_{i = 0}^{j-1} p_i \cdot v_0 + \sum_{i = j+1}^{k} p_i \cdot v_{j+1} = p_+ + p_- \cdot v_{j+1} = (1 - p_-) + p_- \cdot v_{j+1}
	\end{align*}
	Overall, we obtain:
	\begin{displaymath}
	p_- \leq \frac{1 - (v_j - x
		)}{1 - v_{j+1}} \leq \alpha
	\end{displaymath}
	In the case where $j = 0$ and $p_+ = 0$, we simply have a contradiction since $p_- \leq \alpha < 1$ (this implies $p_j = 1 - \alpha > 0$ which is assumed equal to 0). Then, we obtain (note that $\varepsilon_{k-1} - \varepsilon_{j-1} \geq 0$):
	\begin{align*}
	\sum_{i = 0}^{k} p_i \cdot \varepsilon_i & = \sum_{i = 0}^{j-1} p_i \cdot \varepsilon_i + \sum_{i = j+1}^{k} p_i \cdot \varepsilon_i \leq \sum_{i = 0}^{j-1} p_i \cdot \varepsilon_{j-1} + \sum_{i = j+1}^{k} p_i \cdot \varepsilon_{k-1} \\
	& = p_+ \cdot \varepsilon_{j-1} + p_- \cdot \varepsilon_{k-1} = \varepsilon_{j-1} + p_- \cdot (\varepsilon_{k-1} - \varepsilon_{j-1}) \\
	& \leq \varepsilon_{j-1} + \alpha \cdot (\varepsilon_{k-1} - \varepsilon_{j-1}) = \alpha \cdot \varepsilon_{k-1} + (1 - \alpha) \cdot \varepsilon_{j-1} \\
	& \leq \alpha \cdot \varepsilon + (1- \alpha) \cdot \varepsilon \cdot \alpha \cdot (\sum_{l = 0}^{j-1} (1-\alpha)^l + \eta) \\ 
	& = \varepsilon \cdot (\alpha + (1 - \alpha) \cdot \alpha \cdot \sum_{l = 0}^{j-1} (1-\alpha)^l) + \varepsilon \cdot \alpha \cdot \eta \cdot (1 - \alpha) \\
	& = \varepsilon \cdot \alpha \cdot \sum_{l = 0}^{j} (1-\alpha)^l + \varepsilon \cdot \alpha \cdot \eta - \varepsilon \cdot \alpha^2 \cdot \eta 
	= \varepsilon_j - \varepsilon \cdot \alpha^2 \cdot \eta
	\end{align*}
	Hence, we obtain:
	\begin{displaymath}
	\sum_{i = 0}^{k} p_i \cdot \varepsilon_i \leq \varepsilon_j - \varepsilon \cdot \alpha^2 \cdot \eta
	\label{eqn:convex_combination_err_ok}
	\end{displaymath}
	
	In fact, the sequence $0 = \varepsilon_0 < \varepsilon_1 < \varepsilon_2 < \ldots < \varepsilon_{k-1} \leq \varepsilon$ with $\varepsilon_{k} := 0$ we consider in the one defined above for $x := \frac{\varepsilon \cdot m/2}{1-p} \leq m/2$ (by definition of $\varepsilon$). In that case, consider a probability function $p \in \Dist(\llbracket 0,k \rrbracket)$ and $j \in \llbracket 1,k \rrbracket$ with $p_j \leq p < 1$, such that:
	\begin{displaymath}
	\sum_{i = 0}^{k} p_i \cdot v_i \geq v_j - \varepsilon \cdot m/2
	\end{displaymath}
	We have $\sum_{\substack{i = 0, i \neq j}}^{k} p_i \cdot v_i \geq (1 - p_j) \cdot v_j - \varepsilon \cdot m/2$. Hence, considering $p'_i := \frac{p_i}{1 - p_j}$ for all $0 \leq i \leq k$ with $i \neq j$, and $p'_j = 0$, we have $p' \in \Dist(\llbracket 0,k \rrbracket)$ and $\sum_{i = 0}^{k} p_i' \cdot v_i \geq \cdot v_j - \frac{\varepsilon \cdot m/2}{1 - p_j} \geq v_j - \frac{\varepsilon \cdot m/2}{1 - p} = v_j - x$. That is, $\sum_{i = 0}^{k} p_i' \cdot \varepsilon_i \leq \varepsilon_j - \varepsilon \cdot \alpha^2 \cdot \eta$. We obtain:
	\begin{displaymath}
		\sum_{\substack{i = 0 \\ i \neq j}}^{k} p_i \cdot \varepsilon_i \leq (1 - p_j) \cdot \left(\varepsilon_j - \varepsilon \cdot \alpha^2 \cdot \eta \right) 
	\end{displaymath}
\end{proof}


\subsection{Infinite memory to play $\varepsilon$-optimal strategies in Büchi games}
\label{appen:buchi_varepsilon_infinite}
Consider the game of Figure~\ref{fig:local_optimal_not_uniformly}. This game is explained in \cite{AH00}. Assume that the sets of actions are $A = \{ a_1,a_2 \}$ and $B = \{ b_1,b_2 \}$ with $a_1$ corresponding to the top row and $b_1$ to the left column. 

Consider some positive probability $p > 0$ and consider a Player $\A$ strategy $\s_\A$ such that the probability to play $a_2$ is either 0 or at least $p$. Let us show that the value of such a strategy is 0. Specifically, consider a Player $\B$ strategy $\s_\B$ such that, for all $\rho \cdot q_0 \in Q^+$, we have:
\begin{equation*}
	\s_\B(\rho \cdot q_0) := \begin{cases}
		b_1 \quad &\text{if} \, \s_\A(\rho \cdot q_0)(a_2) = 0\\
		b_2 \quad &\text{if} \, otherwise \\
	\end{cases}
\end{equation*}
Each time $\s_\A(\rho \cdot q_0)(a_2) \geq p$, then the probability to reach the state $\bot$ is reached with probability at least $p$. Hence, if this happens infinitely often, then the state $\bot$ is seen with probability 1. Otherwise, the state $q_0$ is never left after some moment on. In both cases, the set $T$ is seen only finitely often. It follows that the value of the game with strategies $\s_\A,\s_\B$ is 0. 

Now, consider some $\varepsilon > 0$ let us exhibit a Player $\A$ strategy of value at least $1 - \varepsilon$. The idea is the following. Consider a sequence $(\varepsilon_k)_{k \in \N}$ of positive values such that $\lim_{n \rightarrow \infty}\Pi_{i = 0}^n (1 - \varepsilon_i) \leq 1 - \varepsilon$. Then, consider a Player $\A$ strategy such that $\s_\A(\rho)(a_1) := 1 - \varepsilon_k$ where $k \in \N$ denotes the number of times the state $\top$ is visited in $\rho$. Then, the state $q_0$ is seen indefinitely with probability 0. If the state $\top$ has been seen already $k$ times, then the probability to stay at state $q_0$ for $n$ steps is at most $(1 - \varepsilon_k)^n \rightarrow_{n \rightarrow \infty} 0$. Furthermore, the probability to ever reach the state $\bot$ is at most $\lim_{n \rightarrow \infty}\Pi_{i = 0}^n (1 - \varepsilon_i) \geq 1 - \varepsilon$. Overall, regardless of Player $\B$'s strategy, the probability to visit the set $T$ infinitely often is at least $1 - \varepsilon$. Note that such a sequence $(\varepsilon_k)_{k \in \N}$ could be equal, for instance, to $\varepsilon_k := 1 - (1 - \varepsilon)^{\frac{1}{2^{k+1}}}$. Then, for $n \in \N$:

\begin{align*}
	\Pi_{i = 0}^n (1 - \varepsilon_i) = (1 - \varepsilon)^{\sum_{i = 0}^n \frac{1}{2^{i+1}}} = (1 - \varepsilon)^{(1 - \frac{1}{2^{n+1}})} \rightarrow_{n \rightarrow \infty} 1 - \varepsilon
\end{align*}

\subsection{Proof of Proposition~\ref{prop:rm_ebm}}
\label{appen:ebm_rm_gf}

Let us first consider a proposition stating how we can translate a partial valuation and probability function (on which we consider a greatest fixed point nested with a least fixed point) into another partial valuation (on which we consider only a least fixed point) and obtain the same value 
(informally, we are transforming a Büchi game into a reachability game, and obtaining the same result as in Proposition~\ref{prop:buchi_equal_reach}).
\begin{proposition}
	Consider a game form $\formNF$, a partition of the outcomes $\outComeNF = \outComeNFEx \uplus \outComeNFLp$, a partial valuation $\alpha: \outComeNFEx \rightarrow [0,1]$ of the outcomes and a probability function $p_T: \outComeNFLp \rightarrow [0,1]$. Let $u_\Bu := u_{\formNF,\gf T \lf \bar{T}}(\alpha[p_T])$, $\outComeNF_{\bar{T}} := \{ o \in \outComeNFLp \mid p_T(o) = 0 \}$ and $\outComeNF_T := \outComeNFLp \setminus \outComeNF_{\bar{T}}$. Then, let $\alpha^{\mathsf{reach}}: \outComeNFEx \cup \outComeNF_T \rightarrow [0,1]$ be such that $\restriction{\alpha^{\mathsf{reach}}}{\outComeNFEx} := \alpha$ and for all $o \in \outComeNF_T$, we have $\alpha^{\mathsf{reach}}(o) := u_\Bu$. In that case, we have $u_{\formNF,\diamondsuit}(\alpha^{\mathsf{reach}}) = u_\Bu = u_{\formNF,\gf T \lf \bar{T}}(\alpha[p_T])$ and $(\alpha[p_T])_{\formNF,\gf T \lf \bar{T}} = \alpha^{\mathsf{reach}}_{\formNF,\lf}$.
	\label{prop:partial_buchi_into_partial_reach}
\end{proposition}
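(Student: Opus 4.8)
The plan is to work directly with the nested fixed points defining the two sides; an alternative, more game-theoretic route would realize $\formNF,\alpha,p_T$ as the only non-trivial interaction of $\G := \G^\Bu_{\formNF,\alpha,p_T}$ (whose value at $q_0$ is $u_\Bu$ by Proposition~\ref{prop:same_value_game_game_form}) and apply to $\G$ the B\"uchi-to-reachability transformation underlying Proposition~\ref{prop:buchi_equal_reach}, but I would favour the fixed-point route since it stays within the notation of this section. The first and crucial step is to establish the auxiliary identity $u_{\formNF,\lf\bar{T}}(\alpha[p_T(u_\Bu)]) = u_\Bu$. Setting $g := f_{\formNF,\bar{T}}\langle\alpha[p_T(u_\Bu)]\rangle$, the value $u_{\bar{T}}^{\ast} := u_{\formNF,\lf\bar{T}}(\alpha[p_T(u_\Bu)])$ is $\lfp(g)$, hence $g(u_{\bar{T}}^{\ast}) = u_{\bar{T}}^{\ast}$. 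On the other hand $u_\Bu$ is the greatest fixed point of $f_{\formNF,\lf\bar{T}}\langle\alpha[p_T]\rangle$, so unfolding the definitions gives $u_\Bu = f_{\formNF,\lf\bar{T}}\langle\alpha[p_T]\rangle(u_\Bu) = \va_{\gameNF{\formNF}{(\alpha[p_T(u_\Bu)])_{\formNF,\lf\bar{T}}}} = \va_{\gameNF{\formNF}{\alpha[p_T(u_\Bu\sqcup u_{\bar{T}}^{\ast})]}} = g(u_{\bar{T}}^{\ast}) = u_{\bar{T}}^{\ast}$, which is the identity. From it I would deduce $(\alpha[p_T])_{\formNF,\gf T\lf\bar{T}} = \alpha[p_T(u_\Bu\sqcup u_\Bu)]$, and since $p_T(o)\,u_\Bu + (1-p_T(o))\,u_\Bu = u_\Bu$ this is the total valuation sending each $o\in\outComeNFEx$ to $\alpha(o)$ and each $o\in\outComeNFLp$ to $u_\Bu$.

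Next I would prove $u_{\formNF,\lf}(\alpha^{\mathsf{reach}}) = u_\Bu$, i.e.\ that $\lfp(r) = u_\Bu$ for $r := f_\formNF\langle\alpha^{\mathsf{reach}}\rangle$. For the inequality $\lfp(r)\le u_\Bu$: the total valuation $\alpha^{\mathsf{reach}}[u_\Bu]$ sends each $o\in\outComeNFEx$ to $\alpha(o)$, each $o\in\outComeNF_T$ to $u_\Bu$, and each $o\in\outComeNF_{\bar{T}}$ to $u_\Bu$, so it coincides with $(\alpha[p_T])_{\formNF,\gf T\lf\bar{T}}$ computed above; hence $r(u_\Bu) = \va_{\gameNF{\formNF}{(\alpha[p_T])_{\formNF,\gf T\lf\bar{T}}}} = u_\Bu$, so $u_\Bu$ is a fixed point of $r$ and the least fixed point lies below it. For the inequality $\lfp(r)\ge u_\Bu$: I would argue by contradiction, assuming $\ell := \lfp(r) < u_\Bu$. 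On $[0,u_\Bu]$ one has $\alpha[p_T(u_\Bu\sqcup u)]\preceq\alpha^{\mathsf{reach}}[u]$ — the two valuations agree on $\outComeNFEx$ and on $\outComeNF_{\bar{T}}$, whereas on $o\in\outComeNF_T$ the left-hand value $p_T(o)\,u_\Bu + (1-p_T(o))\,u$ is at most $u_\Bu$, the right-hand value. Since the value of a game in normal form is non-decreasing in its outcome valuation, this gives $g(u)\le r(u)$ for all $u\in[0,u_\Bu]$, in particular $g(\ell)\le r(\ell) = \ell$; but $\ell < u_\Bu = \lfp(g)$, so $g(\ell)\le\ell$ is impossible by Proposition~\ref{prop:no_derase_before_lfp}, a contradiction. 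Hence $\ell = u_\Bu$, i.e.\ $u_{\formNF,\lf}(\alpha^{\mathsf{reach}}) = u_\Bu = u_{\formNF,\gf T\lf\bar{T}}(\alpha[p_T])$.

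The remaining valuation equality would then be immediate: $\alpha^{\mathsf{reach}}_{\formNF,\lf} = \alpha^{\mathsf{reach}}[u_{\formNF,\lf}(\alpha^{\mathsf{reach}})] = \alpha^{\mathsf{reach}}[u_\Bu] = (\alpha[p_T])_{\formNF,\gf T\lf\bar{T}}$, by the computation of the first step. The main obstacle I anticipate is the identity $u_{\bar{T}}^{\ast} = u_\Bu$: this is exactly where the ``a B\"uchi game collapses to a reachability game'' phenomenon is exploited, and it works precisely because the outer greatest-fixed-point equation feeds the inner least-fixed-point value back to itself unchanged. The rest — the pointwise domination $\alpha[p_T(u_\Bu\sqcup u)]\preceq\alpha^{\mathsf{reach}}[u]$ on $[0,u_\Bu]$, monotonicity of $\va$ in the outcome valuation, and the bookkeeping identifying the relevant total valuations — should be routine.
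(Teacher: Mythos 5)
Your proposal is correct and follows essentially the same route as the paper's proof: both rest on the pointwise comparison $\alpha[p_T(u_\Bu \sqcup u)] \preceq \alpha^{\mathsf{reach}}[u]$ for $u \le u_\Bu$ (with equality at $u = u_\Bu$), the observation that the inner least fixed point at $u_\Bu$ equals $u_\Bu$ itself, monotonicity of the value in the valuation, and the least-fixed-point property (Proposition~\ref{prop:no_derase_before_lfp}) to get the reverse inequality. The only cosmetic differences are that you prove the identity $u_{\formNF,\lf \bar{T}}(\alpha[p_T(u_\Bu)]) = u_\Bu$ explicitly where the paper merely asserts it, and you obtain $\lfp \ge u_\Bu$ by contradiction at the least fixed point rather than by showing $u < f_\formNF\langle\alpha^{\mathsf{reach}}\rangle(u)$ for all $u < u_\Bu$, which are equivalent formulations.
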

\begin{proof}
Let $u_\mathsf{reach} := u_{\formNF,\diamondsuit}(\alpha^{\mathsf{reach}})$. Consider some $u \leq u_\Bu$. For all $o \in \outComeNFEx$, we have $\alpha[p_T(u_\Bu \sqcup u)](o) = \alpha(o) = \alpha^\mathsf{reach}[u](o)$. Consider now some $o \in \outComeNFLp$. 
\begin{itemize}
	\item If $o \in \outComeNF_T$, we have: 
	\begin{displaymath}
	\alpha[p_T(u_\Bu \sqcup u)](o) = p_T(o) \cdot u_\Bu + (1 - p_T(o)) \cdot u \leq u_\Bu = \alpha^\mathsf{reach}[u](o)
	\end{displaymath}
	with equality if $u = u_\Bu$.
	\item If $o \in \outComeNF_{\bar{T}}$, then $p_T(o) = 0$. Hence:
	\begin{displaymath}
	\alpha[p_T(u_\Bu \sqcup u)](o) = p_T(o) \cdot u_\Bu + (1 - p_T(o)) \cdot u  = u = \alpha^\mathsf{reach}[u](o)
	\end{displaymath}
\end{itemize}
It follows that, for all $u \leq u_\Bu$, we have $\alpha[p_T(u_\Bu \sqcup u)] \preceq \alpha^\mathsf{reach}[u]$ with equality if $u = u_\Bu$. 

Let us now show that $u_\mathsf{reach} \leq u_\Bu$. We have $f_\formNF\langle \alpha^\mathsf{reach}\rangle(u_\Bu) = \va_{\gameNF{\formNF}{\alpha^\mathsf{reach}[u_\Bu]}} = \va_{\gameNF{\formNF}{\alpha[p_T(u_\Bu \sqcup u_\Bu)]}} =  f_{\formNF,\bar{T}}\langle \alpha[p_T(u_\Bu)]\rangle(u_\Bu) = u_\Bu$ since $u_\Bu$ is a fixed point (in fact, the least fixed point) of the function $f_{\formNF,\bar{T}}\langle \alpha[p_T(u_\Bu)]\rangle$. Hence, $u_\Bu$ is a fixed point of the function $f_\formNF\langle \alpha^\mathsf{reach}\rangle$, it is therefore greater than or equal to its least fixed point $u_\mathsf{reach}$. That is, $u_\mathsf{reach} \leq u_\Bu$.

Let us now show that $u_\Bu \leq u_\mathsf{reach}$. Let $u < u_\Bu$. Since $u_\Bu$ is the least fixed point of the function $f_{\formNF,\bar{T}}\langle \alpha[p_T(u_\Bu)]\rangle$, for all $u' < u_\Bu$, we have $u' < f_{\formNF,\bar{T}}\langle \alpha[p_T(u_\Bu)]\rangle(u')$. Hence:
\begin{align*}
	u < f_{\formNF,\bar{T}}\langle \alpha[p_T(u_\Bu)]\rangle(u)  = \va_{\gameNF{\formNF}{\alpha[p_T(u_\Bu \sqcup u)]}} \leq \va_{\gameNF{\formNF}{\alpha^\mathsf{reach}[u]}} = f_\formNF\langle \alpha^\mathsf{reach}\rangle(u)
\end{align*}
In fact, for all $u < u_\Bu$, we have $u < f_\formNF\langle \alpha^\mathsf{reach}\rangle(u)$. Hence, $u_\Bu$ is lower than or equal to the least fixed point of the function $f_\formNF\langle \alpha^\mathsf{reach}\rangle$, which is $u_\mathsf{reach}$. That is, $u_\Bu \leq u_\mathsf{reach}$.
	
	Then, it follows that $(\alpha[p_T])_{\formNF,\gf T \lf \bar{T}} = \alpha^{\mathsf{reach}}_{\formNF,\lf}$. 
\end{proof}

\begin{figure}
	\begin{minipage}[b]{0.35\linewidth}
		\hspace*{-1cm}
		\centering
		\includegraphics[scale=1]{GF_eBM_not_RM.pdf}
		\caption{A game form $\formNF$ that is aBM but not RM with $\outComeNF := \{ x,y,z \}$.}
		\label{fig:gf_eBM_not_BM}
	\end{minipage}
	\hspace{5pt}
	\begin{minipage}[b]{0.35\linewidth}
		\centering
		\includegraphics[scale=1]{GF_eBM_not_RM_valued.pdf}
		\caption{The game form $\formNF$ with a partial valuation witnessing that it is not RM.}
		\label{fig:gf_eBM_not_BM_valued}         
	\end{minipage}
	\begin{minipage}[b]{0.25\linewidth}
		\centering
		\includegraphics[scale=1]{GF_eBM_not_RM_valued_simplified.pdf}
		\caption{A simplification of the game form on the left.}
		\label{fig:gf_eBM_not_BM_valued_simplified}         
	\end{minipage}
\end{figure}

We can proceed to the proof of Proposition~\ref{prop:rm_ebm}.
\begin{proof}
	Let us first tackle the decidability of whether a game form is aBM. In \cite{BBSCSLarXiv}, it was shown that it is decidable if a game form is RM. It was done by encoding the RM property in the first order theory of the reals. The same method can be used for aBM game forms.
	
	Consider a game form $\formNF$ and assume that it is RM. Consider a partition $\outComeNF = \outComeNFEx \uplus \outComeNFLp$ and a partial valuation $\alpha: \outComeNFEx \rightarrow [0,1]$ of the outcomes, a probability function $p_T: \outComeNFLp \rightarrow [0,1]$ and assume that $u_{\formNF,\gf T \lf \bar{T}}(\alpha[p_T]) > 0$. Consider the partial valuation $\alpha^\mathsf{reach}$ from the previous Proposition~\ref{prop:partial_buchi_into_partial_reach}. Since the game form $\formNF$ is RM, there exists a reach maximizing strategy $\GF$-strategy $\sigma_\A$ in the game form $\formNF$ w.r.t. the partial valuation $\alpha^{\mathsf{reach}}$. One can then check that this $\GF$-strategy ensures the conditions of Definition~\ref{def:epsilon_buchi_maximizable} for all $\varepsilon > 0$ (this is direct from the definition of $\alpha^\mathsf{reach}$).
	
	 Let us now exhibit a game form that is aBM but not RM. Consider the game form depicted in Figure~\ref{fig:gf_eBM_not_BM}. Note that the colors have no semantics, they are only there to better grasp where every variable is. Let us first show that it is not RM. Consider the partial valuation $\alpha: \{y,z\} \rightarrow [0,1]$ such that $\alpha(z) := 1$ and $\alpha(t) := 0$. The resulting game form is depicted in Figure~\ref{fig:gf_eBM_not_BM_valued}. In fact, this can be 'simplified' as in Figure~\ref{fig:gf_eBM_not_BM_valued_simplified}. Indeed, not considering the bottom line of 0s, the bottom right $4 \times 4$ square can be seen as an outcome of value $1/4$: as both players can play uniformly on all rows and columns thus obtaining the value $1/4$. Similarly, the top right part can be seen as an outcome of value $1/2$, and so can the bottom left part. Then, one can see that the least fixed point of the function $f_\formNF\langle \alpha\rangle$ is $1/2$. Indeed, $1/2$ is a fixed point and for all $u < 1/2$, if Player $\A$ plays a $\GF$-strategy $\sigma_\A$ such that the first row is played with probability $2 \cdot u$ and the second row with probability $1 - 2 \cdot u$, then the outcome of the game is greater than $u$. That is, $f_\formNF\langle \alpha\rangle(u) > u$. Hence $1/2$ is the least fixed point of $f_\formNF\langle \alpha\rangle$. Then, one can see that there is no reach-maximizing $\GF$-strategy. To play optimally w.r.t. the valuation $\alpha[1/2]$, one has to play the first row with probability one, but then, in the first column, the only outcome in the support is $x$.
	 
	 Let us now show that this game form is aBM. Consider a partition of the outcomes $\outComeNF = \outComeNFEx \uplus \outComeNFLp$, a partial valuation $\alpha: \outComeNFEx \rightarrow [0,1]$ and a probability function $p_T: \outComeNFLp \rightarrow [0,1]$. Let $u := u_{\formNF,\gf T \lf \bar{T}}(\alpha[p_T])$ and $\outComeNF_T := \{ o \in \outComeNFLp \mid p_T(o) > 0 \}$. Assume that $u > 0$. We want to show that there exists a $\GF$-strategy such as in Definition~\ref{def:epsilon_buchi_maximizable} for all $\varepsilon > 0$. Let us call such a strategy an ok-strategy. First, one can see that if $y \in \outComeNF_T$ or if $y \in \outComeNFEx$ with $\alpha(y) \geq u$, then playing the bottom row with probability 1 is an ok-strategy for all $\varepsilon > 0$. Now, there are two possibilities left:
	 \begin{itemize}
	 	\item assume that $y \in \outComeNFEx$ and $\alpha(y) < u$. Then, consider the valuation $\alpha[p_T(u \sqcup u)]$. The value of the game with that valuation must be $u$ (as it is a fixed point). Consider the outcome of the game if Player $\B$ plays uniformly $1/4$ on the four rightmost columns: it must be at least $u$. This implies $z \in \outComeNFEx$ and $\frac{\alpha(y) +  \alpha(z)}{2} \geq u$. Now, if $x \notin \outComeNFLp \setminus \outComeNF_T$, any optimal strategy is an ok-strategy for all $\varepsilon > 0$. Hence, assume that $x \in \outComeNFLp \setminus \outComeNF_T$ and consider some $\varepsilon > 0$. The Player $\A$ strategy playing the first two rows with probability $(1 - \varepsilon)/2$ and the four middle rows with probability $\varepsilon/4$ is an ok-strategy for $\varepsilon$. 
	 	\item assume that $y \in \outComeNFLp \setminus \outComeNF_T$. Then, $z$ cannot be also in $\outComeNFLp \setminus \outComeNF_T$ since $u > 0$. Furthermore, if $z \in \outComeNFEx$ with $\alpha(z) < u$, we have a contradiction since the value of the game with valuation $\alpha[p_T(u \sqcup \alpha(z))]$ would be at most $\alpha(z) < u$ (consider a Player $\B$ $\GF$-strategy playing uniformly on the four rightmost columns). In fact, either $z \in \outComeNF_T$ or $z \in \outComeNFEx$ and $\alpha(z) \geq u$. In both cases, a Player $\A$ strategy playing the four middle rows with probability $1/4$ is an ok-strategy for all $\varepsilon > 0$.
	 \end{itemize}
 	Overall, we can conclude that the game form $\formNF$ is aBM but not RM.
\end{proof}

\section{Proofs from Section~\ref{sec:co-Buchi}}
\subsection{Playing optimally in the game of Figure~\ref{fig:co_buchi_infinite}}
\label{appen:infinite_memory_cobuchi}
We are given a probabilistic process such that, at each step either event $T$ or $\lnot T$ occur. Furthermore, at step $n \in \N$, the probability that the event $T$ occurs is equal to $\varepsilon_n$. We will denote by $\mathbb{P}$ the probability measure of the corresponding measurable sets. In the following, we will use the following notations:
\begin{itemize}
	\item for all $n \in \N$, $\mathsf{X}_n T$ refers to the event: the event $T$ occurs at step $n$.
	\item for all $n \in \N$: $\diamondsuit_n T := \cup_{k \geq n} (\mathsf{X}_k T)$ refers to the event: the event $T$ occurs at some point after step $n$.
	\item $\square \diamondsuit T := \cap_{n \in \N} (\diamondsuit_n T)$ refers to the event: the event $T$ occurs infinitely often.
\end{itemize}
An infinite path (which can be seen as an infinite sequence of elementary events) is winning for the Büchi objective if it corresponds to the event $\square \diamondsuit T$. We want to define a sequence $(\varepsilon_{k})_{k \in \N}$ such that, for all $k \in \N$ we have $\varepsilon_{k} > 0$ and $\mathbb{P}(\square \diamondsuit T) = 0$. In fact, it suffices to consider a sequence $(\varepsilon_{k})_{k \in \N}$ whose sum converges (i.e. such that $\sum_{n = 0}^{\infty} \varepsilon_{n} < \infty$), for instance $\varepsilon_{n} := \frac{1}{2^{n+1}}$ for all $n \in \N$. Indeed, for all $n \in \N$, we have:
\begin{displaymath}
\mathbb{P}(X_n T) = \varepsilon_n
\end{displaymath}
Furthermore:
\begin{displaymath}
\mathbb{P}(\diamondsuit_n T) = \mathbb{P}(\cup_{k \geq n} \mathsf{X}_k T) \leq \sum_{k = n}^\infty \mathbb{P}(\mathsf{X}_k T) = \sum_{k = n}^\infty \varepsilon_k
\end{displaymath}
It follows that:
\begin{displaymath}
\mathbb{P}(\square \diamondsuit T) = \mathbb{P}(\cap_{n \in \N} \diamondsuit_n T) = \lim\limits_{n \rightarrow \infty} \mathbb{P}(\diamondsuit_n T) \leq \lim\limits_{n \rightarrow \infty} \sum_{k = n}^\infty \varepsilon_k = 0
\end{displaymath}

\subsection{Definition of coBM game forms}
\label{proof:thm_coBM_necessary}
First, let us formally define co-BM game forms as the exact formalism we use is a little different from what was presented in the main part of paper. 
\begin{definition}[co-Büchi maximizable game forms complete definition]
	Consider a game form $\formNF$, a valuation of the outcomes $\alpha: \outComeNF \to [0,1]$, and two probability functions $p_\alpha,p_T: \outComeNF \rightarrow [0,1]$. Let $u := u_{\formNF,\lf T \gf \bar{T}}(\alpha[p_\alpha,p_T])$. Then, the game form $\formNF$ is \emph{co-Büchi maximizable}	w.r.t. $\alpha,p_\alpha,p_T$ if either $u = 0$ or there exists a $\GF$-strategy $\sigma_\A \in \Opt_\A(\gameNF{\formNF}{\alpha[p_\alpha,p_T(u,u)]})$ such that, for all
	$b \in \St_\B$ we have: if there is some $a \in \Supp(\sigma_\A)$ such that 
	$p_T(\outCNF(a,b)) > 0$ (i.e. if there is a red outcome), then there is some $a' \in \Supp(\sigma_\A)$ such that $p_\alpha(\outCNF(a',b)) > 0$ (i.e. then there is a green outcome). Such a  $\GF$-strategy is called co-Büchi maximizing.

	Then, the game form $\formNF$ is \emph{co-Büchi maximizable} if, for every valuation $\alpha: \outComeNF \rightarrow [0,1]$ and probability
	functions $p_\alpha,p_T: \outComeNF \rightarrow [0,1]$, it is co-Büchi maximizable	w.r.t. $\alpha,p_\alpha,p_T$.
	\label{def:co_buchi_maximizable_new}
\end{definition}

We can also define a co-Büchi game built from a game form, a valuation of the outcomes and two probability functions.
\begin{definition}
	Consider a game form $\formNF = \langle A,B,\outComeNF,\outCNF \rangle$, a valuation $\alpha: \outComeNF \rightarrow [0,1]$ and two probability functions $p_\alpha,p_T: \outComeNF \rightarrow [0,1]$. We define the game $\G^\coBu_{\formNF,\alpha[p_\alpha,p_T]} = \Games{\Aconc}{\coBu(T)}$ where $\Aconc := \AConc$ with:
	\begin{itemize}
		\item $Q := \{ q_0,q_T,q_{\bar{T}},\top,\bot \}$;
		\item $\distribSet := \{ d_{o} \mid o \in \outComeNF \} \cup \{ d_\top,d_\top \} \cup \{ d_{q_0} \}$;
		\item For all $a \in A$, $b \in B$, we have: $\delta(q_0,a,b) := d_{\outCNF(a,b)}$.
		Furthermore, $\delta(q_T,a,b) = \delta(q_{\bar{T}},a,b) = d_{q_0}$, $\delta(\top,a,b) = d_\top$ and $\delta(\bot,a,b) = d_\bot$;
		\item for all $o \in \outComeNF$, we have $\distribFunc(d_o) := \{ \top \mapsto \alpha(o) \cdot p_\alpha(o); \bot \mapsto (1 - \alpha(o)) \cdot p_\alpha(o); q_T \mapsto p_T(o) \cdot (1 - p_\alpha(o)); q_{\bar{T}} \mapsto (1 - p_T(o)) \cdot (1 - p_\alpha(o)) \}$. Furthermore, for all $q \in \{ \top,\bot,q_0 \}$, we have $\distribFunc(d_{q})(q) := 1$;
		\item $T := \{ q_T,\bot \}$.
	\end{itemize}
\end{definition}

We have a Proposition very similar to Proposition~\ref{prop:same_value_game_game_form} (the proof is analogous).
\begin{proposition}
	\label{prop:same_value_game_game_form_co_buchi}
	Consider a game form $\formNF = \langle A,B,\outComeNF,\outCNF \rangle$, a valuation $\alpha: \outComeNF \rightarrow [0,1]$ and two probability functions $p_\alpha,p_T: \outComeNF \rightarrow [0,1]$. Consider the co-Büchi game $\G := \G^\coBu_{\formNF,\alpha[p_\alpha,p_T]}$. Let $v := \MarVal{\G}$ be the valuation of the states in $\G$ giving the value of states. Then: $v(q_0) = u_{\formNF,\lf T \gf \bar{T}}(\alpha[p_T])$. 
\end{proposition}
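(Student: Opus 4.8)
The plan is to mirror exactly the proof of Proposition~\ref{prop:same_value_game_game_form}, which relates the value at $q_0$ in the Büchi game $\G^\Bu_{\formNF,\alpha,p_T}$ to the nested fixed-point value $u_{\formNF,\gf T \lf \bar{T}}(\alpha[p_T])$. The only differences are (i) the roles of least and greatest fixed points are swapped, since a co-Büchi game is the dual of a Büchi game (the value of states in $T$ is a least fixed point of an operator defined via a greatest fixed point, rather than the reverse), and (ii) the environment now uses a single valuation $\alpha$ together with two probability functions $p_\alpha,p_T$, so the total valuation of an outcome $o$ is $\alpha[p_\alpha,p_T(u_T \sqcup u_{\bar T})](o) = p_\alpha(o)\cdot\alpha(o) + (1-p_\alpha(o))\cdot\big(p_T(o)\cdot u_T + (1-p_T(o))\cdot u_{\bar T}\big)$. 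Accordingly I would first define, for $u_T \in [0,1]$, the valuation $v_T(u_T) \in [0,1]^T$ by $v_T(u_T)(\bot) := 1$ (note $\bot \in T$ here) and $v_T(u_T)(q_T) := u_T$, and for $u_{\bar T} \in [0,1]$ the valuation $v_{\bar T}(u_{\bar T}) \in [0,1]^{Q\setminus T}$ by $v_{\bar T}(u_{\bar T})(\top) := 0$ and $v_{\bar T}(u_{\bar T})(q_{\bar T}) = v_{\bar T}(u_{\bar T})(q_0) := u_{\bar T}$.

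The core computational step is the identity $\alpha[p_\alpha,p_T(u_T \sqcup u_{\bar T})] = \mu_{v_T(u_T)\sqcup v_{\bar T}(u_{\bar T})}\circ (o \mapsto d_o)$ as functions $\outComeNF \to [0,1]$, obtained by unfolding $\distribFunc(d_o)$: for each outcome $o$, $\mu_{v_T(u_T)\sqcup v_{\bar T}(u_{\bar T})}(d_o) = \alpha(o)p_\alpha(o)\cdot 1 + (1-\alpha(o))p_\alpha(o)\cdot 0 + p_T(o)(1-p_\alpha(o))\cdot u_T + (1-p_T(o))(1-p_\alpha(o))\cdot u_{\bar T}$, which is exactly the defining expression of $\alpha[p_\alpha,p_T(u_T\sqcup u_{\bar T})](o)$. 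From this I would deduce, fixing $u_T$, that $\OneStepSet{T}{v_T(u_T)}(v_{\bar T}(u_{\bar T})) = v_{\bar T}\big(f_{\formNF,\bar T}\langle \alpha[p_\alpha,p_T(u_T)]\rangle(u_{\bar T})\big)$ — checking the three components $\top$ (both $0$), $q_{\bar T}$, and $q_0$, the latter two using that $\formNF_{q_{\bar T}} = \formNF_{q_0} = \formNF$ by construction of $\G^\coBu_{\formNF,\alpha[p_\alpha,p_T]}$. Taking the \emph{greatest} fixed point over $u_{\bar T}$ then gives $v_{\gf \bar T}[v_T(u_T)] = v_{\bar T}(u_{\bar T})$ for $u_{\bar T} := u_{\formNF,\gf \bar T}(\alpha[p_\alpha,p_T(u_T)])$, and hence $\alpha[p_\alpha,p_T(u_T)]$ extended by this greatest fixed point equals $\mu_{v_T(u_T)\sqcup v_{\bar T}(u_{\bar T})}\circ(o\mapsto d_o)$.

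Next, I would show $v_T\big(f_{\formNF,\gf\bar T}\langle\alpha[p_\alpha,p_T]\rangle(u_T)\big) = \OneStepGFP{T}(v_T(u_T))$ for all $u_T$: the $\bot$-component is $1$ on both sides, and the $q_T$-component equals $\va_{\gameNF{\formNF}{\mu_{v_T(u_T)\sqcup v_{\bar T}(u_{\bar T})}}} = \OneStep(v_T(u_T)\sqcup v_{\gf\bar T}[v_T(u_T)])(q_T) = \OneStepGFP{T}(v_T(u_T))(q_T)$ using the previous step. Taking now the \emph{least} fixed point over $u_T$ and setting $u := u_{\formNF,\lf T\gf\bar T}(\alpha[p_\alpha,p_T])$, I get $v_T(u) = v_{\lf T\gf\bar T}$, and by the co-Büchi value theorem (Theorem 2 in~\cite{AM04b}, recalled in Appendix) the value vector of $\G$ is $v = v_{\lf T\gf\bar T}\sqcup v_{\gf\bar T}[v_{\lf T\gf\bar T}] = v_T(u)\sqcup v_{\bar T}(u)$, whence $v(q_0) = v_{\bar T}(u)(q_0) = u = u_{\formNF,\lf T\gf\bar T}(\alpha[p_\alpha,p_T])$, as claimed. (The statement in the excerpt writes this as $u_{\formNF,\lf T\gf\bar T}(\alpha[p_T])$, matching the notation $u^{\coBu(T)}_{\formNF,\alpha,p_T}$ introduced in Appendix~\ref{subsubsec:gf_in_cobuchi}.)

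I do not expect any real obstacle: as the authors themselves note, the result is ``quite obvious'' once one matches up the fixed-point definitions, and everything is a bookkeeping verification. The one place to be careful is keeping the duality straight — in the co-Büchi game it is $\bot$ (value $0$) that lies in the target $T$ and $\top$ (value $1$) that lies outside, so $v_T$ assigns $1$ to $\bot$ and $v_{\bar T}$ assigns $0$ to $\top$, which is the mirror image of the Büchi case. The other minor point is ensuring that $q_T$, $q_{\bar T}$, $q_0$ all carry the local interaction $\formNF$ in $\G^\coBu_{\formNF,\alpha[p_\alpha,p_T]}$ (they do, since $\delta(q_0,a,b) = d_{\outCNF(a,b)}$ and the states $q_T,q_{\bar T}$ deterministically loop back to $q_0$, so their ``interaction'' is trivial with outcome valuation constant equal to the value of $q_0$), so that applying $\OneStep$ at these states reduces to evaluating $\va_{\gameNF{\formNF}{\mu_{\cdot}}}$. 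With these caveats the computation goes through verbatim as in Proposition~\ref{prop:same_value_game_game_form}, and I would simply say ``the proof is analogous'' and spell out only the dualized fixed-point chain.
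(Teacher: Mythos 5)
Your overall plan is exactly the intended one (the paper itself only says ``the proof is analogous'' to the Büchi case), and the skeleton — lift identity, invariance of the one-parameter slices under $\OneStepSet{T}{v_T(u_T)}$ and $\OneStepGFP{T}$, then Theorem 2 of \cite{AM04b} — is right. However, your treatment of the boundary states, which you yourself single out as ``the one place to be careful'', is backwards, and as written it breaks the proof. In $\G^\coBu_{\formNF,\alpha[p_\alpha,p_T]}$ the state $\bot$ is absorbing and belongs to $T=\{q_T,\bot\}$, so once reached the target is visited at every step and the co-Büchi objective fails: $\MarVal{\G}(\bot)=0$. Dually $\top$ is absorbing and outside $T$, so $\MarVal{\G}(\top)=1$. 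Hence the correct definitions are $v_T(u_T)(q_T):=u_T$, $v_T(u_T)(\bot):=0$ and $v_{\bar T}(u_{\bar T})(\top):=1$, $v_{\bar T}(u_{\bar T})(q_0)=v_{\bar T}(u_{\bar T})(q_{\bar T}):=u_{\bar T}$ — i.e.\ the values of $\top$ and $\bot$ are the \emph{same} as in the Büchi construction; the duality only moves $\bot$ into $T$ and $\top$ out of it, and swaps which fixed point is taken on which block. With your stated assignments ($\bot\mapsto 1$, $\top\mapsto 0$) the central identity fails: unfolding $\distribFunc(d_o)$ would give an exit contribution $(1-\alpha(o))\cdot p_\alpha(o)$ instead of $\alpha(o)\cdot p_\alpha(o)$, so $\mu_{v_T\sqcup v_{\bar T}}(d_o)\neq\alpha[p_\alpha,p_T(u_T\sqcup u_{\bar T})](o)$ in general; and the final identification with $\MarVal{\G}$ collapses, because the genuine least fixed point $v_{\lf T\gf\bar T}$ of $\OneStepGFP{T}$ on $[0,1]^T$ has $\bot$-component $0$ (and $v_{\gf\bar T}[v_T]$ has $\top$-component $1$), so the value vector of the game does not lie in your slices and your check ``the $\bot$-component is $1$ on both sides'' is not the one needed.

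The inconsistency is visible in your own text: the displayed computation of $\mu_{v_T(u_T)\sqcup v_{\bar T}(u_{\bar T})}(d_o)$ already plugs in $1$ for $\top$ and $0$ for $\bot$, contradicting the definitions in your first and last paragraphs. So the repair is local: correct the two boundary values (and the corresponding component checks, which become ``$0$ on both sides'' at $\bot$ for the operator on $T$, and ``$1$ on both sides'' at $\top$ for the operator on $Q\setminus T$), after which the dualized fixed-point chain goes through verbatim as in the Büchi proposition and yields $v(q_0)=u_{\formNF,\lf T\gf\bar T}(\alpha[p_\alpha,p_T])$.
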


We  obtain a lemma similar to Lemma~\ref{lem:varepsilon_buchi_necessary}.
\begin{lemma}
	Consider a game form $\formNF$, a valuation $\alpha: \outComeNF \rightarrow [0,1]$ and two probability functions $p_\alpha,p_T: \outComeNF \rightarrow [0,1]$. The game form $\formNF$ is co-Büchi maximizable w.r.t. $\alpha,p_\alpha,p_T$ if and only if there is a positional optimal strategy from the
	initial state $q_0$ in the game $\G^\coBu_{\formNF,\alpha[p_\alpha,p_T]}$.
	\label{prop:co_buchi_necessary_other}
\end{lemma}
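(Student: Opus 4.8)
The plan is to follow the proof of Lemma~\ref{lem:varepsilon_buchi_necessary} essentially verbatim, transposed to the co-Büchi setting. Throughout, write $\G := \G^\coBu_{\formNF,\alpha[p_\alpha,p_T]}$, $v := \MarVal{\G}$, and $u := u_{\formNF,\lf T \gf \bar{T}}(\alpha[p_\alpha,p_T])$, so that $v(q_0)=u$ by Proposition~\ref{prop:same_value_game_game_form_co_buchi}; moreover $u$ is a fixed point of the nested operators of Appendix~\ref{subsubsec:gf_in_cobuchi}, which forces $\va_{\gameNF{\formNF}{\alpha[p_\alpha,p_T(u \sqcup u)]}}=u$. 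The bridge between the combinatorial condition and the game is the identity $\mu_w(d_o)=\alpha(o)\,p_\alpha(o)+(1-p_\alpha(o))\,u=\alpha[p_\alpha,p_T(u \sqcup u)](o)$ for every outcome $o$, where $w$ is the valuation of $Q=\{q_0,q_T,q_{\bar T},\top,\bot\}$ given by $w(\top)=1$, $w(\bot)=0$, $w(q_0)=w(q_T)=w(q_{\bar T})=u$; it says that the local game in normal form at $q_0$ valued by $\mu_w$ is exactly $\gameNF{\formNF}{\alpha[p_\alpha,p_T(u,u)]}$. Since $\coBu(T)$ is prefix-independent and its complement $\Bu(T)$ admits positional optimal strategies in every finite MDP, Lemma~\ref{lem:uniform_guarantee} is applicable.

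\emph{Sufficiency.} Assume $\formNF$ is coBM w.r.t.\ $\alpha,p_\alpha,p_T$. If $u=0$, any positional strategy is optimal from $q_0$. Otherwise fix a co-Büchi maximizing $\sigma_\A \in \Opt_\A(\gameNF{\formNF}{\alpha[p_\alpha,p_T(u,u)]})$ and let $\s_\A$ be positional, playing $\sigma_\A$ at $q_0$ and an arbitrary (say locally optimal) $\GF$-strategy elsewhere. By the identity above together with optimality of $\sigma_\A$, we get $\outM_{\gameNF{\formNF_{q_0}}{\mu_w}}(\sigma_\A,b) \ge u = w(q_0)$ for all $b \in \St_\B$; the checks at $\top,\bot,q_T,q_{\bar T}$ are immediate, so $\s_\A$ locally dominates $w$. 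For the end-component hypothesis of Lemma~\ref{lem:uniform_guarantee}: $\{\bot\}$ has $w$-value $0$, $\{\top\}$ trivially has sub-game value $1$, and any other EC $H$ of the induced MDP contains $q_0$ and satisfies $\min_{q\in H}w(q)=u>0$. For $b \in \beta(q_0)$, $\Supp(\iota(q_0,b)) \subseteq Q_H$ cannot meet the sinks $\{\top,\bot\}$, so no support outcome is ``green'' ($p_\alpha>0$); then the coBM property of $\sigma_\A$ forbids any support outcome from being ``red'' ($p_T>0$), hence $\iota(q_0,b)$ is the Dirac on $q_{\bar T}$. Consequently, inside $\Aconc_H^{\s_\A}$ neither $q_T$ nor $\bot$ is ever visited, $\coBu(T)$ holds almost surely, and the sub-game value is $1$. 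Lemma~\ref{lem:uniform_guarantee} then yields $\MarVal{\G}[\s_\A](q_0) \ge w(q_0)=u=v(q_0)$, i.e.\ $\s_\A$ is optimal from $q_0$.

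\emph{Necessity.} Let $\s_\A$ be positional and optimal from $q_0$, and set $\sigma_\A := \s_\A(q_0)$; assume $u=v(q_0)>0$ (otherwise there is nothing to prove). Fixing $\s_\A$ turns $\G$ into an MDP; let $w$ be its value function, so $w(q_0)=u$, $w(\top)=1$, $w(\bot)=0$, and $w(q_T)=w(q_{\bar T})=w(q_0)=u$ since those states move deterministically to $q_0$. For each $b \in \St_\B$, comparing with the Player $\B$ strategy that plays $b$ at $q_0$ and ($\varepsilon$-)optimally afterwards, minimality of the value gives $u=w(q_0) \le \sum_{q'}\iota(q_0,b)(q')\,w(q') = \outM_{\gameNF{\formNF_{q_0}}{\mu_w}}(\sigma_\A,b)$ by Proposition~\ref{prop:outcome_valuation}; by the identity above this reads $\outM_{\gameNF{\formNF}{\alpha[p_\alpha,p_T(u,u)]}}(\sigma_\A,b) \ge u$ for all $b$, so $\sigma_\A \in \Opt_\A(\gameNF{\formNF}{\alpha[p_\alpha,p_T(u,u)]})$. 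Finally, if for some $b$ there were $a \in \Supp(\sigma_\A)$ with $p_T(\outCNF(a,b))>0$ but no $a' \in \Supp(\sigma_\A)$ with $p_\alpha(\outCNF(a',b))>0$, then against the positional strategy playing $b$ at $q_0$ the play stays in $\{q_0,q_T,q_{\bar T}\}$, returns to $q_0$ forever, and at each such return enters $q_T \in T$ with a fixed positive probability; hence $q_T$ is visited infinitely often almost surely and $\s_\A$ would have value $0$ from $q_0$, contradicting $u>0$. Thus $\sigma_\A$ is co-Büchi maximizing (Definition~\ref{def:co_buchi_maximizable_new}) and $\formNF$ is coBM w.r.t.\ $\alpha,p_\alpha,p_T$.

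The main obstacle is the bookkeeping linking the combinatorial coBM definition to the analytic value of $\G$: deriving $\va_{\gameNF{\formNF}{\alpha[p_\alpha,p_T(u,u)]}}=u$ from the nested least/greatest fixed-point characterization of $u$, and keeping the end-component argument airtight — in particular noticing that a ``green'' outcome necessarily exits the loop $\{q_0,q_T,q_{\bar T}\}$ even though it may lead to the losing sink $\bot$, which is exactly what makes coBM (rather than a stronger notion demanding a winning exit) the right condition here.
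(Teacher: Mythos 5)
Your proof is correct and follows essentially the paper's route: both rest on identifying the local interaction at $q_0$ valued by the (strategy/game) value with $\gameNF{\formNF}{\alpha[p_\alpha,p_T(u,u)]}$ and on the characterization of optimality via local optimality plus end-component values (Lemmas~\ref{lem:uniform_guarantee}/\ref{lem:uniformly_optimal}), with the same end-component analysis showing that compatible actions at $q_0$ have no ``green'' hence, by coBM, no ``red'' outcomes. The only cosmetic difference is that the paper runs both directions at once through the if-and-only-if of Lemma~\ref{lem:uniformly_optimal} (noting a positional strategy optimal from $q_0$ is automatically uniformly optimal here), whereas you prove sufficiency via Lemma~\ref{lem:uniform_guarantee} and necessity by a direct Bellman-inequality and infinitely-often-$q_T$ contradiction, mirroring the paper's own proof of Lemma~\ref{lem:varepsilon_buchi_necessary}.
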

\begin{proof}
	Consider a game form $\formNF$, a valuation $\alpha: \outComeNF \rightarrow [0,1]$, two probability functions $p_\alpha,p_T: \outComeNF \rightarrow [0,1]$ and the game $\G := \G^\coBu_{\formNF,\alpha[p_\alpha,p_T]}$. Let $v := \MarVal{\G}$ be the valuation in $\G$ giving the value of the states. Let $u := u_{\formNF,\lf T \gf \bar{T}}(\alpha[p_\alpha,p_T])$ and assume that $u > 0$. By applying Proposition~\ref{prop:same_value_game_game_form_co_buchi}, 
	we can show that $\alpha[p_\alpha,p_T(u,u)] = \mu_v \circ g$ for $g: \outComeNF \rightarrow \distribSet$ such that for all $o \in \outComeNF$, we have $g(o) := d_o$. 
	
	Let us now show that the game form $\formNF$ is co-Büchi maximizable w.r.t. $\alpha,p_\alpha,p_T$ if and only if there is a positional optimal strategy from the
	initial state $q_0$ in the game $\G^\coBu_{\formNF,\alpha[p_\alpha,p_T]}$. Note that such a strategy would be uniformly optimal. Furthermore, it is completely characterized by the $\GF$-strategy $\s_\A(q_0)$ that is played at the local interaction $\formNF_{q_0}$ (since what is played at states $q_T,q_{\bar{T}},\top,\bot$ does not affect the game). Let us now apply Lemma~\ref{lem:uniformly_optimal}. Such a Player $\A$ strategy $\s_\A$ is uniformly optimal if and only if:
	\begin{itemize}
		\item it is locally optimal. Since $\alpha[p_\alpha,p_T(u,u)] = \mu_v \circ g$, this is equivalent to the $\GF$-strategy $\s_\A(q_0)$ being optimal in the game form $\formNF$ w.r.t. the valuation $\alpha[p_\alpha,p_T(u,u)]$: $\s_\A(q_0) \in \Opt_\A(\gameNF{\formNF}{\alpha[p_\alpha,p_T(u,u)]})$.
		\item for all EC $H = (Q_H,\beta)$ in the MDP induced by the strategy $\s_\A$, if $v_H > 0$, all states $q \in Q_H$ have value 1 in the game $\Aconc_H^{\s_\A}$. In such an EC $H$, if $q_0 \notin Q_H$, then this straightforwardly holds. Assume now that we have $q_0 \in H$ and let $b \in \beta(q_0)$. Since the action $b$ is compatible with the EC $H$ and that the states $\top,\bot$ cannot be in $H$ since $q_0$ is not reachable from these states, is follows that for all $a \in \Supp(\s_\A(q_0))$, the probability to see either $\top$ of $\bot$ is 0. That is, $p_\alpha(\outCNF(a,b)) = 0$. Then the value of the state $q_0$ is not 0 if in the game $\Aconc_H^{\s_\A}$, for all actions $b \in \beta(q_0)$, there is a positive probability to see the state $q_T \in T$
		. That is, for all $a \in \Supp(\s_\A(q_0))$, we have $p_T(\outCNF(a,b)) > 0$. 
	\end{itemize} 
	Overall, we have that there is a uniformly optimal strategy in the game $\G$ if and only if there is a co-Büchi maximizing $\GF$ strategy in $\formNF$ w.r.t. $\alpha,p_\alpha,p_T$.
\end{proof}

\subsection{Proof of Theorem~\ref{lem:co_buchi_sufficient}}
\label{proof:thm_coBM_sufficient}
Consider Definition~\ref{def:co_buchi_maximizable_new} of coBM game forms. 
Before proving Theorem~\ref{lem:co_buchi_sufficient}, let us first define some valuations and probability functions in co-Büchi games.

\begin{definition}
	Consider a co-Büchi game $\G = \Games{\Aconc}{\coBu(T)}$ whose value is given by the valuation $v := \MarVal{\G}: Q \rightarrow [0,1]$. Let $u \in [0,1]$ be some value and consider an area of the game $\emptyset \neq Q_u \subseteq Q$ such that, for all $q \in Q_u$, we have $v(q) = u$. Let $Q_u^\gf := Q_u \setminus T$ and $Q_u^\lf := Q_u \cap T$. 
	
	We consider the valuation $\alpha^{Q_u}: \distribSet \rightarrow [0,1]$ of the outcomes such that, for all $d \in \distribSet$, we have: 
	\begin{equation*}
	\alpha^{Q_u}(d) := \begin{cases}
	\frac{1}{\distribFunc(d)[Q \setminus Q_u]} \sum_{q \in Q \setminus Q_u} \distribFunc(d)(q) \cdot v(q) \quad &\text{if } \distribFunc(d)[Q \setminus Q_u] \neq 0 \\
	0 \quad &\text{if } \distribFunc(d)[Q \setminus Q_u] = 0 \\
	\end{cases}
	\end{equation*}

	Furthermore, we let $p_{\alpha^{Q_u}}: \distribSet \rightarrow [0,1]$ be such that $p_{\alpha^{Q_u}}(d) := \distribFunc(d)[Q \setminus Q_u]$ for all $d \in \distribSet$. In addition, for all $S \subseteq Q_u$, we set 	 $p_{T}[S]: \distribSet \rightarrow [0,1]$ as
	\begin{equation*}
	p_{T}[S](d) := \begin{cases}
	\frac{1}{1 - p_{\alpha^{Q_u}}(d)} \distribFunc(d)[S] \quad &\text{if } p_{\alpha^{Q_u}}(d) \neq 1 \\
	0 \quad &\text{if } p_{\alpha^{Q_u}}(d) = 1 \\
	\end{cases}
	\end{equation*}
	\label{def:for_co_buchi_proof}
\end{definition}

Let us now state and prove a crucial lemma. This can be seen as a generalization of Lemma~\ref{lem:value_in_RM_up_to_u} in the context of co-Büchi games.
\begin{lemma}
	Consider a co-Büchi game $\G = \Games{\Aconc}{\coBu(T)}$ whose value is given by the valuation $v := \MarVal{\G}: Q \rightarrow [0,1]$. Let $u \in [0,1]$ be some value and consider an area of the game $\emptyset \neq Q_u \subseteq Q$ such that, for all $q \in Q_u$, we have $v(q) = u$. In addition, let $\distribSet_{\mathsf{Lp}} := \{ d \in \distribSet \mid p_{\alpha^{Q_u}}(q) = 0 \}$, $\distribSet_{\mathsf{Ex}} := \distribSet \setminus \distribSet_{\mathsf{Lp}}$ and $\alpha: \distribSet_{\mathsf{Ex}} \rightarrow [0,1]$ be a partial valuation of the Nature states such that $\alpha(d) := \mu_{v}(d)$ for all $d \in \distribSet_{\mathsf{Ex}}$. Let $\alpha_u := \alpha^{Q_u}$, $p_u := p_{\alpha^{Q_u}}$ and $p_T^S := p_T[Q_u \setminus S]$ for all $S \subseteq Q_u$.
	
	Then, assume that, for all $q \in Q_u^\lf$ we have $u_{\formNF_q,\lf}[\alpha] < u$. Let us define the sequence $(W_k[Q_u])_{k \in \N} \in (\mathcal{P}(Q_u))^\N$ by $W_0[Q_u] :=  Q_u^{\gf}$ and $W_{k+1}[Q_u] := \{ q \in W_k[Q_u] \mid u_{\formNF_{q},\lf T \gf \bar{T}}(\alpha_u[p_u,p_T^{W_k[Q_u]}]) \geq u \}$. Then, for all $k \geq 0$, we have $W_k[Q_u] \neq \emptyset$.
	\label{lem:crucial_co_buchi}
\end{lemma}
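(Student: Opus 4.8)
The plan is to argue by contradiction, following (and lifting by one fixed‑point level) the reachability argument underlying Lemma~\ref{lem:value_in_RM_up_to_u}. Throughout I write $W_k$ for $W_k[Q_u]$ and $p_T^{W_k}$ for $p_T[Q_u\setminus W_k]$. Since $W_{k+1}\subseteq W_k$ by definition and the recursion is deterministic, $(W_k)_k$ is a non‑increasing chain in the finite set $Q_u$; so if some $W_k$ were empty there would be a least $N$ with $W_N=\emptyset$, and then $W_0\supsetneq W_1\supsetneq\cdots\supsetneq W_{N-1}\supsetneq W_N=\emptyset$ is a strictly decreasing chain (if two consecutive sets coincided the recursion would stabilise on a nonempty set and never reach $\emptyset$). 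I would first record, by a routine localisation argument (the states outside $Q_u$ already carry their value $v$; $\alpha_u$ and $p_u$ summarise one step leaving $Q_u$; $W$ is prefix‑independent; cf.\ Proposition~\ref{prop:same_value_game_game_form_co_buchi}), that $v$ is constantly $u$ on $Q_u$ and that $v|_{Q_u\cap T}$ is the \emph{least} fixed point of the co‑Büchi operator $\Theta:[0,1]^{Q_u\cap T}\to[0,1]^{Q_u\cap T}$ associated with the internal dynamics of $Q_u$ (outside values frozen at $v$): $\Theta$ fixes the valuation on $Q_u\cap T$, takes the greatest fixed point of the induced operator on $Q_u\setminus T$, and then applies one step $\OneStep$, exactly as in the co‑Büchi analogue of Theorem~\ref{thm:value_buchi}.

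I would then build a valuation $\bar v$ on $Q$ with $\bar v=v$ off $Q_u$ and $\bar v\prec v$ on $Q_u$ that is a post‑fixed point of $\Theta$, contradicting Proposition~\ref{prop:no_derase_before_lfp}. Take $\bar v(q):=u-\eta$ on $Q_u$ for a small $\eta>0$; I must check $\Theta(\bar v|_{Q_u\cap T})\preceq\bar v|_{Q_u\cap T}$. This splits in two parts. First, using $W_N=\emptyset$ and an inner induction along the filtration $W_0\supsetneq\cdots\supsetneq W_{N-1}$ — invoking at stage $k$ the hypothesis $u_{\formNF_q,\lf T\gf\bar T}(\alpha_u[p_u,p_T^{W_k}])<u$ for the states $q$ dropped at that stage — I would show that the greatest‑fixed‑point layer on $Q_u\setminus T$ obtained by freezing $\bar v$ on $Q_u\cap T$ and $\alpha_u,p_u$ on the exits stays \emph{uniformly strictly below} $u$; this is where the extra fixed‑point level of co‑Büchi (compared with the reachability case of Lemma~\ref{lem:value_in_RM_up_to_u}) appears, and where the non‑decreasingness and $1$‑Lipschitz properties of Proposition~\ref{prop:properties_delta} are used. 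Second, using the hypothesis $u_{\formNF_q,\lf}[\alpha]<u$ for $q\in Q_u^\lf$ — which yields, for each such $q$, a loop value $\ell_q<u$ with $\va_{\gameNF{\formNF_q}{\alpha[\ell_q]}}\le\ell_q$ — I would show that applying $\OneStep$ to the valuation equal to $\bar v$ on $Q_u\cap T$, to that gfp‑layer on $Q_u\setminus T$, and to $v$ outside $Q_u$ (all $\le u$, and on $Q_u$ below a common bound once $\eta$ is small) gives at each $q\in Q_u\cap T$ a value $\le u-\eta$, provided $\eta$ is chosen uniformly below $u-\max_q\ell_q$ and below the gap produced in the first part. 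The degenerate case $N=0$, i.e.\ $Q_u\subseteq T$, is precisely the second part with an empty gfp‑layer.

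Combining the two parts, $\bar v|_{Q_u\cap T}$ is a post‑fixed point of $\Theta$ lying strictly below its least fixed point $v|_{Q_u\cap T}$, contradicting Proposition~\ref{prop:no_derase_before_lfp}; hence no $W_k$ is empty. The main obstacle is the first part: because the values on $Q_u\setminus T$ are a \emph{greatest} fixed point, a downward perturbation need not propagate on its own, so one must genuinely use the collapsing filtration $W_0\supsetneq W_1\supsetneq\cdots$ (and the assumption that it reaches $\emptyset$) to force the gfp‑layer below $u$ on every shell $W_k\setminus W_{k+1}$, while carefully tracking the conditional probabilities $p_u,p_T^{W_k}$ and the cascade of error terms $\eta$ — the same quantitative bookkeeping that is the crux of Lemma~\ref{lem:value_in_RM_up_to_u}, here with one additional layer.
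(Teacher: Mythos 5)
Your overall skeleton is the same as the paper's: argue by contradiction, freeze on $Q_u\cap T$ a value strictly below $u$ (the paper uses an explicit threshold $x:=\max\bigl(\max_{q\in Q_u^\lf}u_{\formNF_q,\lf}[\alpha],\,\max_{k}\max_{q\in W_k\setminus W_{k+1}}u_{\formNF_q,\lf T\gf\bar T}(\alpha_u[p_u,p_T^{W_k}])\bigr)<u$ rather than your $u-\eta$), bound the inner greatest-fixed-point layer on $Q_u\setminus T$ by that threshold, apply one step at the states of $Q_u\cap T$, and contradict Proposition~\ref{prop:no_derase_before_lfp} using the fact that the constant-$u$ valuation on $Q_u\cap T$ is the least fixed point of the localized operator with outside values frozen at $v$ (the paper, like you, treats this localization tersely; it also handles $W_0\neq\emptyset$ separately via Lemma~\ref{lem:value_in_RM_up_to_u}, which your ``empty gfp-layer'' reading replaces harmlessly).

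The genuine gap is exactly at the step you call the crux. An ``inner induction along the filtration, invoking at stage $k$ the hypothesis for the states dropped at stage $k$'' does not close as stated: the value of a state $q\in W_k\setminus W_{k+1}$ depends, through the loop part of $p_T^{W_k}$ (i.e.\ the Nature states staying inside $W_k$), on the values of states in \emph{deeper} shells, so bounding shell $k$ presupposes bounds on later shells and vice versa -- a shell-by-shell induction is circular. The paper resolves this not by induction but by a maximum argument treating all shells simultaneously: if a fixed point $w$ of the frozen inner operator had maximum $M$ above the threshold, take the \emph{smallest} $k$ such that $M$ is attained at some $q\in W_k\setminus W_{k+1}$; then every state outside $W_k$ carries $w$-value at most $\max(x,m)$ with $m<M$, whence $\mu_w\preceq\alpha_u[p_u,p_T^{W_k}(\max(x,m)\sqcup M)]$, and since $u_{\formNF_q,\lf T\gf\bar T}(\alpha_u[p_u,p_T^{W_k}])\le x$ the inner function lies strictly below the diagonal at $M$, so the operator strictly decreases at $q$ and no fixed point (in particular not the gfp) exceeds the threshold. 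That simultaneous ``minimal shell attaining the maximum'' device is the missing idea your sketch would need. A smaller omission: in your second part, passing from $\va_{\gameNF{\formNF_q}{\alpha[\ell_q]}}\le\ell_q$ to a one-step value $\le u-\eta$ at the level $u-\eta>\ell_q$ requires the $1$-Lipschitz property of $f_{\formNF_q}\langle\alpha\rangle$ (monotonicity alone does not prevent the map from exceeding the diagonal above its least fixed point); you invoke Lipschitzness only in the first part, and the same property is also what yields the strict-diagonal inequality above.
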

\begin{proof}
	In the following, the sets $W_k[Q_u]$ will be denoted $W_k$ for all $k \in \N$.
	
	First, let us show that $W_0 \neq \emptyset$. Towards a contradiction, if $W_0 = \emptyset$, then $Q_u = Q_u \cap T$. Then, consider the co-Büchi game $\G_{Q \setminus Q_u}$ from Definition~\ref{def:game_extracted}. Since all states in $Q_u$ are in $T$, the objective $\coBu(T \cup \{ \bot \})$ is in fact equivalent (i.e. it has the same measure w.r.t. all strategies $\s_\A,\s_\B$ for both players) to the objective $\Reach(\{ \top \})$. Then, we obtain a contradiction with Lemma~\ref{lem:value_in_RM_up_to_u}.
	
	
	Let us now assume towards a contradiction that $W_{i+1} = \emptyset$ whereas $W_i \neq \emptyset$ for some $i \geq 0$. We set:
	\begin{displaymath}
		x := \max (\max_{q \in Q_u^\lf} u_{\formNF_q,\lf}[\alpha],\max_{k \leq i} \max_{q \in W_k \setminus W_{k+1}} u_{\formNF_{q},\lf T \gf \bar{T}}(\alpha_u[p_u,p_T^{W_k}])) 
	\end{displaymath}
	Note that, by definition, $x < u$. Furthermore, let $q \in W_k \setminus W_{k+1}$ for some $k \in \N$ and $x \leq y$. We have $u_{\formNF_{q},\lf T \gf \bar{T}}(\alpha_u[p_u,p_T^{W_k}]) \leq y$ by definition of $x$. Therefore, $f_{\formNF,\gf \bar{T}}\langle \alpha_u[p_u,p_T^{W_k}]\rangle(y) \leq y$. That is: $f_{\formNF,\gf \bar{T}}\langle \alpha_u[p_u,p_T^{W_k}]\rangle(y) = u_{\formNF,\gf \bar{T}}\langle \alpha_u[p_u,p_T^{W_k}(y)]\rangle \leq y$. 
	Recall that $u_{\formNF,\gf \bar{T}}(\alpha_u[p_u,p_T^{W_k}(y)])$ is the greatest fixed point of the function $f_{\formNF,\bar{T}}\langle \alpha_u[p_u,p_T^{W_k}(y)]\rangle$. It follows that, for all $z > y$, we have:
	\begin{equation}
		f_{\formNF,\bar{T}}\langle \alpha_u[p_u,p_T^{W_k}(y)]\rangle(z) < z
		\label{eqn:ineq_f_lfp}
	\end{equation}
	
	In the following, to ease the proof, we assume that the values of states in $Q \setminus Q_u$ are fixed and given by the valuation $v$ (i.e. the total valuations of the states considered are in $[0,1]^{Q_u}$).
	
	Now, for all $y \in [0,1]$, we set $w_T(y): Q_u^\lf \rightarrow [0,1]$ such that, for all $q \in Q_u^\lf$, we have $w_T(y)(q) := y$.	
	
	Let us now consider the function $\OneStepSet{Q_u^\lf}{w_T(x)}$ denoted $\OneStep_x: [0,1]^{Q_u^\gf} \rightarrow [0,1]^{Q_u^\gf}$. Let $w_{\bar{T}} \in [0,1]^{Q_u^\gf}$. Assume that $M := \max_{q \in Q_u^\gf} w_{\bar{T}}(q) > x$. Then, let $k$ be the smallest index such that there is some state $q \in W_k \setminus W_{k+1}$ such that $w_{\bar{T}}(q) = M$. Let $m := \max_{q \in Q_u^{\gf} \setminus W_k} w_{\bar{T}}(q) < M$. Let us show that we have the inequality $\mu_w \preceq \alpha_u[p_u,p_T^{W_k}(\max(x,m) \sqcup M)]$ for $w := w_T(x) \sqcup w_{\bar{T}}: Q \rightarrow [0,1]$. Indeed, let $d \in \distribSet$. We have:
	\begin{align*}
		\mu_w(d) & = \sum_{q' \in Q} \distribFunc(d)(q') \cdot w(q') \\
		& = \sum_{q' \in Q \setminus Q_u}  \distribFunc(d)(q') \cdot w(q') + \sum_{q' \in Q_u \setminus W_{k}} \distribFunc(d)(q') \cdot w(q') + \sum_{ q' \in W_k} \distribFunc(d)(q') \cdot w(q') \\
		& \leq \sum_{q \in Q \setminus Q_u}  \distribFunc(d)(q) \cdot v(q') + \sum_{q' \in Q_u \setminus W_{k}} \distribFunc(d)(q) \cdot \max(x,m) + \sum_{ q' \in W_k} \distribFunc(d)(q') \cdot M \\
		& = p_u(d) \cdot \alpha_u(d) + \distribFunc(d)[Q_u \setminus W_k] \cdot \max(x,m) + \distribFunc(d)[W_k] \cdot M \\
		& = p_u(d) \cdot \alpha_u(d) + (1 - p_u(d)) \cdot \left(p_T^{W_T}(d) \cdot \max(x,m) + (1 - p_T^{W_T}(d)) \cdot M \right)\\
		& = \alpha_u[p_u,p_T^{W_k}(\max(x,m) \sqcup M)]
	\end{align*}
	
	Hence, we do have $\mu_w \preceq \alpha_u[p_u,p_T^{W_k}(\max(x,m) \sqcup M)]$. Let us now show that $\OneStep_x(w_{\bar{T}})(q) < M = w_{\bar{T}}(q)$. We have:
	\begin{displaymath}
		\OneStep_x(w_{\bar{T}})(q) = \OneStep(w_T(x) \sqcup w_{\bar{T}})(q) = \va_{\gameNF{\formNF_q}{\mu_w}}
	\end{displaymath}
	Hence, by Equation~(\ref{eqn:ineq_f_lfp}) and since $x \leq \max(x,m) < w_{\bar{T}}(q)$, we have:
	\begin{displaymath}
	\OneStep_x(w_{\bar{T}})(q) \leq \va_{\gameNF{\formNF_q}{\alpha_u[p_u,p_T^{W_k}(\max(x,m) \sqcup w_{\bar{T}}(q))]}} = f_{\formNF,\bar{T}}\langle \alpha_u[p_u,p_T^{W_k}(\max(x,m))]\rangle(w_{\bar{T}}(q)) < w_{\bar{T}}(q)
	\end{displaymath}
	It follows that the greatest fixed point $v_x := v_{\gf Q_u^\gf}[w_T(x)] \in [0,1]^{Q_u^\gf}$. of the function $\OneStep_x$ has its maximum less than or equal to $x$. 
	
	Furthermore, we have $\nu_{w_T(x) \sqcup v_x} \preceq \alpha[x]$. Indeed, for all $d \in \distribSet$, if $d \in \distribSet_{\mathsf{Lp}}$, then:
	\begin{align*}
	\mu_{w_T(x) \sqcup v_x}(d) & = \sum_{q' \in Q_u} \distribFunc(d)(q') \cdot (w_T(x) \sqcup v_x)(q') \leq \sum_{q' \in Q_u} \distribFunc(d)(q') \cdot x \leq x = \alpha[x]
	\end{align*}
	
	Furthermore, if $d \notin \distribSet_{\mathsf{Ex}}$, then:
	\begin{align*}
	\mu_{w_T(x) \sqcup v_x}(d) & = \sum_{q' \in Q \setminus Q_u}  \distribFunc(d)(q') \cdot v(q') + \sum_{q' \in Q_u} \distribFunc(d)(q')  \cdot (w_T(x) \sqcup v_x)(q') \\
	& \leq \sum_{q' \in Q \setminus Q_u}  \distribFunc(d)(q') \cdot v(q') + \sum_{q' \in Q_u} \distribFunc(d)(q')  \cdot x \\
	& \leq \sum_{q' \in Q \setminus Q_u}  \distribFunc(d)(q') \cdot v(q') + \sum_{q' \in Q_u} \distribFunc(d)(q')  \cdot u \\
	& = \sum_{q' \in Q \setminus Q_u}  \distribFunc(d)(q') \cdot v(q') + \sum_{q' \in Q_u} \distribFunc(d)(q')  \cdot v(q') \\
	& = \mu_{v}(d) = \alpha(d)
	\end{align*}
	That is, we do have $\nu_{w_T(x) \sqcup v_x} \preceq \alpha[x]$. It follows that, for all $q \in Q_u^\lf$, we have:
	\begin{displaymath}
		\OneStepGFP{Q_u^\lf}(w_T(x))(q) = \OneStep(w_T(x) \sqcup v_x)(q) = \va_{\gameNF{\formNF_q}{\mu_{w_T(x) \sqcup v_x}}} \leq \va_{\gameNF{\formNF_q}{\alpha[x]}} \leq x = w_T(x)
	\end{displaymath}
	That is, $\OneStepGFP{Q_u^\lf}(w_T(x)) \preceq w_T(x)$. Furthermore, note that $w_T(u)$ is the least fixed point of the function $\OneStepGFP{Q_u^\lf}$ (since the values of the co-Büchi games in all states in $Q_u$ is $u$ and is given by the least fixed point of this function). Since $x < u $ and $w_T(x) \prec w_T(u)$, there is a contradiction with Proposition~\ref{prop:no_derase_before_lfp}.
\end{proof}

We can now proceed to the proof of Theorem~\ref{lem:co_buchi_sufficient}.
\begin{proof}[Proof of Theorem~\ref{lem:co_buchi_sufficient}]
	Consider a co-Büchi game $\G = \Games{\Aconc}{\coBu(T)}$ and assume that all local interactions at states in $T$ are RM and all local interactions at states in $Q \setminus T$ are coBM. Let $v := \MarVal{\G}$ be the valuation giving the values of the states of the game. Let us exhibit a positional uniformly optimal Player $\A$ strategy $\s_\A$. 
	
	Let $u \in ]0,1]$ be a positive value and let $Q_u := \{ q \in Q \mid v(q) = u \} \neq \emptyset$ be a non-empty set. Let us define inductively a sequence of sets of states $(Q_u^i)_{i \in \N} \in (\mathcal{P}(Q_u))^\N$. Let $Q_u^0 := Q_u$. Then, for all $i \geq 0$, let us define $Q_u^{i+1} \subseteq Q_u^i$ in the following way. If $Q_u^i$ is empty, then $Q_u^{i+1} := \emptyset$. Now, assume that $Q_u^i \neq \emptyset$. Consider the notations from Definition~\ref{def:for_co_buchi_proof}. We want to apply Lemma~\ref{lem:crucial_co_buchi}. Let $S_u^i := \{ q \in Q_u^i \cap T \mid u_{\formNF_q,\lf}[\alpha] \geq u \} \subseteq Q_u^i$ with $\alpha$ the partial valuation from Lemma~\ref{lem:crucial_co_buchi}. If $S_u^i \neq \emptyset$, then $Q_u^{i+1} := Q_u^i \setminus S_u^i$. Furthermore, for all $q \in S_u^i$, let $\s_\A(q)$ be a reach maximizing strategy  w.r.t. the partial valuation $\alpha$ from Lemma~\ref{lem:crucial_co_buchi} (applied with $Q_u := Q_u^i$) in the game form $\formNF_{q}$. Assume now that $S_u^i = \emptyset$. Then, consider the limit of the sequence of states $W_u^i := W_{k}[Q_u^i] = W_{k+1}[Q_u^i] \neq \emptyset$, not empty, from  Lemma~\ref{lem:crucial_co_buchi}. Note that in particular, $W_u^i \subseteq Q \setminus T$. In that case, $Q_u^{i+1} := Q_u^i \setminus W_u^i$. Furthermore, for all $q \in W_u^i$, let $\s_\A(q)$ be a co-Büchi maximizing strategy w.r.t. the triplet $\alpha^{Q_u^i},p_{\alpha^{Q_u^i}},p_T[Q_u^i \setminus W_u^i]$ in the game form $\formNF_{q}$.
	
	We consider these definitions for all $u \in [0,1]$. Hence, the strategy $\s_\A$ is defined on all states of the game $\G$. Let us show that it is uniformly optimal. We apply Lemma~\ref{lem:uniformly_optimal}. The strategy $\s_\A$ is locally optimal, by definition. (Indeed, reach maximizing and co-Büchi maximizing $\GF$-strategies are optimal in the game forms considered w.r.t. the valuation of the outcomes, which correspond to the lift to the Nature states of the valuation $v$ -- it may in fact be bigger.) Furthermore, consider an EC $H = (Q_H,\beta)$. By Proposition~\ref{prop:ec_same_vale}, there exists some $u \in [0,1]$ such that $Q_H \subseteq Q_u$. Assume that $u > 0$. Let $i \in \N$ be the smallest index such that the set $H^i := Q_H \cap Q_u^i \setminus Q_u^{i+1}$ is not empty. In particular, $Q_H \subseteq Q_u^i$. 
	Let us show that $H^i \cap S_u^i = \emptyset$. Indeed, consider some $q \in H^i \cap S_u^i$ and an action $b \in \beta(q)$. Then, by choice of the $\GF$ strategy $\s_\A(q)$, there exists an action $a \in \Supp(\s_\A(q))$ such that $\delta(q,a,b) \notin E = \{ d \in \distribSet \mid \distribFunc(d)[Q \setminus Q_u^i] = 0 \}$. That is, $\distribFunc(\delta(q,a,b))[Q \setminus Q_u^i] > 0$. It follows that the action $b$ is not compatible with the EC $H$ (since $Q_H \subseteq Q_u^i$). Hence, $q$ cannot be in $H$. 
	In fact $H^i \subseteq W_u^i$. Let $q \in H^i$ and consider some action $b \in \beta(q)$. If there is an action $a \in \Supp(\s_\A(q))$ such that $p_T[Q_u^i \setminus W_u^i](\delta(q,a,b)) > 0$, then (since the $\GF$-strategy $\s_\A(q)$ is chosen co-Büchi maximizing), there is an action $a' \in \Supp(\s_\A(q))$ such that $p_{\alpha^{Q_u^i}}(\delta(q,a,b)) > 0$. That is, $\distribFunc(\delta(q,a,b))[Q \setminus Q_u^i] > 0$. That is, this action is not compatible with the EC $H$ since $Q_h \subseteq Q_u^i$. In fact, for all $b \in \beta(q)$, for all $a \in \Supp(\s_\A(q))$, we have $p_T[Q_u^i \setminus W_u^i](\delta(q,a,b)) = 0$. That is, (since $p_{\alpha^{Q_u^i}}(\delta(q,a,b)) = 0$): $\distribFunc(\delta(q,a,b))[Q_u^i \setminus W_u^i] = 0$. As $Q_H \subseteq Q_u^i$, then $\distribFunc(\delta(q,a,b))[Q_u^i] = 1$. That is, $\distribFunc(\delta(q,a,b))[W_u^i] = 1$. This holds for all actions $a \in \Supp(\s_\A(q))$, $b \in \beta(q)$ and states $q \in H^i \subseteq Q_u^i$. Since the EC $H$ is strongly connected, this implies that $Q_H \subseteq Q_u^i \subseteq Q \setminus T$. That is, $Q_H \cap T = \emptyset$. It follows that the value of the game $\Aconc_H^{\s_\A}$ is 1 for all states $q \in Q_H$ since it is not possible to ever see the set $T$. We conclude by applying Lemma~\ref{lem:uniformly_optimal}. 
\end{proof}

\subsection{Proof of Proposition~\ref{prop:rm_cobm}}
\label{appen:gf_rm_not_coBM}
We consider Definition~\ref{def:co_buchi_maximizable_new} of coBM game forms. 

\begin{figure}
	\begin{minipage}[b]{0.35\linewidth}
		\hspace*{-1cm}
		\centering
		\includegraphics[scale=1]{GF_RM_not_coBM.pdf}
		\caption{A game form $\formNF$ that is RM but not coBM.}
		\label{fig:gf_RM_not_coBM}
	\end{minipage}
	\begin{minipage}[b]{0.35\linewidth}
		\centering
		\includegraphics[scale=1]{GF_RM_not_coBM_valued.pdf}
		\caption{The game form $\formNF$ with a partial valuation witnessing that it is not coBM.}
		\label{fig:gf_RM_not_coBM_valued}         
	\end{minipage}
	\begin{minipage}[b]{0.25\linewidth}
		\centering
		\includegraphics[scale=1]{GF_RM_not_coBM_valued_simplified.pdf}
		\caption{A simplification of the game form on the left.}
		\label{fig:gf_RM_not_coBM_valued_simplified}         
	\end{minipage}
\end{figure}

\begin{proof}[Proof of proposition~\ref{prop:rm_cobm}]
	Let us first tackle the decidability of whether or not a game form is coBM. As for the case of aBM game forms from Proposition~\ref{prop:rm_ebm}, we can encode the fact that a game form is coBM in the first order theory of the reals, just like it was done in \cite{BBSCSLarXiv} for RM game forms.
	
	Assume that a game form $\formNF$ is coBM, let us show that it is RM. Consider a partition $\outComeNF = \outComeNFEx \uplus \outComeNFLp$ and a partial valuation $\alpha_r: \outComeNFEx \rightarrow [0,1]$ of the outcomes. Then, let $\alpha: \outComeNF \rightarrow [0,1]$ be such that $\restriction{\alpha}{\outComeNFEx} := \alpha_r$. Let also $p_\alpha: \outComeNF \rightarrow [0,1]$ be such that $p_\alpha(o) := 0$ if $o \in \outComeNFLp$ and $p_\alpha(o) := 1$ if $o \in \outComeNFEx$. Furthermore, let $p_T: \outComeNF \rightarrow [0,1]$ such that $p_T := 1 - p_\alpha$. Then, for all $u_T,u_{\bar{T}} \in [0,1]$, we have $\alpha[p_\alpha,p_T(u_T \sqcup u_{\bar{T}})] = \alpha[p_\alpha,p_T(u_T \sqcup 1)] = \alpha_r[u_T]$. It follows that $u_{\formNF,\lf T \gf \bar{T}}(\alpha[p_\alpha,p_T]) = u_{\lf \formNF}(\alpha_r)$. Hence, since the game form $\formNF$ is coBM, there exists a $\GF$-strategy $\sigma_\A$ that is co-Büchi maximizing w.r.t. $\alpha,p_\alpha,p_T$. It is also reach maximizing w.r.t. $\alpha_r$. That is, the game form $\formNF$ is RM.
	
	Consider now the game form depicted in Figure~\ref{fig:gf_RM_not_coBM} (note that the colors have no semantics). Let us show that it is RM but coBM. Let us first show that it is not coBM. Consider the valuation $\alpha: \{ x,y,z,t \} \rightarrow [0,1]$ such that $\alpha(x),\alpha(y),\alpha(t) := 0$ and $\alpha(z) := 1$. Furthermore, let $p_\alpha: \{ x,y,z,t \} \rightarrow [0,1]$ be such that $p_\alpha(y) = p_\alpha(z) := 1$ and $p_\alpha(x) = p_\alpha(t) := 0$. Furthermore, let $p_T: \{ x,y,z,t \} \rightarrow [0,1]$ such that $p_T(x) := 0$ and $p_T(t) := 1$. In that case, we have $u_{\formNF,\lf T \gf \bar{T}}(\alpha[p_\alpha,p_T]) = 1$. 
	Then, the support of any optimal strategy w.r.t. to the valuation $\alpha,p_\alpha,p_T[1,1]$ is included in the three bottom rows. For any such strategy, there is a column in which there is outcome with positive probability to see $T$ (i.e. $t$) whereas there is no variable in $\{ y,z \}$. Hence, that game form is not coBM.
	
	Let us now argue that this game form is RM. 
	Consider a partition $\outComeNF = \outComeNFEx \uplus \outComeNFLp$ and partial valuation $\alpha: \outComeNFEx \rightarrow [0,1]$ of the outcomes. Let $u := u_{\formNF,\lf}[\alpha]$. If either $x \in \outComeNFLp$ or $z \in \outComeNFLp$, then $u = 0$. Let us assume that $x,z \in \outComeNFEx$. Then, we have $u \leq \alpha(x),\alpha(z)$. Then, if $y \in \outComeNFLp$, then a $\GF$-strategy playing the two top rows with probability $1/2$ is reach maximizing w.r.t. $\alpha$. Similarly, if $t \in \outComeNFLp$, then a $\GF$-strategy playing the two bottom rows with probability $1/2$ is reach maximizing w.r.t. $\alpha$. Overall, in any case there is a reach-maximizing $\GF$-strategy w.r.t. the valuation $\alpha$. That is, the game form $\formNF$ is RM.
\end{proof}

\section{Proof of Corollary~\ref{coro:two_var_unif_optim}}
\label{proof:coro_two_var_unif_optim}
First, we have that all two-variable game forms are coBM.
\begin{proposition}
	Any game form $\formNF = \langle \St_\A,\St_\B,\outComeNF,\outCNF \rangle$ such that $|\outComeNF| \leq 2$ is coBM.
	\label{prop:two_var_co_bm}
\end{proposition}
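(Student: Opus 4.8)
The plan is to show that every game form $\formNF = \langle \St_\A,\St_\B,\outComeNF,\outCNF \rangle$ with $|\outComeNF| \leq 2$ is coBM by checking Definition~\ref{def:co_buchi_maximizable_new} directly. If $|\outComeNF| \leq 1$ the game form is trivial and the claim is immediate, so assume $\outComeNF = \{o_1, o_2\}$ with $o_1 \neq o_2$. Fix an arbitrary valuation $\alpha \colon \outComeNF \to [0,1]$ and probability functions $p_\alpha, p_T \colon \outComeNF \to [0,1]$, and set $u := u_{\formNF,\lf T \gf \bar{T}}(\alpha[p_\alpha,p_T])$. If $u = 0$ there is nothing to prove, so assume $u > 0$. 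First I would recall that $u$ is also the value at $q_0$ of the co-Büchi game $\G^\coBu_{\formNF,\alpha[p_\alpha,p_T]}$ (Proposition~\ref{prop:same_value_game_game_form_co_buchi}), and that the relevant normal-form game is $\gameNF{\formNF}{\beta}$ with $\beta := \alpha[p_\alpha,p_T(u \sqcup u)] \in [0,1]^{\outComeNF}$, a valuation of a two-outcome set. Without loss of generality order the outcomes so that $\beta(o_1) \leq \beta(o_2)$.

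The key observation is the following dichotomy for optimal $\GF$-strategies of Player $\A$ in a two-outcome normal-form game. Pick any optimal $\sigma_\A \in \Opt_\A(\gameNF{\formNF}{\beta})$ and consider a column $b \in \St_\B$. If $\outCNF(a,b) = o_2$ for all $a \in \Supp(\sigma_\A)$, then there is certainly no action in the support mapping to an outcome with $p_T$-value $> 0$ unless $p_T(o_2) > 0$; and if $p_T(o_2) > 0$ then already $p_\alpha(o_2)$ may or may not be positive, so this column must be handled by a careful choice of $\sigma_\A$. The cleaner route is: among all optimal strategies, choose $\sigma_\A$ with $\Supp(\sigma_\A)$ minimal for inclusion. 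I claim that for such a $\sigma_\A$ and any $b$, if some $a \in \Supp(\sigma_\A)$ has $p_T(\outCNF(a,b)) > 0$, then some $a' \in \Supp(\sigma_\A)$ has $p_\alpha(\outCNF(a',b)) > 0$. The idea is that in a two-outcome game, if the support-minimal optimal strategy used a column where the only outcome reachable within the support is the ``loop'' outcome (positive $p_T$, zero $p_\alpha$), then Player $\B$ could force that outcome and the value of $\sigma_\A$ would be bounded by the $\beta$-value of that outcome; one then argues either this value is $u$ and $u = \beta$(that outcome), so switching to playing that outcome everywhere stays optimal but then the other outcome has $\beta$-value $\geq u$ and one can enlarge back — contradicting minimality — or the value is $< u$, contradicting optimality. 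The finitely many columns are checked one at a time; the two-outcome restriction makes ``the value equals a convex combination of the two outcome values'' so rigid that only the two boundary cases can occur.

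The main steps, in order, would be: (1) reduce to $|\outComeNF| = 2$ and $u > 0$; (2) identify the normal-form game $\gameNF{\formNF}{\beta}$ and order the outcomes; (3) prove the support-minimality lemma for two-outcome normal-form games, namely that a support-minimal optimal strategy either uses both outcomes in some column it cannot avoid being pinned to, or has value exactly one of the $\beta(o_i)$; (4) translate this into the coBM condition: for each $b$, if a ``red'' outcome ($p_T > 0$) appears in the support, a ``green'' outcome ($p_\alpha > 0$) appears too. Alternatively, and perhaps more transparently, one can appeal to Proposition~\ref{prop:rm_cobm}'s statement that coBM is equivalent to existence of a positional optimal strategy in $\G^\coBu_{\formNF,\alpha[p_\alpha,p_T]}$ (Lemma~\ref{prop:co_buchi_necessary_other}): since this game has only five states and the only nontrivial interaction is the two-outcome $\formNF$ at $q_0$, one directly exhibits a positional optimal strategy by case analysis on which of the two outcomes has the larger $\beta$-value and on the $p_\alpha, p_T$ values of each outcome.

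I expect the main obstacle to be step (3): making precise why, in a two-outcome game, a support-minimal optimal strategy cannot have a column in its ``avoidable'' set whose only in-support outcome is the loop outcome. The subtlety is that Player $\B$'s minimization over $\Dist(\St_\B)$ means the value of $\sigma_\A$ is the minimum over columns of the $\beta$-average along that column; if the loop outcome has $\beta$-value strictly below $u$, optimality fails immediately, but if it equals $u$ one must use minimality of the support to derive a contradiction, and this requires checking that one can always find a strictly smaller optimal support by concentrating on the actions realizing the extremal outcome. This is a short but genuine argument; everything else is bookkeeping. If this turns out to be delicate, the fallback via Lemma~\ref{prop:co_buchi_necessary_other} and an explicit five-state positional strategy is essentially a finite case check and should go through without difficulty.
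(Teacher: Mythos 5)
Your main route fails at exactly the step you flag as delicate: the support-minimality lemma in step (3) is false. Take $\St_\A = \{a_1,a_2\}$, $\St_\B = \{b_1,b_2\}$, $\outComeNF = \{x,y\}$ with $\outCNF(a_1,b_1) = y$ and $\outCNF(a_1,b_2) = \outCNF(a_2,b_1) = \outCNF(a_2,b_2) = x$, and the environment $p_\alpha(x) = 1$, $\alpha(x) = 1/2$, $p_\alpha(y) = 0$, $p_T(y) = 1$. Here $u = 1/2 > 0$, and the induced valuation $\beta := \alpha[p_\alpha,p_T(u \sqcup u)]$ gives $\beta(x) = \beta(y) = 1/2$ (any pure-loop outcome automatically gets $\beta$-value exactly $u$, which also kills your ``strictly below $u$, contradicting optimality'' branch). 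Hence every $\GF$-strategy is optimal, and the pure strategy $a_1$ is optimal with inclusion-minimal support $\{a_1\}$; yet in column $b_1$ its support sees only the red outcome $y$ and no green one, so it violates the coBM condition. The game form is of course still coBM (play $a_2$, or uniform), but minimality of the support does not select a good strategy, and your repair attempt (``one can enlarge back --- contradicting minimality'') is not a contradiction: minimality forbids strictly smaller optimal supports, not larger ones. In fact the useful direction is the opposite of yours: in the problematic configuration (one green outcome, one red non-green outcome) the paper first shows $u \le \alpha(\text{green})$, so both outcomes have $\beta$-value at least $u$ and \emph{every} strategy is optimal, and then takes the full-support uniform strategy, which has a green entry in every column unless some column consists entirely of the non-green outcome, in which case $u = 0$ and the condition is vacuous.

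Your fallback is sound and is essentially the paper's actual proof: the paper argues directly from Definition~\ref{def:co_buchi_maximizable_new} by a case analysis on which of the two outcomes satisfies $p_\alpha > 0$ (both green: any optimal strategy is co-Büchi maximizing; none green: either some row has all its outcomes with $p_T = 0$, giving $u = 1$ and that pure row as witness, or else $u = 0$; exactly one green: either the non-green outcome has $p_T = 0$, or one uses $u \le \alpha(\text{green})$ and the uniform strategy as above). But in your write-up this alternative is only a promissory note (``should go through without difficulty''), while the argument you actually develop rests on the false lemma; as it stands the proposal does not establish the proposition.
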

\begin{proof}
	Let $\outComeNF = \{ x,y \}$. We consider Definition~\ref{def:co_buchi_maximizable_new} of coBM game forms. Consider $\alpha,p_\alpha,p_T: \outComeNF \rightarrow [0,1]$ and let $u := u_{\formNF,\lf T \gf \bar{T}}(\alpha[p_\alpha,p_T])$. 
	
	If $p_\alpha(x),p_\alpha(y) > 0$ any optimal strategy is co-Büchi maximizing. However, if $p_\alpha(x) = p_\alpha(y) = 0$, then either there is an action $a \in \St_\A$ such that $p_T(\outCNF(a,B)) = \{ 0 \}$, and in that case $u = 1$ and $a$ is co-Büchi maximizing or for all $a \in \St_\A$, there is $b \in \St_\B$ such that $p_T(\outCNF(a,b)) > 0$, and in that case $u = 0$.
	
	Assume now that $p_\alpha(x) > 0$ and $p_\alpha(y) = 0$ (the argument is the same for the converse). If $p_T(y) = 0$, any optimal strategy is co-Büchi maximizing. Assume that $p_T(y) > 0$. Then $u \leq \alpha(x)$ (consider a Player $\B$ strategy playing uniformly over all actions). The arguments are then similar than in the previous paragraph: either there is an action $b \in \St_\A$ such that $p_\alpha(\outCNF(A,b)) = \{ 0 \}$, and in that case $u = 0$ or for all $b \in \St_\B$, there is $a \in \St_\A$ such that $p_T(\outCNF(a,b)) > 0$, playing uniformly over all actions for Player $\A$ is co-Büchi maximizing.
\end{proof}

Consider now the proof of Corollary~\ref{coro:two_var_unif_optim}.
\begin{proof}
	All local interactions are game forms with at most two outcomes, hence, by Proposition~\ref{prop:two_var_co_bm}, they are all coBM and therefore RM by Proposition~\ref{prop:rm_cobm}. The result then follows from Theorem~\ref{lem:rm_in_buchi} and Theorem~\ref{lem:co_buchi_sufficient}.
\end{proof}

\end{document}